\def\Anonymity{1}%
\definecolor{newcolor}{hsb}{0.6,1,0.75}
\newtheorem{theorem}{Theorem}[section]
\newtheorem{lemma}[theorem]{Lemma}
\newtheorem{corollary}[theorem]{Corollary}
\newtheorem{claim}[theorem]{Claim}
\newtheorem{observation}[theorem]{Observation}
\newtheorem{fact}[theorem]{Fact}
\newtheorem{conjecture}[theorem]{Conjecture}
\declaretheoremstyle[
spaceabove=\topsep, spacebelow=\topsep,
headfont=\normalfont\bfseries,
notefont=\bfseries, notebraces={}{},
bodyfont=\normalfont\itshape,
postheadspace=0.5em,
name={\ignorespaces},
numbered=no,
headpunct=.]
{mystyle}
\declaretheorem[style=mystyle]{namedtheorem}
\theoremstyle{definition}
\newtheorem{definition}[theorem]{Definition}
\newtheorem{problem}[theorem]{Problem}
\newtheorem{example}[theorem]{Example}
\newtheorem{assumption}[theorem]{Assumption}
\theoremstyle{remark}
\newtheorem{remark}{Remark}
\newenvironment{claimproof}[1][\proofname]
{\proof[#1]}
{\endproof}
\def\moverlay{\mathpalette\mov@rlay}
\def\mov@rlay#1#2{\leavevmode\vtop{%
		\baselineskip\z@skip \lineskiplimit-\maxdimen
		\ialign{\hfil$\m@th#1##$\hfil\cr#2\crcr}}}
\newcommand{\charfusion}[3][\mathord]{
	#1{\ifx#1\mathop\vphantom{#2}\fi
		\mathpalette\mov@rlay{#2\cr#3}
	}
	\ifx#1\mathop\expandafter\displaylimits\fi}
\renewcommand{\poly}{\mathrm{poly}}
\renewcommand{\polylog}{\mathrm{polylog}}
\newcommand{\eps}{\varepsilon}
\newlang{\MCSP}{MCSP}
\newlang{\MFSP}{MFSP}
\newlang{\MKtP}{MKtP}
\newlang{\MKTP}{MKTP}
\newlang{\itrMCSP}{itrMCSP}
\newlang{\itrMKTP}{itrMKTP}
\newlang{\itrMINKT}{itrMINKT}
\newlang{\MINKT}{MINKT}
\newlang{\MINK}{MINK}
\newlang{\MINcKT}{MINcKT}
\newlang{\CMD}{CMD}
\newlang{\DCMD}{DCMD}
\newlang{\CGL}{CGL}
\newlang{\PARITY}{PARITY}
\renewlang{\Gap}{Gap}
\newlang{\Avoid}{\textsc{Avoid}}
\newlang{\LossyCode}{\textsc{Lossy-Code}}
\newlang{\MissingString}{\textsc{Missing-String}}
\newlang{\SinkOfDAG}{\textsc{Sink-Of-DAG}}
\newlang{\Iter}{\textnormal{\textsc{Iter}}}
\newlang{\Palindromes}{\textsc{Palindromes}}
\newlang{\Sparsification}{\textsc{Sparsification}}
\newlang{\HamEst}{\mathsf{HammingEst}}
\newlang{\HamHit}{\mathsf{HammingHit}}
\newlang{\CktEval}{\textsc{Circuit-Eval}}
\newlang{\Hard}{\textsc{Hard}}
\newlang{\cHard}{\textsc{cHard}}
\newlang{\CAPP}{CAPP}
\newlang{\GapUNSAT}{GapUNSAT}
\newlang{\OV}{OV}
\newlang{\PRIMES}{PRIMES}
\renewlang{\PCP}{PCP}
\newlang{\PCPP}{PCPP}
\newclass{\FMA}{FMA}
\newclass{\Avg}{Avg}
\newclass{\ZPEXP}{ZPEXP}
\newclass{\DLOGTIME}{DLOGTIME}
\newclass{\ALOGTIME}{ALOGTIME}
\newclass{\ATIME}{ATIME}%
\newclass{\SZKA}{SZKA}
\newclass{\Laconic}{Laconic\text{-}}
\newclass{\APEPP}{APEPP}
\newclass{\SAPEPP}{SAPEPP}
\newclass{\TFSigma}{TF\Sigma}
\newclass{\NTIMEGUESS}{NTIMEGUESS}
\newclass{\FZPP}{FZPP}
\newclass{\UEoPL}{UEoPL}
\newclass{\EoPL}{EoPL}
\newclass{\SoPL}{SoPL}
\newclass{\CLS}{CLS}
\newclass{\PWPP}{PWPP}
\newlang{\Formula}{Formula}
\newlang{\THR}{THR}
\newlang{\MAJ}{MAJ}
\newlang{\DOR}{DOR}
\newlang{\ETHR}{ETHR}
\newlang{\Midbit}{Midbit}
\newlang{\LCS}{LCS}
\newlang{\TAUT}{TAUT}
\newcommand{\SearchCNF}{\mathrm{Search}}
\newcommand{\False}{\mathsf{false}}
\newcommand{\PHP}{\mathrm{PHP}}
\newcommand{\EPHP}{\mathrm{EPHP}}
\newcommand{\rPHP}{\mathrm{rPHP}}
\newcommand{\rwPHP}{\mathrm{rwPHP}}
\newcommand{\dwPHP}{\mathrm{dwPHP}}
\newcommand{\Refuter}{\textnormal{\textsc{Refuter}}}
\newcommand{\WrongProof}{\textnormal{\textsc{WrongProof}}}
\newcommand{\Rft}[2]{\mathsf{Ref}(#1 \subseteq #2)}
\newcommand{\Tseitin}{\mathrm{Tseitin}}
\newcommand{\Mono}{\mathsf{mono}}
\newcommand{\CriP}{\mathrm{CriP}}
\newcommand{\Res}{\mathsf{Res}}
\newcommand{\rRes}{\mathsf{rRes}}
\newcommand{\pf}{\mathrm{pf}}
\newcommand{\mistake}{\mathsf{mistake}}
\newcommand{\T}{\mathsf{T}}%
\renewcommand{\S}{\mathsf{S}}%
\newcommand{\PV}{\mathsf{PV}}
\newcommand{\APC}{\mathsf{APC}}
\newcommand{\XOR}{\mathsf{XOR}}
\newcommand{\set}[1]{\ensuremath{\mleft\{#1\mright\}}}
\newcommand{\calA}{\mathcal{A}}
\newcommand{\calB}{\mathcal{B}}
\newcommand{\calC}{\mathcal{C}}
\newcommand{\calD}{\mathcal{D}}
\newcommand{\calF}{\mathcal{F}}
\newcommand{\calO}{\mathcal{O}}
\newcommand{\calP}{\mathcal{P}}
\newcommand{\calQ}{\mathcal{Q}}
\newcommand{\calR}{\mathcal{R}}
\newcommand{\calS}{\mathcal{S}}
\newcommand{\calT}{\mathcal{T}}
\newcommand{\N}{\mathbb{N}}
\newcommand{\Domain}{\mathrm{Domain}}
\newcommand{\Reduce}{f}
\newcommand{\Decode}{g}
\newcommand{\Pudlak}{Pudl{\'{a}}k\xspace}
\newcommand{\Krajicek}{Kraj\'{\i}\v{c}ek\xspace}
\newcommand{\Jerabek}{Je\v{r}\'{a}bek\xspace}
\newcommand{\Muller}{M\"{u}ller\xspace}
\newcommand{\Garlik}{Garl{\'{\i}}k\xspace}
\newcommand{\Log}{\mathsf{Log}}
\newcommand{\dt}{\mathsf{dt}}
\newcommand{\width}{\mathrm{width}}
\newcommand{\FLB}{\mathcal{F}_{\mathsf{LB}}}
\newcommand{\IND}{\mathrm{IND}}%
\newcommand{\MIN}{\mathrm{MIN}}%
\begin{document}

\newgeometry{margin=0.8in}

\title{Finding Bugs in Short Proofs: \\The Metamathematics of Resolution Lower Bounds} 

\ifnum\Anonymity=1
\author{
	Jiawei Li\\ \small{UT Austin} \\ \small{\texttt{\href{mailto:davidlee@cs.utexas.edu}{davidlee@cs.utexas.edu}}}
	\and
	Yuhao Li\\ \small{Columbia University} \\ \small{\texttt{\href{mailto:yuhaoli@cs.columbia.edu}{yuhaoli@cs.columbia.edu}}}
	\and
	Hanlin Ren\\ \small{University of Oxford} \\ \small{\texttt{\href{mailto:hanlin.ren@cs.ox.ac.uk}{hanlin.ren@cs.ox.ac.uk}}}
}
\fi

\date{}

\maketitle
\pagenumbering{gobble}

\begin{abstract}
    We study the \emph{refuter} problems for proof complexity lower bounds. Suppose $\varphi$ is a hard tautology that does not admit any length-$s$ proof in some proof system $P$. In the corresponding refuter problem, we are given (query access to) a purported length-$s$ proof $\pi$ in $P$ that claims to have proved $\varphi$, and our goal is to find an invalid derivation step within $\pi$. As suggested by witnessing theorems in bounded arithmetic, the \emph{computational complexity} of these refuter problems is closely tied to the \emph{metamathematics} of the underlying lower bounds.

    We focus on refuter problems corresponding to lower bounds for \emph{resolution}, which is arguably the single most studied system in proof complexity. As a warm-up, we show that many refuter problems for resolution \emph{width} lower bounds are $\PLS$-complete. To capture the complexity of refuter problems for resolution \emph{size} lower bounds, we introduce a new class $\rwPHP(\PLS)$ in decision-tree $\TFNP$, which can be seen as a randomized version of $\PLS$.

    \begin{itemize}
        \item We show that the refuter problems for many resolution size lower bounds can be solved in $\rwPHP(\PLS)$, including the classic lower bound of Haken [TCS, 1985] for the pigeonhole principle. More generally, we identify a common proof technique that we call ``random restriction + width lower bound'', and present strong evidence that resolution lower bounds proved by this technique typically have refuter problems in $\rwPHP(\PLS)$.

        \item We then show that the refuter problem for \emph{any} resolution size lower bound is $\rwPHP(\PLS)$-hard, thereby demonstrating that the $\rwPHP(\PLS)$ upper bound mentioned above is tight. Informally speaking, this means that ``$\rwPHP(\PLS)$-reasoning'' is \emph{necessary} for proving \emph{all} resolution size lower bounds.
    \end{itemize}

    Interpreted in bounded arithmetic, our results show that the theory $\T^1_2(\alpha) + \dwPHP(\PV(\alpha))$ characterizes the ``reasoning power'' required to prove (the ``easiest'') resolution size lower bounds.
    
    As a corollary, we obtain surprisingly efficient proofs of resolution lower bounds. In particular, we show that many resolution size lower bounds can be proved in low-width \emph{random resolution} [\Pudlak--Thapen, CCC'17].
    
\end{abstract}

\newpage

{\small \tableofcontents}

\newpage

\pagenumbering{arabic}

\section{Introduction}
One of the earliest lower bounds in proof complexity was Haken's landmark result \cite{Haken85} that the pigeonhole principle requires exponential-size proofs in the resolution proof system. Since then, proof complexity has become a vibrant research area with substantial progress in establishing lower bounds for various proof systems, as well as the development of a wide range of lower bound techniques. However, despite decades of efforts, proving nontrivial lower bounds for stronger systems, such as Frege and Extended Frege, remains elusive. It is widely believed that proving lower bounds for Extended Frege is ``beyond our current techniques''\footnote{This belief is partly supported by the intuition that proving strong circuit lower bounds (e.g., $\NP\not\subseteq \P/_{\poly}$) seems to be a prerequisite for proving strong proof complexity lower bounds (e.g., for Extended Frege) \cite{Razborov15}. However, formalizing such connections has proven challenging \cite{PichS23,ArtecheKPS24}.}, but what does this even mean? How much, and in which directions, must our techniques expand, to enable us to prove lower bounds for stronger proof systems? These questions call for a study of the \emph{metamathematical} difficulty of proving lower bounds in proof complexity (see, e.g., \cite{PichS19, SanthanamT21}).

Inspired by recent works on the reverse mathematics of \emph{circuit lower bounds} \cite{CJSW21, Korten22, CTW23,ChenLiOliveira24}, we propose investigating the metamathematics of proof complexity lower bound through the computational lens of their \emph{refuter} problem.
To illustrate, consider the following total search problem: suppose we are given a resolution proof $\Pi$ that claims to prove the pigeonhole principle, yet its length is shorter than the lower bound established in \cite{Haken85}. By Haken’s result, $\Pi$ cannot be a valid resolution proof; it must contain an invalid derivation. The goal of the search problem is to locate such an error. We refer to this total search problem as the ``refuter problem''\footnote{This term is adopted from \cite{CTW23}, as will be discussed later.} corresponding to Haken's lower bound:

\begin{mdframed}[hidealllines=true,backgroundcolor=gray!10,skipabove=0.5em,skipbelow=-0.4em,innertopmargin=-0.3em]
	\small
\begin{problem}[Informal]\label{prob: refuter for PHP informal}
    Given (query access to) a subexponential-size resolution proof $\Pi$ that claims to be a proof of the pigeonhole principle, find an invalid derivation in $\Pi$.
\end{problem}
\end{mdframed}

For any proof complexity lower bound of the form ``the tautology $\phi$ requires proof length greater than $s$ in the proof system $P$'', we can define an associated search problem: Given a purported $P$-proof $\Pi$ of $\phi$ with length at most $s$, find an invalid derivation in $\Pi$. With the appropriate formalization (see \autoref{sec: settings}), these refuter problems are \emph{$\NP$ search problems} and are \emph{total} if and only if their underlying lower bounds hold. Therefore, their computational complexity can be studied using the theory of $\TFNP$ \cite{MegiddoP91}. As elaborated in \autoref{sec: more backgrounds}, the complexity of refuter problems reflects the metamathematical difficulty of proving the corresponding lower bounds, thereby providing a purely computational framework for analyzing the difficulty of proving such lower bounds.

In this paper, we initiate the study of refuter problems in proof complexity and take the first step by studying these problems for \emph{resolution} lower bounds. Resolution serves as a natural first step for exploring the metamathematics of proof lower bounds for two main reasons:

        \begin{enumerate}[(i)]
            \item First, resolution is a well-studied proof system, largely due to its fundamental connections to SAT-solving and automated theorem provers \cite{DavisP60,DavisLL62}. \Krajicek even estimates that ``there are perhaps more papers published about proof complexity of resolution than about all remaining proof complexity topics combined'' \cite[Chapter 13]{krajicek_proof_complexity}.
            \item Second, significant progress has already been made in proving lower bounds against resolution \cite{Haken85, Urquhart87, ChvatalS88,beame1996simplified, Ben-SassonW01}, suggesting that investigating the metamathematics of resolution lower bounds is a promising avenue.\label{item: why resolution 2}
        \end{enumerate}

We study several important resolution lower bounds, including those for the pigeonhole principle~\cite{Haken85,beame1996simplified}, Tseitin tautologies~\cite{Urquhart87,Schoning97}, and random CNF formulas~\cite{ChvatalS88}.
In our study, we introduce a new syntactic subclass of decision-tree $\TFNP$, denoted as $\rwPHP(\PLS)$, which can be thought of as a randomized version of $\PLS$, residing slightly above $\PLS$ in the $\TFNP$ hierarchy\footnote{The formal definition and  properties of $\rwPHP(\PLS)$ are presented in \autoref{sec: rwPHP(PLS)}.}. 

At a high level, we show that resolution \emph{width} lower bounds are captured by $\PLS$, while resolution \emph{size} lower bounds are captured by $\rwPHP(\PLS)$.

\begin{theorem}[Main Result; Informal]\label{thm: informal main result}

The refuter problems corresponding to the%
    \begin{enumerate} 
    
        \item resolution {width} lower bounds for the pigeonhole principle and Tseitin tautologies are $\PLS$-complete;
    
        \item resolution {size} lower bounds for the pigeonhole principle (\autoref{prob: refuter for PHP informal}), Tseitin tautologies, and random CNF formulas are $\rwPHP(\PLS)$-complete. 
    \end{enumerate} 

\end{theorem}

Our results are more comprehensive than those stated in \autoref{thm: informal main result}, and we defer the full formal presentation to \autoref{sec: our results}. Here, we highlight a few key insights:

\begin{itemize}
    \item {\bf Characterizing a common proof technique:} All the aforementioned resolution size lower bound proofs share a common strategy, which we call ``random restrictions + width lower bounds''. It is often the case that resolution lower bounds proven using this strategy have refuter problems in $\rwPHP(\PLS)$\footnote{An interesting exception is the general size-width tradeoff by Ben-Sasson and Wigderson \cite{Ben-SassonW01}; see \autoref{sec: what we failed to formalize} for further discussions.}. This implies that ``$\rwPHP(\PLS)$-reasoning'' is sufficient for implementing one of the most commonly employed techniques for proving resolution lower bounds. 

    \item {\bf Minimum reasoning for resolution lower bounds:} Complementing the above, we prove that for \emph{any} family of hard tautologies for resolution, the corresponding refuter problem (for size lower bound) is $\rwPHP(\PLS)$-hard. This establishes that our $\rwPHP(\PLS)$ upper bound is indeed tight. Notably, the hardness proof does not rely on the hard tautology being the pigeonhole principle.
    Consequently, this result carries an intriguing metamathematical implication: ``$\rwPHP(\PLS)$-reasoning'' is necessary for proving \emph{any} resolution lower bound.

    \item {\bf Consequences in bounded arithmetic:} \autoref{thm: informal main result} can also be interpreted as conservativeness results showing that a certain fragment of (relativized) bounded arithmetic, $\calT_{\Res} := \T^1_2(\alpha) + \dwPHP(\PV(\alpha))$, ``captures'' the minimum reasoning required for proving resolution size lower bounds. More precisely, $\calT_{\Res}$ is powerful enough to formalize many resolution lower bounds proved in the literature, including Haken's seminal lower bound for PHP~\cite{Haken85}, while at the same time, it is necessary for proving \emph{any} resolution lower bound. An interesting takeaway of this result is that it is consistent with every theory weaker than $\calT_{\Res}$ that \emph{resolution is p-bounded} (i.e., proves every tautology in polynomial size).

    We find the \emph{existence} of such a theory quite insightful. It is natural to speculate that there is a very powerful theory $\calT_{\sf EF}$ such that strong proof systems like Extended Frege ``appears p-bounded'' to every theory weaker than $\calT_{\sf EF}$. If true, this speculation would provide a strong barrier result in proof complexity.

    \item {\bf Surprisingly efficient proofs for resolution lower bounds:} Finally, translating our results into proof complexity, we exhibit surprisingly efficient proofs of resolution lower bounds. For instance, we show that low-width \emph{random resolution}~\cite{PudlakT19} can prove many resolution size lower bounds (including Haken's lower bound).\footnote{Although random resolution is not a Cook--Reckhow proof system unless $\P = \NP$~\cite{PudlakT19}, it is possible to define a fragment of random resolution that is Cook--Reckhow and that proves the aforementioned resolution size lower bounds. 
    } Previously, Cook and Pitassi~\cite{CookP90} proved Haken's lower bound in the theory ${\sf IPV}^\omega$, which can be interpreted as intuitionistic reasoning using polynomial-time concepts over the purported resolution proof. Our results suggest that $\AC^0$ concepts (and indeed much weaker ones) already suffice to prove the same lower bound.\footnote{Our formalization is different from~\cite{CookP90}. Informally speaking, \cite{CookP90} uses ``\emph{polynomial-time} reasoning'' to handle a \emph{polynomial-size} proof, while we use ``\emph{$\PH$-reasoning}'' to handle an \emph{exponentially long} proof.} %
\end{itemize}

\subsection{More Background}\label{sec: more backgrounds}

We now provide further background on bounded reverse mathematics, refuter problems, and the theory of $\TFNP$ to justify our methodology: the \emph{metamathematics} of the proof complexity lower bounds can --- and indeed \emph{should} --- be understood through the \emph{computational complexity} of their associated refuter problems within $\TFNP$.

\paragraph{Bounded reverse mathematics.} Reverse mathematics explores, for each mathematical theorem of interest, the minimal theory required to prove it. In bounded reverse mathematics~\cite{Cook07,Nguyen-PhD,Cook-Nguyen}, the theories considered come from \emph{bounded arithmetic}, which (roughly speaking) are logical theories formalizing the idea of ``reasoning within a complexity class $\mathsf{C}$''. The link between these logical theories and complexity classes makes bounded arithmetic, and hence bounded reverse mathematics, an effective framework for studying the metamathematics of complexity theory.

Indeed, there has been a long history of studying the (un)provability of lower bounds in the context of bounded arithmetic: In 1989, \Krajicek and \Pudlak investigated the unprovability of proof lower bounds \cite{KrajicekP89}, while Razborov studied the unprovability of circuit lower bounds in 1995 \cite{Razborov_feasible_mathematics_II,razborov1995unprovability}. Notably, many lower bounds for weak circuit classes and proof systems can be formalized in weak theories \cite{Razborov_feasible_mathematics_II, CookP90, MullerP20}, while some strong lower bounds are unprovable within them \cite{KrajicekP89, razborov1995unprovability, Krajicek97, Krajicek11, Pich15, PichS21, LiO23, CLO24b}.%

We take a different perspective from the aforementioned line of work: rather than asking whether lower bounds are provable in certain theories, our goal is to \emph{characterize} the exact reasoning power required to prove these lower bounds. That is, we seek to identify the \emph{minimal} theory $\calT$ that can prove the given lower bound and to establish the minimality of $\calT$ by showing that the axioms used in the proof are indeed \emph{necessary}. The necessity of axioms, i.e., deriving the axiom back from the theorem, is called a \emph{reversal} in reverse mathematics. %

\begin{mdframed}[hidealllines=true,backgroundcolor=gray!10,skipabove=0.3em,skipbelow=-0.4em,innertopmargin=-0.3em]
	\small
\begin{example}
    Recently, Chen, Li, and Oliveira \cite{ChenLiOliveira24} presented several notable reversals related to complexity lower bounds. In their work, they establish that variants of weak pigeonhole principles are \emph{necessary and sufficient} for proving various classical lower bounds. For instance, the fact that one-tape Turing machines require $\Omega(n^2)$ time to recognize palindromes \cite{Maass84} can be proved using the \emph{weak pigeonhole principle}; Moreover, \cite[Theorem 4.9]{ChenLiOliveira24} demonstrates a reversal, proving that this lower bound is, in fact, \emph{equivalent} to the weak pigeonhole principle. The work in \cite{ChenLiOliveira24} serves as one of the main inspirations of this paper.
\end{example}
\end{mdframed}

\paragraph{Refuter problems.} To investigate the metamathematics of a lower bound statement, we first write down the statement in forall-exists form:

\begin{itemize}
    \item Circuit lower bounds: Let $L$ be a hard language and $s$ be a size lower bound for $L$. The lower bound statement expresses that \emph{for every} circuit $C$ of size $s$, \emph{there exists} an input $x$ such that $L(x) \ne C(x)$.
    \item Proof lower bounds: Let $\phi$ be a tautology that is hard for some proof system $P$, and $s$ be a size lower bound for $\phi$. The lower bound statement expresses that \emph{for every} purported $P$-proof $\Pi$ of size $s$, \emph{there exists} an invalid derivation step in $\Pi$.
\end{itemize}
 In general, a statement
\begin{equation}
    \forall x\,\exists y~V(x, y) \label{eq: some forall sigma1b}
\end{equation}
would 
define a total  search problem of finding a valid $y$ given $x$ such that $V(x, y)$ holds; note that the statement is \emph{true} if and only if the search problem is \emph{total}. In our case, is a total search problem \autoref{prob: refuter for PHP informal} precisely because of the Resolution lower bounds against the pigeonhole principle proved by Haken and others~\cite{Haken85,beame1996simplified,Ben-SassonW01}.

This correspondence can be formally justified by the \emph{witnessing theorems} in bounded arithmetic. A witnessing theorem for a theory $\calT$ links it to a syntactic subclass $\mathsf{C}_\calT$ of $\TFNP$, and the theorem states that if \eqref{eq: some forall sigma1b} is provable in $\calT$, then the corresponding (total) search problem lies in the class $\mathsf{C}_\calT$.\footnote{This requires \eqref{eq: some forall sigma1b} to be a ``$\forall \Sigma_1^b$-sentence'', meaning that $|x|$ and $|y|$ are polynomially related and $V(x, y)$ is a deterministic polynomial-time relation.} For instance, Buss's witnessing theorem \cite{Buss85} states that if \eqref{eq: some forall sigma1b} is provable in $\S^1_2$, then the corresponding total search problem can be solved in polynomial time; Buss and \Krajicek \cite{BussKrajicek94} showed that if \eqref{eq: some forall sigma1b} is provable in $\T^1_2$, then the corresponding total search problem is solvable in $\PLS$ (polynomial local search).

Search problems corresponding to circuit lower bounds have already been studied in the literature \cite{GutfreundST07, Pich15, CJSW21, Korten22, CTW23, ChenLiOliveira24} and are termed ``refuter problems'' in \cite{CTW23}. We adopt this terminology and refer to the search problems associated with proof lower bounds as ``refuter problems'' as well.\footnote{In fact, \cite{CTW23} called these problems ``\emph{refutation} problems''. We choose to use ``\emph{refuter} problems'' to avoid confusion with the term ``refutation'' in proof complexity, which usually refers to a proof showing that a formula is unsatisfiable.}

\paragraph{Total search problems in $\NP$.} The above discussion suggests that the metamathematics of lower bounds can be understood through the computational complexity of their refuter problems. Since these problems are total search problems in $\NP$ (as long as the lower bounds are true), it is natural to adopt the methodology of $\TFNP$ while studying their complexity.

What is the ``methodology of $\TFNP$''? Since the seminal work of Megiddo and Papadimitriou \cite{MegiddoP91}, problems in $\TFNP$ have been categorized based on their \emph{proof of totality}. For instance, the class $\PLS$ captures $\NP$ search problems whose totality is provable from the principle ``every DAG has a sink'' \cite{DBLP:journals/jcss/JohnsonPY88}, while the class $\PPAD$ captures problems whose totality is provable from ``every DAG with an unbalanced node has another one'' \cite{Papadimitriou94}. Moreover, \emph{completeness} results play the same role as reversals in bounded reverse mathematics. For example, a pivotal result in this direction is the $\PPAD$-completeness of finding a Nash equilibrium in two-player games~\cite{chen2009settling,daskalakis2009complexity}. This result carries an intriguing metamathematical interpretation: Topological arguments (specifically, Brouwer's fixed point theorem \cite{Brouwer}) or methods akin to it are \emph{unavoidable} for proving the existence of Nash equilibrium~\cite{nash1951non}, which stands in stark contrast to the linear programming duality methods used for zero-sum games~\cite{von2023zero}.

The attentive reader may have already noticed that the above methodology shares a close resemblance to (bounded) reverse mathematics. This similarity can indeed be formally justified by the witnessing theorems mentioned earlier. (Another formal justification is that provability in (universal variants of) bounded arithmetic is equivalent to reducibility in $\TFNP$; see, e.g., \cite[Proposition 3.4]{Muller21}.) While reading this paper, it is useful to remember that all $\TFNP$ results established here can be translated into results in bounded arithmetic and vice versa, conveying the same underlying conceptual message.

\subsection{Our Settings}\label{sec: settings}

Before explaining our results, we first discuss the setting of (decision tree) $\TFNP$ and (relativized) bounded arithmetic in which our results take place. This sub-section is of preliminary nature, but we recommend reading (i.e., not skipping) it before proceeding to our results in~\autoref{sec: our results}. In particular, this sub-section introduces our formalization of lower bounds, which is different from previous works~\cite{CookP90, MullerP20} as the purported resolution proof is represented as an \emph{exponentially-long second-order} object.

We consider $\TFNP$ problems in the \emph{decision tree} model ($\TFNP^\dt$); this model is sometimes called ``type-$2$ $\TFNP$ problems'' \cite{BeameCEIP98} when the decision trees are uniform. In this model, we are given an input $x$ of length $N$ and we think of \emph{decision trees of $\polylog(N)$ depth} as ``efficient''. Each possible solution $o$ can be represented by $\polylog(N)$ bits, and there is an efficient procedure $\phi(x, o)$ that verifies whether $o$ is a valid solution for $x$. (That is, given the purported solution $o$, $\phi(x, o)$ makes only $\polylog(N)$ queries to $x$.) The goal is, of course, to find a solution $o$ such that $\phi(x, o)$ holds.

$\TFNP^\dt$ corresponds to \emph{relativized} bounded arithmetic where a new predicate $\alpha$ is added into the language. The predicate $\alpha$ is intuitively treated as an oracle (or an exponentially-long input). For example, $\PV(\alpha)$ captures reasoning using $\P^\alpha$-concepts, i.e., \emph{uniform and efficient decision trees over $\alpha$}.

\begin{mdframed}[hidealllines=true,backgroundcolor=gray!10,skipabove=0.3em,skipbelow=-0.4em,innertopmargin=0]
	\small
    \begin{remark}[Type-$1$ vs.~Type-$2$ $\TFNP$ Problems]
		In the literature, it is common to define a type-$1$ $\TFNP$ problem in terms of \emph{succinct encodings} of exponentially large objects. For example, a possible definition of a $\PLS$-complete problem is as follows: Given a ``neighborhood'' circuit $C:\{0, 1\}^n \to \{0, 1\}^{\poly(n)}$ and a ``potential function'' circuit $V:\{0, 1\}^n \to \{0, 1\}^{\poly(n)}$ that together encode a DAG on $2^n$ nodes, and also an active node (i.e., a node with non-zero out-degree), find a sink of this graph (i.e., a node with non-zero in-degree and zero out-degree). In contrast, the $\TFNP^\dt$ / type-$2$ $\TFNP$ problems that we consider simply treat $C$ and $V$ as oracles.
	
		Any separation of type-$2$ $\TFNP$ problems implies a separation of type-$1$ $\TFNP$ problems \emph{in a relativized world} \cite{BeameCEIP98}. For example, $\PLS^\dt\not\subseteq\PPA^\dt$ implies an oracle $O$ under which $\PLS^O\not\subseteq\PPA^O$.
	\end{remark}
\end{mdframed}

\subsubsection{Refuter Problems for Resolution Lower Bounds}\label{sec: intro refutation problems in TFNPdt}

This subsection formalizes the refuter problem for resolution lower bounds as a $\TFNP^\dt$ problem. We assume familiarity with the resolution proof system. In resolution, every line is a \emph{clause} (i.e., the disjunction of literals) and the only inference rule is the \emph{resolution rule}:
\[
	\begin{prooftree}
		\hypo{C\lor\ell}
		\hypo{D\lor \overline{\ell}}
		\infer2{C\lor D}
	\end{prooftree},
\]
where $C, D$ are clauses and $\ell$ is a literal. Sometimes, we will also allow the \emph{weakening} rule that replaces a clause with a consequence of it:
\[
	\begin{prooftree}
		\hypo{C}
		\infer1{C\lor D}
	\end{prooftree}.
\]
The \emph{size} of a resolution proof is the number of lines (i.e., clauses) in it. The \emph{width} of a resolution proof is the maximum width of any clause in it, where the \emph{width} of a clause is the number of literals in the clause. Basics about resolution can be found in any textbook on proof complexity, e.g., \cite[Section 5]{krajicek_proof_complexity}.

\paragraph{Size lower bounds for resolution.} Let $F$ be a tautology\footnote{A DNF $D$ is a \emph{tautology} if and only if the corresponding CNF $\lnot D$ is a \emph{contradiction}. A \emph{proof} of $D$ being a tautology is a \emph{refutation} of $\lnot D$ being a contradiction. For convenience, we will use the terms ``tautology/proof'' and ``contradiction/refutation'' interchangeably.} that is \emph{exponentially}-hard for resolution. For example, take $F$ to be the pigeonhole principle which does not have $c^n$-size resolution proofs for some absolute constant $c > 1$ \cite{Haken85}. The refuter problem, which we denote as 
\[\Refuter(s(F\vdash_\Res\bot) \le c^n),\]
is defined as follows. The input $\Pi$ is a purported length-$c^n$ resolution proof of $F$ represented as a list of $c^n$ \emph{nodes}, where each node consists of a clause in the resolution proof and the predecessors of this clause. (For example, if the clause in node $i$ is resolved from the clauses in node $j$ and node $k$, then the predecessor information would contain two integers $(j, k)$.) A \emph{valid solution} would be the index of any node $i\in [c^n]$ whose derivation is illegal: denoting $C_i$ the clause in node $i$, then there do not exist clauses $C, D$ and a literal $\ell$ such that
\[C_i = C\lor D, C_j = C\lor \ell, C_k = D \lor \overline{\ell}.\]

A more formal definition can be found in \autoref{sec: refutation problems}.

By Haken's lower bound mentioned above \cite{Haken85}, every purported resolution proof of length $c^n$ must contain an illegal derivation, thus the above problem is \emph{total}. Let $N := c^n \poly(n)$ denote the bit-length of the input resolution proof, then each node can be described in $\poly(n) \le \polylog(N)$ bits, hence there is an efficient decision tree that verifies whether a node $i$ is illegal and the above refuter problem is indeed in $\TFNP^\dt$.

We can also formalize resolution lower bounds in relativized bounded arithmetic as follows. We add a new symbol $\alpha$ into our language that encodes a length-$c^n$ resolution proof, i.e., for each $i\in[c^n]$, $\alpha(i)$ is the $i$-th bit of the proof. Fixing a hard tautology $F$, let $\mistake_F(n, \alpha, i)$ be a $\PV(\alpha)$ predicate that is true if $i < c^n$ and $\alpha$, interpreted as a length-$c^n$ resolution proof for $F$, makes an invalid derivation in the $i$-th step. Note that this only depends on a constant number of nodes in the proof, and each node is described in $\poly(n)$ bits, hence $\mistake_F(n, \alpha, i)$ is indeed computable in deterministic polynomial time with oracle access to $\alpha$. The $\forall\Sigma_1^b(\alpha)$-sentence\footnote{Roughly speaking, a $\forall\Sigma_1^b$-sentence (resp.~$\forall\Sigma_1^b(\alpha)$-sentence) is a sentence of the form
\[\forall x\,\exists y~\varphi(x, y),\]
where $|x|, |y|$ are polynomially related and $\varphi$ is a polynomial-time relation (resp.~polynomial-time relation with $\alpha$ oracle); these sentences naturally express problems in $\TFNP$ (resp.~$\TFNP^\dt$). The notation $n\in\Log$ means that $n$ is the length of some number, thus allowing one to reason about integers of magnitude $2^{\poly(n)}$ and strings of length $\poly(n)$. In our particular case, it allows the length of the purported proof to be exponential in $n$. These are standard notations in bounded arithmetic.}
\[\forall n\in\Log\,\exists i\le c^n~\mistake_F(n, \alpha, i)\]
expresses the totality of the refuter problem as defined above; the provability of this sentence in relativized bounded arithmetic corresponds to the complexity of the refuter problem in $\TFNP^\dt$.\footnote{\label{footnote: parameters in bounded arithmetic}As a technical detail, we can also allow $\alpha$ to take \emph{parameters} $\vec{z}$ that can be thought of as non-uniformity. That is, for each $i\in [c^n]$, $\alpha(\vec{z}, i)$ is the $i$-th bit of the proof. We consider the sentence
\[\forall n\in\Log, \vec{z}\,\exists i\le c^n~\mistake_F(n, \alpha(\vec{z}, \cdot), i)\]
which expresses that the proof encoded by $\alpha(\vec{z}, \cdot)$ is not a valid length-$c^n$ resolution proof for $F$. The power of many natural principles with and without parameters are very different (see e.g., \cite[Section 4.3]{IlangoLW23}).}

\paragraph{Width lower bounds for resolution.} In this paper, we also study the refuter problems corresponding to \emph{width} lower bounds for resolution. Let $F$ be a tautology without width-$w_F$ resolution proofs, the refuter problem for this width lower bound would be denoted as
\[\Refuter(w(F\vdash_\Res\bot) \le w_F).\]
The formalization of width lower bounds is essentially the same as that of size lower bounds, with the only difference that we now impose that every clause in the input resolution proof contains at most $w_F$ literals. This can be done \emph{syntactically} by only allocating $w_F$ literals to each node.%

\ifnum\Anonymity=0
\begin{mdframed}[hidealllines=true,backgroundcolor=gray!10,skipabove=0.3em,skipbelow=0,innertopmargin=0]
	\small
    \begin{remark}[Further motivations for refutation of width lower bounds]%
        Besides being interesting on their own, the complexity of refuting width lower bounds also serves as a stepping stone for understanding the complexity of refuting \emph{size} lower bounds.
    
        Although we have a fairly good understanding of resolution nowadays, size lower bounds for resolution have been an important open problem in history --- in fact, they are milestone achievements in proof complexity. Haken's lower bound \cite{Haken85} for the pigeonhole principle was a breakthrough at its time. But what is the underlying principle for this breakthrough lower bound? Does it correspond to any classical $\TFNP$ class such as $\PPP$, $\PLS$, $\PPAD$, $\PPA$, or $\CLS$? Towards its answer, it would be beneficial to dig into the \emph{proofs} of the resolution size lower bounds for $\PHP$.

        Haken's original paper \cite{Haken85} employed a ``bottleneck counting'' argument and the proof was quite involved. Beame and Pitassi~\cite{beame1996simplified} later introduced a new, simpler proof that elegantly \emph{reduced the size lower bound to a width lower bound} for (a monotone version of) resolution (see \autoref{subsection: width white-box} for more details). This size-width connection is not unique for $\PHP$. The groundbreaking paper by Ben-Sasson and Wigderson~\cite{Ben-SassonW01} established a generic size-width trade-off for resolution, which had a significant impact on the proof complexity community. Today, studying size-width trade-offs for various proof systems has become standard practice (see e.g.~\cite{clegg1996using,pitassi2012exponential,atserias2019size,sokolov2020semi}).

        Returning to $\PHP$ in the context of resolution, we know that reasoning about size lower bounds can, in some sense, be reduced to reasoning about width lower bounds (we will formalize this very soon!). Thus, understanding the refuter problem for width lower bounds seems like a prerequisite to understanding that for size lower bounds.%
    \end{remark}
\end{mdframed}

\fi

\subsubsection{Retraction Weak Pigeonhole Principles}\label{sec: intro rwPHP(PLS)}
This paper demonstrates that the complexity of refuter problems corresponding to resolution size lower bounds is tightly linked to the new complexity class $\rwPHP(\PLS)$. Therefore, we need to introduce this class before describing our results.

Here, ``$\rwPHP$'' stands for the \emph{retraction weak pigeonhole principle}:
\begin{quote}
	For any two functions $f:[N] \to [2N]$ and $g:[2N] \to [N]$, the function $f\circ g: [2N] \to [2N]$ cannot be the identity function.
\end{quote}
The term ``retraction'', borrowed from category theory \cite{Jerabek-independence}, means that the principle concerns a pair of functions $f, g$ where $g$ is a ``retraction''; the term ``weak'' indicates that the domain of $g$ ($[2N]$) is \emph{much} larger than its range ($[N]$). This principle, along with other variants of weak pigeonhole principles, is widely studied in the context of bounded arithmetic \cite{ParisWW88, Krajicek01b, MacielPW02, Thapen-PhD, Atserias03, Krajicek04a, Jerabek04, Jerabek-independence, ChenLiOliveira24} and total search problems \cite{KKMP21, Korten21, Korten22}; it is sometimes also called the ``witnessing weak pigeonhole principle ($\mathsf{WPHPWIT}$)'' \cite{Jerabek-APC1,ChenLiOliveira24} and ``\LossyCode'' \cite{Korten22}. Clearly, $\rwPHP$ corresponds to a $\TFNP^\dt$ problem: given (query access to) two functions $f: [N] \to [2N]$ and $g:[2N]\to [N]$, find an input $y\in [2N]$ such that $f(g(y)) \ne y$.

Let $\calP$ be a problem in $\TFNP^\dt$, then one can define a class $\rwPHP(\calP)$ capturing the retraction weak pigeonhole principle where, informally speaking, the retraction function $g$ can be computed in $\calP$. In the decision tree model, the inputs of $\rwPHP(\calP)$ consist of:\begin{enumerate}
	\item (the evaluation table of) a function $f:[N] \to [2N]$, and
	\item $2N$ instances of $\calP$, denoted as $\{I_y\}_{y\in [2N]}$, where each valid solution $ans$ of each $I_y$ is marked with an integer $g_{y, ans} \in [N]$.
\end{enumerate}
The goal is to find an integer $y\in [2N]$ along with a solution $ans$ of $I_y$ such that $f(g_{y, ans}) \ne y$. It is not hard to see that if $\calP \in \TFNP^\dt$ then $\rwPHP(\calP) \in \TFNP^\dt$ (\autoref{fact: rwPHP in TFNP}). Furthermore, $\rwPHP(\calP)$ can be solved by a simple \emph{randomized} algorithm given oracle access to any solver of $\calP$.

The class $\rwPHP(\PLS)$ is defined as the problems reducible to $\rwPHP(\calP)$ for a $\PLS$-complete problem $\calP$. It can be shown that $\rwPHP(\PLS)$ does not depend on the exact choice of the $\PLS$-complete problem $\calP$ (\autoref{fact: robustness of rwPHP(P)}).

\paragraph{Witnessing for $\T^1_2 + \dwPHP(\PV)$.} Although $\rwPHP(\PLS)$ seems to be new to the $\TFNP$ community, it already appeared implicitly in the literature of bounded arithmetic. This class captures the $\TFNP$ problems whose totality is provable in $\T^1_2 + \dwPHP(\PV)$. In other words, $\rwPHP(\PLS)$ corresponds to the \emph{witnessing theorem} for $\T^1_2 + \dwPHP(\PV)$ (just like how $\PLS$ corresponds to a witnessing theorem for $\T^1_2$ \cite{BussKrajicek94}). This was noticed in \cite{BussKT14} where they showed every $\forall\Sigma_1^b$-consequence of $\T^1_2 + \dwPHP(\PV)$ \emph{randomly} reduces to $\PLS$; in fact, the same argument implies a deterministic reduction to $\rwPHP(\PLS)$.%

\begin{mdframed}[hidealllines=true,backgroundcolor=gray!10,skipabove=0.3em,skipbelow=-0.4em,innertopmargin=0]
	\small
\begin{remark}[How Strong is $\rwPHP(\PLS)$?]~

    Since $\rwPHP(\PLS)$ can be seen as a randomized version of $\PLS$ (where the guarantee that ``most randomness is good'' is provided by the dual weak pigeonhole principle), its position in the $\TFNP^\dt$ hierarchy is roughly the same as, but slightly higher than $\PLS$. In particular, in the decision tree setting, it follows from the previous separations ($\PLS\not\subseteq \PPP$~\cite{GHMPRT22separation} and $\PLS\not\subseteq \PPA$~\cite{B-OM04}) that $\rwPHP(\PLS)$ is contained in neither $\PPP$ nor $\PPA$. Note that there is already a decision tree separation between $\PLS$ and the $\TFNP^\dt$ problem corresponding to $\rwPHP$ (which follows from a resolution width lower bound for $\rwPHP$ \cite[Proposition 3.4]{PudlakT19}), hence in the decision tree setting, $\rwPHP(\PLS)$ strictly contains $\PLS$.

    We also note that $\T^1_2(\alpha) + \dwPHP(\PV(\alpha))$ is a relatively weak theory in the realm of relativized bounded arithmetic.\footnote{The reader might have encountered claims in the literature that even weaker theories such as $\S^1_2$ or $\APC_1$ are ``strong'', so it might be confusing for a reader unfamiliar with bounded arithmetic that we are claiming $\T^1_2(\alpha) + \dwPHP(\PV(\alpha))$ as a ``weak'' theory. The reason is \emph{relativization}: In our formalization, the purported resolution proof $\alpha$ has \emph{exponential} size, and we are only allowed to reason about objects in $\PH$ (think of $\AC^0$ circuits over $\alpha$). This is much weaker than the setting where the proof $\alpha$ has \emph{polynomial} size and we are allowed to reason about polynomial-time concepts. This is roughly analogous to classifying the circuit class $\AC^0$ (i.e., relativized $\PH$) as ``weak'' and $\P/_{\poly}$ as ``strong''.} This theory is a subtheory of both $\T^2_2(\alpha)$ and \Jerabek's (stronger) fragment for approximate counting $\APC_2(\alpha)$ \cite{Jerabek-APC2}. It is also ``weak'' in the sense that unconditional unprovability results are known: it cannot prove the ordering principle \cite{AtseriasT14} and the pigeonhole principle \cite{PudlakT19}.
\end{remark}
\end{mdframed}

\subsection{Our Results}\label{sec: our results}

Our main results can be categorized into three parts: (1) bounded reverse mathematics ($\TFNP$ characterizations) for (several) resolution width lower bounds; (2) bounded reverse mathematics ($\TFNP$ characterizations) for (several) resolution size lower bounds; and (3) further applications in $\TFNP$ and proof complexity. We will describe the results related to width lower bounds first in \autoref{section: results Bounded Reverse Mathematics for Resolution Width Lower Bounds}, not only because they serve as prerequisites for the results regarding size lower bounds (discussed in \autoref{sec: our results on size lower bounds}), but also because the techniques therein find additional applications in $\TFNP$ and proof complexity (detailed in \autoref{section: our results in applications}). 

\subsubsection{Refuters for Resolution Width Lower Bounds}
\label{section: results Bounded Reverse Mathematics for Resolution Width Lower Bounds}

The main message in this subsection is that the refuter problems corresponding to resolution width lower bounds are complete for the well-studied class $\PLS$, the first syntactic subclass of $\TFNP$ introduced in the literature \cite{DBLP:journals/jcss/JohnsonPY88}.

We begin with the results related to the pigeonhole principle. The attentive reader may notice a subtle issue when formulating the refuter problem of width lower bound: $\PHP_{(n+1)\to n}$ already contains an axiom with width $n$, and the width lower bound for proving it is $n$ as well. Thus, the corresponding width refuter problem becomes trivial. To address this, we instead consider the width refuter problem for a \emph{constant-width analog} of $\PHP_{(n+1)\to n}$, called $\EPHP_{(n+1)\to n}$, which has constant-width axioms and an $n/3$ width lower bound as shown in \cite{Ben-SassonW01}. We characterize the complexity of its corresponding refuter problem:

\begin{namedtheorem}[\autoref{theorem: EPHP PLS complete}]
    $\Refuter(w(\EPHP \vdash_{\Res}\bot)< n/3)$ is $\PLS$-complete.
\end{namedtheorem}

A similar $\PLS$-completeness result also holds for Tseitin formulas (on expander graphs), where $e(G)$ below is the \emph{expansion} parameter of the graph $G$ (\autoref{def: expander graphs}).
\begin{namedtheorem}[\autoref{thm: Tseitin width lower bound refuter}]
    $\Refuter(w(\Tseitin\vdash_\Res\bot) < e(G))$ is $\PLS$-complete. 
\end{namedtheorem}

The techniques used in these results will be further extended to the refuter problems corresponding to black-box $\TFNP$ separations, specifically  $\PLS\not\subseteq\PPP$ and $\PLS\not\subseteq\PPA$, as described in \autoref{thm: TFNP ref PPP PPA PLS} below. 

To tackle \autoref{prob: refuter for PHP informal} though, we have to delve into the proofs of the exponential (size) lower bound. A \emph{monotonized} version of the \emph{width} lower bound plays a crucial role in the simplified proof by Beame and Pitassi~\cite{beame1996simplified}. In particular, they show that any resolution refutation of $\PHP_{(n+1)\to n}$ contains a clause $C$ with ``monotone width'' of at least $2n^2/9$ (see \autoref{subsection: width white-box}). We similarly characterize the complexity of its corresponding refuter problem (where the subscript $\Mono$ denotes the monotone analog of the width refuter problem; the formal definition is provided in \autoref{subsection: width white-box}):

\begin{namedtheorem}[\autoref{theorem: monoPHP PLS complete}]
    $\Refuter(w_{\Mono}(\PHP_{(n+1)\rightarrow n}\vdash_{\Res} \bot)<2n^2/9)$ is $\PLS$-complete.
\end{namedtheorem}
This result serves as a key step toward addressing the size refuter problem for the pigeonhole principle, which will be discussed in the next subsection.

The $\PLS$-hardness parts of all three results above stem from a \emph{unified and simple proof}, detailed in \autoref{lemma: width refuter lower bound}. Conversely, the $\PLS$-membership of these refuter problems is established by carefully analyzing the proofs in \cite{beame1996simplified, Ben-SassonW01} and demonstrating that ``$\PLS$-reasoning'' suffices to prove these lower bounds. (In fact, these proofs can be formalized in the theory $\T^1_2(\alpha)$, and the $\PLS$-membership follows directly from the witnessing theorem in \cite{BussKrajicek94}.)

\paragraph{A non-uniform universal $\PLS$-membership.}
Finally, we establish a \emph{universal} $\PLS$-membership result with respect to \emph{non-uniform} decision tree reductions: for any resolution width lower bound against every unsatisfiable CNF, \emph{as long as the lower bound is correct}, the corresponding refuter problem can be reduced to $\PLS$ under \emph{non-uniform} decision tree reductions. 

\begin{namedtheorem}[\autoref{lemma: width refuter upper bound}]
    Let $\calF$ be any (possibly non-uniform) family of unsatisfiable CNFs with polynomially many clauses, and let $w_0$ be any valid resolution width lower bound for $\calF$. Then there exists a (non-uniform) decision-tree reduction from $\Refuter(w(\calF \vdash_\Res\bot) < w_0)$ to $\PLS$.
\end{namedtheorem}

Both the formulation and proof of this result inherently require non-uniformity for at least two reasons: (1) it is computationally hard to check whether an arbitrarily given CNF is unsatisfiable, and (2) even assuming that the given CNF is unsatisfiable, it is hard to calculate the resolution width lower bound. See \autoref{subsection: width black-box} for further discussion.

\begin{mdframed}[hidealllines=true,backgroundcolor=gray!10,skipabove=0.3em,skipbelow=-0.4em,innertopmargin=0]
	\small
\begin{remark}[Uniform vs.~non-uniform reductions]
    Note that if one only cares about non-uniform reductions, then (the $\PLS$-membership parts of) \autoref{theorem: EPHP PLS complete} and~\autoref{thm: Tseitin width lower bound refuter} are merely special cases of \autoref{lemma: width refuter upper bound}. Nevertheless, we believe that the uniform $\PLS$-membership results in \autoref{theorem: EPHP PLS complete} and~\autoref{thm: Tseitin width lower bound refuter} are informative, as they actually show that the corresponding lower bounds can be formalized in $\T^1_2(\alpha)$; in fact, the \emph{code} of the Turing machine that implementing the uniform reduction to $\PLS$ effectively acts as a \emph{proof} of the width lower bound using a \emph{local search} argument. They are also crucial for the uniform $\rwPHP(\PLS)$-memberships for the size refuter problems.
    However, the decision tree reduction in \autoref{lemma: width refuter upper bound} seems to require $\exp(n)$ bits of non-uniformity, making it \emph{highly} non-uniform.

    On the other hand, the non-uniform reduction in \autoref{lemma: width refuter upper bound} implies an intriguing {proof complexity upper bound}: \emph{Small-width resolution can prove width lower bounds for resolution itself}! (See \autoref{section: our results in applications} for more details.) Uniformity is not required for this application, allowing us to derive more proof complexity upper bounds using \autoref{lemma: width refuter upper bound}: \emph{every} resolution width lower bound \emph{that is correct} can be proved in low-width resolution. (The size lower bound analog of \autoref{lemma: width refuter upper bound} remains  unknown, hence we can only show proof complexity upper bounds for tautologies encoding \emph{specific} resolution size lower bounds.)
\end{remark}
\end{mdframed}

\subsubsection{Refuters for Resolution Size Lower Bounds}\label{sec: our results on size lower bounds}

Our main message in this subsection is that the refuter problems corresponding to many resolution size lower bounds are complete for $\rwPHP(\PLS)$, the $\TFNP$ subclass introduced in \autoref{sec: intro rwPHP(PLS)}. Indeed, the theorems presented in this subsection suggest that $\rwPHP(\PLS)$ captures the complexity of proving \emph{the easiest-to-prove} size lower bounds for resolution. Our workflow is the same as before:
\begin{itemize}
    \item First, we show that for many notable resolution size lower bounds proven in the literature, the corresponding refuter problems reduce to $\rwPHP(\PLS)$. Specifically, we identify a common technique for proving resolution size lower bounds, which we call ``random restriction + width lower bounds'', and demonstrate that if a resolution size lower bound can be proven using it, then the corresponding refuter problem generally falls within $\rwPHP(\PLS)$.
    \item Next, we present a \emph{unified} $\rwPHP(\PLS)$-hardness result: the refuter problems for resolution size lower bounds are $\rwPHP(\PLS)$-hard, and the hardness proof \emph{does not} depend on the hard tautology considered. Thus, we conclude the $\rwPHP(\PLS)$-completeness of many refuter problems for resolution size lower bounds.
\end{itemize}
The $\rwPHP(\PLS)$-hardness of size lower bound refuters turns out to be more challenging than the $\PLS$-hardness of width lower bound refuters, as discussed in \autoref{sec: overview of lower bounds}.

We begin by showing that \autoref{prob: refuter for PHP informal} reduces to $\rwPHP(\PLS)$:
\begin{theorem}[Informal version of \autoref{thm: refuting Res lb for PHP is in rwPHP(PLS)}]\label{thm: informal main upper bound}
	There exists an absolute constant $c > 1$ and an efficient decision-tree reduction from the problem $\Refuter(s(\PHP_{(n+1)\to n}\vdash_\Res\bot) \le c^n)$ to $\rwPHP(\PLS)$.
\end{theorem}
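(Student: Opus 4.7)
The plan is to formalize the Beame--Pitassi proof strategy of ``random restriction + monotonized width lower bound'' as an $\rwPHP(\PLS)$-reduction. Given a purported length-$c^n$ resolution proof $\Pi$ of $\PHP_{(n+1) \to n}$, I would consider partial matching restrictions $\rho$ of an appropriate size (matching, say, all but $m \approx \sqrt{n}$ of the pigeons to distinct holes). A standard counting argument shows that, as long as $c$ is small enough, for all but an exponentially small fraction of such $\rho$, every clause of $\Pi$ of monotonized width $\ge 2m^2/9$ is hit by $\rho$, i.e., contains a literal satisfied by $\rho$. For these \emph{good} $\rho$, the restricted proof $\Pi\rest\rho$, after removing the hit clauses, becomes a monotonized-width-$(< 2m^2/9)$ resolution refutation of $\PHP_{(m+1)\to m}$ --- which, by \autoref{theorem: monoPHP PLS complete}, cannot exist and whose refuter problem is in $\PLS$.

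Concretely, I would construct an $\rwPHP(\PLS)$ instance in which $[2N]$ indexes the candidate restrictions $\rho$, and $[N]$ indexes ``bad-event descriptions'' consisting of a pair (wide clause index of $\Pi$, short certificate that this clause is not killed by $\rho$). The counting argument guarantees enough such bad events to describe every bad $\rho$, so the decoding map $f : [N] \to [2N]$ (sending a description to the restriction it witnesses) is well-defined and efficiently computable. The $\PLS$ instance $I_\rho$ is the monotonized width refuter of $\Pi\rest\rho$ produced by \autoref{theorem: monoPHP PLS complete}. For each solution $\mathrm{ans}$ of $I_\rho$, one of two things happens: either $\mathrm{ans}$ lifts to a genuine invalid derivation of $\Pi$ (in which case our reduction short-circuits and outputs it), or the only reason $\mathrm{ans}$ looks invalid in $\Pi\rest\rho$ is that we removed a purportedly-hit wide clause that $\rho$ in fact failed to hit --- which certifies $\rho$ is bad and yields a valid entry in $[N]$ for the retraction $g_{\rho,\mathrm{ans}}$. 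Querying the $\rwPHP(\PLS)$ oracle then returns some $(y, \mathrm{ans})$ with $f(g_{y,\mathrm{ans}}) \ne y$, which by construction must fall into the short-circuit branch, giving a bona fide invalid step of $\Pi$.

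The principal obstacle I expect is the bookkeeping around monotonization. Beame--Pitassi's $2m^2/9$ bound applies to a monotonized transform of a resolution proof, so the $\PLS$-solution of the restricted monotonized refuter has to carry enough information for a depth-$\poly(n)$ decision tree to reconstruct the illegal resolution step in the original, un-monotonized $\Pi$. A secondary difficulty is calibrating the restriction distribution so that (i) the count of bad restrictions is provably at most $N$, (ii) ``bad events'' have a canonical, decision-tree-efficiently decodable encoding as elements of $[N]$, and (iii) every good $\rho$ really does leave $\Pi\rest\rho$ in the hypothesis of \autoref{theorem: monoPHP PLS complete} (in particular, enough pigeons and holes survive that $\PHP_{(m+1)\to m}$ is nontrivial and its $2m^2/9$ width bound beats the residual width after restriction). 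These are standard ingredients of random-restriction arguments, but making them uniformly implementable as a $\TFNP^\dt$-reduction --- so that both $f$ and every $I_y$ are defined by short decision trees into $\Pi$ --- requires care.
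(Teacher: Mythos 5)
Your overall strategy is exactly the paper's (cf.\ \autoref{thm: refuting Res lb for PHP is in rwPHP(PLS)}): restrict $\Pi$ by a random partial matching, use a compression/counting argument to encode the ``bad'' restrictions by pairs (wide clause, short certificate) so that the decoding map $f$ is the purported surjection, take as the $\PLS$ instances $I_\rho$ the monotone-width refuters of $\Pi|_\rho$ via \autoref{thm: mono PLS-completeness}, and decode any $\rwPHP(\PLS)$ solution with $f(g_{y,\mathrm{ans}})\ne y$ into a genuinely invalid step of $\Pi$ (the ``too wide'' case is excluded precisely because then $f(g_{y,\mathrm{ans}})=y$). The bookkeeping issues you flag (monotonization, canonical encodings, block-depth) are all handled in the paper essentially as you anticipate.

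However, your choice of restriction size is miscalibrated in a way that breaks the key counting step. You propose matching all but $m\approx\sqrt{n}$ pigeons. Then the monotone width threshold you must enforce on $\Pi|_\rho$ is $2m^2/9=\Theta(n)$, while the per-step probability of killing a clause of monotone width $\ge W$ at stage $j$ of building the matching is only about $W/\bigl((n+1-j)(n-j)\bigr)$; summing over the $t=n-\sqrt{n}$ stages telescopes to roughly $W/\sqrt{n}=\Theta(\sqrt{n})$. Hence a fixed wide clause survives the restriction with probability $2^{-\Theta(\sqrt{n})}$, not $2^{-\Omega(n)}$. Consequently the union bound over the $c^n$ clauses of $\Pi$ cannot be paid for any constant $c>1$: equivalently, the number of bad-event descriptions $c^n\cdot|\mathcal{BAD}|$ exceeds the number of restrictions, so $f$ is not shrinking and you cannot even form a well-defined $\rwPHP$ instance (there may be no good restriction at all). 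With $m\approx\sqrt n$ this argument only yields a $2^{\Theta(\sqrt n)}$-size refuter reduction. The fix is the paper's calibration (\autoref{claim: size-width tradeoff for Resolution}, \autoref{thm: size lb for resolution}): restrict only a constant fraction of pigeons, $t=n/10$, keeping $m=n-t=\Theta(n)$, so the threshold $2m^2/9=\Theta(n^2)$ makes each matching step kill a wide clause with constant probability, the bad fraction per clause is $2^{-\Omega(n)}$, and the union bound over $1.01^n$ clauses (and hence the compression defining $f$) goes through. With that correction your argument coincides with the paper's proof.
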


In fact, we show that $\T^1_2(\alpha) + \dwPHP(\PV(\alpha))$ proves the sentence 
\[\forall n \in \Log~\exists i\le c^n~\mistake_\PHP(n, \alpha, i),\]
i.e., $\alpha$ is not a length-$c^n$ resolution proof for PHP, by formalizing the classical proofs in \cite{Haken85,CookP90,beame1996simplified}; \autoref{thm: refuting Res lb for PHP is in rwPHP(PLS)} then follows from the witnessing theorem for $\T^1_2(\alpha) + \dwPHP(\PV(\alpha))$. In the technical overview (\autoref{sec: overview of upper bounds}) and the main proof (\autoref{sec: upper bound for Refuter for Resolution}), we present the reduction from the refuter problem $\Refuter(s(\PHP\vdash_\Res\bot) \le c^n)$ to $\rwPHP(\PLS)$ directly, without relying on witnessing theorems. %

It turns out that a large variety of resolution size lower bounds can be proven using the paradigm of ``random restriction + width lower bounds,'' including those for $\XOR$-lifted formulas \cite{DantchevR03}, Tseitin formulas \cite{Urquhart87, Schoning97}, and random CNFs \cite{ChvatalS88}. We show that all these lower bounds have corresponding refuter problems in $\rwPHP(\PLS)$ (see \autoref{thm: XOR-lifting is in rwPHP(PLS)}, \autoref{thm: Res size lower bound for Tseitin}, and \autoref{thm: refuters for random k-CNF lower bounds}, respectively). These results provide strong evidence that $\rwPHP(\PLS)$ (or $\T^1_2(\alpha) + \dwPHP(\PV(\alpha))$) captures the ``complexity'' of this popular proof technique for resolution lower bounds.

We complement the above results by showing that for \emph{every} unsatisfiable family of CNFs $\{F_n\}$ that requires resolution size greater than $s_F(n)$, the corresponding refuter problem $\Refuter(s(F_n\vdash_\Res\bot) \le s_F(n))$ is hard for $\rwPHP(\PLS)$.

\begin{theorem}[Informal version of \autoref{thm: rwPHP(PLS)-hardness of refuters}]\label{thm: informal main lower bound}
	For every unsatisfiable family of CNF formulas $\{F_n\}$ and parameter $s_F(n)$ such that every resolution refutation of $F_n$ requires more than $s_F(n)$ clauses, there exists a decision tree reduction of depth $\poly(n)$ from $\rwPHP(\PLS)$ to $\Refuter(s(F_n\vdash_\Res\bot) \le s_F(n))$.\footnote{This theorem requires a mild technical condition that $s_F(n)$ should be moderately larger than the size of the $\rwPHP(\PLS)$ instance; see the formal statement in \autoref{thm: rwPHP(PLS)-hardness of refuters} for details.}
\end{theorem}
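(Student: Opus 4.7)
The plan is to reduce $\rwPHP(\PLS)$ to the size refuter problem by constructing, from any $\rwPHP(\PLS)$ instance, a purported size-$s_F(n)$ resolution proof $\Pi$ of $F_n$ whose invalid nodes encode $\rwPHP(\PLS)$ solutions. Given an $\rwPHP(\PLS)$ instance $(f\colon [N]\to[2N], \{I_y\}_{y\in[2N]})$ with $N$ chosen polynomially smaller than $s_F(n)$, the reduction maintains $\Pi$ implicitly as a decision-tree function of the $\rwPHP(\PLS)$ oracles, with each node of $\Pi$ computable in depth $\poly(n)$. Since $F_n$ has no valid size-$s_F(n)$ proof, the refuter must return some invalid node $i^\star$; we then decode $i^\star$ into an $\rwPHP(\PLS)$ witness.

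The construction combines two ingredients already developed in the paper: the unified $\PLS$-hardness construction for width refuters from \autoref{lemma: width refuter lower bound}, and the Ben-Sasson and Wigderson size-width tradeoff. In three stages: first, for each $y\in[2N]$, apply the $\PLS$-hardness construction to $I_y$ to obtain a candidate small-width purported refutation $\Pi'_y$ of $F_n\upharpoonright\rho_y$ for a restriction $\rho_y$ keyed by $y$, where the width lower bound on $F_n\upharpoonright\rho_y$ is furnished by the size-width tradeoff applied to $F_n$'s size lower bound; second, inflate each $\Pi'_y$ into a purported size-$s_F(n)$ proof $\Pi''_y$ of $F_n$ by on-the-fly unrestriction, so that any invalid step in $\Pi'_y$ lifts to a localizable invalid step in $\Pi''_y$; third, aggregate the $2N$ inflated sub-proofs into a single purported proof $\Pi$ of $F_n$, with the aggregation seeded by $f$ so that the handoff between blocks syntactically encodes the identity check $f(g_{y,ans_y})=y$ as a resolution step whose validity is equivalent to this equality holding.

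Upon receiving $i^\star$ from the refuter, the reduction first locates which block $y^\star$ contains $i^\star$. If $i^\star$ is a handoff node, we directly read off $(y^\star, ans_{y^\star})$ certifying $f(g_{y^\star, ans_{y^\star}})\ne y^\star$; otherwise $i^\star$ lies inside some $\Pi''_{y^\star}$, and we successively pull back through the unrestriction and through the $\PLS$-hardness reduction underlying \autoref{lemma: width refuter lower bound} to recover a sink $ans_{y^\star}$ of $I_{y^\star}$ whose handoff consistency was the failure detected by $i^\star$, and hence $f(g_{y^\star, ans_{y^\star}})\ne y^\star$ as required.

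The hard part will be the universality of the aggregation step: without exploiting any structural feature of $F_n$ beyond its size lower bound, we must produce a syntactically proof-looking object in which the only \emph{possibly} invalid nodes are either handoff nodes or inherited from some $\Pi''_y$, and the handoff must genuinely force an $\rwPHP$-style consistency check into the proof skeleton rather than into freely choosable scaffolding. A secondary challenge is the on-the-fly efficiency constraint, which forces both the Ben-Sasson and Wigderson unrestriction and the $\PLS$-hardness reduction to be implemented with only $\poly(n)$ oracle queries per node of $\Pi$; this, together with the need to fit $2N$ inflated sub-proofs plus the aggregation scaffold inside $s_F(n)$ nodes, accounts for the mild technical condition relating $N$ and $s_F(n)$ referenced in the theorem's footnote.
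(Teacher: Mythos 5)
Your high-level frame (build a fake size-$s_F$ proof whose only possibly-invalid nodes decode to $\rwPHP(\PLS)$ witnesses) matches the paper, but the concrete construction you sketch has a genuine gap exactly at the point you flag as ``the hard part,'' and the ingredients you choose do not supply the missing mechanism. First, the decoding step fails as stated: if you embed each $I_y$ via the width-refuter hardness construction of \autoref{lemma: width refuter lower bound}, the invalid nodes inside the block for $y$ correspond to \emph{all} $\Iter$ solutions of $I_y$, with no reference to $f$ or $g_y$. A solution of $I_y$ is not by itself an $\rwPHP(\PLS)$ solution --- you additionally need $f(g_y(ans))\ne y$ --- so the refuter may legitimately return a node that decodes to a ``useless'' $\PLS$ solution. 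The whole difficulty is to make the nodes associated with solutions satisfying $f(g_y(ans))=y$ \emph{valid} derivations, so that only the genuinely refuting pairs survive as invalid steps. The paper's proof (\autoref{thm: rwPHP(PLS)-hardness of refuters}) achieves this by a specific mechanism your proposal lacks: it simulates the brute-force decision-tree refutation of $F$ layer by layer (clauses $C\lor x_t$, $C\lor\overline{x}_t$), uses the purported surjection $f:[M]\to[2M]$ at every layer to collapse $2M$ clauses back to $M$ (this is what keeps the proof of size $O(nLM+|F|)$), and implements each $I_y$ as a chain of weakenings that ends by declaring $C_y$ a weakening of $E^t_{g_y(ans)}=C_{f(g_y(ans))}$; since distinct clauses at a layer are never subclauses of one another, that step is invalid precisely when $f(g_y(ans))\ne y$. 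In your scheme, $f$ only ``seeds the aggregation,'' and no resolution/weakening step's validity is forced to coincide with the equality $f(g_{y,ans})=y$; asserting that the handoff ``syntactically encodes the identity check'' is the theorem you still need to prove.

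Second, the restriction-plus-size-width scaffolding is both unnecessary and unsound here. The Ben-Sasson--Wigderson tradeoff applied to the size lower bound for $F_n$ gives (at best, and only for small initial width) a width lower bound for $F_n$ itself, not for arbitrary restrictions $F_n|_{\rho_y}$; a restriction of a hard formula can be refutable in tiny width or even contain $\bot$, so the ``width lower bound furnished by the tradeoff'' is not furnished. Moreover, no width lower bound is actually needed in the hardness direction (in \autoref{lemma: width refuter lower bound} the truth of the lower bound is only used for totality, not inside the reduction), and turning $2N$ refutations of $F_n|_{\rho_y}$ into a single refutation of $F_n$ requires deriving and then combining the clauses $\lnot\rho_y$, which imposes a covering structure on the $\rho_y$'s that again has nothing to do with $f$. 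Your proposal effectively imports the paper's \emph{upper-bound} technique (random restrictions $+$ width lower bounds, as in \autoref{thm: refuting Res lb for PHP is in rwPHP(PLS)}) into the lower-bound direction, where it does not apply; the paper's actual hardness proof makes no use of restrictions or of size-width tradeoffs (which, for the membership direction, the paper explicitly could not formalize; see \autoref{sec: what we failed to formalize}). There is also a size-accounting slip --- $2N$ sub-proofs each ``inflated to size $s_F(n)$'' cannot fit inside a single size-$s_F(n)$ proof --- but that is minor compared to the missing consistency-check mechanism.
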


Note that \autoref{thm: informal main lower bound} holds for \emph{every} hard tautology, whereas the $\rwPHP(\PLS)$ upper bounds such as \autoref{thm: informal main upper bound} are only known to hold for some natural families of hard tautologies. For these natural tautologies, we establish a \emph{reversal} in the bounded reverse mathematics of proof complexity lower bounds: The power of ``$\rwPHP(\PLS)$-reasoning'' is \emph{sufficient} for implementing a popular proof strategy that can prove all these resolution lower bounds and, at the same time, is \emph{necessary} for proving \emph{any} resolution lower bound.

\begin{mdframed}[hidealllines=true,backgroundcolor=gray!10,skipabove=0.6em,skipbelow=-0.4em,innertopmargin=0]
	\small
    \begin{remark}
        We also note that \autoref{thm: informal main lower bound} requires decision tree depth $\poly(n)$ regardless of $s_F$, and is thus only considered ``efficient'' when $s_F = 2^{n^{\Omega(1)}}$. However, this is merely an artifact of our definition of ``efficiency'' in the decision tree setting, i.e., if the input length is $N$, then depth-$\polylog(N)$ decision trees are considered ``efficient''. In fact, even if $s_F = 2^{n^{o(1)}}$, each node in the purported length-$s_F$ resolution proof still requires $\poly(n)$ bits to represent, so it takes $\poly(n)$ query complexity to \emph{verify} a solution of the refuter problem. Therefore, it still makes sense \emph{in the particular setting of refuter problems} to consider a decision tree reduction \emph{efficient} if its query complexity is at most $\poly(n)$. We interpret \autoref{thm: informal main lower bound} to mean that ``$\rwPHP(\PLS)$-reasoning'' is necessary for proving \emph{not only subexponential but any moderately large} size lower bound for resolution.
    \end{remark}
\end{mdframed}

The proof of \autoref{thm: informal main lower bound} is heavily inspired by the $\NP$-hardness of automating resolution \cite{AtseriasM19} and the exposition of this result in \cite{RezendeGNPR021}. In these proofs, it was crucial to show that resolution cannot prove lower bounds against itself; in particular, \cite[Section 5]{RezendeGNPR021} showed that resolution requires a large (block-)width to prove resolution lower bounds. Notably, the proof in \cite{RezendeGNPR021} is by a reduction from $\rwPHP$, i.e., resolution cannot prove lower bounds against itself because resolution cannot prove $\rwPHP$. We strengthen these results by reducing a stronger problem ---$\rwPHP(\PLS)$ instead of $\rwPHP$ --- to the refuter problems, thereby obtaining a \emph{tight} characterization of these refuter problems.

Finally, our results provide an intriguing characterization of the provably total $\NP$ search problems in $\T^1_2 + \dwPHP(\PV)$ (see \autoref{cor: main reversal result}). That is:

\begin{quote}
	Just as ``every DAG has a sink'' characterizes the $\forall\Sigma_1^b$-consequences of $\T^1_2$ \cite{BussKrajicek94}, ``resolution requires $2^{\Omega(n)}$ size to prove PHP'' characterizes the $\forall\Sigma_1^b$-consequences of $\T^1_2 + \dwPHP(\PV)$.
\end{quote}

\subsubsection{Applications}
\label{section: our results in applications}

Besides being interesting in itself, our study of refuter problems also reveals several new insights into these well-studied proof complexity lower bounds and $\TFNP$ separations.
More specifically, we translate our results into different languages using the generic connection between $\TFNP^\dt$ and proof complexity via the \emph{false clause search} problem (see, e.g., \cite{RezendeGR22survey}): For an unsatisfiable CNF $F = C_1 \land \dots \land C_M$, the false clause search problem $\SearchCNF(F)$ is a $\TFNP^\dt$ problem where, given oracle access to an input $x \in \{0, 1\}^N$, the goal is to find a clause $C_i$ such that $C_i(x) = \False$. Any $\TFNP^\dt$ problem can be written as a false clause search problem for a family of low-width CNFs, and vice versa.
In particular, a family of unsatisfiable CNFs has low-width resolution refutations if and only if the corresponding false clause search problem reduces to $\PLS$ \cite{razborov1995unprovability}
(see also \cite[Section 8.2.2]{Pritish_Kamath_PhD} for an exposition). See \autoref{fig: application} for a diagram that summarizes the translations of our main results in different languages.

\begin{figure}[btph]
    \centering
    \begin{tikzpicture}[scale=1.1]
\tikzset{inner sep=0,outer sep=3}

\tikzstyle{a}=[inner sep=6pt, inner ysep=6pt,outer sep=0.5pt,
draw=black!20!white, fill=Cerulean!2!white, very thick, rounded corners=6pt, align=left]

\tikzstyle{b}=[inner sep=6pt, inner ysep=6pt,outer sep=0.5pt,
draw=white, fill=white, very thick, rounded corners=6pt, align=center]

\begin{scope}[yscale=1.145]
{
\large

\node[a, anchor = west] (PP) at (-8.1,5) {\autoref{thm: low width res prove res LB}, \ref{thm: low width rRes prove size lb}: \textcolor{blue}{\emph{Proof complexity}} of the \textcolor{blue}{\emph{proof lower bounds}}.};

\node[a, anchor = west] (CP) at (-9,3) {\autoref{lemma: width refuter upper bound}, \ref{thm: Tseitin width lower bound refuter}, \ref{thm: refuters for random k-CNF lower bounds}: \textcolor{orange}{\emph{Computational complexity}} of the \textcolor{red}{\textit{refuters}} for \textcolor{blue}{\emph{proof lower bounds}}.};

\node[a, anchor = west] (CTFNP) at (-7.59,1) {\autoref{thm: TFNP ref PPP PPA PLS}: \textcolor{orange}{\emph{Computational complexity}} of the \textcolor{red}{\textit{refuters}} for \textcolor{orange}{\emph{$\TFNP^{\dt}$ separations}}.};

\draw[<->, thick, black!70!black] 
    ([xshift=4.8cm, yshift=-0.5cm]PP.west) -- 
    ([xshift=4.8cm, yshift=-1.5cm]PP.west);

\draw[<->, thick, black!70!black] 
    ([xshift=8.8cm, yshift=-0.5cm]PP.west) -- 
    ([xshift=9.4cm, yshift=-1.5cm]PP.west);

\draw[->, thick, black!70!black] 
    ([xshift=2.8cm, yshift=-0.5cm]CP.west) -- 
    ([xshift=2.8cm, yshift=-1.5cm]CP.west);

\draw[<->, thick, black!70!black] 
    ([xshift=13.2cm, yshift=-0.5cm]CP.west) -- 
    ([xshift=13.2cm, yshift=-1.5cm]CP.west);

}
\node[b, anchor = west] (PLSRes) at (-7.7, 4) {$\PLS$ $=$ $\mathsf{Res}$ (\autoref{thm: pls characterize resolution})};

\node[b, anchor = west] (FCS) at (1.2, 4) {via \emph{false clause search}};

\node[b, anchor = west] (FCS) at (-6, 2) {via \autoref{lem: res_ref to pls_ref}};

\node[b, anchor = west] (FCS) at (-0.2, 2) {$\PLS$ $=$ $\mathsf{Res}$ (\autoref{thm: pls characterize resolution})};

\end{scope}

\end{tikzpicture}
    \caption{Translations of the main results in different languages. A one-way arrow represents an implication, and a two-way arrow indicates an equivalence.}
    \label{fig: application}
\end{figure}
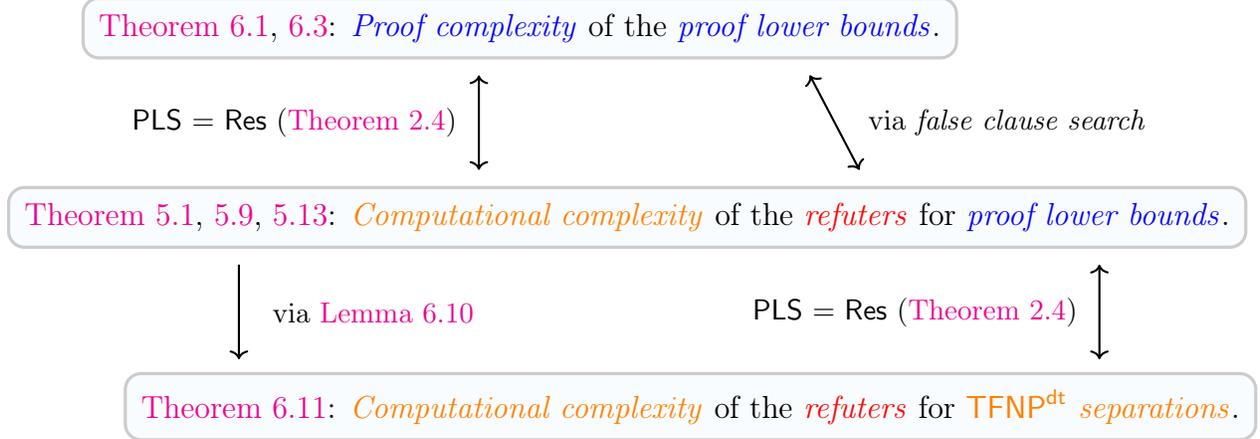

\paragraph{Proof complexity of proof lower bounds.} We first use our results to provide surprisingly efficient proofs for proof complexity lower bounds. Note that a proof complexity lower bound can be expressed by a family of CNFs $\FLB$ by formulating the corresponding refuter problem as a false clause search problem $\SearchCNF(\FLB)$ (see, e.g., \autoref{sec: proof complexity of proof complexity lower bounds}).

In particular, there exists a family of $\tilde{O}(w)$-width CNFs that encodes a width-$w$ resolution lower bound. Then, since $\PLS$ and low-width resolution are equivalent, \autoref{thm: black-box model PLS-completeness for width} implies the following \emph{upper bound} on the resolution width required to prove resolution lower bounds.

\begin{theorem}[Informal version of \autoref{thm: low width res prove res LB}]\label{thm: low width res proves res LB informal}
    Any width-$w$ resolution lower bound can be proved in resolution width $\tilde{O}(w)$.
\end{theorem}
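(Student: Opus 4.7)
The plan is to encode the width lower bound itself as an unsatisfiable CNF $\FLB$ and then apply \autoref{lemma: width refuter upper bound} together with the $\PLS$--resolution equivalence. For a family $\calF$ of unsatisfiable CNFs with a width-$w$ resolution lower bound, I would construct $\FLB$ so that its variables describe a purported width-$w$ resolution refutation of $F \in \calF$---each node stored as its (at most) $w$ literals together with pointers to two predecessor nodes---and its clauses locally verify that every derivation step is a valid resolution step (or a weakening/axiom step, plus a marker for the final empty clause). Since each such consistency check inspects only a constant number of nodes carrying $w$ literals and $O(\log s)$-bit pointers each, $\FLB$ has clause width $\tilde{O}(w)$, and $\FLB$ is unsatisfiable precisely when the width-$w$ lower bound holds. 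By construction, the false-clause search problem $\SearchCNF(\FLB)$ is, up to trivial re-indexing of solutions, the refuter problem $\Refuter(w(\calF\vdash_\Res\bot) < w)$.

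Next I would invoke \autoref{lemma: width refuter upper bound} to obtain a (non-uniform) decision-tree reduction from this refuter problem, and hence from $\SearchCNF(\FLB)$, to $\PLS$. Applying the generic $\TFNP^\dt$-to-proof-complexity dictionary---specifically the direction of \autoref{thm: pls characterize resolution} that turns a reduction-to-$\PLS$ into a low-width resolution refutation---this reduction certifies a resolution refutation of $\FLB$ whose width is controlled by the query complexity of the reduction plus the native width of $\FLB$. Both of these quantities are $\tilde{O}(w)$, so the resulting refutation has width $\tilde{O}(w)$, as required.

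The main obstacle will be width bookkeeping across the two translations. The $\PLS$--resolution equivalence generically introduces an $O(\log N)$ additive overhead (where $N$ is the bit length of the input to $\SearchCNF(\FLB)$), and the encoding $\FLB$ already inflates the native width $w$ by polylogarithmic factors coming from predecessor pointers and clause indexing. One has to check that these overheads compose only through polylogarithmic factors and therefore stay inside the $\tilde{O}(w)$ bound; nothing should blow up to $\poly(n)$ regardless of $w$. A secondary point is that the reduction of \autoref{lemma: width refuter upper bound} is non-uniform, but this is precisely the regime needed for a proof-complexity upper bound, since a resolution refutation is itself a non-uniform object and only its existence is being asserted.
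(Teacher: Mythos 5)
Your proposal is correct and is essentially the paper's own derivation: the paper notes that \autoref{thm: low width res prove res LB} follows from composing \autoref{lemma: width refuter upper bound} (refuter-to-$\PLS$ reduction) with \autoref{thm: pls characterize resolution}, using the same binary-encoded formalization $\FWLB$ of width $O(w\log N)$ and the same width bookkeeping. The proof the paper actually displays just unrolls this composition into an explicit Prover--Delayer strategy, which it explicitly describes as an equivalent, more direct presentation of the same argument.
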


We also use our $\rwPHP(\PLS)$ upper bounds to show that $\poly(n)$-width \emph{random resolution}~\cite{BussKT14, PudlakT19} can prove exponential-size resolution lower bounds (encoded as $\poly(n)$-width CNFs). In fact, using our results on random $k$-CNFs (\autoref{thm: refuters for random k-CNF lower bounds}), we can show that \emph{most} resolution size lower bounds are provable in low-width random resolution:

\begin{theorem}[Informal version of \autoref{thm: low width rRes prove size lb}]\label{thm: low width rRes prove size lb informal}
    With high probability over a random $k$-CNF $F$, the resolution size lower bound $s(F\vdash_\Res\bot) > 2^{\Omega(n)}$ can be proved in \emph{random resolution} width of $\poly(n)$.
\end{theorem}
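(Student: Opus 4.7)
The plan is to derive \autoref{thm: low width rRes prove size lb informal} as a corollary of the refuter $\rwPHP(\PLS)$-membership for random $k$-CNFs (\autoref{thm: refuters for random k-CNF lower bounds}), combined with the $\PLS=\Res$ correspondence (\autoref{thm: pls characterize resolution}) lifted through the $\rwPHP$ operator to the random-resolution model of \cite{BussKT14,PudlakT19}, in exact analogy with the way \autoref{thm: low width res prove res LB informal} follows from \autoref{lemma: width refuter upper bound} and $\PLS=\Res$.

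First, I would encode ``$s(F\vdash_\Res\bot)>c^n$'' as a $\poly(n)$-width unsatisfiable CNF $\FLB(F)$ in variables describing a purported length-$c^n$ refutation $\Pi$: for each node of $\Pi$, a constant-size bundle of clauses asserts that the node's claimed derivation is locally a valid resolution step, and further clauses assert that some node carries the empty clause. Each clause of $\FLB(F)$ involves only $\poly(n)$ variables since each node fits in $\poly(n)$ bits, and finding a falsified clause of $\FLB(F)$ on input $\Pi$ is literally the refuter problem $\Refuter(s(F\vdash_\Res\bot)\le c^n)$. Therefore \autoref{thm: refuters for random k-CNF lower bounds} immediately tells us that, with high probability over a random $k$-CNF $F$, $\SearchCNF(\FLB(F))$ reduces to a $\rwPHP(\PLS)$ instance by a $\poly(n)$-depth decision tree.

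The second step is to translate this $\rwPHP(\PLS)$-algorithm into a $\poly(n)$-width $\rRes$ refutation of $\FLB(F)$. The idea is to treat the retraction seed $y\in[2N]$ in the $\rwPHP$ instance as the random bits of the random resolution proof: for each fixed $y$, the residual search task becomes an ordinary $\PLS$-computation, which by \autoref{thm: pls characterize resolution} yields a $\poly(n)$-width resolution derivation from $\FLB(F)$ together with the ``random axiom'' $f(g(y))=y$; since this axiom holds for at least half of the choices of $y$ by the retraction weak pigeonhole principle, the resulting family of resolution proofs indexed by $y$ forms a random resolution refutation in the sense of \cite{BussKT14,PudlakT19}.

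The main obstacle is formalizing this generic ``$\rwPHP(\PLS)\Rightarrow\poly$-width $\rRes$'' translation cleanly. Concretely, one must exhibit the random axioms $f(g(y))=y$ (or their negations) as clauses of width $\poly(n)$ that are satisfied with sufficiently high probability over $y$, and verify that the residual $\PLS$-computation, for each fixed $y$, yields through \autoref{thm: pls characterize resolution} a resolution proof whose width is bounded by the decision-tree query complexity of the underlying reduction rather than by the size of the $\rwPHP(\PLS)$ instance. Once this translation lemma is in place (which I expect to be essentially a reformulation of the witnessing theorem for $\T^1_2(\alpha)+\dwPHP(\PV(\alpha))$ in the proof-complexity language), plugging in \autoref{thm: refuters for random k-CNF lower bounds} finishes the proof.
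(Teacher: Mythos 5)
Your route is essentially the paper's own: the paper states that \autoref{thm: low width rRes prove size lb} is a corollary of \autoref{thm: refuters for random k-CNF lower bounds} via exactly the translation you describe (a reduction to $\rwPHP(\PLS)$ yields a randomized reduction to $\PLS$, which becomes a random resolution refutation), and its written proof is the equivalent direct construction: sample $(j,\rho)$, let $B=\bigwedge_i B_i$ where $B_i$ asserts ``either $w(C_i)<an/8$ or $\rho$ kills $C_i$,'' and refute $\FSLB(F)\land B$ in $\poly(n)$ width by the same Prover-Delayer/$\PLS$ argument as in \autoref{thm: low width res prove res LB}. So the decomposition, the use of \autoref{thm: pls characterize resolution}, and the identification of the restriction seed with the randomness of the $\rRes$ proof all match.

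One detail in your translation lemma needs fixing, and as literally written it is the wrong way around: the random axiom attached to a seed $y$ cannot be ``$f(g(y))=y$.'' For a fixed $y$, the $\PLS$ computation on a syntactically valid input proof may legitimately terminate at a ``fat clause survived truncation'' solution $ans$, and for such $ans$ one \emph{has} $f(g_y(ans))=y$; so $\FLB(F)$ together with that axiom yields no low-width contradiction. The correct random side formula is (the per-input unfolding of) ``$y$ lies outside the range of $f$,'' i.e.\ the conjunction over clause indices $i$ of ``$C_i$ is narrow or $\rho_y$ kills $C_i$'' --- each conjunct a $\poly(n)$-width CNF since $f$ is computed in block-depth $1$ --- and it is precisely this axiom that the Prover contradicts when the $\PLS$ solution is not a genuine invalid derivation. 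Your hedge ``(or their negations)'' points at this, but the write-up should commit to it. Also, the generic $\rwPHP$ guarantee only gives $\Pr_y[B_y(x)=1]\ge 1/2$; to get the $\gamma=2^{-\Omega(n)}$ error claimed in the formal statement you should use the sharp union-bound count from the random-$k$-CNF reduction (the calculation in \eqref{eq: dwPHP used in random kCNF}) rather than the factor-$2$ shrinkage alone. With these two adjustments your proposal coincides with the paper's argument.
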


These results stand in stark contrast with \Garlik's result \cite{Garlik19} that tautologies encoding any resolution size lower bounds are hard for resolution: We show that either switching to width lower bounds (\autoref{thm: low width res proves res LB informal}) or considering \emph{random} resolution (\autoref{thm: low width rRes prove size lb informal}) makes these lower bound tautologies easy to prove!\footnote{In our \autoref{sec: proof complexity of proof complexity lower bounds}, the lower bound tautologies use \emph{binary encoding}, where (e.g.)~the predecessors of every node are encoded by $O(\log N)$ bits. In contrast, \Garlik \cite{Garlik19} uses \emph{unary encoding} where for every pair of nodes $(i, j)$ (a minor detail is that \cite{Garlik19} requires $i$ to be ``one level above'' $j$), there is a Boolean variable $x_{i, j}$ indicating whether $i$ is a predecessor of $j$. As \cite{Garlik19} pointed out, a resolution lower bound for the unary-encoded refutation statements implies a similar lower bound for the binary-encoded refutation statements. On the other hand, since we are proving \emph{width} upper bounds and the unary encoding already results in large-width CNFs, we can only afford to use binary encoding (see \autoref{remark: formalization of resolution lower bounds}).}

\paragraph{Complexity of refuting black-box $\TFNP$ separations.}

    We also consider the refuter problem for \emph{black-box $\TFNP$ separations}. Let $\mathsf{A}, \mathsf{B}$ be two $\TFNP^\dt$ classes such that $\mathsf{A} \nsubseteq \mathsf{B}$. Informally, $\Rft{\mathsf{A}}{\mathsf{B}}$ is the class of problems reducible to the following kind of ``refuter'' problems:
    The input is a purported decision tree reduction from $\mathsf{A}$ to $\mathsf{B}$, and the solution is a short witness showing that the reduction is wrong. The refuter problems for $\TFNP^\dt$ separations also lie in $\TFNP^\dt$, as their totality follows from the correctness of the black-box separation $\mathsf{A} \nsubseteq \mathsf{B}$. 
    
    The complexity of such refuter problems measures the strength of the arguments used for black-box separation results. For example, the following corollary conveys a simple but often overlooked fact: \emph{when separating a syntactic $\TFNP^\dt$ subclass $\mathsf{A}$ from $\mathsf{B}$, it is necessary to incur the totality principle of $\mathsf{B}$.}

    \begin{namedtheorem}[\autoref{cor: tfnp_ref: general_lb}]
        For any two $\TFNP^\dt$ classes $\mathsf{A}, \mathsf{B}$ such that $\mathsf{A} \nsubseteq \mathsf{B}$, $\mathsf{B} \subseteq \Rft{\mathsf{A}}{\mathsf{B}}.$
    \end{namedtheorem}

    Due to the connection between $\TFNP^\dt$ and proof complexity, the refuter problem for each black-box $\TFNP$ separation naturally aligns with a corresponding refuter problem for a proof complexity lower bound.
    In particular, we build a uniform reduction from the refuter problems for separations from $\PLS$ to the refuter problems for resolution width lower bounds (\autoref{lem: res_ref to pls_ref}), because showing a $\TFNP^\dt$ subclass $\mathsf{A}$ is not in $\PLS$ is essentially showing a resolution width lower bound for the formula expressing the totality of $\mathsf{A}$. 
    
    Note that the false clause search problem for $\EPHP$ and $\Tseitin$ are in $\PPP$ and $\PPA$ respectively. Therefore, using our characterization of the resolution width refuter for $\EPHP$ (\autoref{thm: white-box PLS-completeness for width}) and $\Tseitin$ (\autoref{thm: Tseitin width lower bound refuter}), we conclude that \emph{it is necessary and sufficient to use {local search} principle to separate $\PPP$ and $\PPA$ from $\PLS$ in the black-box setting.}

    \begin{theorem}[Informal version of \autoref{thm: TFNP ref PPP PPA PLS}]
        $\Rft{\PPP}{\PLS} = \Rft{\PPA}{\PLS} = \PLS$.
    \end{theorem}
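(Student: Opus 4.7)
The plan is to establish the two equalities by proving, in each case, containment in both directions. The easier direction is $\PLS \subseteq \Rft{\PPP}{\PLS}$ and $\PLS \subseteq \Rft{\PPA}{\PLS}$; this is immediate from \autoref{cor: tfnp_ref: general_lb}, which states that for any two $\TFNP^\dt$ classes $\mathsf{A}, \mathsf{B}$ with $\mathsf{A}\not\subseteq\mathsf{B}$ we have $\mathsf{B}\subseteq\Rft{\mathsf{A}}{\mathsf{B}}$. Plugging in $\mathsf{A}\in\{\PPP,\PPA\}$ and $\mathsf{B}=\PLS$, and invoking the known black-box separations $\PPP\not\subseteq\PLS$ and $\PPA\not\subseteq\PLS$, gives the containment. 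So the interesting content is the reverse containment.

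For the upper bound $\Rft{\PPP}{\PLS}\subseteq\PLS$, the plan is to chain together the following translations. An instance of $\Rft{\PPP}{\PLS}$ is essentially a purported decision-tree reduction from a canonical $\PPP$-complete problem (namely the false clause search problem $\SearchCNF(\EPHP)$, whose totality is witnessed by the pigeonhole principle and whose membership in $\PPP$ is standard) to the canonical $\PLS$-complete problem. First, invoke \autoref{lem: res_ref to pls_ref}, which provides a uniform reduction transforming a refuter instance for a $\PLS$-separation into a refuter instance for the corresponding resolution width lower bound: the purported $\PLS$-reduction is converted, via the equivalence between $\PLS$ and low-width resolution (\autoref{thm: pls characterize resolution}), into a purported low-width resolution refutation of $\EPHP$. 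Since $\EPHP$ actually requires width $\ge n/3$ by \cite{Ben-SassonW01}, any error in the purported refutation is a valid solution of $\Refuter(w(\EPHP\vdash_\Res\bot)<n/3)$. Second, invoke \autoref{theorem: EPHP PLS complete}, which places this width-refuter problem in $\PLS$. Composing these two reductions yields $\Rft{\PPP}{\PLS}\subseteq\PLS$.

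The argument for $\Rft{\PPA}{\PLS}\subseteq\PLS$ is structurally identical, using Tseitin formulas over an expander graph $G$ as the canonical $\PPA$-representative: the false clause search $\SearchCNF(\Tseitin)$ lies in $\PPA$ (via the handshake lemma / parity principle on $G$), and its width lower bound of $e(G)$ is classical. \autoref{lem: res_ref to pls_ref} converts a $\Rft{\PPA}{\PLS}$ instance into an instance of $\Refuter(w(\Tseitin\vdash_\Res\bot)<e(G))$, and \autoref{thm: Tseitin width lower bound refuter} shows this refuter is in $\PLS$.

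The main obstacle I anticipate is ensuring that \autoref{lem: res_ref to pls_ref} can be applied in precisely the form needed, namely starting from a purported reduction from a \emph{fixed} canonical $\PPP$-problem (resp.\ $\PPA$-problem) rather than from an arbitrary problem in the class. This requires that $\SearchCNF(\EPHP)$ and $\SearchCNF(\Tseitin)$ behave as ``complete'' representatives of $\PPP$-separations and $\PPA$-separations from $\PLS$ in the black-box sense, so that refuting \emph{any} purported $\PLS$-reduction from a $\PPP$ (resp.\ $\PPA$) problem reduces to refuting a purported low-width resolution refutation of $\EPHP$ (resp.\ Tseitin); verifying that the definition of $\Rft{\cdot}{\cdot}$ supports this move, and tracing the decision-tree depth through the composition so the final reduction is efficient, is where the bookkeeping is concentrated. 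Everything else is an assembly of results already stated in the excerpt.
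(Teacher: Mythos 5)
Your route is the paper's: the containment $\PLS \subseteq \Rft{\PPP}{\PLS}, \Rft{\PPA}{\PLS}$ comes from \autoref{cor: tfnp_ref: general_lb} (equivalently \autoref{lem:tfnp_ref:depth0 reduction}), and the reverse containment is obtained by converting a purported reduction to $\Iter$ into a purported low-width resolution refutation via \autoref{lem: res_ref to pls_ref} and then feeding it to the width refuters for $\EPHP$ and $\Tseitin$. One remark on the obstacle you flag: you do not need $\SearchCNF(\EPHP)$ or $\SearchCNF(\Tseitin)$ to be \emph{complete} for $\PPP$/$\PPA$. The class $\Rft{\mathsf{A}}{\mathsf{B}}$ is defined through complete problems, and \autoref{lem:tfnp_ref:gap1} only requires membership: since $\SearchCNF(\EPHP)\in\PPP$, refuting purported reductions from the $\PPP$-complete problem reduces (uniformly) to refuting purported reductions from $\SearchCNF(\EPHP)$, which is exactly the input format \autoref{lem: res_ref to pls_ref} expects. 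So that part of your worry dissolves.

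The genuine gap is in the last step, where you invoke \autoref{theorem: EPHP PLS complete} (and \autoref{thm: Tseitin width lower bound refuter}) as black boxes. The width-refuter instance produced by \autoref{lem: res_ref to pls_ref} lives in a different parameter regime from those theorems: the width bound is $w_0=\polylog(N)$ and the purported refutation has length $2^{\polylog(N)}$, i.e., only quasi-polynomial in the size $N$ of the original $\TFNP^\dt$ instance. A \emph{uniform} reduction to $\Iter$ must therefore run in $\polylog$ time per node, whereas the reduction behind \autoref{thm: white-box PLS-completeness for width} computes $\cri(C)$ by a $\poly(n)$-time bipartite-matching computation (\autoref{lem: cri EPHP efficient}), which is only acceptable when the proof length is $2^{n^{\Omega(1)}}$. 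The paper closes exactly this hole with an additional claim: when $w(C)=\polylog(n)$, $\cri(C)$ can be computed in $\polylog(n)$ time by exploiting sparsity (only polylogarithmically many pigeons are involved in $C$, so deciding existence of a critical assignment collapses to a matching problem on a polylog-size subgraph), together with an analogous claim for Tseitin after fixing an explicit constant-degree graph family so the graph need not be supplied as input (in contrast to \autoref{def: refutation problem for Tseitin width lower bounds}, where a sparse cut is also a legal answer). Without this step your composition is query-efficient but not time-efficient, so the claimed uniform $\PLS$-membership, and hence the equality $\Rft{\PPP}{\PLS}=\Rft{\PPA}{\PLS}=\PLS$, does not follow as written.
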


\subsection{Discussions, Speculations, and Future Directions}\label{sec: discussions}
This paper initiates a research program that attempts to understand, for every proof system $\calP$ of interest, the metamathematics of proving lower bounds against $\calP$ through the lens of refuter problems.\footnote{We believe the similar research program for circuit lower bounds would also be fruitful, which has already started since \cite{CJSW21, Korten22, ChenLiOliveira24} if not earlier. We limit our discussions to proof lower bounds here.} Our results on resolution suggest that this is a promising direction. There are a plethora of future research directions, both regarding ``weak'' systems (where we already know strong lower bounds against $\calP$-proofs) and ``strong'' ones (where we are still struggling to prove non-trivial lower bounds against $\calP$).

\paragraph{Weak proof systems.} It might be feasible to characterize the complexity of refuter problems for weak proof systems. How does the complexity of refuting lower bounds for $\calP$ compare with $\calP$ itself (or, more precisely, the $\TFNP^\dt$ subclass corresponding to $\calP$ \cite{BFI23})? In the case that $\calP$ is resolution, our work shows that the complexity of refuting \emph{width} lower bounds for $\calP$ is exactly $\calP$ itself (i.e., $\PLS$), and the complexity of refuting \emph{size} lower bounds is a randomized version of $\calP$ (i.e., $\rwPHP(\PLS)$). Thus, it seems reasonable to conjecture that for ``weak'' proof systems, the complexity of proving lower bounds against them is not much higher than themselves.

Moreover, the proof complexity of proof complexity lower bounds is intimately connected to the hardness of automatability of proof systems, see e.g., \cite{AtseriasM19, GoosKMP20, Bell20, RezendeGNPR021, ItsyksonR22, Garlik24, Papamakarios24}. We expect that a thorough understanding of the former would help make progress on the latter as well.

\paragraph{Strong proof systems.} The situation for strong proof systems seems much more mysterious. For strong proof systems $\calP$ (think of $\calP$ being Frege or Extended Frege), it is even unclear whether there should be an ``easiest-to-prove'' lower bound for $\calP$ (which would correspond to a syntactic subclass $\calC(\calP)\subseteq \TFNP^\dt$ that characterizes the complexity of proving lower bounds for $\calP$). Even if such a $\calC(\calP)$ exists, it is unclear if it is captured within our current landscape of $\TFNP^\dt$.\footnote{Note that the question of where $\calC(\calP)$ sits in the $\TFNP^\dt$ hierarchy is merely a restatement of the open problem of determining the proof complexity of proof complexity lower bounds for $\calP$. For example, $\calC(\calP)$ is a subclass of $\mathsf{PTFNP}$ \cite{GoldbergP18} if and only if \textsc{q-eff} (the proof system underlying the definition of $\mathsf{PTFNP}$) can prove lower bounds for $\calP$.}

This suggests the following possibility: The reason that we have not been able to prove lower bounds for $\calP$ is that $\calC(\calP)$ is a very complicated class, far beyond our current understanding of $\TFNP^\dt$ and bounded arithmetic. An even more speculative hypothesis would be that the proof systems $\calP$ for which we are able to prove lower bounds are exactly those where $\calC(\calP)$ is not ``much'' higher than $\calP$ themselves. We hope that future work will determine to what extent these hypotheses are correct.

The case of $\AC^0[p]$-Frege (where $p$ is a prime) is of particular interest. Although strong lower bounds for $\AC^0[p]$ \emph{circuits} have been known for decades \cite{Razborov87, Smolensky87}, we have not yet succeeded in turning these circuit lower bounds into proof complexity lower bounds against $\AC^0[p]$-Frege (see, e.g., \cite{MacielP96, buss2015collapsing}). The paper \cite{BussIPRS97} laid out a research program towards $\AC^0[p]$-Frege lower bounds by studying weaker algebraic proof systems such as the Nullstellensatz \cite{BeameIKPP94} and Polynomial Calculus \cite{clegg1996using, Razborov98_PC}. After a few decades, we have become proficient at proving lower bounds against such algebraic proof systems, but lower bounds against $\AC^0[p]$-Frege remain elusive. Is it because the refuter problems corresponding to $\AC^0[p]$-Frege lower bounds are \emph{fundamentally different} from those for the weaker algebraic proof systems? Does our metamathematical $\TFNP^\dt$ perspective bring new insights to this long-standing open question?

\subsection{Further Related Works}\label{sec: related works}

\paragraph{Refuter problems for circuit lower bounds.} Our study of the refuter problems for proof lower bounds is strongly influenced by the line of work on refuter problems for circuit lower bounds. Chen, Jin, Santhanam, and Williams \cite{CJSW21} call a lower bound \emph{constructive} if the corresponding refuter problem can be solved in deterministic polynomial time, and they argued that constructivity is a desirable aspect of lower bounds. Chen, Tell, and Williams \cite{CTW23} showed that for many lower bounds against randomized computational models, their refuter problems characterize derandomizing $\textrm{pr-}\BPP$. The main result of Korten \cite{Korten22} can also be seen as the ${\sf WPHPWIT}$-hardness of refuter problems for one-tape Turing machine lower bounds. Pich and Santhanam \cite{PichS23} showed how to turn proof complexity lower bounds into circuit lower bounds, assuming the refuter problem for the (conjectured) lower bound $\SAT\not\in \P/_{\poly}$ is ``provably easy'' in a certain sense. Finally, the results of Chen, Li, and Oliveira \cite{ChenLiOliveira24} can be interpreted as the $\PWPP$- and ${\sf WPHPWIT}$-completeness of various refuter problems.

It is also worth mentioning that Ebtehaj \cite{ebtehaj2023variants} studied the refuter problems for $\calA\not\subseteq\BPP$ for each (type-$1$) subclass $\calA\subseteq\TFNP$ that is indeed hard. However, \cite{ebtehaj2023variants} did not obtain any completeness results for such refuter problems.

\paragraph{Unprovability of complexity upper bounds.} In parallel to the investigation of unprovability of complexity lower bounds, there is another line of work showing the unprovability of complexity \emph{upper} bounds in fragments of bounded arithmetic~\cite{CookK07, KrajicekO16, BydzovskyKO20, BydzovskyM20, CarmosinoKKO21, AtseriasBM23}. For example, \Krajicek and Oliveira~\cite{KrajicekO16} proved that Cook's theory $\PV$ cannot prove $\P\subseteq \SIZE[n^k]$, and Atserias, Buss, and \Muller~\cite{AtseriasBM23} proved that the theory $\mathsf{V}^0_2$ cannot prove $\NEXP\subseteq \P/_\poly$. These results are equivalent to the \emph{consistency} of lower bounds with fragments of bounded arithmetic, thus in some sense representing progress towards proving circuit lower bounds.\footnote{The ``conventional wisdom'' seems to believe that the complexity lower bounds are true (for discussions, see \url{https://rjlipton.com/conventional-wisdom-and-pnp/}, accessed Mar 14, 2026). Hence, unprovability of complexity lower bounds can be seen as the difficulty for proving this ``conventional wisdom'', while unprovability of complexity upper bounds represents progress towards proving it. One should keep in mind that the opposite opinion makes equal sense: for a believer of complexity \emph{upper} bounds, the unprovability of these upper bounds indicates the difficulty of confirming their belief, while the unprovability of lower bounds implies progress towards it!} Indeed, \cite{CarmosinoKKO21} presented a general framework for showing such consistency results by proving lower bounds against circuits with a certain uniformity condition called ``LEARN-uniformity'', and the techniques employed in many of these papers are inspired by uniform circuit lower bounds such as \cite{SanthanamW14}.

\paragraph{Witnessing theorems.} $\TFNP$ and bounded arithmetic are connected through \emph{witnessing theorems}: each theory is associated with the class of $\TFNP$ problems whose totality is provable in this theory. Perhaps the best-known witnessing theorem is Buss's one \cite{Buss85}: every $\NP$ search problem provably total in $\S^1_2$ can be solved in deterministic polynomial time. The class $\PLS$ and its generalizations such as $\mathsf{CPLS}$ capture the $\NP$ search problems provably total in higher levels of bounded arithmetic hierarchy \cite{BussKrajicek94, KrajicekST07, skelley2011provably, PudlakT12}; in this sense, witnessing theorems also provide a systematic method for defining new syntactic subclasses of $\TFNP$. Other witnessing theorems considered in the literature include \cite{KolodziejczykNT11, BeckmannB17, KolodziejczykT22}. Our paper contributes to this line of research by characterizing the class of $\NP$ search problems provably total in $\T^1_2 + \dwPHP(\PV)$ by the refuter problems corresponding to many resolution lower bounds, in particular the problem $\Refuter(s(\PHP_{(n+1)\to n}\vdash_\Res \bot) < c^n)$.

\paragraph{Comparison with the consistency search problem.} We note that the refuter problem looks superficially similar to $\WrongProof$, the \emph{consistency search} problem for proof systems \cite{BeckmannB17, GoldbergP18, Pudlak20}. Let $\calP$ be a proof system, $\WrongProof(\calP)$ is the $\TFNP^\dt$ problem that given as input a purported $\calP$-proof $\Pi$ of \emph{an incorrect statement}, asks for the location of an invalid derivation in $\Pi$.

Although both $\WrongProof$ and our refuter problems take a purported proof as input and ask for an invalid derivation in the proof, we think that these two problems are fundamentally different, because they have different \emph{reasons of totality}. Roughly speaking, the totality of $\WrongProof$ is proved by the \emph{soundness} of $\calP$, and the totality of $\Refuter$ is guaranteed by \emph{lower bound proofs}. We elaborate on this in \autoref{sec: discuss wrong proof}.

Another (superficial) similarity between these two problems is that both problems are used to characterize the provably total $\NP$ problems in bounded arithmetic. The consistency search problems for Frege and Extended Frege characterize the $\forall\Sigma_1^b$-consequences of $\mathsf{U}^1_2$ and $\mathsf{V}^1_2$ respectively \cite{BeckmannB17}, while in this paper we show that the refuter problem for resolution (with a suitable hard tautology) characterizes the $\forall\Sigma_1^b$-consequences of $\T^1_2 + \dwPHP(\PV)$.

\subsection*{Paper Organization}
The main body of the paper is structured so that the initial sections primarily focus on presenting the complexities of refuter problems of the pigeonhole principle. Specifically, \autoref{sec: prelim} contains the necessary preliminaries and formal definitions. \autoref{sec: PHP} presents the complexity upper bound of the refuters for $\PHP_{(n+1)\to n}$. 
\autoref{sec: general lower bounds} complements the previous section by showing universal hardness results for refuting any narrow or short resolution proofs. In \autoref{sec: more upper bounds}, we extend our results to many other formulas, including $\XOR$-lifted formulas, Tseitin formulas, and random $k$-CNFs. Finally, \autoref{sec: applications} introduces two novel applications of our results in understanding the proof complexity of proof complexity lower bounds and the complexity of black-box $\TFNP$ separations.

\section{Technical Overview}

\subsection{Refuter Problems in \texorpdfstring{$\rwPHP(\PLS)$}{rwPHP(PLS)}}\label{sec: overview of upper bounds}

In this subsection, we explain how the lower bound proof in \cite{CookP90, beame1996simplified} yields a reduction from the problem $\Refuter(s(\PHP_{(n+1)\to n}\vdash_\Res\bot) \le c^n)$ to $\rwPHP(\PLS)$. As mentioned before, this is essentially a formalization of the lower bound proof in $\T^1_2(\alpha) + \dwPHP(\PV(\alpha))$, and the reduction follows from the witnessing theorem for this theory. However, this subsection will describe the reduction without invoking the witnessing theorem (nor does the formal proof in \autoref{sec: upper bound for Refuter for Resolution} use the witnessing theorem). We hope that by opening up the black box of the witnessing theorem, it would become clearer how each component in the proof corresponds to a component in the reduction to $\rwPHP(\PLS)$.

The proof of \cite{CookP90, beame1996simplified} consists of two components:\begin{itemize}
	\item (Random restrictions) First, we carefully design a distribution of random restrictions $\calR$ under which the following holds. (1) With high probability over $\rho\gets\calR$, any fixed size-$c^n$ resolution proof will simplify to a resolution proof of \emph{width}\footnote{In fact, in the case of PHP, we obtain a resolution proof of small \emph{monotone width}. We omit the distinction between width and monotone width in the overview and refer the reader to \autoref{sec: PHP} for details.} at most $w$ under $\rho$ (for some parameter $w$); (2) the pigeonhole principle $\PHP_{(n+1)\to n}$ remains to be the pigeonhole principle (of a slightly smaller size $\PHP_{(n'+1)\to n'}$) under any restriction $\rho\in\calR$. Moreover, $\calR$ is the uniform distribution over some set of restrictions; we abuse notation and use $\calR$ to also denote this set.
	\item (Width lower bound) Then, we invoke the width lower bound for the pigeonhole principle and show that resolution cannot prove $\PHP_{(n'+1)\to n'}$ in width $w$.
\end{itemize}
	
\def\ellcomp{\ell_{\sf comp}}
Given a resolution proof $\Pi$ of size $c^n$, the fact that most restrictions simplify $\Pi$ into a small-width proof can be shown by a \emph{compression argument}: given a clause $C_i\in \Pi$ and a restriction $\rho \in \calR$ that does \emph{not} shrink $C_i$ into a clause of width $\le w$, one can describe $\rho$ in $\ellcomp$ bits for some small $\ellcomp$. In what follows, it suffices if $\ellcomp \le \log|\calR| - \log(c^n) - 1$, which is indeed the case under a suitable choice of parameters. Note that for comparison, if such a clause $C_i$ were not known, it would require, information-theoretically, at least $\log|\calR|$ bits to encode any restriction $\rho \in \calR$. This compression argument implies that a random $\rho\gets \calR$ shrinks a fixed clause w.p.~$\ge 1-\frac{1}{2c^n}$, thus the existence of a good $\rho$ shrinking the whole proof $\Pi$ follows from a union bound over the $c^n$ clauses in $\Pi$.
	
For every restriction $\rho$, one can compute a proof $\Pi|_\rho$ with each clause $C\in \Pi$ replaced by $C|_\rho$, the restriction of $C$ under $\rho$; if $\width(C|_\rho) > w$, we truncate $C|_\rho$ to force its width to be at most $w$. Since $\Pi|_\rho$ is a width-$w$ resolution proof, it follows from the width lower bound that it does not prove $\PHP_{(n'+1)\to n'}$.

Our reduction from $\Refuter(s(\PHP_{(n+1)\to n}\vdash_\Res \bot) \le c^n)$ to $\rwPHP(\PLS)$ works as follows.\begin{itemize}
	\item Let $N := |\calR| / 2$. The function $f:[N] \to [2N]$ takes as inputs $(i, s)$ where $i\in[c^n]$ denotes a node in $\Pi$ and $s$ is the compressed description of a random restriction $\rho$ that fails to simplify $C_i$ to width $w$ (note that this takes $\log(c^n) + \ellcomp \le \log N$ bits), and outputs the standard encoding of $\rho$ (in $\log|\calR| = \log(2N)$ bits). It is easy to see that every restriction $\rho$ outside the range of $f$ would be a good restriction (that successfully shrinks every clause in $\Pi$ into width $w$).
	\item For each $\rho \in \calR \cong [2N]$, $\Pi|_\rho$ is a width-$w$ resolution proof. By \autoref{thm: mono PLS-completeness}, we can reduce the problem of finding an illegal derivation in $\Pi|_\rho$ to $\PLS$. Call this instance $I_\rho$.
	\item Finally, let $i\in [c^n]$ be an illegal derivation in $\Pi|_\rho$ (that can be found in $\PLS$). There are two reasons that the $i$-th step is illegal in $\Pi|_\rho$: first, it might already be an invalid derivation in $\Pi$; second, the width of $C_i|_\rho$ might be greater than $w$, thus the error happens when we truncate $C_i|_\rho$ to width $w$. In the second case, let $s$ be the $\ellcomp$-bit description of $\rho$ given that it does not simplify $C_i$ to width $\le w$, and $g_{\rho, i} := (i, s)$, then $f(g_{\rho, i}) = \rho$.
\end{itemize}

It follows that once we found any $\rho$ and $i$ such that $i$ is an answer for $I_\rho$ and $f(g_{\rho, i}) \ne \rho$, then the $i$-th step is illegal for the first reason stated above, i.e., the $i$-th step in $\Pi$ is also invalid. See \autoref{fig: rwPHP Upper Bound} for a high-level overview of this construction.

\begin{figure}[t]
    \centering
    \includegraphics[trim={0 8em 0em 5em}, clip, width=0.98\linewidth]{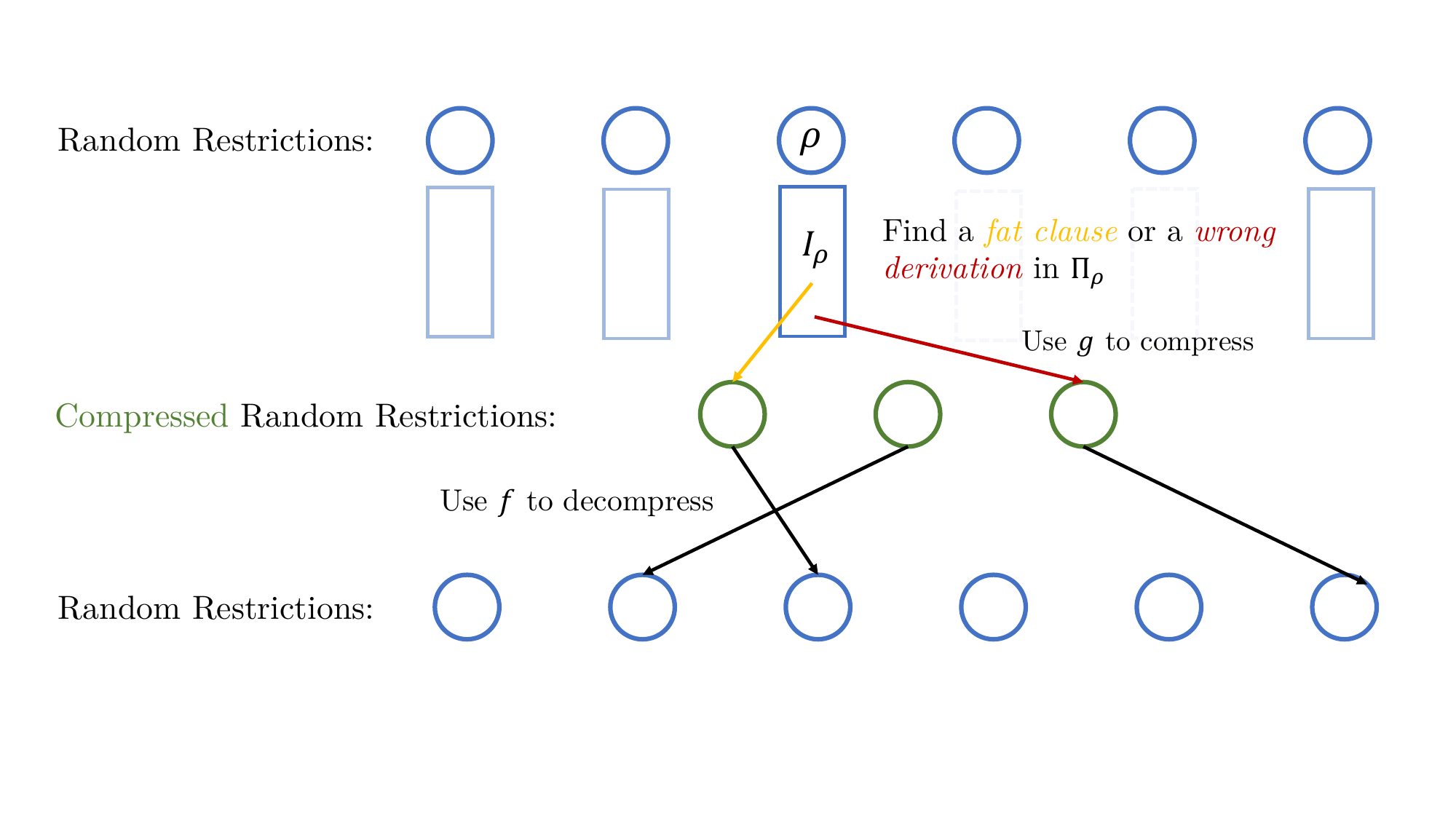}
    \caption{The $\rwPHP(\PLS)$ instance constructed from $\Refuter(s(\PHP_{(n+1)\to n}\vdash_\Res \bot) \le c^n)$.}
    \label{fig: rwPHP Upper Bound}
\end{figure}

\paragraph{Random restrictions + width lower bounds.} It turns out that the above proof template that combines random restrictions and width lower bounds is very popular in proving resolution lower bounds. Given a hard tautology $F$, we design a family of restrictions $\calR$ such that (1) Any fixed \emph{short} resolution proof will simplify to a \emph{narrow} resolution proof under $\calR$, and (2) even after a random restriction in $\calR$, $F$ remains hard for narrow resolution proofs. Note that the family $\calR$ is usually carefully chosen according to the hard tautology $F$; e.g., $\calR$ corresponds to \emph{partial matchings} when $F = \PHP$ \cite{beame1996simplified} and corresponds to \emph{random edge sets} when $F = \Tseitin$ \cite{Schoning97}.

As mentioned before, this proof strategy is capable of proving resolution size lower bounds for various hard tautologies, and we can use a similar argument as the above to show that the refuter problems corresponding to these resolution size lower bounds are in $\rwPHP(\PLS)$. This includes XOR-lifted formulas (\autoref{sec: rwPHP(PLS) for XOR-lifting}), Tseitin tautologies (\autoref{sec: tseitin}), and random $k$-CNFs (\autoref{sec: random k CNF}).

In fact, it is quite intuitive to formalize ``random restrictions + width lower bounds'' in $\T^1_2(\alpha) + \dwPHP(\PV(\alpha))$. Roughly speaking, we first use $\dwPHP(\PV(\alpha))$ to formalize the compression argument and show that most random restrictions will shrink the resolution proof (represented by $\alpha$) into a narrow one; then we use $\Sigma_1^b(\alpha)\text{-}\MIN$ (which is available in $\T^1_2(\alpha)$) to prove a resolution width lower bound.

\subsection{Refuter Problems are \texorpdfstring{$\rwPHP(\PLS)$}{rwPHP(PLS)}-Hard}\label{sec: overview of lower bounds}

In this subsection, we explain the ideas behind the reduction from $\rwPHP(\PLS)$ to the refuter problems for resolution size lower bounds. In fact, a reduction from $\rwPHP$ to the refuter problems already appeared in~\cite{RezendeGNPR021}, which provides a streamlined proof of the celebrated $\NP$-hardness of automating resolution~\cite{AtseriasM19}. It turns out that with minor modifications, the same proof can be adapted to reduce not only $\rwPHP$ but also $\rwPHP(\PLS)$ to the refuter problems, thereby proving \autoref{thm: rwPHP(PLS)-hardness of refuters}. Hence, the remainder of this subsection will focus on the $\rwPHP$-hardness result from~\cite{RezendeGNPR021}; the complete $\rwPHP(\PLS)$-hardness result can be found in \autoref{sec: rwPHP(PLS)-hardness}.

There is a clear intuition behind the reduction: suppose $\rwPHP$ \emph{were} false, i.e., there are functions $f:[N]\to [2N]$ and $g:[2N]\to [N]$ such that $f\circ g: [2N]\to [2N]$ is the identity function, then every unsatisfiable CNF $F$ \emph{would} have a resolution refutation of size $\poly(N, n)$. Of course, the ground truth is that such functions $f$ and $g$ should not exist, but a weak proof system (such as resolution itself) might not be aware of this. Suppose the weak system ``thinks'' that such a pair of functions $(f, g)$ \emph{might} exist, and it can construct a short resolution refutation of $F$ from $(f, g)$, then the weak system should also ``think'' that $F$ \emph{might} have a short resolution refutation. In summary, if it is hard to refute the existence of $(f, g)$ (which means proving $\rwPHP$), then it is also hard to prove that $F$ does not have a short resolution refutation.

Now, our task becomes the following. We live in a strange world where there is a \emph{surjection} from $[N]$ to $[2N]$; given an \emph{arbitrary} unsatisfiable CNF $F$, we want to construct a $\poly(N, n)$-size resolution refutation of $F$. Consider the size-$2^{O(n)}$ brute-force resolution refutation for every unsatisfiable CNF, which is represented by the following proof tree.\begin{itemize}
	\item The root (level $0$) of the tree contains the empty clause $\bot$.
	\item For each level $1\le i\le n$, each clause $C$ at level $i-1$ is resolved from the two clauses $C\lor x_i$ and $C\lor \overline{x}_i$, both of which sits in level $i$. The clauses $C\lor x_i$ and $C\lor\overline{x}_i$ are the two \emph{children} of $C$. Note that each clause at level $i$ has a width of exactly $i$.
	\item Finally, every clause $C$ at level $n$ corresponds to an assignment $x_C\in\{0, 1\}^n$ which is the only assignment falsifying $C$. Since $F$ is unsatisfiable, there is an axiom of $F$ that $x_C$ falsifies. Clearly, $C$ is a \emph{weakening} of this axiom.
\end{itemize}
\begin{figure}[h]
    \centering
    \includegraphics[width=0.8\linewidth]{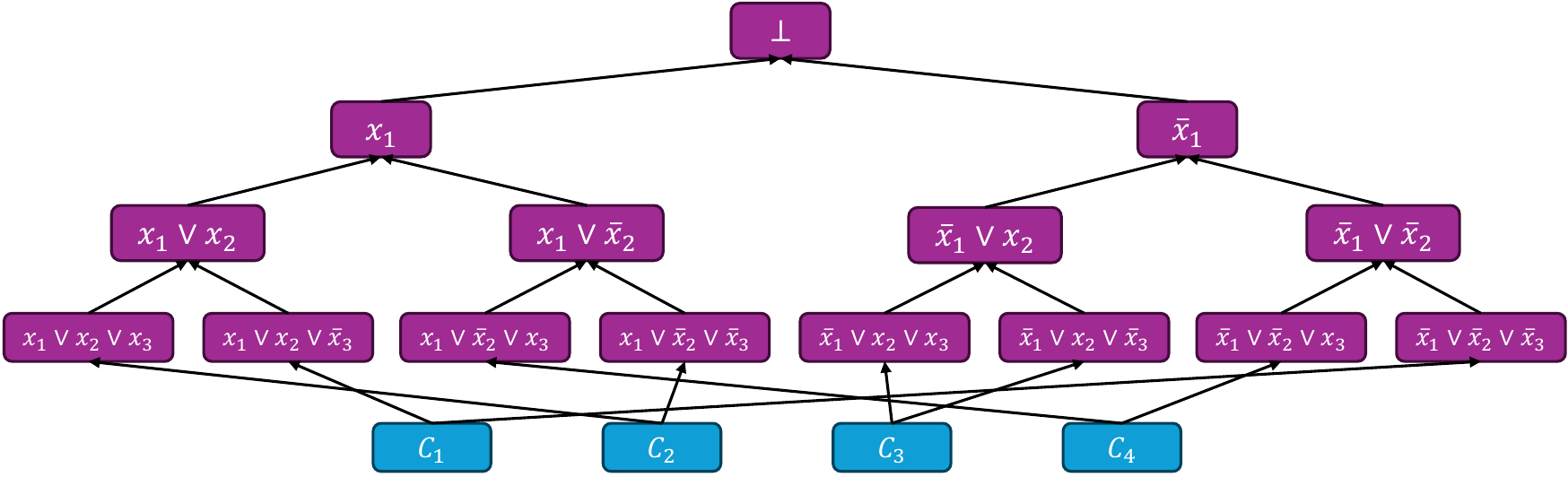}
    \caption{The brute-force resolution proof for a CNF $F = C_1 \land C_2 \land C_3 \land C_4$ when $n = 3$.}
\end{figure}

We now construct a shorter resolution refutation using the surjection from $[N]$ to $[2N]$. We guarantee that in our short refutation, each level never contains more than $N$ clauses; this implies that our resolution refutation is of size $O(N\cdot n)$. Consider level $i$ where $1\le i\le n$. If level $i-1$ contains at most $N$ clauses, then level $i$ contains at most $2N$ clauses: for each clause $C_j$ in level $i-1$, there are two clauses $C'_{2j} := C\lor x_i$ and $C'_{2j+1} := C\lor \overline{x}_i$ in level $i$. However, since there is a \emph{surjection} from $[N]$ to $[2N]$, it is possible to pick $N$ clauses among these $2N$ ones such that each of the $2N$ clauses appears in these $N$ ones! (The $j$-th picked clause ($j\in [N]$) is $C'_{f(j)}$; the clause $C'_j$ ($j\in [2N]$) appears as the $g(j)$-th picked clause.) Now that level $i$ also contains at most $N$ clauses, we can proceed to the next level and so on.

We stress again that the ground truth is, of course, that there do not exist functions $f:[N]\to [2N]$ and $g:[2N]\to [N]$ such that $f\circ g: [2N]\to [2N]$ is the identity function. However, the point is that given any step in the above resolution refutation that is an invalid derivation, we can pinpoint a ``witness'' number $x\in [2N]$ such that $f(g(x)) \ne x$.

The above describes the intuition behind the decision tree reduction from $\rwPHP$ to the refuter problems of resolution size lower bounds presented in \cite{RezendeGNPR021}. Our reduction from $\rwPHP(\PLS)$ to the refutation problems proceeds in the same way, except that now $g$ is only a function computable in $\PLS$. Compared with \cite{AtseriasM19, RezendeGNPR021}, our proof only has one more component: showing that these $\PLS$ instances can also be embedded into the above resolution refutation. We refer the reader to the formal proof in \autoref{sec: rwPHP(PLS)-hardness} for details.

\section{Preliminaries}\label{sec: prelim}

The first three subsections present standard preliminaries and can be skipped if the reader is familiar. However, the last two subsections introduce new concepts and it is highly recommended to read through (i.e., not skip) them. In particular, \autoref{sec: refutation problems} introduces the refuter problems for resolution lower bounds as $\TFNP^\dt$ problems, and \autoref{sec: rwPHP(PLS)} defines and discusses the subclass $\rwPHP(\PLS)$.

We use $0$-indexing: $[n] = \{0, 1, \dots, n-1\}$. For functions $f: \calA \to \calB$ and $g: \calB \to \calC$, their \emph{composition} $g\circ f$ is defined as
\[\forall x\in\calA, (g\circ f)(x) = g(f(x)).\]

\subsection{Pigeonhole Principle}
Let $m > n$, the \emph{pigeonhole principle} (PHP) states that there is no way to send $m$ pigeons into $n$ holes such that different pigeons are sent to different holes. This is expressed as the following unsatisfiable CNF $\PHP_{m\to n}$. (In the definition below, think of $x_{ij} = 1$ if pigeon $i$ goes to hole $j$.)
\begin{restatable}[$\PHP_{m\rightarrow n}$]{definition}{DefPHP}\label{def: PHP}
    $\PHP_{m\rightarrow n}$ is the conjunction of the following set of clauses:
    \begin{itemize}
        \item $\bigvee_{j\in[n]} x_{ij}$ for every pigeon $i\in[m]$;
        \item $\overline{x}_{ij}\vee\overline{x}_{i'j}$ for every two different pigeons $0\leq i<i'\leq m -1$ and every hole $j\in[n]$.
    \end{itemize}
\end{restatable}

The seminal work of Haken \cite{Haken85} proved that any resolution proof of $\PHP_{(n+1)\to n}$ requires $2^{\Omega(n)}$ size. The proof of this classical theorem has been simplified by several follow-up works \cite{CookP90,beame1996simplified,Ben-SassonW01}.

\subsection{Decision Tree \texorpdfstring{$\TFNP$}{TFNP}}
\label{sec: prelim: blackbox tfnp}

Let $\calO = \{O_N\}_{N}$ be a family of solution spaces. A \emph{search} problem $\calP$ is a family of sets $\{P_N\}_{N}$, where each $P_N$ is a subset of $\{0, 1\}^N \times O_N$. Let $x\in\{0, 1\}^N$ be an \emph{input} to $\calP$, we say that $o\in\calO_N$ is a \emph{solution} of $x$ if $(x, o) \in P_N$. We say $\calP$ is \emph{total} if every $x\in \{0, 1\}^*$ has at least one solution. We sometimes abuse the notation by calling an individual relation $P_N$ a search problem, and implicitly assume that there is a sequence $\{P_N\}_{N}$.

We study total search problems in the \emph{decision tree} model. In this model, we think of the input $x\in \{0, 1\}^N$ as very long and can only be accessed by querying individual bits. An algorithm (i.e., decision tree) is \emph{efficient} if it only makes $\polylog(N)$ many queries. We will typically consider search problems where $|O_N| \le 2^{\polylog(N)}$, so efficient algorithms will be able to handle solutions $o\in O_N$ in their entirety. A search problem $\calP$ is in $\FNP^\dt$ if given (oracle access to) an input $x \in \{0, 1\}^N$ and a solution $o\in O_N$, there is an efficient decision tree $T_o$ for deciding whether $(x, o) \in P_N$. The class $\TFNP^\dt$ consists of all \emph{total} search problems in $\FNP^\dt$.

For example, an important $\TFNP^\dt$ problem in this paper is the problem $\Iter$, defined as follows.%

\begin{mdframed}[hidealllines=true,backgroundcolor=gray!10]
    \begin{center}
        \textbf{Problem $\Iter$}
    \end{center}
    
    \underline{Input:} A function $S:[N] \to [N]$.
    
    \underline{Output:} A number $x\in [N]$ is a valid solution if one of the following holds:
    \begin{itemize}
        \item $x = 0$ and $S(0) = 0$;
        \item $S(x) < x$; or
        \item $S(x) > x$ and $S(S(x)) = S(x)$.
    \end{itemize}
\end{mdframed}

It is easy to check that $\Iter$ is in $\FNP^\dt$: Given an output $x$ and oracle access to the function $S:[N] \to [N]$, one can verify whether $x$ is a valid solution by querying at most $2$ entries of $S$; namely $S(x)$ and $S(S(x))$. Since each entry can be represented by at most $\log N$ bits, the query complexity of verifying solutions for $\Iter$ is $\polylog(N)$. On the other hand, the totality of $\Iter$ expresses the following fact: every DAG has a sink. It turns out that we will also frequently use a reversed version of $\Iter$ for simplicity, whose equivalence to $\Iter$ is easy to see: Given a function $S:[N]\to [N]$ such that $S(N-1) < N-1$, find some $x\in[N]$ such that 1) either $S(x) > x$ or 2) $S(x) < x$ and $S(S(x)) = S(x)$.

\begin{definition}[Decision tree reductions]
	Let $\calP, \calQ$ be two $\TFNP^\dt$ problems, and $d(N)$ be a parameter (typically $\polylog(N)$). A \emph{depth-$d$} decision tree reduction from $\calP$ to $\calQ$ consists of two functions $(\Reduce, \Decode)$, where each output bit of $\Reduce, \Decode$ can be computed from the input $x$ by a depth-$d$ decision tree:\begin{itemize}
		\item $\Reduce:\{0, 1\}^N\to \{0, 1\}^{M(N)}$ maps an input $x$ of $\calP$ to an input $\Reduce(x)$ of $\calQ$.%
		\item $\Decode$ maps any valid solution of $\Reduce(x)$ (as an instance of $\calQ$) into a valid solution of $x$ (as an instance of $\calP$). %
	\end{itemize}
	We say the reduction is \emph{uniform} if both $\Reduce$ and $\Decode$ can be computed by uniform Turing machines with query access to $x$. We allow $M(N)$ to be super-polynomial in $N$, but we require $M(N) \le \exp(d(N))$.%
\end{definition}

Usually, for two $\TFNP^\dt$ problems $\calP, \calQ$ we say $\calP$ can be (\emph{many-one}) reduced to $\calQ$ if there is a $\polylog(N)$-depth decision tree reduction from $\calP$ to $\calQ$.

The class $\PLS$\footnote{In this paper, most of the times when we mention a syntactic subclass of $\TFNP$ (such as $\PLS$) we mean the decision tree version of it (i.e., $\PLS^\dt$), and it should be easy to figure out whether we mean the decision tree version or the Turing machine version of this subclass from the context. Therefore, for convenience, we drop the superscript $\dt$ when we express syntactic subclasses of $\TFNP^\dt$. We still preserve the superscript $\dt$ in ``$\TFNP^\dt$'' when we want to emphasize that the underlying model is decision tree $\TFNP$.} is the class of problems in $\TFNP^\dt$ that has a depth-$\polylog(N)$ reduction to $\Iter$. (Note that $\PLS$ was originally defined differently \cite{DBLP:journals/jcss/JohnsonPY88}; the $\PLS$-completeness of $\Iter$ was shown in \cite{morioka2001classification}.)

The inputs of most $\TFNP^\dt$ problems introduced in this paper will be partitioned into blocks; for example, the input of $\Iter$ consists of $N$ blocks where each block consists of $\log N$ bits describing an integer in $[N]$. It will be more convenient to work with the \emph{block-depth} of decision trees, which is the number of different \emph{blocks} that a decision tree queries. For example, solutions of $\Iter$ can be verified in block-depth $2$. The problems in this paper will have block size $\polylog(N)$, hence $\polylog(N)$ block-depth is equivalent to $\polylog(N)$ (bit-)depth. However, we will upper bound the complexity of our decision trees by block-depth for convenience. Although the distinction of depth and block-depth does not make an essential difference in this paper, many interesting lifting theorems and non-automatability results are recently proved using the notion of block-depth (or block-width) \cite{AtseriasM19, GoosKMP20, RezendeGNPR021}. It might be beneficial to have bounds on block-depth, which is usually sharper as the decision trees we construct tend to query many bits in the same block.

We assume all the $\TFNP^\dt$ problems discussed in this paper are \emph{paddable}, i.e., for any $N < M$, solving an instance of size $N$ could always be efficiently reduced to solving an instance of size $M$ of the same problem. Most of the common $\TFNP^\dt$ problems can be easily formulated in a paddable way.\footnote{There is a similar notion called \emph{instance extension}, which is defined in \cite{B-OM04}.} 
    
\subsubsection{Connection to Proof Complexity}\label{sec: prelim: connection to proof complexity}
There is a generic connection between $\TFNP^\dt$ and propositional proof complexity via the \emph{false clause search} problem (see, e.g.~\cite{RezendeGR22survey,BFI23}).

\begin{definition}\label{def: false clause search problem}
    For an unsatisfiable CNF $F := C_1 \land \dots \land C_m$, $\SearchCNF(F)$ is the search problem in which an assignment $x$ to $F$ is given via query access, and a solution is a clause $C_i$ of $F$ falsified by $x$.

    Define $\SearchCNF(\calF)$ for a family of formula $\calF = \{F_n\}_{n\in\N}$ as $\{\SearchCNF(F_n)\}_{n\in\N}$ accordingly.
\end{definition}

    When the width of $F$ is $\polylog(n)$, where $n$ is the number of variables in $F$, $\SearchCNF(F)$ is a $\TFNP^\dt$ problem. In the other direction, for any $\TFNP^\dt$ problem $R_n \in \{0, 1\}^n \times O_n$, it can be equivalently written as $\SearchCNF(F_n)$ for some CNF $F_n$ of $\polylog(n)$ width.
    More specifically, let $\{T_o\}_{o \in O_n}$ be the set of efficient decision trees for verifying solutions, $\neg T_o(x)$ can be written as a low-width CNF stating that any accepting path in $T_o$ is falsified by $x$. We then take
    \begin{equation}\label{equ: search F}
        F_n = \bigwedge_{o \in O_n} \neg T_o(x),
    \end{equation}
    and it is easy to see the equivalence between $\SearchCNF(F_n)$ and $R_n$ by definition.

    Informally, we say a proof system $P$ is characterized by a syntactical $\TFNP^{\dt}$ subclass $\mathsf{C}$ if for any family of formula $\calF = \{F_n\}$, $P$ has a \emph{small} proof of $\calF$ if and only if $\SearchCNF(\calF) \in \mathsf{C}$. 
    Buss, Fleming, and Impagliazzo~\cite{BFI23} showed that any well-behaved\footnote{Here, a proof system is well-behaved if it is closed under decision tree reduction, and it can prove its own soundness.} proof system $P$ is characterized by a $\TFNP^{\dt}$ subclass $\mathsf{C}$, and vice versa. In particular, resolution is characterized by $\PLS$.
    \begin{theorem}[Folklore]\label{thm: pls characterize resolution}
        Let $\calF = \{F_n\}$ be a family of unsatisfiable formula,
        $\SearchCNF(\calF) \in \PLS$ if and only if $\calF$ have a $\polylog(n)$-width resolution refutation.
    \end{theorem}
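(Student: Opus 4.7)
The plan is to establish both directions of the equivalence by transforming between small-width resolution refutations and $\PLS$-style local-search procedures. Throughout, the input $x\in\{0,1\}^n$ to $\SearchCNF(F_n)$ has length $N=n$, so ``$\polylog(N)$-depth'' coincides with ``$\polylog(n)$-depth'', and a clause of width $w$ can be tested on $x$ with $w$ queries. I would prove both directions via $\Iter$ (which is $\PLS$-complete by \autoref{sec: prelim: blackbox tfnp}), since $\Iter$ is the cleanest model for tracking backward traversals of a DAG.

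For the easier direction ($\Leftarrow$), suppose $F_n$ has a resolution refutation $\Pi=(C_1,\ldots,C_s)$ of width $w=\polylog(n)$, ending with $C_s=\bot$. Order the clauses so that $\bot$ is a source and axioms appear as sinks, and define a map $S\colon[s]\to[s]$ that, on input $x$, moves from any clause $C_i$ falsified by $x$ to one of its parents that is also falsified by $x$. The soundness of the resolution rule guarantees that when $C_i=C'\lor C''$ is derived from $C_j=C'\lor\ell$ and $C_k=C''\lor\overline{\ell}$, at least one of $C_j, C_k$ is falsified by $x$ whenever $C_i$ is; so $S$ is well-defined on falsified clauses, and at non-falsified clauses we can simply set $S(i)=i$. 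Following $S$ from $\bot$ eventually reaches an axiom falsified by $x$, which is a valid output for $\SearchCNF(F_n)$. Each evaluation of $S$ inspects at most two width-$w$ clauses, giving a depth-$O(w)=\polylog(n)$ reduction to $\Iter$.

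The harder direction ($\Rightarrow$) proceeds by the ``read-once decision DAG'' translation. Suppose $\SearchCNF(F_n)$ reduces to $\Iter$ on $[M]$ via decision trees of depth $d=\polylog(n)$: each bit of $S(v)$ is computed by a depth-$d$ decision tree on $x$. For each pair $(v,\rho)$, where $v\in[M]$ and $\rho$ is a partial assignment of at most $O(d)$ variables consistent with the queries made by $T_v$ and the parent tree $T_{S(v)}$, I would introduce a clause $K_{v,\rho}$ encoding ``the record so far witnesses that $v$ lies on the $\Iter$ trajectory and $\rho$ is a valid partial record''. Resolving on the at most $O(d)$ variables queried when transitioning from $v$ to $S(v)$ connects neighboring $K_{v,\rho}$ clauses; at the sink of $\Iter$, the solution is a clause of $F_n$ falsified by $\rho$, which is already an axiom available for free. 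Composing these local resolution steps along all possible trajectories derives $\bot$ from the axioms of $F_n$, with every intermediate clause of width $O(d)=\polylog(n)$.

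The main technical obstacle is in the $\Rightarrow$ direction: one must guarantee that the clauses simulating the intermediate states of the $\Iter$ computation have width $\polylog(n)$ in the \emph{variables of $F_n$}, rather than $\log M$ in the $\Iter$ instance size (which may be as large as $2^{\polylog(n)}$). The key observation resolving this is that each decision tree $T_v$ has depth $\polylog(n)$, so the partial record needed to certify any step of the reduction touches only $\polylog(n)$ variables at a time; the identity of the current $\Iter$ node $v$ itself does not need to appear in the clauses, only the $\polylog(n)$ queries that $T_v$ and $T_{S(v)}$ make. The remaining bookkeeping — merging records across resolution steps and discarding variables no longer probed — is standard, and is essentially the content of Razborov's equivalence between bounded-width resolution and $\PLS$-reasoning~\cite{razborov1995unprovability}; see also \cite{Pritish_Kamath_PhD} for a textbook exposition.
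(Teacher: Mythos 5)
Your proof is correct and takes essentially the same route as the paper's (folklore) argument: the backward walk through a low-width refutation for the $\PLS$-membership direction, and for the converse the very simulation the paper packages via the Prover-Delayer game in \autoref{app:res_pls:pls2res} (\autoref{lem: pls to prover} and \autoref{lem: prover to res proof}) --- your clauses $K_{v,\rho}$ are exactly the clauses $C(\rho)$ attached to the Prover's states, with the node identity $v$ living only in the indexing of the proof DAG so that the width is $O(d)$ rather than $O(\log M)$. The pieces you leave implicit (handling weakening steps in the backward walk, and the fact that at an $\Iter$ solution the remembered record must set, hence falsify, every literal of the axiom returned by the decoding tree) go through as you expect, so there is no substantive gap.
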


\subsection{Bounded Arithmetic}
We introduce the theories $\T^1_2$, $\T^1_2 + \dwPHP(\PV)$, as well as their relativized versions. A more comprehensive introduction of bounded arithmetic (including the theories $\S^i_2$ and $\T^i_2$) can be found in \cite{Krajicek_BA_PL_CT}.

The language of bounded arithmetic consists of the following symbols
\[L_{\rm BA} := \{0, 1, +, \cdot, <, =, \lfloor \cdot/2\rfloor, |\cdot|, \#\}.\]
Here, the intended meaning of $|a|$ is the \emph{bit-length} of the binary number $a$, i.e.,
\[|a| := \begin{cases}\lceil\log_2(a+1)\rceil & \text{if }a>0;\\0 & \text{if }a=0.\end{cases}\]
The intended meaning of $\#$ (``smash'') is
\[x\#y := 2^{|x|\cdot |y|};\]
roughly speaking, this symbol is used to create objects whose size is \emph{polynomial}, instead of only \emph{linear}, in the length of its inputs. These symbols are governed by a list of $32$ axioms called $\rm BASIC$, each of which asserts some basic fact about the intended meanings of these symbols. For instance:
\begin{equation}
    a\le b \to a\le b+1.\tag{axiom 1 in $\rm BASIC$}
\end{equation}
The complete list of $\rm BASIC$ axioms can be found in \cite[Definition 5.2.1]{Krajicek_BA_PL_CT}.

A \emph{bounded quantifier} is a quantifier of the form
\[\forall y < t(\vec{x}) \quad\text{or}\quad \exists y < t(\vec{x})\]
for some term $t$. Formally, they are defined as abbreviations:
\begin{align*}
    \forall y < t(\vec{x})~\varphi(\vec{x}, y) := &\, \forall y~(y < t(\vec{x}) \to \varphi(\vec{x}, y));\\
    \exists y < t(\vec{x})~\varphi(\vec{x}, y) := &\, \exists y~(y < t(\vec{x}) \land \varphi(\vec{x}, y)).
\end{align*}
A \emph{sharply bounded quantifier} is a quantifier of the form
\[\forall y < |t(\vec{x})|\quad\text{or}\quad \exists y < |t(\vec{x})|.\]
That is, the domain of possible values of $y$ is bounded by the length of a term. Intuitively, sharply bounded quantifiers are ``feasible'' because, thinking of $t(\vec{x})$ as the description of a polynomial-size object, there are only polynomially many possibilities of $y$ and they can be enumerated in polynomial time.

A formula is \emph{sharply bounded} if all quantifiers in it are sharply bounded quantifiers. A \emph{$\Sigma_1^b$-formula} is a formula constructed from sharply bounded formulas using $\land$, $\lor$, sharply bounded quantifiers, and \emph{existential bounded quantifiers} (``$\exists y < t(\vec{x})$''). It can be shown that the languages defined by $\Sigma_1^b$-formulas are exactly those computed in $\NP$.

The power of theories in bounded arithmetic comes from their \emph{induction axioms}. Let $\Phi$ be a class of formulas, then $\Phi\text{-}\IND$ is the following axiom schema
\[(\phi(0) \land \forall x~(\phi(x) \to \phi(x+1))) \to \forall x~\phi(x)\]
for every $\phi\in \Phi$. The definition of $\T^1_2$ is:
\[\T^1_2 := {\rm BASIC} + \Sigma_1^b\text{-}\IND.\]
That is, when reasoning in $\T^1_2$, it is allowed to use induction axioms over $\Sigma_1^b$ formulas (i.e., $\NP$ languages).

It is equivalent, and sometimes more convenient to replace $\Sigma_1^b\text{-}\IND$ with $\Sigma_1^b\text{-}\MIN$, the \emph{minimization principle} over $\Sigma_1^b$ formulas. For a set of formulas $\Phi$, the axiom schema $\Phi\text{-}\MIN$ consists of
\[\phi(a) \to \exists x\le a \forall y < x ~ (\phi(x) \land \lnot \phi(y))\]
for every $\phi\in \Phi$. Equivalently, when reasoning in $\T^1_2$, it is allowed to use the fact that there exists a \emph{smallest} $x$ such that $C(x) = 1$, whenever $C$ is a polynomial-size \emph{nondeterministic circuit} and we know some $y$ such that $C(y) = 1$.

The theory $\PV$ is an equational theory defined by Cook \cite{Cook75} to capture polynomial-time reasoning. It contains a function symbol for every polynomial-time algorithm, introduced inductively using Cobham's recursion-theoretic characterization of polynomial time \cite{cobham1964intrinsic}. More detailed treatments about $\PV$ can be found in \cite{Krajicek_BA_PL_CT, Cook-Nguyen, ChenLiOliveira24}. In the literature, it is common to also use $\PV$ to denote the set of function symbols in $\PV$ (which corresponds to functions computable in polynomial time).

The \emph{dual weak pigeonhole principle} over $\PV$ functions, denoted as $\dwPHP(\PV)$, is the following axiom schema
\[\forall a > 1 \exists v < a^2 \forall u < a~f(u) \ne v\]
for every $\PV$-function $f$ with parameters\footnote{This is the standard terminology in bounded arithmetic that means $f$ might depend on some other parameter not shown above. The parameter can be thought of as non-uniformity; cf.~\autoref{footnote: parameters in bounded arithmetic}.}. Roughly speaking, this means that if we have a polynomial-size circuit $f: \{0, 1\}^n \to \{0, 1\}^{2n}$ (think of $a = 2^n$ above), then there exists some $v \in \{0, 1\}^{2n}$ that is not in the range of $C$. We note that the choice of $a^2$ above is somewhat arbitrary, as $\dwPHP(\PV)$ with various parameters are equivalent over $\S^1_2 \subseteq \T^1_2$ \cite{ParisWW88, Jerabek04}.

To summarize, when reasoning in the theory $\T^1_2 + \dwPHP(\PV)$, one is allowed to use the following two axiom schemas:
\begin{description}
    \item [{($\Sigma_1^b\text{-}\MIN$)}] For a polynomial-size nondeterministic circuit $C$ and some $y$ such that $C(y) = 1$, there exists a \emph{smallest} $x$ such that $C(x) = 1$.
    \item [{($\dwPHP(\PV)$)}] For a polynomial-size circuit $C: \{0, 1\}^n \to \{0, 1\}^{2n}$, there exists a string $y \in \{0, 1\}^{2n}$ that is not in the range of $C$.
\end{description}

Finally, the \emph{relativized} theories $\T^1_2(\alpha)$ and $\T^1_2(\alpha) + \dwPHP(\PV(\alpha))$ are simply their unrelativized counterparts with a new unary relation symbol $\alpha$ added into the language $L_{\rm BA}$. (One can think of $\alpha$ as an oracle that encodes an exponentially-long input; for example, $\alpha(i)$ might encode the $i$-th bit of an exponentially-long resolution proof according to some canonical encoding.) The class of $\Sigma_1^b(\alpha)$ formulas and axioms $\Sigma_1^b(\alpha)\text{-}\MIN$ and $\dwPHP(\PV(\alpha))$ are relativized in a straightforward way. There are no other axioms involving $\alpha$ except for the induction axioms and dual weak pigeonhole principles. To summarize:
\begin{itemize}
    \item When reasoning in $\T^1_2(\alpha)$, it is allowed to use $\Sigma_1^b(\alpha)\text{-}\MIN$, i.e., for any polynomial-size nondeterministic \emph{oracle} circuit $C^\alpha$ and input $y$ such that $C^\alpha(y) = 1$, there exists a \emph{smallest} input $x$ such that $C^\alpha(x) = 1$.
    \item When reasoning in $\T^1_2(\alpha) + \dwPHP(\PV(\alpha))$, it is additionally allowed to use the fact that for any polynomial-size \emph{oracle} circuit $C^\alpha: \{0, 1\}^n \to \{0, 1\}^{2n}$, there exists some $y \in \{0, 1\}^{2n}$ that is not in the range of $C^\alpha$.
\end{itemize}

\subsection{Refuter Problems for Resolution Lower Bounds}\label{sec: refutation problems}

We provide formal definitions of the refuter problems in the decision tree model. We begin by defining \emph{resolution refutations}; the definition is adapted from \cite[Section 3.1]{RezendeGNPR021}.

\begin{definition}\label{def: Resolution refutation}
    Let $F$ be an unsatisfiable CNF with $n$ variables and $m$ clauses; the clauses in $F$ will be called \emph{axioms} and will be denoted as $C_{-m}, \dots, C_{-1}$ for convenience. A \emph{resolution refutation} of $F$ is a sequence of nodes $C_0, C_1, \dots, C_{L-1}$, where each node $C_i$ contains the following information.
    \begin{itemize}
        \item A set of literals among $\{x_1, x_2, \dots, x_n, \overline{x}_1, \overline{x}_2, \dots, \overline{x}_n\}$. Abusing notation, we also denote the clause consisting of the disjunction of these literals by $C_i$.
        \item A \emph{tag} which is one of the following: ``resolution'' or ``weakening''.
        \item Two integers $-m \le j, k < i$ and a variable $a\in \{1, 2, \dots, n\}$ if the tag is ``resolution''. This means that $C_i$ is obtained from the clauses $C_j$ and $C_k$ by \emph{resolving} the variable $x_a$.
        \item One integer $-m \le j < i$ if the tag is ``weakening''. This means that $C_i$ is a \emph{weakening} of $C_j$.
    \end{itemize}

    The resolution refutation is valid if the following is true for every $1\le i\le L$:\begin{itemize}
        \item If $C_i$ is marked ``resolution'', then there are clauses $D$ and $E$ such that $C_j = x_a\lor D$, $C_k = \overline{x}_a \lor E$, and $C_i = D \lor E$. 
        \item If $C_i$ is marked ``weakening'', then there is a clause $D$ such that $C_i = C_j \lor D$.
        \item Finally, $C_{L-1} = \bot$ (i.e., contains no literals).
    \end{itemize}

    The \emph{length} or \emph{size} of the refutation is $L$, and the \emph{width} of the refutation is the maximum integer $w$ such that every clause $C_i$ ($-m \le i < L$) in the refutation contains at most $w$ literals.
\end{definition}

Resolution is \emph{complete} and \emph{sound}: a CNF $F$ has a resolution refutation (of whatever length) if and only if it is unsatisfiable.

Each node in the resolution refutation would be a \emph{block}; therefore, when we say a decision tree over a resolution refutation has block-depth $d$, we mean that it only queries (potentially all information in) $d$ nodes of the refutation.

Next, we define the refuter problems.

\begin{definition}\label{def: refutation problem for Res}
    Let $\calF=\set{F_n}_{n\in\N}$ be a family of unsatisfiable CNFs where every $F_n$ requires resolution of width greater than $w_n$ and size greater than $s_n$.
    \begin{itemize}
        \item An input to the problem $\Refuter(w(F_n \vdash_{\Res} \bot) \leq w_n)$ is a purported resolution refutation of $F_n$ with width at most $w_n$. (It is easy to syntactically guarantee that the width of the input refutation is at most $w_n$ by allocating only $w_n$ literals for each node.)
        \item An input to the problem $\Refuter(s(F_n \vdash_{\Res} \bot) \leq  s_n)$ is a purported resolution refutation of $F$ with at most $s_n$ clauses.
    \end{itemize}
    The outputs of these problems consist of only one index $i$, which means the node $C_i$ does not satisfy the validity conditions defined in \autoref{def: Resolution refutation}. We will call such nodes \emph{invalid derivations} or \emph{illegal derivations}.
\end{definition}

Note that each node can be described in $\poly(w, \log n, \log L)$ bits where $w$ is the width of the resolution refutation. Hence, in the typical parameter regime, we will consider resolution refutations whose length is exponential in its width ($L = 2^{w^{\Omega(1)}}$), so that the access to each block is ``efficient'', i.e., only needs to query polylogarithmic many bits. %
In particular, the typical parameter regime for size lower bounds is exponential ($L = 2^{n^{\Omega(1)}}$); polynomial width lower bound ($w = n^{\Omega(1)}$) is considered in \autoref{sec: PHP} and \autoref{sec: more upper bounds}, while polylogarithmic width lower bound is considered in \autoref{sec:app:tfnp}. We also assume $L = 2^{n^{O(1)}}$, so that the proof is not extremely redundant.

Since there is a decision tree of block-depth at most $3$ verifying whether a given node in the resolution refutation is an invalid derivation, the refuter problems defined above are in $\FNP^\dt$. Moreover, if the resolution lower bounds ($w_0$ or $s_0$) are indeed true, then the refuter problems defined above are total. Hence, $\Refuter(\cdot)$ is a natural family of problems in $\TFNP^\dt$.

\subsection{\texorpdfstring{$\calP$}{P}-Retraction Weak Pigeonhole Principle}\label{sec: rwPHP(PLS)}

Recall that $\rwPHP$, the \emph{retraction weak pigeonhole principle}, is the following principle:

\begin{fact}
	Let $g:[2M]\to [M]$ and $f:[M] \to [2M]$ be two functions. Then there must exist some $y\in[2M]$ such that $f(g(y)) \ne y$. 
\end{fact}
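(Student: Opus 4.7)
The plan is to give a short counting proof, which is essentially the standard argument underlying any dual weak pigeonhole principle. The key observation is that the composition $f \circ g$ maps $[2M]$ into the image of $f$, but this image is too small to cover all of $[2M]$.

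First, I would note that $\mathrm{Range}(f) \subseteq [2M]$ has cardinality at most $M$, since $f$ is defined on the domain $[M]$. Therefore the set $[2M] \setminus \mathrm{Range}(f)$ is nonempty (in fact it has size at least $M$), so we can fix any $y \in [2M] \setminus \mathrm{Range}(f)$. For this $y$, the element $f(g(y))$ lies in $\mathrm{Range}(f)$, while $y$ does not, so $f(g(y)) \ne y$, as desired.

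There is no real obstacle here: the statement is the classical pigeonhole fact, just phrased in the ``retraction'' form where $g$ is a candidate right inverse to $f$ and the conclusion asserts that it fails to actually be a right inverse at some point. The only thing worth flagging is that the proof is purely existential and nonconstructive at the level of a single counting step; in the $\TFNP^{\dt}$ / bounded arithmetic reading this is exactly why the corresponding search problem (find such a $y$, given oracle access to $f,g$) is nontrivial and why $\rwPHP$ is introduced as its own principle rather than being subsumed by $\PLS$-style local search reasoning. For the fact as stated, however, a single-sentence cardinality argument suffices.
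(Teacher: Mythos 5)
Your proof is correct and is exactly the standard cardinality argument the paper has in mind (the paper states this fact without proof, treating it as the obvious pigeonhole observation, and uses the same "pick $y$ outside $\mathrm{Range}(f)$" reasoning implicitly, e.g.\ in the totality proof of $\rwPHP(\calP)$). Nothing further is needed.
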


Roughly speaking, for any $\TFNP$ class $\calP$, $\rwPHP(\calP)$ is the retraction weak pigeonhole principle where the retraction ($g:[2M] \to [M]$) is a (multi-valued) function computable in $\calP$. For example:

\begin{mdframed}[hidealllines=true,backgroundcolor=gray!10]
    \begin{center}
        \textbf{Problem $\rwPHP(\PLS)$}
    \end{center}
    
    \underline{Input:} Let $M\leq N/2$. The input consists of the following functions:
    \begin{compactitem}
		\item $f:[M]\to [N]$ is a purported ``surjection'';
		\item for each $y\in[N]$, $I_y := (L,S_y)$ is an instance of $\Iter$, where $S_y:[L]\to [L]$; and
		\item $g_y:[L]\rightarrow [M]$ maps solutions of $I_y$ to integers in $[M]$.
    \end{compactitem}

    \underline{Output:} A number $y\in [N]$ and $ans\in[L]$ such that $ans$ is a solution of the $\Iter$ instance $g_y$ and $f(g_y(ans)) \ne y$.
\end{mdframed}

In general, for a $\TFNP^\dt$ problem $\calP$, we define $\rwPHP(\calP)$ by replacing each $I_y$ in the above definition with an instance of $\calP$. For convenience, we sometimes use the same notation $\rwPHP(\calP)$ to indicate the $\TFNP$ subclass that contains any $\TFNP$ problem reducible to the problem $\rwPHP(\calP)$. It will be clear from the context whether $\rwPHP(\calP)$ refers to the problem or the class in the remainder of the paper.

\begin{fact}\label{fact: rwPHP in TFNP}
	$\rwPHP(\calP) \in \TFNP^\dt$.
\end{fact}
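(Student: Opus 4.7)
The plan is to verify both membership in $\FNP^\dt$ and totality, which together give membership in $\TFNP^\dt$.

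For the $\FNP^\dt$ part, I will exhibit an efficient decision tree that, given a candidate output $(y, ans)$ with $y \in [N]$ and $ans \in [L]$, verifies whether it is a valid solution. This amounts to checking two conditions: first, that $ans$ is a valid solution to the $\calP$-instance $I_y$; second, that $f(g_y(ans)) \neq y$. The first check is efficient because $\calP \in \TFNP^\dt$, so by hypothesis there is a polylog-depth verifier for its solutions, and we apply it to the specific instance $I_y$ indexed by $y$. The second check requires querying $g_y$ at $ans$ and then querying $f$ at the resulting value in $[M]$; each of these is a single block query of $\polylog$ bits. Composing these gives a polylog-depth decision tree verifier, so $\rwPHP(\calP) \in \FNP^\dt$.

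For totality, I will argue by contradiction. Suppose an input $(f, \{I_y\}_{y \in [N]}, \{g_y\}_{y \in [N]})$ admits no valid solution. Then for every $y \in [N]$ and every solution $ans$ of $I_y$, we have $f(g_y(ans)) = y$. Since $\calP \in \TFNP^\dt$, each instance $I_y$ is total, so we may pick an arbitrary solution $ans_y$ of $I_y$ and define $h : [N] \to [M]$ by $h(y) := g_y(ans_y)$. By assumption $f(h(y)) = y$ for every $y \in [N]$, so $f \circ h$ is the identity on $[N]$, forcing $f : [M] \to [N]$ to be surjective. But $M \le N/2 < N$, so no function from $[M]$ to $[N]$ can be surjective, a contradiction. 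Hence at least one valid solution must exist.

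I do not anticipate any real obstacle in this proof; the only subtle point is that we must explicitly invoke the totality of $\calP$ to guarantee that each $I_y$ has \emph{some} solution $ans_y$ to feed into $g_y$ when defining the hypothetical retraction $h$. Without this, the contradiction with the pigeonhole principle could not be derived. The argument is essentially a relativization of the standard proof that $\rwPHP$ is total, with the instances $I_y$ playing the role of oracles whose solutions are used to compute the retraction.
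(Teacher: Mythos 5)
Your proposal is correct and matches the paper's own proof: the same polylog-depth verification of $(y, ans)$, and the same totality argument obtained by picking one solution $ans_y$ of each total instance $I_y$ to define a retraction and then applying the (retraction weak) pigeonhole principle to $f:[M]\to[N]$ with $M\le N/2$. Your phrasing as a proof by contradiction versus the paper's direct invocation of $\rwPHP$ is only a cosmetic difference.
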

\begin{proof}
	To verify a solution $(y, ans)$, check that $ans$ is a valid solution for $I_y$ and that $f(g_y(ans)) \ne y$. 
	
	The totality (i.e., existence of solutions) can be argued as follows. For each $y\in[N]$, let $ans(y)$ be a solution (say the lexicographically first one) of $I_y$; since $\calP$ is a total problem, $ans(y)$ exists. Let $g'(y) := g_y(ans(y))$. By the retraction weak pigeonhole principle, there exists some $y\in[N]$ such that $f(g'(y)) \ne y$. It follows that $(y, ans(y))$ is a valid solution for $\rwPHP(\calP)$.
\end{proof}

\begin{fact}\label{fact: rwPHP(P) is hard for both rwPHP and P}
	There is a depth-$1$ decision tree reduction from $\calP$ to $\rwPHP(\calP)$ and a depth-$1$ decision tree reduction from $\rwPHP$ to $\rwPHP(\calP)$.
\end{fact}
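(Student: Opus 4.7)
The plan is to give direct constructions for both reductions. Both are essentially trivial and follow the same design idea: hard-wire the purported surjection $f$ so that most $y\in[N]$ automatically witness that $f$ is not surjective, and use the $\calP$-instances $I_y$ either to carry the original $\calP$-instance (for the first reduction) or to carry the retraction $g$ of the $\rwPHP$-instance (for the second reduction).

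For the reduction $\calP \to \rwPHP(\calP)$, given an input $I$ to $\calP$, the plan is to set $M := 1$ and $N := 2$, take $f : [1] \to [2]$ to be the constant $0$, set $I_y := I$ for each $y \in \{0,1\}$, and let $g_y$ be identically $0$. The requirement $M \le N/2$ is clearly satisfied. Then for $y = 1$, any valid solution $ans$ of $I_y = I$ satisfies $f(g_1(ans)) = 0 \ne 1$, so $(1, ans)$ solves the constructed $\rwPHP(\calP)$-instance, and $\Decode$ simply returns $ans$, a valid solution of $I$. Every output bit of $\Reduce$ is either a fixed constant (for $f$ and the $g_y$'s) or a single bit copied verbatim from $I$ (for the $I_y$'s), so the resulting reduction has depth $1$ (indeed depth $0$ per output bit).

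For the reduction $\rwPHP \to \rwPHP(\calP)$, given an $\rwPHP$-input with functions $\tilde f : [M] \to [2M]$ and $\tilde g : [2M] \to [M]$, the plan is to set $f := \tilde f$, fix once and for all a canonical trivial $\calP$-instance $I_0$ (of appropriate size, using paddability of $\calP$), let $I_y := I_0$ for every $y \in [2M]$, and define $g_y$ to be the constant function with value $\tilde g(y)$. Any solution $(y, ans)$ of the resulting $\rwPHP(\calP)$-instance then has $ans$ a solution of $I_0$ (irrelevant for decoding) and satisfies $\tilde f(\tilde g(y)) = f(g_y(ans)) \ne y$, so $\Decode$ outputs $y$, which is indeed a valid solution of the original $\rwPHP$-instance. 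Each output bit of $\Reduce$ is either a fixed constant (for the copies of $I_0$) or a single copied bit of $\tilde f$ or $\tilde g$, so depth $1$ again suffices.

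I do not anticipate any substantive obstacle: the only small care required is choosing the trivial $\calP$-instance $I_0$ in the second reduction so that the decoding does not need to inspect $ans$, and matching the encoding conventions between $\calP$-instances and the $I_y$-components of $\rwPHP(\calP)$-instances so that ``copying a bit'' is literally a depth-$1$ query of the original input.
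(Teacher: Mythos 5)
Your proposal is correct and follows essentially the same construction as the paper: for $\calP\to\rwPHP(\calP)$ you copy the given instance into every $I_y$ with constant $f$ and $g_y$ (so any solution's $ans$ component is already a solution of $I$), and for $\rwPHP\to\rwPHP(\calP)$ you keep $f$, plant a fixed trivial $\calP$-instance in every coordinate, and route $g$ through the constant functions $g_y$. The only cosmetic difference is that you fix $M=1$, $N=2$ in the first reduction while the paper keeps the parameters generic; this changes nothing.
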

\begin{proof}
	To reduce $\calP$ to $\rwPHP(\calP)$: let $I$ be a $\calP$ instance. Define $f(x) = 1$ as a trivial function; for each $y\in[N]$, define the instance $I_y := I$; for every possible answer $ans$ of $I_y$, let $g_y(ans) = 1$. Clearly, for any answer $(y, ans)$ of the $\rwPHP(\calP)$ instance, $ans$ itself would be a valid answer of the $\calP$ instance $I$.
	
	To reduce $\rwPHP$ to $\rwPHP(\calP)$: let $f:[M]\to[N]$ and $g:[N]\to[M]$ be an $\rwPHP$ instance. Fix any (say trivial) $\calP$ instance $I$. For each $y\in[N]$, define the instance $I_y := I$; for every possible answer $ans$ of $I_y$, let $g_y(ans) = g(y)$. For any answer $(y, ans)$ of the $\rwPHP(\calP)$ instance, since $f(g_y(ans)) \ne y$, it follows that $f(g(y)) \ne y$, and hence $y$ is a valid answer of the $\rwPHP$ instance $(f, g)$.
\end{proof}

\begin{fact}
    $\rwPHP(\PLS)$ is not in $\PLS$ in the black-box setting. 
\end{fact}

\begin{proof}
    Classical techniques (such as Prover-Delayer games) show that the totality of $\rwPHP$ requires resolution width $\Omega(M)$ to prove (see also \cite{PudlakT19, RezendeGNPR021}), which means that any decision tree of depth $o(M)$ cannot reduce $\rwPHP$ to $\PLS$. It follows from \autoref{fact: rwPHP(P) is hard for both rwPHP and P} that there is a black-box separation between $\rwPHP(\PLS)$ and $\PLS$ itself.
\end{proof}

\begin{fact}\label{fact: robustness of rwPHP(P)}
	Let $\calP$ and $\calQ$ be $\TFNP^\dt$ problems. If there is a depth-$d$ decision tree reduction from $\calP$ to $\calQ$, then there is a depth-$d$ decision tree reduction from $\rwPHP(\calP)$ to $\rwPHP(\calQ)$.
\end{fact}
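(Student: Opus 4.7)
The plan is to apply the given reduction $(\Reduce, \Decode)$ \emph{componentwise} to each of the inner $\calP$-instances of an $\rwPHP(\calP)$-input, while leaving the top-level function $f$ (and hence the ``weak pigeonhole'' structure) untouched. Concretely, given an instance of $\rwPHP(\calP)$ consisting of $f : [M] \to [N]$, instances $\{I_y\}_{y \in [N]}$ of $\calP$, and labels $\{g_{y, ans}\}$, I would build the $\rwPHP(\calQ)$-instance on the same $M$ and $N$ by: (i) keeping $f$ unchanged; (ii) setting $J_y := \Reduce(I_y)$ for each $y \in [N]$; and (iii) defining, for every candidate $\calQ$-solution $ans'$ of $J_y$, the new label
\[
g'_{y, ans'} \;:=\; g_{y,\, \Decode(I_y, ans')}.
\]

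For correctness, suppose $(y, ans')$ is a valid solution to the constructed $\rwPHP(\calQ)$-instance. Then $ans'$ is a valid $\calQ$-solution of $J_y = \Reduce(I_y)$ and $f(g'_{y, ans'}) \neq y$. By the defining property of the $\calP \to \calQ$ reduction, $ans := \Decode(I_y, ans')$ is a valid $\calP$-solution of $I_y$. Moreover, by the definition of $g'$, we have $f(g_{y, ans}) = f(g'_{y, ans'}) \neq y$, so $(y, ans)$ is a valid solution to the original $\rwPHP(\calP)$-instance. Hence the meta-$\Decode$ for this reduction is simply $(y, ans') \mapsto (y, \Decode(I_y, ans'))$, which itself is computable bit-by-bit in depth $d$ from the $\rwPHP(\calP)$-input.

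For the depth bound on the meta-$\Reduce$, each output bit of the constructed instance is read off the original input by a depth-$d$ decision tree: the bits of $f$ are copied verbatim, the bits of $J_y$ are the corresponding output bits of $\Reduce(I_y)$ (depth $d$ on $I_y$ by hypothesis), and each bit of $g'_{y, ans'}$ is produced by composing a depth-$d$ evaluation of a single output bit of $\Decode(I_y, ans')$ with one lookup into the given $g$-table. Since the $g$-table is addressed by an index $ans$ whose bits can each be obtained in depth $d$, every output bit of the new instance lies in a depth-$d$ decision tree over the $\rwPHP(\calP)$-input (up to the natural bit-vs-block convention the paper uses elsewhere; it is block-depth $d$ in all cases).

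I do not expect any serious obstacle: the argument is essentially a bookkeeping exercise showing that the $(\Reduce,\Decode)$ machinery threads cleanly through each ``slot'' of the $\rwPHP$-construction without interacting with the top-level surjection $f$. The only mildly delicate point is verifying that the retraction labels $g'_{y, ans'}$ can be produced within the same depth budget — this is the place where one must be careful about whether ``depth'' refers to bit-depth or block-depth, but in either interpretation the composition is efficient because $|ans|$ is polylogarithmic in $N$.
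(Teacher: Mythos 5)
Your proposal is correct and matches the paper's own proof: keep $f$ unchanged, apply the reduction to each inner instance $I_y$, and define the new labels by composing $g_y$ with the solution-decoding map, so that a solution of the constructed $\rwPHP(\calQ)$-instance decodes directly to a solution of the original $\rwPHP(\calP)$-instance. The extra bookkeeping you include on the depth of the meta-reduction is consistent with (and slightly more explicit than) what the paper states.
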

\begin{proof}
	Let $f:[M] \to [N]$, $\{I_y\}_{y\in [N]}$, and $\{g_y\}_{y\in [N]}$ be an instance of $\rwPHP(\calP)$. Define an instance $(f', \{I'_y\}, \{g'_y\})$ of $\rwPHP(\calQ)$ as follows. The function $f' := f$ stays the same; each $I'_y$ is obtained by running the reduction from $\calP$ to $\calQ$ on $I_y$; for each possible answer $ans'$ of each $I'_y$, we run the reduction to obtain an answer $ans$ of $I_y$, and return $g'_y(ans') := g_y(ans)$.
	
	Finally, let $(y', ans')$ be a valid answer of the instance $(f', \{I'_y\}, \{g'_y\})$. Let $y := y'$ and run the reduction to obtain an answer $ans$ of $I_y$ from the answer $ans'$ of $I'_y$, then $(y, ans)$ is a valid answer of $(f, \{I_y\}, \{g_y\})$.
\end{proof}

It follows from \autoref{fact: robustness of rwPHP(P)} that, for example, the class $\rwPHP(\PLS)$ can be defined from \emph{any} complete problem for $\PLS$. %

\paragraph{Amplification for $\rwPHP$.} In this paper, when we talk about $\rwPHP$, we always think about a purported ``surjection'' $f:[M]\to [N]$ where $N = 2M$. This is without loss of generality since the complexity of $\rwPHP$ does not depend significantly on the relationship between $M$ and $N$ (unless they are too close to each other). This is also true for $\rwPHP(\calP)$ provided that $\calP$ is closed under \emph{Turing reductions}. We note that many interesting subclasses of $\TFNP$ (such as $\PLS$, $\PPA$, $\PPAD$, and $\PPADS$) are indeed closed under Turing reductions \cite[Section 6]{BussJ12}, with the notable exception of $\PPP$ in the black-box model \cite{PPP-Turing}.

\begin{theorem}[Informal]\label{informal thm: amplification of rPHP}
    Suppose $\calP$ is closed under Turing reductions. Let $\rPHP_{M\to N}(\calP)$ denote the problem $\rwPHP(\calP)$ where the given ``purported surjection'' is from $[M]$ to $[N]$.\footnote{The notation $\rPHP_{M\to N}$ means ``retraction pigeonhole principle from $M$ pigeons to $N$ holes''. As ``weak'' conventionally refers to the case where $M=2N$, the retraction pigeonhole principle with $M$ and $N$ specified are called $\rPHP_{M\to N}$ instead of $\rwPHP_{M\to N}$.} Then there is an efficient decision tree reduction from $\rPHP_{M \to (M+M/\polylog(M))}(\calP)$ to $\rPHP_{M\to M^{100}}(\calP)$.
\end{theorem}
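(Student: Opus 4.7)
The plan is to realize a many-one decision tree reduction that encodes the small-gap instance as a single large-gap instance of size $[M^{100}]$, while keeping the pigeon count $M$ unchanged, by packing $k$ parallel copies of the input and leveraging the Turing-closure of $\calP$ to bundle the retractions into one $\calP$-instance. Given the weak-gap input $(f\colon [M]\to [M+M/L], \{I_y, g_y\}_{y})$ with $L=\polylog(M)$, I would fix $k=99$ and $k$ fixed-point-free permutations $\sigma_0=\mathrm{id},\sigma_1,\dots,\sigma_{k-1}$ of $[M]$ (e.g.~cyclic shifts by distinct offsets), and define
\[
    \tilde f(x) \;:=\; \bigl(f(\sigma_0(x)),\, f(\sigma_1(x)),\, \dots,\, f(\sigma_{k-1}(x))\bigr)\;\in\;[M+M/L]^k.
\]
Because $(M+M/L)^k = M^k(1+1/L)^k \le M^k\cdot e^{k/L} \le M^{k+1} = M^{100}$ (using $k/L \le \log M$ for $L=\polylog M$ and $k=99$), the codomain embeds into $[M^{100}]$ via a fixed injection. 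Thus $\tilde f\colon [M]\to [M^{100}]$, with the pigeon count preserved.

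For each image $Y=(y_0,\dots,y_{k-1})\in [M+M/L]^k$, I would let $\tilde I_Y$ be a single $\calP$-instance (obtained from $I_{y_0},\dots,I_{y_{k-1}}$ via the Turing-closure hypothesis on $\calP$) that outputs a tuple $(ans_0,\dots,ans_{k-1})$ together with one of two kinds of witnesses: either (a)~a \emph{consensus witness} that the preimage candidates $z_i:=\sigma_i^{-1}(g_{y_i}(ans_i))$ all coincide at some common $x^\ast\in [M]$, or (b)~a \emph{local failure witness} naming a coordinate $i$ for which $f(g_{y_i}(ans_i))\ne y_i$. In case (a) set $\tilde G_Y(\tilde{ans}):=x^\ast$, in case (b) any canonical value. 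For $Y$ in the padding $[M^{100}]\setminus\mathrm{image}(\mathrm{embedding})$ I would reuse a canonical $(I_{y^\star}, g_{y^\star})$ for a fixed $y^\star$, so that since $\mathrm{image}(\tilde f)$ is disjoint from the padding region, any oracle solution in the padding automatically certifies a retraction failure at $y^\star$.

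Decoding an oracle solution $(Y,\tilde{ans})$ then has three subcases. If $Y$ is padding, return $(y^\star,\tilde{ans})$. Otherwise, inspect the type of witness inside $\tilde{ans}$. Case (b) already names a pair $(y_i, ans_i)$ with $f(g_{y_i}(ans_i))\ne y_i$, which is a valid solution of the original instance. Case (a) gives a common preimage $x^\ast$ and so $\tilde f(x^\ast)=(f(g_{y_0}(ans_0)),\dots,f(g_{y_{k-1}}(ans_{k-1})))$; the inequality $\tilde f(x^\ast)\ne Y$ then forces $f(g_{y_i}(ans_i))\ne y_i$ at some coordinate $i$, and $(y_i,ans_i)$ is the desired output. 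Each step can be computed by a block-depth $O(k\cdot\polylog M)=\polylog(M)$ decision tree over the input instance, since the $k$ coordinates of $\tilde f$, the $k$ retractions, and the consensus/failure check all query at most $\polylog(M)$ bits.

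The main obstacle is designing the auxiliary $\calP$-instance $\tilde I_Y$ so that case (a) or case (b) is \emph{always} producible by the $\calP$-machinery, and in particular not outputting a ``consensus'' that is spurious (e.g., all $ans_i$ valid but the individual $z_i$ disagree without any local failure). The key point will be that whenever every individual retraction succeeds ($f(g_{y_i}(ans_i))=y_i$ for all $i$) and the preimages $z_i$ agree, case (a) is attainable; while if the $z_i$ disagree then at least one retraction must in fact be failing by a pigeonhole argument on $f$'s image under the different $\sigma_i$-shifts, allowing $\tilde I_Y$ to output a case (b) witness found by Turing-composing the $k$ underlying $\calP$-searches. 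Verifying this dichotomy and that the result is indeed a legitimate $\calP$-instance is the crux and must be done carefully using the Turing-closure hypothesis.
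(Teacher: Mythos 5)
You have correctly identified the crux yourself, and unfortunately the claimed dichotomy at that crux is false, so the construction as proposed is unsound. Your $\tilde f$ is (up to the permutations $\sigma_i$) the $k$-fold parallel power of $f$, so its image has size at most $M$, whereas the set of tuples $Y=(y_0,\dots,y_{k-1})$ whose coordinates are \emph{individually} in the image of $f$ can have size close to $M^k$. Concretely, take $f$ injective and $g$ a genuine inverse on the image of $f$: for almost every $Y$ with all $y_i\in\mathrm{image}(f)$, every individual retraction succeeds, i.e.\ $f(g_{y_i}(ans_i))=y_i$ for all $i$, yet the candidates $z_i=\sigma_i^{-1}(g_{y_i}(ans_i))$ are distinct, so neither a consensus witness (a) nor a local-failure witness (b) exists. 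Hence $\tilde I_Y$ is not a total search problem (or, if you force totality with a dummy output, decoding it gives no valid solution of the source instance), and the "pigeonhole argument on $f$'s image under the $\sigma_i$-shifts" you invoke does not exist. The underlying obstruction is that non-membership of $Y$ in the image of the tensored map $\tilde f$ simply does not reduce to non-membership of some $y_i$ in the image of $f$; parallel repetition cannot amplify the stretch of a retraction pigeonhole principle.

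The paper instead amplifies by \emph{sequential composition}: pad $f$ to maps $f_k:[M+k\eps M]\to[M+(k+1)\eps M]$ (and later to parallel copies $[2^kM]\to[2^{k+1}M]$) and set $f':=f_{d-1}\circ\cdots\circ f_0$, so the stretch multiplies, $(1+1/\polylog M)^{d}$ reaching $2$ after $d=\polylog(M)$ steps and then $M^{100}$ after another $O(\log M)$ doubling steps. The retraction for $f'$ is a chain of $d$ many $\calP$-searches (one per level), i.e.\ a Turing reduction to $\calP$, and this is exactly where closure under Turing reductions is used to package the chain into a single $\calP$-instance $I'_y$. Soundness is a telescoping argument: from a solution of the big instance one extracts the chain $y_d,\dots,y_0$ and finds a level $k$ with $f_{\ge k}(y_k)\ne y_d$ but $f_{\ge k+1}(y_{k+1})=y_d$, which forces $f(g_{y_{k+1}}(ans_k))\ne y_{k+1}$, a valid solution of the source. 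If you want to salvage your write-up, replace the permutation/consensus gadget by this composition scheme; the counting in your proposal is fine, but the reduction must compose $f$ with itself rather than tensor it.
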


We remark that amplification of weak pigeonhole principles is a well-known fact in bounded arithmetic \cite{ParisWW88, Thapen-PhD, Krajicek04a, Jerabek04, Jerabek-independence, ChenLiOliveira24} and total search problems \cite{Korten21, Korten22}. Since the proof follows from standard arguments in the literature, we postpone it to \autoref{sec: amplification for wPHP}.

\subsubsection{Witnessing for \texorpdfstring{$\T^1_2 + \dwPHP(\PV)$}{T12 + dwPHP(PV)}}\label{sec: rwPHP(PLS) from witnessing}
\def\Sol{\mathrm{Sol}}

As mentioned in the introduction, every $\TFNP$ problems whose totality can be proved in $\T^1_2 + \dwPHP(\PV)$ reduces to $\rwPHP(\PLS)$. This is an easy corollary of \cite{BussKrajicek94} and \cite[Lemma 2.1]{AtseriasT14}; for completeness we present a proof here.

\begin{theorem}
    Suppose that $\phi(x) := \exists y<t_\phi(x)~\psi(x, y)$ is a $\Sigma_1^b(\alpha)$-sentence and 
    \[\T^1_2(\alpha) + \dwPHP(\PV(\alpha)) \vdash \forall x~\phi(x).\]
    Then the $\TFNP$ problem corresponding to $\phi$ is in $\rwPHP(\PLS)$.
\end{theorem}
\begin{proof}
It is shown in \cite[Lemma 2.1]{AtseriasT14} that there is a term $t = t(x)$ and a function symbol $f\in \PV(\alpha)$ such that
\begin{equation}\label{eq: lemma 1 of AT14}
	\T^1_2(\alpha) \vdash \forall x~(t > 2 \land \forall v<t^2 \exists u<t~(f_x(u) = v \lor \phi(x))).
\end{equation}
(In fact, this follows from standard manipulations underlying Wilkie's witnessing theorem for $\S^1_2 + \dwPHP(\PV)$; see e.g., \cite[Proposition 14]{Jerabek04}.)

To parse \eqref{eq: lemma 1 of AT14}, note that $f_x:[t]\to [t^2]$ defines a ``stretching'' function and hence cannot be surjective. The non-existence of $u$ such that $f_x(u) = v$ exactly means that $v$ is not in the range of $f_x$; hence given \eqref{eq: lemma 1 of AT14}, $\forall x~\phi(x)$ follows from the dual weak pigeonhole principle.

The following problem is in $\PLS$ by the witnessing theorem for $\T^1_2$ \cite{BussKrajicek94}. On input $(x, v)$, find either some $u < t(x)$ such that $f_x(u) = v$ or some $y < t_\phi(x)$ such that $\psi(x, y)$ holds (we call such $y$ a ``certificate'' for $x$). Now it is at least easy to see that there is a \emph{randomized} reduction from the $\TFNP$ problem corresponding to $\phi(x)$ to $\PLS$: Let $v\gets [t^2]$ be random (which is a non-output of $f_x$ w.h.p.), then the above $\PLS$ procedure finds some $y$ that is a certificate for $x$.

Working slightly harder, we can see that the above is actually a reduction to $\rwPHP(\PLS)$:\begin{itemize}
	\item The ``purported surjection'' is the function $f_x:[t]\to [t^2]$.
	\item For each $v\in [t^2]$, there is a $\PLS$ instance $I_v$ which captures the problem of given $(x, v)$ outputting either $u \in f_x^{-1}(v)$ or a certificate $y$ for $x$.
\end{itemize}
Hence, given any $v\in [t^2]$ and solution $ans$ of $I_v$ such that $ans$ does not contain the information of $u\in [t]$ such that $f(u) = v$, it must be the case that $ans$ leads to a certificate for $x$. It follows that the $\TFNP$ problem corresponding to $\phi$ reduces to $\rwPHP(\PLS)$.
\end{proof}

On the other hand, it is easy to see that $\T^1_2 + \dwPHP(\PV)$ proves the totality of $\rwPHP(\PLS)$ and the proof relativizes. Let $(f, \{I_y\}, \{g_y\})$ be an instance of $\rwPHP(\PLS)$. By $\dwPHP(f)$, there exists $y\in [N]$ that is not in the range of $f$. Since $\T^1_2$ proves the totality of $\PLS$, it also proves the existence of a solution $ans$ for $I_y$. Note that since $y$ is not in the range of $f$, we have in particular that $f(g_y(ans)) \ne y$. Hence, $(y, ans)$ is a valid solution for this $\rwPHP(\PLS)$ instance.

\def\cri{\mathrm{cri}}
\def\Cri{\mathrm{Cri}}
\def\mono{\mathsf{mono}}

\section{Refuters for the Pigeonhole Principle}\label{sec: PHP}

In this section, we study the refuter problems where the family of hard tautologies is the Pigeonhole principle $\PHP_{(n+1)\rightarrow n}$. Our main results are the $\PLS$-memberships of width refuter problems for (variants of) $\PHP_{(n+1)\rightarrow n}$ and the $\rwPHP(\PLS)$-memberships of the size refuter problems for $\PHP_{(n+1)\rightarrow n}$. Looking ahead, in \autoref{sec: general lower bounds}, we will establish \emph{universal} $\PLS$-hardness for width refuters and $\rwPHP(\PLS)$-hardness for size refuters, thereby characterizing their complexities in $\TFNP^\dt$.

\subsection{Refuters for Narrow Resolution Proofs}
\label{subsection: width white-box}
Historically, proving size lower bounds for resolution has been challenging and considered milestones in proof complexity. The honor of the first super-polynomial size lower bounds belongs to the Pigeonhole Principle ($\PHP_{(n+1)\rightarrow n}$). However, in terms of width lower bound, $\PHP$ is not satisfactory: $\width(\PHP_{(n+1)\rightarrow n}\vdash_{\Res}\bot)=n$, but there is already an \emph{axiom} in $\PHP$ that has width $n$. Therefore, studying the complexity of finding a wide clause is uninteresting, as one of the widest clauses appears directly in the axiom and can be easily located. In what follows, we consider the width refuter problem of a \emph{constant-width analog} of $\PHP_{(n+1)\rightarrow n}$, namely, the nondeterministic extension $\EPHP_{(n+1)\rightarrow n}$, defined below. However, we will come back to $\PHP_{(n+1)\rightarrow n}$ shortly after $\EPHP_{(n+1)\rightarrow n}$ and examine the refuter problem for the so-called ``monotone width'' of $\PHP_{(n+1)\rightarrow n}$. The monotone width lower bound for $\PHP_{(n+1)\to n}$ will also serve as a key component in the study of size refuters for $\PHP_{(n+1)\to n}$.

\begin{definition}[$\EPHP_{(n+1)\rightarrow n}$]\label{def: EPHP}
    $\EPHP_{(n+1)\rightarrow n}$ is the same as $\PHP_{(n+1)\rightarrow n}$ except that we replace every clause $\bigvee_{j\in[n]} x_{ij}$ by a $3$-CNF nondeterministic extension; that is, by the following $n+2$ clauses: 
    \[\overline{y}_{i,0}, (y_{i,0}\vee x_{i,1}\vee\overline{y}_{i,1}),(y_{i,1}\vee x_{i,2}\vee\overline{y}_{i,2}),\cdots,(y_{i,n-1}\vee x_{i,n}\vee\overline{y}_{i,n}),y_{i,n},\]
    where $y_{i,0},\cdots,y_{i,n}$ are newly introduced variables.
\end{definition}

Width lower bounds for $\EPHP_{(n+1)\to n}$ were proved by Ben-Sasson and Wigderson \cite{Ben-SassonW01}.
\begin{theorem}[{\cite[Theorem 4.9]{Ben-SassonW01}}]\label{claim: width lower bound for EPHP}
	Any resolution refutation of $\EPHP_{(n+1)\to n}$ contains a clause $C$ with $w(C) \ge n/3$.
\end{theorem}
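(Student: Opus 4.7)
The plan is to follow the classical complexity-measure template for resolution width lower bounds, adapted to the nondeterministic extension $\EPHP_{(n+1)\to n}$. For each clause $C$, let $\nu(C)$ be the minimum size of a set $S\subseteq[n+1]$ of pigeons such that the $\EPHP$ axioms involving only pigeons in $S$---the $n+2$ extension clauses for each $i\in S$ and the hole clauses $\overline{x}_{i,j}\vee\overline{x}_{i',j}$ for $i,i'\in S$---together with the auxiliary ``uniqueness'' constraints $\overline{x}_{i,j}\vee\overline{x}_{i,j'}$ for $i\in S$ and $j\ne j'$ (added only as semantic side conditions, not as syntactic axioms) semantically imply $C$.

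I would then verify the three standard properties of $\nu$: (i)~every $\EPHP$ axiom has $\nu\le 2$ (extension axioms have $\nu\le 1$, hole axioms have $\nu\le 2$); (ii)~$\nu(C)\le \nu(A)+\nu(B)$ whenever $C$ is resolved from $A,B$, by taking the union of witnessing sets, since resolution is sound relative to any fixed set of extra semantic constraints; and (iii)~$\nu(\bot)=n+1$, because any $k\le n$ pigeons can be placed into distinct holes (with the corresponding threshold pattern on the $y$-variables) to form a satisfying assignment of axioms plus uniqueness, whereas $n+1$ pigeons cannot be injectively matched into $n$ holes. The standard averaging argument---tracing the first clause in the refutation whose $\nu$ crosses $n/3$ and observing that $\nu$ can at most double per resolution step---then produces a clause $C^\star$ with $n/3\le \nu(C^\star)\le 2n/3$.

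The crux of the proof is the ``critical pigeon'' lemma: if $\nu(C)=k\le n$ and $S$ is a minimal witness, then $C$ mentions a variable of every pigeon in $S$, giving $w(C)\ge k$. By minimality of $S$, for each $i\in S$ there is an assignment $\alpha$ satisfying the axioms plus uniqueness over $S\setminus\{i\}$ and falsifying $C$. Under $\alpha$, the $k-1$ pigeons of $S\setminus\{i\}$ each occupy a single, distinct hole, so at least one hole $j^\star$ is free. Extending $\alpha$ by sending pigeon~$i$ to $j^\star$ (with the unique threshold pattern on $y_{i,\cdot}$) yields an assignment satisfying the axioms plus uniqueness over $S$; if $C$ had no variable of pigeon~$i$, this extension would still falsify $C$, contradicting $\nu(C)\le k$. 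Applied to $C^\star$, the lemma gives $w(C^\star)\ge n/3$.

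The main obstacle I anticipate is the choice of complexity measure itself. The naive ``minimum $|S|$ such that the $\EPHP$ axioms over $S$ imply $C$'' fails here, since the extension chain $\overline{y}_{i,0},(y_{i,0}\vee x_{i,1}\vee\overline{y}_{i,1}),\ldots,y_{i,n}$ only forces each pigeon into \emph{at least} one hole, and an arbitrary satisfying assignment can stuff one pigeon of $S\setminus\{i\}$ into all $n$ holes, blocking the critical-pigeon extension. Strengthening the semantics in $\nu$ by the auxiliary uniqueness constraints restores the matching-style assignments required, while keeping the three core properties intact---in particular, $\nu(\bot)$ is unchanged because uniqueness remains consistent whenever $\le n$ pigeons are considered, and subadditivity is preserved by the soundness of resolution under arbitrary consistent side conditions. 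Verifying these points carefully, together with the routine but necessary check that the threshold assignment for the extension variables is well-defined and forced given a choice of hole, is the main technical content.
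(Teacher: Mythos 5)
Your proposal is correct, but it proves the theorem by a different route than the one the paper actually uses. The paper cites \cite{Ben-SassonW01} and, in effect, re-proves the bound constructively inside \autoref{thm: white-box PLS-completeness for width}: there the measure is $\mathrm{cri}(C)$, the \emph{number} of pigeons $\ell$ admitting an $\ell$-critical assignment falsifying $C$ (criticality already builds in the matching structure, since satisfying all constraints for the other $n$ pigeons forces a perfect matching by counting), and the width is extracted from a clause with $n/3\le \mathrm{cri}(C)\le 2n/3$ by a swap between an unmentioned critical pigeon and an unmentioned non-critical one. You instead use the implication-based measure $\nu(C)$ (minimum pigeon set whose axioms, relative to uniqueness side conditions, imply $C$), and prove the stronger pointwise bound $w(C)\ge\nu(C)$ whenever $\nu(C)\le n$ via the free-hole extension at each element of a minimum witness; combined with the first-crossing/subadditivity step this gives the theorem. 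You also correctly identified the genuine pitfall of the naive implication measure---a pigeon of $S\setminus\{i\}$ may occupy many holes, leaving no free hole for pigeon $i$---and your fix (uniqueness as semantic side constraints, equivalently restricting to matching-type assignments) is sound and preserves the axiom bound, subadditivity, and $\nu(\bot)=n+1$; the paper's $\mathrm{cri}$ measure avoids the same pitfall by definition. Two small additions would make your write-up match the paper's setting exactly: since the paper's resolution refutations allow the weakening rule, note that $\nu$ is monotone under weakening (immediate, as any implying set for $A$ implies any superclause of $A$), so the first clause crossing $n/3$ must be a resolvent; and be aware that your $\nu$, unlike $\mathrm{cri}$, is a coNP-type quantity, which is fine for the existential width bound but would not serve the paper's later purpose of an efficiently computable potential for the $\PLS$ membership.
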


\begin{theorem}\label{thm: white-box PLS-completeness for width}
    $\Refuter(w(\EPHP \vdash_{\Res}\bot)< n/3)$ is in $\PLS$. In particular, there is a uniform decision tree reduction of block-depth $3$ from the refuter problem to $\Iter$.
\end{theorem}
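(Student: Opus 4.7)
The plan is to convert the Ben-Sasson--Wigderson width lower bound proof of \autoref{claim: width lower bound for EPHP} into an explicit uniform local-search algorithm, following the now-standard ``complexity measure + local search'' template. Concretely, I will introduce a complexity measure $\mu$ on clauses of $\EPHP_{(n+1)\to n}$ with the following four properties: (i) every axiom $A$ of $\EPHP_{(n+1)\to n}$ satisfies $\mu(A) \le 2$; (ii) $\mu(\bot) \ge 2n/3 + 1$; (iii) for a valid resolution derivation of $C_i$ from $C_j, C_k$ one has $\mu(C_i) \le \mu(C_j) + \mu(C_k)$, and for a valid weakening $C_i$ from $C_j$ one has $\mu(C_i) \le \mu(C_j)$; and (iv) any clause $C$ with $\mu(C) \in (n/3, 2n/3]$ has $w(C) \ge n/3$. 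Crucially, $\mu(C)$ must be computable from the contents of $C$ alone, by a decision tree that reads just the single block encoding $C$; the natural candidate is a (suitably normalized) count of the ``pigeons that $C$ constrains'', which is syntactic because each extension variable $y_{i,k}$ and each variable $x_{i,j}$ is indexed by a specific pigeon $i$.

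Given such a $\mu$, the reduction to $\Iter$ is straightforward. Let $\Pi = (C_{-m}, \ldots, C_{L-1})$ be the input refutation of width $< n/3$. By property (iv), no clause in $\Pi$ has $\mu \in (n/3, 2n/3]$: every clause is either \emph{low} ($\mu \le n/3$) or \emph{high} ($\mu > 2n/3$). Reindex so $\bot$ is at position $0$ and predecessor indices strictly exceed descendant indices. Define $S\colon [L]\to [L]$ as follows. To evaluate $S(i)$, first query block $i$ to read $C_i$, its tag, and its predecessor indices, and compute $\mu(C_i)$. If $\mu(C_i) \le 2n/3$, return a canonical ``stop'' value that the decoder will treat as evidence of invalidity at $i$. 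Otherwise, if $C_i$ is a weakening of $C_j$, set $S(i) := j$; if $C_i$ is a resolution of $C_j, C_k$, query blocks $j$ and $k$, compute $\mu(C_j)$ and $\mu(C_k)$, and set $S(i)$ to whichever predecessor has $\mu > 2n/3$ (breaking ties canonically). The total cost is three block queries: block $i$, then blocks $j$ and $k$.

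Correctness follows by case analysis on the $\Iter$ solution returned. Any ``bad'' $i^\star$ produced by $\Iter$ must be a place where the local picture is impossible under properties (i)--(iv): either $C_{i^\star}$ is a high-$\mu$ axiom (ruled out by (i)), or $C_{i^\star}$ is a high resolution step whose two predecessors are both low (violating subadditivity (iii), so the derivation is invalid), or a high weakening step from a low predecessor (again violating (iii)), or a ``stop'' clause whose $\mu$ forces the width-lower-bound (iv) to kick in --- contradicting the width hypothesis on $\Pi$. In each of the invalid-derivation cases the decoder outputs $i^\star$ directly; in the width-contradicting case the algorithm can locate an invalid derivation by further local inspection, but by property (iv) this case cannot actually occur given the hypothesis on $\Pi$. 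Since each query of $S$ reads at most $3$ blocks of $\Pi$, this gives the claimed uniform decision-tree reduction of block-depth $3$.

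The principal obstacle is defining $\mu$ so that properties (iii) and (iv) hold simultaneously while $\mu$ remains computable from $C$ alone. Property (iii) for resolution on an extension variable $y_{i,k}$ is delicate, since both parent clauses mention pigeon $i$; this is precisely why the extension variables of $\EPHP$ are ``pigeon-localized'' and why a pigeon-count style measure has any hope of working. Property (iv) --- medium-complexity clauses are wide --- is the combinatorial heart of the Ben-Sasson--Wigderson argument for $\EPHP$, and matching it to a syntactic $\mu$ will likely require a small amount of normalization (e.g.~subtracting a constant to accommodate the $\mu=2$ hole-axioms, or re-scaling the thresholds) together with a careful extremal argument showing that a clause constraining $k$ pigeons in the medium range must contain $\Omega(k)$ literals. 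This case analysis is where I expect the bulk of the technical work to live.
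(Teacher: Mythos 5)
Your overall template---a subadditive complexity measure combined with the $1/3$--$2/3$ local-search walk, implemented as a block-depth-$3$ reduction to (reversed) $\Iter$---is exactly the paper's strategy, and your correctness analysis is essentially the right one: a node whose measure exceeds $2n/3$ but whose selected predecessor has measure below $2n/3$ must be an invalid derivation, since otherwise subadditivity would exhibit a medium-measure clause, which by the ``medium implies wide'' property would contradict the syntactic width bound. (One small slip: you cannot ``reindex so $\bot$ is at position $0$''; the input proof has $\bot$ last, which is why the paper phrases the walk as a \emph{reversed} $\Iter$ instance.) The genuine gap is that the measure itself, which you defer as ``the bulk of the technical work,'' is the entire content of the theorem, and the candidate you propose---a (normalized) \emph{syntactic} count of the pigeons whose variables appear in $C$---cannot be made to work. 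The empty clause mentions no pigeons, so any measure read off from the literals of $C$ gives $\bot$ its \emph{minimal} value, destroying your property (ii); and if you instead artificially declare $\mu(\bot)$ large, subadditivity forces one of the two width-$1$ parents of $\bot$ to have measure roughly $n/2$, i.e.\ a clause of width $1$ sits in the medium range, contradicting property (iv). So no literal-counting measure can satisfy (i)--(iv) simultaneously; ``normalization'' will not rescue it.

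The missing idea is that the measure must be \emph{semantic}: the paper takes $\cri(C)$ to be the number of pigeons $\ell$ for which some $\ell$-critical assignment (a total assignment, extension variables included, satisfying every constraint of $\EPHP_{(n+1)\to n}$ except the pigeon-$\ell$ constraint) falsifies $C$. This gives $\cri(\bot)=n+1$, $\cri\le 1$ on axioms, subadditivity under resolution and weakening, and the extremal property that $n/3\le \cri(C)\le 2n/3$ forces $w(C)\ge n/3$ (by swapping a critical pigeon with a non-critical one and observing that $C$, to distinguish the two assignments, must mention a variable of one of them). Note also that your self-imposed requirement that $\mu$ be computable ``from the contents of $C$ alone'' does not force syntacticity: in this model any function of the block qualifies, and for the uniformity claim it suffices that $\cri$ be poly-time computable, which the paper establishes by reducing the existence of an $\ell$-critical assignment falsifying $C$ to a bipartite perfect-matching question (handling the extension variables $y_{i,j}$ by translating each literal of $C$ into deleted edges). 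Without this semantic criticality measure (or an equivalent, e.g.\ the minimum number of axioms implying $C$), your plan does not yet constitute a proof, and the direction you indicate for completing it is a dead end.
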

\begin{proof}
    We will reduce it to an instance of reversed $\Iter$. This reduction is an analog of a \emph{constructive} version of width lower bound proofs by Beame and Pitassi~\cite{beame1996simplified}, which we will use again later in the proof of \autoref{thm: mono PLS-completeness}. We call $(x_{i,j})$ \emph{original} variables and $(y_{i,j})$ \emph{extension} variables. 

    We consider a set of functions over all the variables, including $n+1$ pigeon functions $\{EP_i\}$ where 
    \begin{equation}
    EP_i := \overline{y}_{i,0} \wedge(y_{i,0}\vee x_{i,1}\vee\overline{y}_{i,1})\wedge(y_{i,1}\vee x_{i,2}\vee\overline{y}_{i,2})\wedge\cdots\wedge(y_{i,n-1}\vee x_{i,n}\vee\overline{y}_{i,n})\wedge y_{i,n},\end{equation}
    and $n^2(n+1)/2$ hole functions $\set{H^j_{(i, i')}}$, where $H^j_{(i,i')}=\overline{x}_{i,j}\vee \overline{x}_{i',j}$. It is easy to see that the semantic meaning of $y_{i, j}$ is whether ``the index of the hole that pigeon $i$ goes into belongs to $\set{1,\dots,j}$'' or not. 

    We say an assignment $\alpha$ over all variables (including the extension variables) is \emph{$\ell$-critical} if $EP_{\ell}(\alpha)=0$ but all other functions are 1 under this assignment, namely $EP_i(\alpha)=1$ for all $i\neq \ell$ and $H^j_{(i,i')}(\alpha)=1$ for all $j\in[n],i,i'\in[n+1]$. If we ignore the extension variables, $\alpha$ is essentially a complete matching between pigeons and holes, except pigeon $i$ is not going anywhere.
    Given the definition of $\ell$-critical assignments, we define a complexity measure for a clause $C$, denoted by $\cri(C)$:
    \[\cri(C)\coloneqq \mleft|\set{\ell : \exists \ell\text{-critical assignment } \alpha \text{ such that } C(\alpha)=0}\mright|.\]

    Note that $\cri$ has four important properties:
    \begin{enumerate}[(I)]
        \item $\cri(\bot)=n+1$;\label{item: property I of cri}
        \item $\cri(EP_i)=1$ for all $i$ and $\cri(H^j_{i,i'})=0$ for all $j,i,i'$;\label{item: property II of cri}
        \item $\cri$ is subadditive with respect to resolution derivation, namely, if $C$ is resolved from $A$ and $B$, then $\cri(C)\leq \cri(A)+\cri(B)$;\label{item: property III of cri}
        \item if $C$ is obtained from a weakening of $A$, then $\cri(C)\leq \cri(A)$. \label{item: property IV of cri}
    \end{enumerate}
    We first show that $\cri(\cdot)$ can be computed in polynomial time. Then we show that any clause $C_i$ such that $n/3\leq \cri(C_i)\leq 2n/3$ will give us a solution. The $\PLS$-membership follows from that the standard $1/3$-$2/3$ trick can be implemented via a reduction to reversed $\Iter$.
    
    \begin{lemma}\label{lem: cri EPHP efficient}
        For any clause $C$, $\cri(C)$ can be computed in $\poly(n)$ time.
    \end{lemma}
    \begin{claimproof}
        Fix any clause $C$. We will enumerate $\ell$ and check the existence of $\ell$-critical assignments. The only part that we need to be careful is how we deal with the extension variables. 

        Imagine that we maintain a complete bipartite graph with $n+1$ nodes on the left and $n$ nodes on the right. We will iteratively delete some edges from this graph based on the requirement of $\ell$-critical assignments. We will show that the existence of $\ell$-critical assignment can be reduced to the existence of a perfect matching of the final graph.

        For an $\ell$-critical assignment, $EP_{\ell}$ needs to be $0$, and all other functions need to be $1$, then we have that $x_{\ell,j}$ needs to be $0$ for all $j$ (which means pigeon $\ell$ cannot go into any hole). This is because if pigeon $\ell$ were matched with some hole, some other pigeons would have no place. So we delete edges between $(\ell,j)$ for all $j$.

        For some $\alpha$ being an $\ell$-critical assignment, we have $C(\alpha)=0$. This means all literals that appeared in $C$ are fixed to be $0$ in the search of $\alpha$. If $C$ contains a literal $\overline{x}_{\ell,j}$ for some $j$, then we directly conclude that $\ell$-critical assignment does not exist (due to the argument above).
    
        Now assume that $C$ does not contain any literal $\overline{x}_{\ell,j}$. For every literal $x_{i,j}$ in $C$, in order to falsify $C$, $x_{i,j}$ is going to be $0$, so we delete the edge $(i,j)$. For every literal $\overline{x}_{i,j}$ in $C$, $x_{i,j}$ is going to be 1, meaning that pigeon $i$ is going to be matched with hole $j$, so we delete the edge $(i,j')$ for every $j'\neq j$ and $(i',j)$ for every $i'\neq i$.

        For every literal $y_{i,j}$ in $C$, $y_{i,j}$ is going to be $0$, so we delete edges $(i,j')$ for all $j'\leq j$. For every literal $\overline{y}_{i,j}$ in $C$, $y_{i,j}$ is going to be $1$, so we delete edges $(i,j')$ for all $j'>j$. 

        We can conclude that there is an $\ell$-critical assignment if and only if the remaining bipartite graph has a perfect matching between pigeons $[n+1]\setminus\set{\ell}$ and holes $[n]$.
    \end{claimproof}

    \noindent \textbf{Reduction to $\Iter$:} 	
    The reversed $\Iter$ instance is defined by the following function $S:[L] \to [L]$. For every $i\in[L]$:
    \begin{itemize}
        \item If $\cri(C_{i})<\frac{2n}{3}$, then $S(i) = i$.
        \item Otherwise, if $C_i$ is a weakening from $C_j$, then let $S(i)=j$.
        \item Finally, if $C_{i}$ is resolved from $C_{j}$ and $C_{k}$: If $\cri(C_{j}) \ge \cri(C_{k})$, then $S(i) = j$; otherwise $S(i) = k$.
    \end{itemize}

    It is easy to see that this reduction can be implemented in block-depth $3$: for example, if $C_i$ is resolved from $C_j$ and $C_k$, then one only needs to read the $i$-th, $j$-th, and $k$-th node in the resolution refutation.

    Note that when we find any solution $i$ of this reversed $\Iter$ instance, it satisfies $S(i)<i$ and $S(S(i))=S(i)$. This means $\cri(C_i)\geq 2n/3$ but $\cri(C_{S(i)})<2n/3$. Thus we have $\cri(C_{S(i)})\in[n/3,2n/3]$.

    Now it remains for us to show that any $C$ such that $n/3\leq \cri(C)\leq 2n/3$ has width at least $n/3$.
For the sake of contraction, assume that $\width(C)< n/3$. This implies that for at most $n/3$ pigeon $i$, $C$ has some variable related to $i$, namely, some $x_{i,j}$ or $y_{i,j}$. Since $\cri(C)\geq n/3$, we know that there exists $\ell$ such that there is an $\ell$-critical assignment $\alpha$ for $C$ but $C$ has no variables of the form $x_{\ell,j}$ or $y_{\ell,j}$. 

    On the other hand, since $\width(C)< n/3$ and $\cri(C)\leq 2n/3$, we know that there is another index $\ell'$ such that $\ell'$ is not critical to $C$ and $C$ has no variables of the form $x_{\ell',j}$ or $y_{\ell',j}$. Let $k$ be the hole that is matched with $\ell'$ in $\alpha$. Consider the following assignment $\alpha'$: we start from $\alpha' := \alpha$, flip $x_{\ell,k}$ from $0$ to $1$, and flip $x_{\ell',k}$ from $1$ to $0$. We further flip all $y_{\ell,j}$ and $y_{\ell',j}$ correspondingly. We have that $EP_{\ell}(\alpha')=1$ and $EP_{\ell'}(\alpha')=0$. Since $C$ doesn't contain any variables related to pigeons $\ell$ and $\ell'$, $C(\alpha')$ is still $0$. This constructs a witness that $\ell'$ is critical to $C$, a contradiction. This finishes the proof.
\end{proof}

It is easy to see that the above reduction is actually a formalization of the width lower bound in $\T^1_2(\alpha)$. Let $n\in\Log$, $\alpha(\cdot, \cdot)$ be an oracle that encodes a purported resolution proof of $\EPHP_{(n+1)\to n}$ with width less than $n/3$, where the second input is a parameter (non-uniformity); that is, each $z$ corresponds to a resolution proof $\pi_z$ and $\alpha(i, z)$ is the $i$-th bit of $\pi_z$. We can syntactically enforce every clause in the proof to have width at most $n/3$. Let $\mistake_{\EPHP}^{\rm w}(n, \alpha, z, i)$ denote the $\PV(\alpha)$ predicate stating that the $i$-th derivation step in $\pi_z$ is invalid (the superscript ``$\rm w$'' stands for ``width''). Then we have:

\begin{theorem}
    \[\T^1_2(\alpha) \vdash \forall n \in\Log\,\forall z\,\exists i~ \mistake_{\EPHP}^{\rm w}(n, \alpha, z, i).\]
\end{theorem}
\begin{proof}[Proof Sketch]
    Reason in $\T^1_2(\alpha)$. Assuming $\forall i~\lnot\mistake_{\EPHP}^{\rm w}(n, \alpha, z, i)$, we will derive a contradiction. A minor technical issue is that we need a $\PV$-definition of the function $\cri$ such that the properties \ref{item: property I of cri}-\ref{item: property IV of cri} are true, and that any clause $C$ with $n/3 \le \cri(C) \le 2n/3$ has width at least $n/3$. This follows from the formalization of bipartite matching algorithms in $\PV$ \cite{Le-Cook}.\footnote{This annoying detail would disappear if we consider the universal variant $\forall\T^1_2(\alpha)$ since these properties are indeed true universal sentences in the standard $\PV$ model $\mathbb{N}$ and thus are included in the axioms of $\forall\T^1_2(\alpha)$. As pointed out in \cite{Muller21}, it is the provability in \emph{universal} variants of relativized bounded arithmetic that captures reducibility among type-$2$ $\TFNP$ problems.}
    
    Let $C_1, C_2, \dots, C_L$ denote the purported resolution refutation encoded by $\alpha(\cdot, z)$. Using $\Sigma_1^b(\alpha)\text{-}\MIN$ (which is available in $\T^1_2(\alpha)$), there is a smallest integer $i$ such that $\cri(C_i) \ge n/3$. By \ref{item: property II of cri}, $C_i$ cannot be an axiom of $\EPHP_{(n+1)\to n}$. By \ref{item: property IV of cri}, $C_i$ cannot be a weakening of any clause $C_j$ ($j < i$), as this would contradict the minimality of $i$. Hence $C_i$ is resolved from $C_j$ and $C_k$ for some $j, k < i$. By \ref{item: property III of cri}, $\cri(C_i) \le \cri(C_j) + \cri(C_k) \le 2n/3$. Hence the width of $C_i$ is at least $n/3$, a contradiction.
\end{proof}

\paragraph{Monotonized resolution and its width refuter.} The first exponential size lower bound for resolution was proven by Haken~\cite{Haken85} for the pigeonhole principle $\PHP$. Haken used the so-called ``bottleneck counting'' argument and the proof was quite involved. A much simpler proof was found by Beame and Pitassi~\cite{beame1996simplified}, where one of the crucial ingredients is the following lemma.

\begin{lemma}[\cite{beame1996simplified}]\label{claim: width lower bound for Res(PHP)}
	Any resolution refutation of $\PHP_{(n+1)\to n}$ contains a clause $C$ with $w(\mono(C)) \ge 2n^2/9$.
\end{lemma}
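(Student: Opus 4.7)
The plan is to mirror the structure of Theorem~\ref{thm: white-box PLS-completeness for width}, transporting its critical-assignment argument from $\EPHP$ back to plain $\PHP_{(n+1)\to n}$, and replacing the final \emph{width} lower bound by a \emph{monotone-width} lower bound. First I would reuse the complexity measure $\cri$: call an assignment $\alpha$ to the $x_{ij}$'s \emph{$\ell$-critical} if it encodes an injection from $[n+1]\setminus\{\ell\}$ into $[n]$, and set $\cri(C) := |\{\ell : \exists\,\ell\text{-critical }\alpha\text{ with }C(\alpha) = 0\}|$. The same four properties go through unchanged: $\cri(\bot) = n+1$; any pigeon axiom has $\cri = 1$ and any hole axiom has $\cri = 0$; $\cri$ is subadditive under resolution and monotone under weakening. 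The standard $1/3$--$2/3$ traversal of the refutation DAG then locates a clause $C^*$ with $n/3 \le \cri(C^*) \le 2n/3$.

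Next I would prove the monotone-width lower bound for $C^*$ via a direct counting argument. Let $K\subseteq[n+1]$ be the set of critical pigeons for $C^*$ and $\overline K := [n+1]\setminus K$; from the $\cri$-window we have $|K|\ge n/3$ and $|\overline K|\ge n+1 - 2n/3$, and the product $|K|\cdot|\overline K|$ is minimized at the endpoints of the window, giving $|K|\cdot|\overline K|\ge (n/3)(2n/3 + 1)\ge 2n^2/9$. For each $\ell \in K$, fix an $\ell$-critical assignment $\alpha_\ell$ that falsifies $C^*$, and for each $\ell' \in \overline K$, set $h(\ell, \ell') := \alpha_\ell(\ell')$. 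Modifying $\alpha_\ell$ by unmatching $\ell'$ and sending $\ell$ to hole $h(\ell, \ell')$ produces an $\ell'$-critical assignment $\alpha'_\ell$; since $\ell'\notin K$, $\alpha'_\ell$ must satisfy $C^*$. The only variables that flip between $\alpha_\ell$ and $\alpha'_\ell$ are $x_{\ell, h(\ell,\ell')}$ (false to true) and $x_{\ell', h(\ell,\ell')}$ (true to false), so $C^*$ must contain either the positive literal $x_{\ell, h(\ell, \ell')}$ or the negative literal $\overline{x}_{\ell', h(\ell, \ell')}$.

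Under the Beame--Pitassi monotonization, $\mono(C)$ replaces each negative literal $\overline{x}_{i,j}$ by the disjunction $\bigvee_{k \ne i} x_{k,j}$; in both branches of the previous case analysis, the positive literal $x_{\ell, h(\ell,\ell')}$ ends up in $\mono(C^*)$ (using $\ell \ne \ell'$ in the negative branch). It remains to check injectivity of $(\ell, \ell') \mapsto (\ell, h(\ell, \ell'))$: fixing $\ell$ determines $\alpha_\ell$, which is injective on $[n+1]\setminus\{\ell\}$, so distinct $\ell'$ produce distinct holes $h(\ell, \ell')$. Hence $\mono(C^*)$ contains at least $|K|\cdot|\overline K|\ge 2n^2/9$ pairwise distinct positive literals, yielding the desired bound. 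The main obstacle I anticipate is purely bookkeeping, namely pinning down the monotonization convention so that the negative-literal branch of the case analysis can always be discharged; all of the combinatorial content is already present in the $\EPHP$ proof, so I do not expect new difficulty in the critical-assignment machinery itself.
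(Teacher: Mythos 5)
Your proof is correct and is essentially the paper's own argument (cf.\ the proof of \autoref{thm: mono PLS-completeness}): critical assignments, the subadditive measure $\cri$ with $\cri(\bot)=n+1$ and small $\cri$ on axioms, the $1/3$--$2/3$ selection of a clause with $\cri\in[n/3,2n/3]$, and the pigeon-swap exchange argument; the only repackaging is that you evaluate $\cri$ on $C$ itself rather than on $\mono(C)$ and then transport the flipped literal through the monotonization, which is what forces your two-case analysis. One caveat, which you yourself flagged: you use the original Beame--Pitassi convention $\overline{x}_{i,j}\mapsto\bigvee_{k\ne i}x_{k,j}$ (other pigeons, same hole), whereas the paper's text defines $\mono$ by $\overline{x}_{i,j}\mapsto\bigvee_{j'\ne j}x_{i,j'}$ (same pigeon, other holes); under that literal definition, the negative branch of your case analysis puts the literals $x_{\ell',j'}$ for $j'\ne h(\ell,\ell')$ into $\mono(C^*)$ rather than $x_{\ell,h(\ell,\ell')}$, so your final counting step does not go through as written. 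This mismatch is arguably on the paper's side --- its own property that $\cri(\mono(C))\le 1$ for all axioms holds only under your convention (under the other one the hole axioms have $\cri(\mono(C))=2$) --- and the paper's route sidesteps the issue entirely by running the exchange argument directly on the already-monotonized clause $D=\mono(C)$ (\autoref{claim: width lower bound from cri}), where monotonicity of $D$ alone forces $x_{u_1,v_2}\in D$, so no case analysis and no choice of convention is needed; if you want your write-up to be robust to either convention, that is the cleaner way to finish.
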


Here, for a clause $C$, $\mono(C)$ is the ``monotonized'' version of $C$, which is obtained from $C$ by replacing every negated variable $\overline{x}_{ij}$ with the set of variables $x_{i'j}$ for all $i'\neq i$.\footnote{The notion of $\mono(C)$ is tailored to the hard tautology $\PHP_{(n+1)\to n}$. The proof of \cite{beame1996simplified} only considers assignments $x \in \{0, 1\}^{(n+1)n}$ that defines a bijective mapping from $n$ of the pigeons to all $n$ holes; it is easy to see that every clause $C$ is equivalent to $\mono(C)$ w.r.t.~such ``critical'' assignments $x$.} Given \autoref{claim: width lower bound for Res(PHP)}, we could define another variant of the width refuter problem that concerns the original $\PHP$ tautologies with no extension variables: we wish to find a clause $C$ such that $\Mono(C)$ has a large width, in particular, $w(\Mono(C))\geq 2n^2/9$. The problem is denoted as $\Refuter(w_{\Mono}(\PHP\vdash_{\Res} \bot)<2n^2/9)$ and is defined as follows.

\begin{definition}[$\Refuter(w_{\Mono}(\PHP\vdash_{\Res} \bot))$]\label{def: mono width refuter for PHP}
Consider the tautology $\PHP_{(n+1)\rightarrow n}$ and let $w := 2n^2/9$. The input instance $\Pi$ is a purported resolution proof for $\PHP_{(n+1)\rightarrow n}$ that consists of clauses $C_{-k},\dots,C_{-1},C_0,\dots,C_{L-1}$, where the first $k:=n+1+n^2(n+1)/2$ clauses are axioms from $\PHP_{(n+1)\rightarrow n}$ and the last clause $C_L=\bot$.

A solution of the given instance is one of the following:
\begin{itemize}
	\item an index $i\in[L]$ such that $\Mono(C_i)$ has at least $w$ literals;\footnote{Here, unlike the formalization of width lower bounds in \autoref{def: refutation problem for Res}, it is unclear how to syntactically enforce that \emph{the monotonized version} of every clause has width $<w$. Therefore we include the clauses with large monotone width as solutions.} or
	\item an index $i$ such that $C_i$ is an invalid derivation. 
\end{itemize}
\end{definition}

The width refuter problem of monotonized resolution proof may not seem natural in the first place. However, converting resolution proof into monotonized resolution proof is an elegant ingredient in the \emph{size lower bounds} of $\PHP$ in the proof of Beame and Pitassi~\cite{beame1996simplified}. Ultimately, our main motivation is for the size refuter of $\PHP$, and the $\PLS$-membership of $\Refuter(w_{\Mono}(\PHP\vdash_{\Res} \bot)<2n^2/9)$ is a key step of showing the $\rwPHP(\PLS)$-membership of the size refuter ($\Refuter(s(\PHP\vdash_{\Res} \bot)<1.01^n)$). Indeed, the $\rwPHP(\PLS)$-membership in \autoref{sec: upper bound for Refuter for Resolution} uses the following theorem as a black box:

\begin{restatable}{theorem}{thmrefutermonoPHPPLScomplete}\label{thm: mono PLS-completeness}
    $\Refuter(w_{\Mono}(\PHP_{(n+1)\rightarrow n}\vdash_{\Res} \bot)<2n^2/9)$ is in $\PLS$. In particular, there is a uniform decision tree reduction of block-width $3$ from this refuter problem to $\Iter$.
\end{restatable}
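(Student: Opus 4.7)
The plan is to adapt the $\PLS$ reduction from \autoref{thm: white-box PLS-completeness for width} to the non-extended pigeonhole principle, replacing the combinatorial core with the monotonization trick of \cite{beame1996simplified}. Call an assignment $\alpha$ \emph{$\ell$-critical} if it encodes a complete bijection from $[n+1]\setminus\{\ell\}$ onto $[n]$ with pigeon $\ell$ left unmatched, and for every clause $C$ appearing in the purported refutation define
\[
    \cri(C) := \bigl|\{\ell \in [n+1] : \exists~\ell\text{-critical } \alpha \text{ with } C(\alpha)=0\}\bigr|.
\]
I would first verify the four standard properties of $\cri$: $\cri(\bot)=n+1$; each pigeon axiom $\bigvee_j x_{ij}$ has $\cri = 1$ and each hole axiom $\bar x_{ij}\vee \bar x_{i'j}$ has $\cri = 0$ (the latter because critical assignments are matchings, so no hole can receive two pigeons); $\cri$ is subadditive under resolution, since any assignment falsifying a resolvent falsifies at least one of the premises; and $\cri$ is non-increasing under weakening. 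Moreover $\cri$ is polynomial-time computable: for each candidate $\ell$ one checks, via bipartite matching, whether the graph on $[n+1]\setminus\{\ell\}$ and $[n]$ obtained by deleting/forcing edges according to the literals of $C$ admits a perfect matching.

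Given these properties, the reduction to reversed $\Iter$ is nearly identical to the one in the proof of \autoref{thm: white-box PLS-completeness for width}: set $S(i) := i$ whenever $\cri(C_i) < 2n/3$, and otherwise let $S(i)$ point to the weakening parent of $C_i$, or to the resolution premise of larger $\cri$. Subadditivity forces the parent selected in the resolution case to satisfy $\cri(C_{S(i)}) \ge \cri(C_i)/2 \ge n/3$, while the weakening rule trivially preserves $\cri \ge 2n/3$; hence any reversed-$\Iter$ solution $i$ yields a clause $C_{S(i)}$ with $\cri(C_{S(i)}) \in [n/3, 2n/3)$. The reduction has block-depth $3$ since determining $S(i)$ only requires reading node $i$ and its at most two premises, and $\cri$ can then be computed from the clause data alone without further queries.

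The main obstacle, and the only genuinely new step relative to the $\EPHP$ proof, is showing that any clause $C$ with $k := \cri(C) \in [n/3, 2n/3]$ satisfies $w(\mono(C)) \ge 2n^2/9$. I would do this via a pigeon-swapping argument: fix any critical $\ell$ and any $\ell$-critical $\alpha_\ell$ falsifying $C$; then for every non-critical $\ell'$, the matching $\alpha'$ obtained by re-routing pigeon $\ell$ to the hole $h := \alpha_\ell(\ell')$ and leaving $\ell'$ unmatched is $\ell'$-critical, so $\mono(C)(\alpha') = 1$. Since $\alpha_\ell$ falsifies $\mono(C)$ and $\alpha'$ agrees with $\alpha_\ell$ outside pigeons $\ell, \ell'$ (with $\ell'$ now unmatched), the literal witnessing $\mono(C)(\alpha') = 1$ must be exactly $x_{\ell, h}$. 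As $\ell'$ ranges over the $n+1-k$ non-critical pigeons, the holes $\alpha_\ell(\ell')$ are distinct, giving $n+1-k$ distinct literals with first index $\ell$ in $\mono(C)$; ranging $\ell$ over the $k$ critical pigeons then yields $k(n+1-k)$ distinct literals in $\mono(C)$, since literals with different first indices are automatically distinct. Finally, the function $k \mapsto k(n+1-k)$ is a downward parabola maximized at $k = (n+1)/2$, so on $[n/3, 2n/3]$ its minimum is attained at an endpoint and is at least $(n/3)(2n/3+1) \ge 2n^2/9$, completing the width bound.
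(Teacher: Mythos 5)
Your overall route is the same as the paper's: define a criticality measure on clauses via $\ell$-critical matchings, verify subadditivity and the other properties plus polynomial-time computability via bipartite matching, run the $1/3$--$2/3$ local-search reduction to reversed $\Iter$, and prove the width bound by a pigeon-swapping argument yielding about $k(n+1-k)$ monotone literals. The one genuine gap is in the key width lemma, and it comes from your definition of the measure: you define $\cri(C)$ via falsification of $C$ itself, while the swap argument needs criticality with respect to $\mono(C)$. On a critical assignment, $C$ and its monotonization are \emph{not} equivalent: when pigeon $i$ is the unmatched pigeon, a negated literal $\overline{x}_{ij}$ of $C$ is satisfied while its monotone surrogate $\bigvee_{j'\ne j}x_{ij'}$ is not, so $\mono(C)\le C$ on critical assignments, possibly strictly. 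Hence from ``$\ell'$ is non-critical'' (your definition) you only get $C(\alpha')=1$; your claim ``so $\mono(C)(\alpha')=1$'' does not follow, and without it the conclusion that the witnessing literal must be $x_{\ell,h}$ breaks down --- the literal of $C$ newly satisfied in passing from $\alpha_\ell$ to $\alpha'$ could instead be $\overline{x}_{\ell',h}$, which contributes no literal in row $\ell$ to $\mono(C)$.

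The clean fix is exactly what the paper does: define the measure as $\cri(\mono(C))$, counting pigeons $\ell$ for which some $\ell$-critical assignment falsifies $\mono(C)$. All the needed properties survive (subadditivity still holds: if the resolved variable is set to $1$ by $\alpha$, then the monotone surrogate of its negation is false, so $\alpha$ falsifies the monotonized second premise), the matching-based polynomial-time computation adapts, and non-criticality of $\ell'$ now directly gives $\mono(C)(\alpha')=1$, after which monotonicity forces $x_{\ell,h}\in\mono(C)$ as you intended. Alternatively, you could keep your definition and repair the lemma by a case split --- any non-critical pigeon $\ell'$ on which $C$ has a negated literal already contributes $n-1$ literals of $\mono(C)$ in row $\ell'$, and counting such rows separately still yields at least $k(n+1-k)\ge 2n^2/9$ literals --- but this is extra work that the $\mono$-based definition avoids. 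The remaining steps (treating invalid derivations when subadditivity is invoked, block-depth $3$, uniformity) are at the same level of detail as the paper and are fine.
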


The proof is in fact simpler than that of \autoref{thm: white-box PLS-completeness for width}, and is a proper constructive translation of the proof by Beame and Pitassi~\cite{beame1996simplified}.
\begin{proof}[Proof of \autoref{thm: mono PLS-completeness}]

    We call an assignment $\alpha$ to be $\ell$-critical if $\alpha$ is a perfect matching between pigeons and holes, except pigeon $\ell$ is not going anywhere. Formally, for every $i\neq \ell$ we have $x_{i 1}\vee\cdots\vee x_{i n}$, and for every $i,i',j$ we have $\overline{x}_{ij}\vee \overline{x}_{i'j}$ under $\alpha$.

    Given the definition of $\ell$-critical assignments, we define $\cri(\mono(C))$ as follows:
    \[\cri(\mono(C))\coloneqq |\set{\ell: \exists \ell\text{-critical assignment }\alpha\text{ that falsifies }\mono(C)}|.\]

    Again, $\cri$ has the following four important properties:
    \begin{itemize}
        \item $\cri(\bot)=n+1$;
        \item $\cri(\mono(C))\leq 1$ for all axioms $C$ of $\PHP_{(n+1)\rightarrow n}$;
        \item $\cri$ is subadditive with respect to resolution derivation, namely, if $C$ is derived from $A$ and $B$, then $\cri(\mono(C))\leq \cri(\mono(A))+\cri(\mono(B))$.
        \item If $C$ is obtained from a weakening of $A$, then $\cri(\mono(C))\leq \cri(\mono(A))$. 
    \end{itemize}
    This time, since we are only concerned with the monotonized version of a clause $C$ and there are no extension variables, it is easier to show that $\cri(\mono(C))$ can be computed in polynomial time.

    \begin{claim}
        For any clause $C$, $\cri(\mono(C))$ can be computed in polynomial time.
    \end{claim}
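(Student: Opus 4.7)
The plan is to compute $\cri(\mono(C))$ by enumerating $\ell\in[n+1]$ and, for each $\ell$, reducing the existence of an $\ell$-critical falsifying assignment to a single bipartite matching problem. Since $\mono(C)$ contains only positive literals (each negated variable $\overline{x}_{ij}$ has been replaced by $\{x_{ij'} : j'\neq j\}$), falsifying $\mono(C)$ under an assignment $\alpha$ is equivalent to setting $x_{ij}=0$ for every literal $x_{ij}$ appearing in $\mono(C)$. In other words, the $\ell$-critical matching we seek must \emph{avoid} a prescribed set of forbidden pigeon--hole edges, namely $F(C):=\{(i,j) : x_{ij}\in\mono(C)\}$.

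Given this observation, the algorithm is straightforward. For each candidate $\ell\in[n+1]$, construct the bipartite graph $G_\ell$ with left vertices $[n+1]\setminus\{\ell\}$, right vertices $[n]$, and an edge $(i,j)$ precisely when $(i,j)\notin F(C)$. Then an $\ell$-critical assignment falsifying $\mono(C)$ exists if and only if $G_\ell$ has a matching saturating the entire left side (equivalently, a perfect matching between $[n+1]\setminus\{\ell\}$ and $[n]$, since both sides have size $n$). Bipartite matching can be tested in polynomial time (e.g.~via Hopcroft--Karp or Hall's theorem by checking deficiency), so each $\ell$ is handled in $\poly(n)$ time, and summing the indicator over all $\ell\in[n+1]$ gives $\cri(\mono(C))$ in total time $\poly(n)$.

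I anticipate no serious obstacles. The only subtle points are (i)~verifying that $\mono(C)$ really is a monotone clause in the $(x_{ij})$ variables, so falsifying it is the same as forbidding a fixed edge set (there are no extension variables $y_{i,j}$ to worry about here, which is the key simplification over \autoref{lem: cri EPHP efficient}); and (ii)~ensuring the definition of ``$\ell$-critical'' requires a \emph{perfect} matching on the remaining $n$ pigeons into $n$ holes, so that the hole-axioms $\overline{x}_{ij}\vee\overline{x}_{i'j}$ are automatically satisfied and no extra constraint must be encoded into $G_\ell$. Both are immediate from the definitions, so the argument reduces cleanly to a polynomial number of calls to a bipartite matching subroutine.
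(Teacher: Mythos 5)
Your proposal is correct and matches the paper's own argument: for each $\ell$ the paper likewise deletes the edges incident to pigeon $\ell$ together with the edges forbidden by the (monotone) literals of the clause, and decides existence of an $\ell$-critical falsifying assignment by a bipartite perfect-matching test between $[n+1]\setminus\{\ell\}$ and $[n]$. Your write-up is, if anything, slightly more explicit in noting that $\mono(C)$ is monotone so falsification is exactly avoidance of the fixed edge set $F(C)$, but this is the same reduction and the same polynomial-time bound.
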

    \begin{claimproof}
        Fix a pigeon $\ell$ and we want to check if there is an $\ell$-critical assignment for $C$. We maintain a complete bipartite graph and delete all edges between $(\ell,j)$ for all hole $j$. If a variable $x_{ij}$ appears in $C$, we delete the edge $(i,j)$. Then $\ell$-critical assignment exists if and only if there is a perfect matching between pigeons $[n+1]\setminus\set{\ell}$ and holes $[n]$.
    \end{claimproof}

    \begin{claim}\label{claim: width lower bound from cri}
	For any clause $C$, the width of $\mono(C)$ is at least $\cri(\mono(C)) \cdot (n - \cri(\mono(C)))$.
    \end{claim}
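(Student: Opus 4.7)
The plan is to prove the bound by a one-step swapping argument: for every critical pigeon $\ell\in\Cri$ we fix a witness matching, and then by re-routing this matching we force many variables to appear in $\mono(C)$.

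First I would set $k:=\cri(\mono(C))$ and, for each $\ell\in\Cri$, fix an $\ell$-critical assignment $\pi_{\ell}$ that falsifies $\mono(C)$ (these exist by definition of $\cri$). Since $\mono(C)$ is a monotone clause, $\pi_\ell$ falsifying it means that for every variable $x_{ij}$ appearing in $\mono(C)$ we have $\pi_\ell(i)\ne j$, i.e.\ the edge $(i,j)$ does not lie in the matching $\pi_\ell$.

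Next, for each pair $(\ell,\ell')$ with $\ell\in\Cri$ and $\ell'\in[n+1]\setminus\Cri$ (so automatically $\ell'\ne\ell$), I would perform the obvious swap in $\pi_\ell$: let $h:=\pi_\ell(\ell')$, unmatch pigeon $\ell'$, and send pigeon $\ell$ to hole $h$ instead. Call the resulting matching $\pi'_{\ell,\ell'}$; it is $\ell'$-critical. Because $\ell'\notin\Cri$, $\pi'_{\ell,\ell'}$ cannot falsify $\mono(C)$, so some variable $x_{ij}$ of $\mono(C)$ satisfies $\pi'_{\ell,\ell'}(i)=j$. The only edge present in $\pi'_{\ell,\ell'}$ but not in $\pi_\ell$ is $(\ell,h)$, and no edge of $\pi_\ell$ is in $\mono(C)$; hence $x_{\ell,h}$ must be one of the variables of $\mono(C)$.

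Finally, I would count the distinct variables produced. For fixed $\ell\in\Cri$, as $\ell'$ ranges over the $n+1-k$ elements of $[n+1]\setminus\Cri$, the holes $\pi_\ell(\ell')$ are pairwise distinct (since $\pi_\ell$ is a bijection on its domain), yielding $n+1-k$ distinct variables of the form $x_{\ell,\cdot}$ in $\mono(C)$. For different choices of $\ell\in\Cri$ the corresponding collections are disjoint because their first coordinates differ, so $\mono(C)$ contains at least
\[k(n+1-k)\ \ge\ k(n-k)\ =\ \cri(\mono(C))\cdot\bigl(n-\cri(\mono(C))\bigr)\]
variables, as desired. I do not anticipate any real obstacle: the only point requiring mild care is verifying that $\pi'_{\ell,\ell'}$ is indeed an $\ell'$-critical matching and that $(\ell,h)$ is the \emph{unique} new edge, which then pins down the witness variable.
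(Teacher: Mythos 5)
Your proof is correct and follows essentially the same route as the paper's: fix a witness $\ell$-critical matching for each critical pigeon, swap it with a non-critical pigeon $\ell'$, and use monotonicity of $\mono(C)$ to conclude that the unique new edge $x_{\ell,\pi_\ell(\ell')}$ must appear in $\mono(C)$, then count distinct variables over all pairs. Your explicit count even yields the marginally stronger bound $k(n+1-k)$ before weakening to $k(n-k)$, so there is nothing to fix.
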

\begin{claimproof}
    Let $D\coloneqq \mono(C)$. Let $\CriP(D)$ be the critical pigeons to $D$, i.e., the set of pigeons $\ell\in [n+1]$ such that there exists an $\ell$-critical assignment falsifying $D$. Then $\cri(\mono(C)) = |\CriP(D)|$. Let $u_1\in \CriP(D)$ and $u_2\not\in \CriP(D)$. Since $u_1\in \CriP(D)$, there is a $u_1$-critical assignment $\alpha$ that falsifies $D$. Suppose that $u_2$ is mapped to the hole $v_2$ in the assignment $\alpha$. Let $\beta$ denote the assignment obtained from $\alpha$ by mapping $u_1$ into $v_2$ and not mapping $u_2$ anywhere. Then $\beta$ is a $u_2$-critical assignment. Since $u_2\not\in \CriP(D)$, $\beta$ satisfies $D$. However, there is only one variable that appears positively in $\beta$ but negatively in $\alpha$: namely, $x_{u_1, v_2}$. Since $D$ is monotone, the literal $x_{u_1, v_2}$ appears in $D$. Repeating this argument for every $u_1 \in \CriP(D)$ and $u_2\not\in \CriP(D)$, we can see that the width of $D$ is at least $\cri(D)\cdot (n - \cri(D))$.
\end{claimproof}

    Given the lemma above, it remains for us to show that finding a clause $C$ such that $\cri(\mono(C))\in[n/3,2n/3]$ belongs in $\PLS$. This can be implemented by the standard $1/3$-$2/3$ trick in a potential function way, which is exactly the same as the argument used in the proof of \autoref{thm: white-box PLS-completeness for width}.
\end{proof}

\subsection{Refuters for Short Resolution Proofs}
\label{sec: upper bound for Refuter for Resolution}
In this subsection, we investigate \autoref{prob: refuter for PHP informal}, i.e., the refuter for the following classic resolution \emph{size} lower bound:
\begin{theorem}[{\cite{Haken85}}]\label{thm: Res lb for PHP}
	Any resolution refutation of $\PHP_{(n+1)\to n}$ requires at least $2^{\Omega(n)}$ clauses.
\end{theorem}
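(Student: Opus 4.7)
The plan is to prove \autoref{thm: Res lb for PHP} by combining a carefully chosen distribution of random restrictions with the monotonized-width lower bound of \autoref{claim: width lower bound for Res(PHP)}, following the Beame--Pitassi strategy sketched in \autoref{sec: overview of upper bounds}. Concretely, suppose for contradiction that $\Pi = C_0,\dots,C_{L-1}$ is a resolution refutation of $\PHP_{(n+1)\to n}$ with $L \le c^n$ for a sufficiently small constant $c>1$. Fix parameters $n' = \Theta(n)$ and $w = 2(n')^2/9$, and let $\calR$ be the uniform distribution on partial matchings that match exactly $n-n'$ of the $n+1$ pigeons to $n-n'$ of the $n$ holes. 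Under any $\rho \in \calR$, the formula $\PHP_{(n+1)\to n}\rest_\rho$ is (syntactically) $\PHP_{(n'+1)\to n'}$, so the restricted proof $\Pi\rest_\rho$ is a candidate refutation of the smaller pigeonhole principle.

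The heart of the argument is a compression lemma showing that for every fixed clause $C$ whose monotonization has width at least $w$, we have $\Pr_{\rho \gets \calR}[\width(\mono(C\rest_\rho)) \ge w/2] \le 1/(2L)$. This is proved information-theoretically: a restriction $\rho$ that fails to kill $C$ must leave ``many'' of its wide set of literals active, and one can use this surviving structure of $C$ together with $O(n)$ auxiliary bits to recover $\rho$ in fewer than $\log|\calR|$ bits total, contradicting the uniform distribution. A union bound over the $L \le c^n$ clauses of $\Pi$ then yields a single restriction $\rho^\star \in \calR$ for which every clause in $\Pi\rest_{\rho^\star}$ has monotone width less than $w/2$, giving a valid resolution refutation of $\PHP_{(n'+1)\to n'}$ all of whose clauses have small monotonized width. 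Applying \autoref{claim: width lower bound for Res(PHP)} to this refutation produces a clause of monotone width at least $2(n')^2/9 = w$, a contradiction.

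The main technical obstacle is calibrating the parameters of $\calR$ (namely the number of pigeons left unmatched) so that three competing constraints are simultaneously satisfied: (i) $|\calR|$ is large enough that a $\log(2L) = O(n)$-bit saving from the compression argument is meaningful, (ii) the shrinkage bound beats the union bound over $L \le c^n$ clauses, and (iii) the residual pigeonhole principle is still large enough for the width lower bound $2(n')^2/9$ to exceed the post-restriction width threshold. Working through these inequalities is what pins down the precise constant $c>1$ in the theorem. Once the parameters are chosen, the rest of the argument follows the template described above; and, as we will see in \autoref{sec: upper bound for Refuter for Resolution}, this same template is exactly what makes the refuter problem fall into $\rwPHP(\PLS)$, with the compression map providing the ``retraction'' $g$ and the width refuter of \autoref{thm: mono PLS-completeness} providing the inner $\PLS$ instance.
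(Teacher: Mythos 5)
You are following essentially the paper's own proof of this theorem (\autoref{claim: size-width tradeoff for Resolution} combined with \autoref{claim: width lower bound for Res(PHP)}, as assembled in \autoref{thm: size lb for resolution}): restrict by a uniformly random partial matching, show that each clause's monotonized width becomes small except with probability $1/(2L)$, union bound over the $L$ clauses, and invoke the monotone width lower bound for the restricted pigeonhole principle. The only substantive difference is how the per-clause shrinkage probability is bounded: the paper argues round-by-round (each round kills a still-wide clause with probability at least $W/(n+1)^2$), while you argue by compressing the bad restrictions --- which is exactly the encoding the paper sets up anyway when making the argument constructive for \autoref{thm: refuting Res lb for PHP is in rwPHP(PLS)}, so the two are interchangeable here. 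One calibration slip: your compression lemma is stated only for clauses whose monotonization has width at least $w$, yet your union bound concludes that \emph{every} clause of $\Pi$ restricted by $\rho^\star$ has monotone width below $w/2$; clauses of original monotone width in $[w/2,w)$ are not covered and need not drop below $w/2$. This is harmless rather than fatal: since monotone width cannot increase under a matching restriction, every restricted clause has monotone width below $w=2(n')^2/9$, which already contradicts \autoref{claim: width lower bound for Res(PHP)} --- so simply work with the single threshold $w$ (the paper's $W$) and drop the factor of $2$.
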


We show that this refuter problem is in $\rwPHP(\PLS)$. As mentioned in \autoref{sec: our results on size lower bounds}, this is done by carefully following the proofs of \autoref{thm: Res lb for PHP}. We follow the simplified proof by Beame and Pitassi~\cite{beame1996simplified}, which consists of two steps: a (monotone) width lower bound and a random restriction argument. As the required width lower bound was already studied in \autoref{thm: mono PLS-completeness}, we focus on the random restriction argument here.

Let $X := [n+1]$ denote the set of pigeons and $Y := [n]$ denote the set of holes. Recall that the \emph{monotone} version of a clause $C$, denoted as $\mono(C)$, is obtained from $C$ by replacing every negated variable $\overline{x}_{ij}$ with the set of variables $x_{ij'}$ for all $j'\ne j$. %

\begin{definition}\label{def: matching restriction}%
Let $t < n$ be a parameter, $\pi:X \to Y$ be a size-$t$ \emph{matching}, i.e., a partial injective function with $|\Domain(\pi)| = t$. This matching induces a restriction that sets:
\begin{itemize}
	\item the variable $x_{u, \pi(u)}$ to be $1$, for every $u\in\Domain(\pi)$;
	\item the variable $x_{u, v}$ to be $0$, for every $u\in\Domain(\pi)$ and $v\in Y\setminus\{\pi(u)\}$; and
	\item the variable $x_{u', \pi(u)}$ to be $0$, for every $u\in \Domain(\pi)$ and $u'\in X\setminus \{u\}$.
\end{itemize}

\end{definition}

Suppose that $C_0, C_1, \dots, C_{L-1}$ is a resolution refutation of $\PHP_{(n+1) \to n}$. For each clause $C_i$, let $\pi(C_i)$ denote the sub-clause of $C_i$ under the above restriction. That is, if $\pi$ sets some variable in $C_i$ to $1$ then $\pi(C_i) = 1$; otherwise $\pi(C_i)$ is obtained from $C_i$ by removing every variable set to $0$ by $\pi$. Note that the above restriction transforms the unsatisfiable CNF $\PHP_{(n+1)\to n}$ into the unsatisfiable CNF $\PHP_{(n-t+1) \to (n-t)}$. Then, $\pi(C_0), \pi(C_1), \dots, \pi(C_{L-1})$ is a resolution refutation for $\PHP_{(n-t+1) \to (n-t)}$. Now we claim that the width of the resolution refutation becomes small after being restricted by a random matching $\pi$.

\begin{claim}\label{claim: size-width tradeoff for Resolution}
    If a size-$t$ matching $\pi$ is chosen uniformly at random over all possible size-$t$ matchings, then with probability at least $1/2$, it holds that for every clause $i\in [L]$, $w(\mono(\pi(C_i))) < W$, where $W := (n+1)^2(1-(1/2L)^{1/t})$.
\end{claim}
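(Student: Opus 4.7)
The plan is to prove \autoref{claim: size-width tradeoff for Resolution} by a union bound over the $L$ clauses of the purported refutation. Concretely, it suffices to show that for every fixed clause $C$,
\[\Pr_\pi\bigl[\,w(\mono(\pi(C))) \ge W\,\bigr] \le \frac{1}{2L},\]
because summing this over the $L$ clauses and taking the complement yields the claimed success probability of at least $1/2$.

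The first step is a combinatorial observation that turns ``monotone width stays large'' into ``the random matching avoids a specified bipartite graph''. Identify $\mono(C)$ with the bipartite graph $G_D \subseteq [n+1]\times [n]$ whose edges are the pairs $(u,v)$ with $x_{uv}\in \mono(C)$. I claim that if $\pi$ contains any edge $(u,v)\in E(G_D)$, then $\pi(C)=1$, and hence $\mono(\pi(C))$ has width $0$. Indeed, $x_{uv}\in \mono(C)$ either because $x_{uv}\in C$---in which case $\pi(u)=v$ sets this literal to $1$---or because $\overline{x}_{uv'}\in C$ for some $v'\ne v$, in which case $\pi(u)=v\ne v'$ forces $x_{uv'}=0$ and so the literal $\overline{x}_{uv'}$ is satisfied. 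Taking the contrapositive, the event $\{w(\mono(\pi(C)))\ge W\}$ is contained in $\{\pi\cap E(G_D)=\emptyset\}$, and I may further restrict attention to clauses with $w(\mono(C))\ge W$, since otherwise $w(\mono(\pi(C)))<W$ trivially.

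The second step is the matching-avoidance estimate: for any edge set $E\subseteq [n+1]\times [n]$ with $|E|\ge W$, a uniformly random size-$t$ matching $\pi$ satisfies
\[\Pr\bigl[\,\pi\cap E=\emptyset\,\bigr] \le \left(1 - \frac{W}{(n+1)^2}\right)^{t},\]
which, by the very definition $W=(n+1)^2\bigl(1-(1/2L)^{1/t}\bigr)$, equals $1/(2L)$. I would establish this bound by sampling $\pi$ one edge at a time---choosing a uniformly random still-available pigeon-hole pair in each round---and arguing that at each round the conditional probability of landing outside $E$ is at most $1-W/(n+1)^2$. The loose denominator $(n+1)^2$ (rather than the sharper $(n+1)n$) is precisely the slack needed to absorb the decrease in available $E$-edges caused by already-committed vertices.

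The main technical obstacle will be the second step: the events ``the $i$-th edge avoids $E$'' are correlated across $i$, so one cannot just multiply marginals. The cleanest route is to expose the edges of $\pi$ sequentially and show that, after conditioning on $i-1$ failures, the number of remaining $E$-edges in the residual $(n+2-i)\times(n-i+1)$ bipartite subgraph still satisfies the ratio bound needed, using the gap between $(n+1)^2$ and $(n+2-i)(n-i+1)$ to offset the at most $(i-1)(2n+1)$ $E$-edges killed by committed vertices. Once this per-clause estimate is in place, the union bound $L\cdot 1/(2L)=1/2$ immediately yields the claim.
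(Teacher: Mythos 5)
Your Step 1 (reducing ``$w(\mono(\pi(C)))\ge W$'' to ``$\pi$ avoids the edge set $E=E(G_D)$ of $\mono(C)$'') is correct but lossy, and the loss is exactly what breaks Step 2 as you propose to prove it. You acknowledge that, conditioned on the first $i-1$ edges missing $E$, up to $(i-1)(2n+1)$ edges of $E$ may have been removed by matched vertices, and you claim the gap between $(n+1)^2$ and $(n+2-i)(n+1-i)$ absorbs this. It does not: that gap is $(n+1)+(i-1)(2n+1)-(i-1)^2$, i.e.\ of the \emph{same} order $(i-1)\Theta(n)$ as the worst-case number of killed $E$-edges, whereas to preserve the ratio $\ge W/(n+1)^2=:c$ you would need the killed $E$-edges to be at most a $c$-fraction of the killed pairs. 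In the actual parameter regime of \autoref{thm: size lb for resolution} ($t=n/10$, $L=1.01^n$, so $c\approx 0.095$), a matched pigeon or hole can kill $\Theta(n)$ edges of $E$ per round (this is not a pathological worst case here: $\mono(C)$ contains $n-1$ edges of a row for every negated literal of $C$), so after roughly $W/(2n+1)=\Theta(cn)<t$ rounds your worst-case count of surviving $E$-edges is already zero and the per-round bound $1-W/(n+1)^2$ becomes unjustified. So the obstacle you flag is real and your proposed bookkeeping does not overcome it; a fixed-$E$ avoidance bound of the form $\Pr[\pi\cap E=\emptyset]\le(1-|E|/(n+1)^2)^t$ for arbitrary $E$ would need a genuinely different (e.g.\ negative-correlation / permanent-type) argument, which you do not supply.

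The paper sidesteps this entirely by never fixing $E$ in advance: it tracks the \emph{residual} clause. Since $w(\mono(\pi(C_i)))$ is non-increasing as the partial matching grows, on the bad event that the final monotone width is $\ge W$ the current restricted clause has, at the start of every round, at least $W$ literals all of which are supported on currently unmatched pigeon--hole pairs; as there are at most $(n+1)n\le(n+1)^2$ such free pairs, the uniformly random next edge kills the clause with conditional probability at least $W/(n+1)^2$ in every round, regardless of history. Multiplying over the $t$ rounds gives $(1-W/(n+1)^2)^t\le 1/(2L)$ per clause, and the union bound finishes the proof. In other words, the ``target set'' is automatically replenished to size $\ge W$ among free pairs by the very event being bounded, which is precisely the feature your reduction to a static edge set discards. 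To repair your write-up, either adopt this residual-width conditioning or prove the static avoidance lemma by a genuinely different argument; the counting you sketch is insufficient.
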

\begin{proof}
	Fix $i\in[L]$, we show that the probability over $\pi$ that $w(\mono(\pi(C_i))) > W$ is at most $1/(2L)$. The claim then follows from a union bound.
	
	Choose the matching $\pi$ round by round. There are $t$ rounds, where in each round, we choose an unmatched $u\in X$ and an unmatched $v\in Y$ uniformly at random and match them. If, for the current partial matching $\pi$, we have $w(\mono(\pi(C_i))) \ge W$, then the probability that $x_{u, v} \in \mono(\pi(C_i))$ is at least $W / (n+1)^2$. If this is the case, then $\mono(\pi(C_i))$ will become $1$ (the always-true clause) after we set $\pi(u) \gets v$, thus it gets ``killed.'' It follows that the probability that $C_i$ never gets killed is at most $(1-W/(n+1)^2)^t \le 1/(2L)$.
\end{proof}

Combining \autoref{claim: width lower bound for Res(PHP)} and \autoref{claim: size-width tradeoff for Resolution}, we obtain the following size lower bound:
\begin{theorem}\label{thm: size lb for resolution}
	Any resolution refutation of $\PHP_{(n+1)\to n}$ requires more than $L := 1.01^n$ clauses.
\end{theorem}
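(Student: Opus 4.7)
The plan is to derive \autoref{thm: size lb for resolution} by a direct combination of the two preceding claims. Suppose for contradiction that $\Pi = C_0, C_1, \dots, C_{L-1}$ is a resolution refutation of $\PHP_{(n+1) \to n}$ with $L \leq 1.01^n$ clauses. I would pick a parameter $t := \lfloor \alpha n \rfloor$ for a small constant $\alpha > 0$ to be tuned below, and apply \autoref{claim: size-width tradeoff for Resolution}: since the claimed bad event has probability at most $1/2$, there exists a size-$t$ matching $\pi$ such that every clause of the restricted proof $\pi(\Pi) = \pi(C_0), \dots, \pi(C_{L-1})$ satisfies $w(\mono(\pi(C_i))) < W$, where $W := (n+1)^2\bigl(1 - (1/2L)^{1/t}\bigr)$. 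As already observed before \autoref{claim: size-width tradeoff for Resolution}, $\pi(\Pi)$ is a valid resolution refutation of the smaller pigeonhole principle $\PHP_{(n-t+1) \to (n-t)}$.

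Next, I would invoke \autoref{claim: width lower bound for Res(PHP)} on $\pi(\Pi)$: this refutation must contain a clause $D$ with $w(\mono(D)) \geq 2(n-t)^2/9$. Combined with the width upper bound from the restriction step, this forces
\[
  (n+1)^2\bigl(1 - (1/2L)^{1/t}\bigr) \;\geq\; \tfrac{2}{9}(n-t)^2.
\]
I would now tune $\alpha$ so that this inequality fails. Using the convexity bound $1 - e^{-x} \leq x$, the left-hand side is at most $(n+1)^2 \ln(2L)/t$. Substituting $L = 1.01^n$ and $t = \alpha n$, the inequality would asymptotically require
\[
  \frac{\ln(1.01)}{\alpha} \;\geq\; \frac{2}{9}(1-\alpha)^2 + o(1), \quad \text{i.e.,} \quad 9\ln(1.01) \;\geq\; 2\alpha(1-\alpha)^2 + o(1).
\]
Since $9\ln(1.01) \approx 0.0896$ while $2\alpha(1-\alpha)^2$ attains value $\approx 0.296$ at $\alpha = 1/3$, choosing any $\alpha$ in a generous interval (e.g.\ $\alpha = 1/5$) violates the inequality for all sufficiently large $n$, yielding the desired contradiction. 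For the small values of $n$ where the asymptotic estimate does not yet dominate, the statement is trivial because $1.01^n < n+1$ and no resolution refutation is shorter than the number of axioms.

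The proof is essentially a calculation; the substantive content is already packaged in the two claims, so there is no real obstacle beyond verifying the parameters. The one place where one must be slightly careful is ensuring that $\pi(\Pi)$ is a \emph{bona fide} refutation of the smaller PHP (restricted clauses equal to $1$ can simply be skipped, and resolution/weakening steps remain valid after restriction), and that the monotone-width lower bound of \autoref{claim: width lower bound for Res(PHP)} applies to $\PHP_{(n-t+1)\to(n-t)}$ with parameter $n-t$ in place of $n$, which is immediate from its statement.
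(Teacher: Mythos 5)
Your proposal is correct and follows essentially the same route as the paper: combine \autoref{claim: size-width tradeoff for Resolution} (existence of a matching restriction shrinking all monotone widths below $W$) with \autoref{claim: width lower bound for Res(PHP)} applied to the restricted refutation of $\PHP_{(n-t+1)\to(n-t)}$, and check the parameters; the paper simply fixes $t = n/10$ and verifies $W \le \tfrac{2}{9}(n-t)^2$ directly instead of optimizing over $\alpha$. The only nitpick is your small-$n$ remark: under \autoref{def: Resolution refutation} the size counts only derived clauses (axioms are excluded), so "no refutation is shorter than the number of axioms" is not the right justification, though the small-$n$ cases remain trivial for other reasons.
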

\begin{proof}
	Let $t := n/10$, then $W = (n+1)^2(1-(1/2L)^{1/t}) \le \frac{2}{9}(n-t)^2$. If there is a resolution refutation of $\PHP_{(n+1)\to n}$ of size at most $L$, then by \autoref{claim: size-width tradeoff for Resolution}, there is a resolution refutation of $\PHP_{(n-t+1) \to (n-t)}$ of monotone width at most $\frac{2}{9}(n-t)^2$. This contradicts \autoref{claim: width lower bound for Res(PHP)}.
\end{proof}

To derive an upper bound for the complexity of refuter that corresponds to \autoref{thm: size lb for resolution}, we need a \emph{constructive} version of \autoref{claim: size-width tradeoff for Resolution}. We start by setting up an encoding for the partial matchings and random restrictions that will make it easier to describe our reductions.

\def\Seq{\mathcal{SEQ}}
\def\Bad{\mathcal{BAD}}
\def\seq{\mathsf{seq}}
\def\bad{\mathsf{bad}}

A size-$t$ matching can be described by an edge-sequence $(u_0, v_0), (u_1, v_1), \dots, (u_{t-1}, v_{t-1})$, where for each $j\in[t]$, $u_j \in [n-j+1]$ and $v_j \in [n-j]$. The first edge in this matching connects the $u_0$-th node in $X$ and the $v_0$-th node in $Y$ (the first node is the $0$-th), the second edge connects the $u_1$-th unused node in $X$ (i.e., $u_1$-th node in $X\setminus \{u_1\}$) and the $v_1$-th unused node in $Y$, and so on.\footnote{Note that the edges are \emph{ordered}, hence each matching corresponds to $t!$ different edge-sequences. In what follows, we will talk about edge-sequences instead of matchings.} The space of all possible edge-sequences is denoted by 
\[\Seq := ([n+1]\times [n])\times ([n]\times [n-1])\times \dots \times ([n-t+2]\times [n-t+1]).\]

On the other hand, fix a clause $C_i$ such that $w(\mono(C_i)) \ge W$. Say an edge-sequence $s$ is \emph{bad} for $C_i$ if $w(\mono(s(C_i))) \ge W$, where $\pi_s(C_i)$ is the restriction of $C_i$ under the matching corresponding to $\pi_s$. 
As we argued in \autoref{claim: size-width tradeoff for Resolution}, the number of bad edge-sequences for each $C_i$ is small; we set up another encoding to justify this fact. Any bad edge-sequence can be encoded as a sequence $(e_0, e_1, \dots, e_{t-1})$,\footnote{To avoid confusion, we use ``edge-sequence'' to denote elements in $\Seq$ and ``sequence'' to denote elements in $\Bad$.} where for each $j\in[t]$, $e_j \in [(n-j+1)(n-j) - W]$. The first edge $(u_0, v_0)$ in this matching is the $e_0$-th edge, in the lexicographical order, that is not a literal in $\mono(C_i)$; the second edge $(u_1, v_1)$ is the $e_1$-th edge that still can be chosen (we cannot choose any edge touching either $u_0$ or $v_0$) and is not a literal in the current $\mono(\pi_s(C_i))$; and so on. If $s$ is bad, then $w(\mono(\pi_s(C_i)))$ never goes below $W$, therefore at the $j$-th stage, there are at most $(n-j+1)(n-j) - W$ possible edges to choose. Hence, the space of all possible sequences encoding bad edge-sequences is:
\[\Bad = [(n+1)n-W] \times [n(n-1)-W] \times \dots \times [(n-t+2)(n-t+1)-W].\]

The following calculation corresponds to a \emph{union bound} over all $C_i$ that the number of bad edge-sequences is small:
\begin{align}
	\frac{|\Bad|\cdot L}{|\Seq|} =&\, L\cdot \prod_{j=0}^{t-1}\frac{(n+1-j)(n-j)-W}{(n+1-j)(n-j)}\nonumber\\
	\le&\, L\cdot (1-W/(n+1)^2)^t\nonumber\\
	\le&\, 1/2.\label{eq: union bound over BAD}
\end{align}

Fix a clause $C_i$. For every sequence $b\in\Bad$, let $\seq(C_i, b) \in \Seq$ denote the bad edge-sequence for $C_i$ corresponding to $b$; if $\seq(C_i, b)$ does not exist\footnote{This may happen when, for example, $w(\mono(C_i))$ is much larger than $W$ and $b_0 > (n+1)n-w(\mono(C_i))$.}, then we denote $\seq(C_i, b) = \bot$. Conversely, any $s\in\Seq$ is either bad for $C_i$ or not; if $s$ is bad for $C_i$, then denote $b := \bad(C_i, s)$ as the sequence $b\in\Bad$ corresponding to $s$; otherwise we say $\bad(C_i, s) := \bot$. We need the fact that:
\begin{fact}\label{fact: seq and bad map to each other}
	Let $s\in\Seq$ be bad for the clause $C_i$, then $\seq(C_i, \bad(C_i, s)) = s$.
\end{fact}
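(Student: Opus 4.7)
\medskip
\noindent\textbf{Proof plan for \autoref{fact: seq and bad map to each other}.} The plan is to proceed by induction on the position $j \in \{0,1,\dots,t-1\}$, showing that the $j$-th edge of $\seq(C_i, \bad(C_i, s))$ coincides with the $j$-th edge of $s$. Write $s = ((u_0,v_0), \ldots, (u_{t-1},v_{t-1}))$ and $b = \bad(C_i, s) = (b_0, \ldots, b_{t-1})$. I would first recall, by unfolding the definition of $\bad$, that for every $j$, $b_j$ is defined to be the rank, in lexicographic order, of the edge $(u_j,v_j)$ among those pairs $(u,v)$ satisfying (i) $u$ does not appear in $\{u_0,\dots,u_{j-1}\}$, (ii) $v$ does not appear in $\{v_0,\dots,v_{j-1}\}$, and (iii) $x_{u,v}$ is not a literal of $\mono(\pi_{s_{<j}}(C_i))$. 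Symmetrically, by unfolding the definition of $\seq$, $\seq(C_i,b)_j$ is the $b_j$-th pair (in the same lexicographic order) satisfying the analogous three conditions with respect to the partial matching induced by the first $j$ entries of $\seq(C_i,b)$.

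Before the induction, I would quickly verify that the encoding is well-defined, i.e.\ that $b_j$ really lies in $[(n+1-j)(n-j) - W]$. Because $s$ is bad we have $w(\mono(\pi_s(C_i))) \ge W$; since applying further restrictions can only delete literals from the monotonized clause (or kill it entirely, which cannot happen here because the final width is still $\ge W$), the same bound $w(\mono(\pi_{s_{<j}}(C_i))) \ge W$ holds for every $0 \le j \le t$. Moreover, every literal $x_{u,v}$ remaining in $\mono(\pi_{s_{<j}}(C_i))$ must have both endpoints unmatched by the partial matching (otherwise the variable would have been set to $0$ or $1$ by the restriction and removed from the monotonized clause). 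Hence the number of safe edges at step $j$ equals $(n+1-j)(n-j) - w(\mono(\pi_{s_{<j}}(C_i))) \le (n+1-j)(n-j) - W$, so $b_j$ is a legal index.

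For the induction itself, the base case $j = 0$ is immediate: both $\pi_{s_{<0}}$ and the partial matching defined by the first $0$ entries of $\seq(C_i,b)$ are the empty matching, so the ``safe edges at step $0$'' are exactly the same set, and $\seq(C_i,b)_0$ is the $b_0$-th safe edge, which is $(u_0,v_0)$ by the definition of $b_0$. For the inductive step, assume the first $j$ entries agree. Then the partial matchings induced by $s_{<j}$ and by the first $j$ entries of $\seq(C_i,b)$ are identical, so $\pi_{s_{<j}}(C_i)$ (and hence its monotonization) is computed identically under both encodings. The set of safe edges at step $j$ is therefore the same in both processes, and $\seq(C_i,b)_j$ picks the $b_j$-th element of it, which is by definition $(u_j,v_j)$. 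This completes the induction.

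The only real subtlety — and the step I would take the most care on — is the well-definedness claim in the second paragraph: one must check that no intermediate restriction kills $C_i$ (turns it into a satisfied clause) and that literals in $\mono(\pi_{s_{<j}}(C_i))$ are exactly of the form $x_{u,v}$ with both $u,v$ unmatched. Once that bookkeeping is established, the remainder of the proof is a mechanical unwinding of the definitions of $\seq$ and $\bad$, so no further obstacle is expected.
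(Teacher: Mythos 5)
Your proof is correct and matches the intended argument: the paper states this fact without proof, treating it as immediate from the construction of the two encodings, and your step-by-step induction is exactly the natural unwinding of those definitions. The one point needing care --- that the intermediate monotone widths stay at least $W$ (so each $b_j$ is a legal index), because restrictions only delete literals from $\mono(C_i)$ and a bad $s$ never kills the clause --- is precisely the observation the paper itself makes just before defining $\Bad$, so your well-definedness check is the same reasoning made explicit.
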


Now we are ready to establish the $\rwPHP(\PLS)$ upper bound for the refuter of \autoref{thm: size lb for resolution}.

\begin{theorem}\label{thm: refuting Res lb for PHP is in rwPHP(PLS)}
	There is a uniform decision tree reduction of block-depth $3$ from $\Refuter(\Res(\PHP) > 1.01^n)$ to $\rwPHP(\PLS)$.
	
	That is, there is a uniform decision tree reduction of block-depth $3$ such that the following holds:
	\begin{itemize}
		\item given a resolution refutation $\Pi = (C_0, C_1, \dots, C_{L-1})$ for $\PHP_{(n+1)\to n}$, where $L \le 1.01^n$, the reduction computes an instance $(f, \{I_y\}, \{g_y\})$ of $\rwPHP(\PLS)$;
		\item given any valid answer for $(f, \{I_y\}, \{g_y\})$, one can compute an invalid derivation $C_i\in\Pi$ in $\poly(n)$ time.
	\end{itemize}
\end{theorem}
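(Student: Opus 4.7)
The plan is to implement the $\rwPHP(\PLS)$ instance in exactly the form spelled out in \autoref{sec: overview of upper bounds}, using the machinery of edge-sequences developed above. Set the number of ``holes'' to $N := |\Seq|$ and the number of ``pigeons'' to $M := L \cdot |\Bad|$; \autoref{eq: union bound over BAD} gives $M \le N/2$, so this is a legitimate $\rwPHP(\PLS)$ instance. On input $(i, b)$ with $i \in [L]$ and $b \in \Bad$, the purported surjection $f : [M] \to [N]$ queries the $i$-th node of $\Pi$ to recover $C_i$ (one block-query) and outputs $\seq(C_i, b) \in \Seq$, defaulting to an arbitrary fixed element of $\Seq$ if $\seq(C_i, b) = \bot$. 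For each $s \in \Seq$, let $\Pi|_s$ denote the restricted proof defined below, viewed as a purported resolution refutation of $\PHP_{(n-t+1) \to (n-t)}$; define $I_s$ as the $\PLS$ instance produced by feeding $\Pi|_s$ into the block-depth-$3$ reduction of \autoref{thm: mono PLS-completeness} with width parameter $W \le 2(n-t)^2/9$ (this parameter regime is guaranteed by the choice $t = n/10$ and $L = 1.01^n$, cf.~\autoref{thm: size lb for resolution}). Finally, $g_s$ takes any solution $ans$ of $I_s$, extracts the node index $i$ it identifies, queries $C_i$, and outputs $(i, \bad(C_i, s))$, defaulting to an arbitrary fixed element of $[M]$ if $\bad(C_i, s) = \bot$.

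The subtle point is to build $\Pi|_s$ so that every valid derivation in $\Pi$ gives rise to a valid derivation in $\Pi|_s$; the contrapositive then transports invalid derivations back from $\Pi|_s$ to $\Pi$. We set the $i$-th node of $\Pi|_s$ to be $\pi_s(C_i)$ with the derivation tag rewritten as follows: axioms of $\PHP_{(n+1)\to n}$ restrict to axioms (or trivially-true clauses) of $\PHP_{(n-t+1)\to(n-t)}$; weakenings stay weakenings of the same predecessor index; and a resolution $C_i = C \vee D$ from $C_j = C \vee x_a$ and $C_k = D \vee \overline{x}_a$ stays a resolution on $x_a$ when $\pi_s$ does not assign $x_a$, becomes a weakening from node $k$ when $\pi_s$ sets $x_a = 1$ (since then $\pi_s(C_k) = \pi_s(D) \subseteq \pi_s(C) \vee \pi_s(D) = \pi_s(C_i)$), and becomes a weakening from node $j$ when $\pi_s$ sets $x_a = 0$. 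A direct case-check verifies that validity is preserved, and that computing the rewritten tag and clause at node $i$ requires only a single block-query to $\Pi$, so the overall block-depth of the reduction remains $3$ (dominated by the inner call to \autoref{thm: mono PLS-completeness}). Uniformity is inherited from that of \autoref{thm: mono PLS-completeness} together with the fact that $\seq$, $\bad$, and the tag rewriting are all $\poly(n)$-time computable.

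For correctness, suppose $(s, ans)$ is a valid solution of the $\rwPHP(\PLS)$ instance, i.e., $ans$ solves $I_s$ and $f(g_s(ans)) \ne s$. By \autoref{thm: mono PLS-completeness}, the node index $i$ carried by $ans$ is either (a) a node with $w(\mono(\pi_s(C_i))) \ge W$, or (b) an invalid derivation in $\Pi|_s$. In case (a), $s$ is bad for $C_i$ by definition, so $\bad(C_i, s)$ is defined; then $g_s(ans) = (i, \bad(C_i, s))$ and \autoref{fact: seq and bad map to each other} gives $f(g_s(ans)) = \seq(C_i, \bad(C_i, s)) = s$, contradicting the hypothesis. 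Hence case (b) holds, and the validity-preservation property of the previous paragraph guarantees that node $i$ is already an invalid derivation in $\Pi$. The main technical obstacle is pinning down the tag rewriting so that validity preservation holds pointwise under restriction; once this is in place, the remainder is essentially bookkeeping around the Beame--Pitassi compression argument already isolated by $\Seq$, $\Bad$, $\seq$, $\bad$, and \autoref{fact: seq and bad map to each other}.
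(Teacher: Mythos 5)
Your construction is essentially the paper's own proof: the same purported surjection $f$ from $[L\cdot|\Bad|]$ to $\Seq$ via $\seq$, the same $\PLS$ instances obtained by feeding the restricted proof into \autoref{thm: mono PLS-completeness}, the same $g$ via $\bad$, and the same correctness argument combining \autoref{fact: seq and bad map to each other} with the union bound \autoref{eq: union bound over BAD}. The only (minor) difference is that you make explicit a tag-rewriting convention ensuring that valid steps of $\Pi$ restrict to valid steps of $\Pi|_s$, a point the paper handles implicitly when it asserts that an invalid derivation in $\Pi|_y$ which is not a wide clause must already be invalid in $\Pi$; this is the same argument, just spelled out more carefully.
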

\begin{proof}
	Let $M := |\Bad|\cdot L$ and $N := |\Seq|$, then from \autoref{eq: union bound over BAD}, we have $M\le N/2$. We will identify numbers in $[M]$ with pairs $(i, b)$ where $i\in[L]$ and $b\in\Bad$, and identify numbers in $[N]$ with edge-sequences in $\Seq$. The instance $(f, \{I_y\}, \{g_y\})$ is defined as follows:\begin{itemize}
		\item [{($f$)}] For every $x\in [M]$, we interpret $x$ as a pair $(i, b)$ where $i\in[L]$ and $b\in\Bad$. If $\seq(C_i, b) \ne \bot$, then we let $f(x) := \seq(C_i, b)$; otherwise let $f(x) := 0$ (the choice $0$ is arbitrary).
		\item [{($I_y$)}] Fix $y\in [N] = \Seq$. The edge-sequence $y$ defines a size-$t$ partial matching $\pi_y$, which induces a resolution refutation $\Pi|_y = (\pi_y(C_0), \pi_y(C_1), \dots, \pi_y(C_{L-1}))$ of $\PHP_{(n-t+1) \to (n-t)}$. We treat $\Pi|_y$ as a purported resolution refutation with monotone width $< W$; by \autoref{thm: mono PLS-completeness}, the problem of finding an invalid derivation in $\Pi|_y$ reduces to $\Iter$ via a decision tree of block-width $3$. Let $I_y$ be the $\Iter$ instance obtained by this reduction.
		\item [{($g_y$)}] Fix $y\in [N] = \Seq$ and an answer $ans$ of the $\Iter$ instance $I_y$. Given $ans$, we can compute a clause that is either invalid or too fat; we then compute $g_y(ans)$ from this clause.
		
		More precisely, we can compute an integer $i\in[L]$ such that either $\width(\mono(\pi_y(C_i)))\ge W$, or $\pi_y(C_i)$ corresponds to an invalid derivation in $\Pi|_y$. In the second case, $C_i$ is also an invalid derivation in $\Pi$, thus we can set $g_y(ans)$ to be an arbitrary value (say $0$). In the first case, we can set $g_y(ans) := (i, \bad(C_i, y))$.
	\end{itemize}

	The block-depth of the decision trees computing $f$, $\{I_y\}$, and $\{g_y\}$ are $1$, $3$, and $2$ respectively. Clearly, the decision trees are uniform.

    Finally, let $(y, ans)$ be a valid solution for the $\rwPHP(\PLS)$ instance $(f, \{I_y\}, \{g_y\})$ (i.e., $f(g_y(ans)) \ne y$). The edge-sequence $y\in\Seq$ corresponds to a size-$t$ partial matching $\pi_y$ and from $ans$ we can read off a clause $i \in [L]$ such that either (1) $\width(\mono(\pi_y(C_i))) \ge W$ or (2) $\pi_y(C_i)$ is an invalid derivation in $\Pi|_y$. If (1) holds, then
    \[f(g_y(ans)) = f(i, \bad(C_i, y)) = \seq(C_i, \bad(C_i, y)) = y\]
    by \autoref{fact: seq and bad map to each other}. This contradicts $(y, ans)$ being a valid solution for $\rwPHP(\PLS)$. It follows that (2) happens and we have found an invalid derivation (namely $C_i$) in $\Pi$.
\end{proof}

As mentioned in \autoref{sec: our results on size lower bounds}, the above arguments are essentially a formalization of Haken's lower bounds in the theory $\T^1_2(\alpha) + \dwPHP(\PV(\alpha))$; the decision tree reduction in \autoref{thm: refuting Res lb for PHP is in rwPHP(PLS)} follows from the witnessing theorem for $\T^1_2(\alpha) + \dwPHP(\PV(\alpha))$ (see \autoref{sec: rwPHP(PLS) from witnessing}). In what follows, we make this formalization explicit:

\begin{theorem}\label{thm: formalize Haken in T12 + dwPHP(PV)}
	Let $n\in\Log$, $\alpha(\cdot, \cdot)$ be an oracle that encodes a purported length-$1.01^n$ resolution proof of $\PHP_{(n+1) \to n}$, where the second input is a parameter. Let $\mistake_\PHP(n, \alpha, z, i)$ denote the $\PV(\alpha)$ predicate stating that the $i$-th derivation step in $\alpha(\cdot, z)$ is invalid. Then
	\[\T^1_2(\alpha) + \dwPHP(\PV(\alpha)) \vdash \forall n\in\Log\,\forall z\,\exists i\in [1.01^n] ~ \mistake_\PHP(n, \alpha, z, i).\]
\end{theorem}
\begin{proof}[Proof Sketch]
	Reason in $\T^1_2(\alpha) + \dwPHP(\PV(\alpha))$; assuming $\forall i\in [1.01^n]~\lnot\mistake_\PHP(n, \alpha, z, i)$, we will derive a contradiction. We still use $C_i$ to denote the $i$-th clause of the resolution proof, noticing that given $i$ and $z$, $C_i$ can be computed by a $\PV(\alpha)$ function. We also use our previous notation such as $\Bad$ and $\Seq$, and previous parameters $t := n/10$, $L := 1.01^n$, and $W := (n+1)^2(1-(1/2L)^{1/t}) \le \frac{2}{9}(n-t)^2$.

	First, we use $\dwPHP(\PV(\alpha))$ to select a good random restriction under which each $C_i$ becomes a small-width clause. This random restriction will be encoded as an edge-sequence $s\in\Seq$. Consider the function
	\[\overline{\bad_z}(i, b) := \bad(C_i, b),\]
	where $b\in\Bad$ is any sequence encoding a bad edge-sequence. Clearly, $\overline{\bad_z}$ is a function symbol in $\PV(\alpha)$. By $\dwPHP(\PV(\alpha))$, there is an edge-sequence $s\in\Seq$ such that for every $i\in[L]$ and $b\in\Bad$, $\overline{\bad_z}(i, b)\ne s$.
	
	Next, we apply $s$ to each clause $C_i$; denote $\pi_s(C_i)$ the restriction of $C_i$ under the matching corresponding to $s$. By our choice of $s$, for every $i\in[L]$, we have $w(\mono(\pi_s(C_i))) < W \le \frac{2}{9}(n-t)^2$. By \autoref{claim: width lower bound from cri}, we have $\cri(\mono(\pi_s(C_i))) > \frac{2(n-t)}{3}$ or $\cri(\mono(\pi_s(C_i))) < \frac{n-t}{3}$ for every $i\in[L]$. %
	
	Then we invoke \autoref{claim: width lower bound for Res(PHP)} to show that the sequence $\pi_s(C_0), \pi_s(C_1), \dots, \pi_s(C_{L-1})$ is not a valid resolution proof for $\PHP_{(n-t+1)\to (n-t)}$. Note that this is the step where we use the power of $\T^1_2(\alpha)$. Since $\cri(\mono(\pi_s(C_{L-1}))) = \cri(\bot) = n+1$, by $\Sigma_1^b(\alpha)$-$\MIN$ (which is available in $\T^1_2(\alpha)$), there is a smallest integer $i \le L-1$ such that $\cri(\mono(\pi_s(C_i))) > \frac{2(n-t)}{3}$.\begin{itemize}
		\item If $\pi_s(C_i)$ is an axiom, then $\cri(\mono(\pi_s(C_i))) \le 1$, which is a contradiction.
		\item If $\pi_s(C_i)$ is a weakening of $\pi_s(C_j)$ where $j < i$, then $\cri(\mono(\pi_s(C_j))) \le \cri(\mono(\pi_s(C_i)))$, contradicting the minimality of $i$.
		\item If $\pi_s(C_i)$ is a resolution of $\pi_s(C_j)$ and $\pi_s(C_k)$ where $j, k < i$, then
  \[\cri(\mono(\pi_s(C_i))) \le \cri(\mono(\pi_s(C_j))) + \cri(\mono(\pi_s(C_k))).\]
  However, this means either $\cri(\mono(\pi_s(C_j)))$ or $\cri(\mono(\pi_s(C_k)))$ is at least $\frac{n-t}{3}$, a contradiction.
	\end{itemize}

	Now, since $\pi_s(C_0), \pi_s(C_1), \dots, \pi_s(C_{L-1})$ is not a valid resolution proof for $\PHP_{(n-t+1) \to (n-t)}$, we have that $C_0, C_1, \dots, C_{L-1}$ is not a valid resolution proof of $\PHP_{(n+1)\to n}$ either. This finishes the proof.
\end{proof}

\section{Hardness of Refuting Resolution Proofs}
\label{sec: general lower bounds}

In this section, we provide two hardness results for refuter problems: namely, the $\PLS$-hardness for resolution width refuters (\autoref{lemma: width refuter lower bound}) and the $\rwPHP(\PLS)$-hardness for resolution size refuters (\autoref{thm: rwPHP(PLS)-hardness of refuters}). Notably, our hardness results hold for any family of hard tautologies \emph{as long as the lower bounds are true}. This means that ``$\PLS$-reasoning'' is necessary for proving \emph{any} non-trivial resolution width lower bounds and ``$\rwPHP(\PLS)$-reasoning'' is necessary for proving \emph{any} non-trivial resolution size lower bounds. Of course, this also implies that $\PLS$ and $\rwPHP(\PLS)$ are the \emph{best possible} upper bounds for the refuter problems for \emph{any} non-trivial unsatisfiable family of CNFs (as what we obtained for $\PHP_{(n+1)\rightarrow n}$ and more upper bounds in \autoref{sec: more upper bounds}). %

\subsection{Hardness of Refuting Narrow Resolution Proofs}
\label{section: PLS hard width}

We show that the refuter problems for any \emph{true} resolution width lower bounds are $\PLS$-hard. In fact, this holds even for unsatisfiable CNF families that only contain a \emph{single} CNF of \emph{constant} size:

\begin{restatable}{theorem}{lemmawidthrefuterlowerbound}\label{lemma: width refuter lower bound}
Let $F$ be any unsatisfiable CNF with a non-trivial resolution width lower bound, i.e., $w(F\vdash_{\Res} \bot)> \width(F)$. Let $\calF:=\set{F}$ and $w_0:=\width(F)$. Then there is a (uniform) decision-tree reduction of block-depth $2$ from $\Iter$ to $\Refuter(w(F\vdash_{\Res} \bot) \le w_0)$.
\end{restatable}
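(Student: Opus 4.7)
The plan is to reduce the reversed variant of $\Iter$ (which the paper already noted is equivalent to $\Iter$ and hence $\PLS$-complete) to $\Refuter(w(F\vdash_\Res \bot) \le w_0)$. The leverage we get from the hypothesis is purely global: because $w(F\vdash_\Res \bot) > w_0$, any purported width-$w_0$ refutation of $F$ \emph{must} contain at least one invalid derivation, so any solver of the refuter problem is guaranteed to return one. The point of the reduction will be to engineer a particular purported refutation $\Pi$ whose invalid nodes are syntactically forced to encode $\Iter$ solutions of the input function $S$.

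Fix any axiom $A$ of $F$; such an axiom exists because $F$ is unsatisfiable, and $\width(A)\le w_0$ by definition of $w_0 = \width(F)$. Given a reversed-$\Iter$ instance $S:[L]\to [L]$ with $S(L-1) < L-1$, I would build a purported refutation $\Pi = (C_{-1}, C_0, C_1, \ldots, C_{L-1})$ by setting $C_{-1} := A$ (an axiom of $F$) and, for each $i \in [L]$: if $S(i) = i$, then $C_i := A$ tagged as a weakening of $C_{-1}$, and otherwise $C_i := \bot$ tagged as a weakening of $C_{S(i)}$. Each block of $\Pi$ depends only on the single entry $S(i)$, so the reduction $\Reduce$ is computable by block-depth $1$ decision trees, and $C_{L-1} = \bot$ holds automatically by the reversed-$\Iter$ promise, so $\Pi$ is syntactically a purported width-$w_0$ resolution refutation of $F$.

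Correctness is a short case analysis. Fixed points $S(i) = i$ give $C_i = A$ claimed as a weakening of $C_{-1} = A$, which is valid since $A \supseteq A$. For the non-fixed-point cases, $C_i = \bot$ tagged as a weakening of $C_{S(i)}$ is invalid exactly when either $S(i) > i$ (the claimed predecessor appears later in the proof, a type-$1$ reversed-$\Iter$ solution) or $S(i) < i$ together with $S(S(i)) = S(i)$ (so $C_{S(i)} = A$ and $\bot \not\supseteq A$, a type-$2$ solution). Thus the decoder $\Decode$, upon receiving an invalid index $i$ from the refuter, simply returns $i$ after querying $S(i)$ and, when $S(i) < i$, additionally $S(S(i))$, costing at most $2$ queries to $S$; the whole reduction therefore has block-depth $2$. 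The main subtlety to get right is the asymmetric treatment of fixed points: uniformly setting $C_i = \bot$ would make every fixed point $S(i) = i$ appear invalid (a weakening must have a strictly earlier predecessor) even though such indices are not reversed-$\Iter$ solutions, and inserting the genuine axiom $A$ at fixed points is exactly what neutralizes this spurious case.
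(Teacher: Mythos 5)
Your construction is essentially the paper's: fixed points of $S$ receive a genuine axiom and a valid weakening, all remaining nodes carry $\bot$ and are chained by weakening tags that encode $S$, the truth of the width lower bound supplies totality, and invalid nodes decode to reversed-$\Iter$ solutions with two queries to $S$. The one place you diverge is the case $S(i)>i$: you tag $C_i=\bot$ as a weakening of $C_{S(i)}$ and rely on ``the claimed predecessor appears later'' being itself an invalid derivation. Under \autoref{def: Resolution refutation} as written, a weakening node stores \emph{one integer $-m\le j<i$}, i.e.\ the predecessor index is constrained syntactically to be smaller than $i$, and the validity conditions that a refuter solution must violate only concern clause content. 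So your instance is either not encodable at those nodes, or, if one relaxes the encoding to allow arbitrary pointers without adding an ordering check to the validity conditions, a forward-pointing $\bot$-to-$\bot$ weakening satisfies every listed condition (and, worse, the refuter problem would then fail to be total, since circular ``proofs'' would pass all checks). This is exactly why the paper routes \emph{every} solution node---both the $S(i)>i$ type and the $S(i)<i$, $S(S(i))=S(i)$ type---through a weakening from an axiom, so that invalidity is always a clause-content violation inside the legal format. Applying that same fix to your $S(i)>i$ case (the device you already use at fixed points) makes your argument go through verbatim; it costs one extra query ($S(S(i))$) in the instance map, which still fits the claimed block-depth~$2$.

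Two smaller points. First, your claim that $\bot$ is not a weakening of $A$ implicitly uses that the chosen axiom $A$ is nonempty; this does follow from the hypothesis $w(F\vdash_\Res\bot)>\width(F)$ (if $\bot$ were an axiom of $F$, a single weakening step would give a refutation of width $\width(F)$), but it should be stated, since it is the only place the nontriviality hypothesis enters beyond totality. Second, you cannot ``set'' $C_{-1}:=A$: in an instance of $\Refuter(w(F\vdash_\Res\bot)\le w_0)$ the axioms of $F$ are fixed and all of them are present, so you should simply reference some fixed axiom index; this is cosmetic. Apart from the forward-pointer issue, your handling of the non-solution case $S(i)<i$, $S(S(i))\ne S(i)$ is uniform and correct (indeed slightly cleaner than the paper's case split).
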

\begin{proof}
    We show a straightforward reduction from the reversed $\Iter$ to the width refuter. 

    Note that $F$ is a fixed CNF so it can be seen as constant size. Hence we can check in constant time that $F$ is unsatisfiable and $\width(F\vdash_{\Res} \bot)> \width(F)$.
	
	Let $S:[L]\rightarrow[L]$ such that $S(L-1)<L-1$ be any instance of reversed $\Iter$. We will construct a purported resolution refutation $\Pi$ for $F$ such that any invalid derivation in $\Pi$ corresponds to an answer for $S$. Let $k$ be the number of axioms in $F$. The resolution refutation $\Pi$ consists of nodes $C_{-k}, \dots, C_{-1}, C_0, \dots, C_{L-1}$, where $C_{-k}, \dots, C_{-1}$ are the axioms of $F$.

	For every $i$ such that $S(i)=i$, we let $C_i=C_{-k}$ and define $C_i$ to be a \underline{weakening} from $C_{-k}$. This is a valid derivation (and $C_i$ will not be used anymore). The clauses written in all other nodes in $C_1,\dots,C_{L-1}$ will be $\bot$. The weakening rules applied among these nodes will encode the successor pointer $S$:\begin{itemize}
		\item For every solution $i$ of the reversed $\Iter$ instance (i.e., $i$ such that either $S(i) > i$ \emph{or} ($S(i) < i$ and $S(S(i)) = S(i)$)), the weakening rule applied for $C_{i}$ will be \emph{invalid}. More specifically, let $C_{i}$ be a \underline{weakening} from $C_{-k}$, then $C_{i}$ becomes a solution of the $\Refuter(w(F\vdash_{\Res} \bot)\le w_0)$ instance.
		\item For every $i$ such that $S(i)<i$ and $S(S(i))<S(i)$, we let $C_i$ be a \underline{weakening} from $C_{S(i)}$. Since both $C_i$ and $C_{S(i)}$ are $\bot$, this is a valid derivation.
	\end{itemize}
 
	This finishes the construction, and the correctness follows from the following two facts immediately: (1) there are no nodes whose width is larger than $\width(F)$; (2) a resolution derivation is invalid if and only if it is a solution of the given reversed $\Iter$ instance. The block-depth of our reduction is $2$, as we only need to query $S(i)$ and $S(S(i))$.
\end{proof}

Note that the reduction in \autoref{lemma: width refuter lower bound} also works for a family of CNFs $\{F_n\}_{n\in\N}$ with non-trivial width lower bound. Therefore, combined with the $\PLS$-membership results in \autoref{subsection: width white-box}, we obtain:

\begin{theorem}\label{theorem: EPHP PLS complete}
    $\Refuter(w(\EPHP \vdash_{\Res}\bot)< n/3)$ is $\PLS$-complete.
\end{theorem}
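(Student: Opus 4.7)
The plan is to assemble the theorem from two ingredients already developed in the excerpt: the $\PLS$-membership proved in \autoref{thm: white-box PLS-completeness for width}, and a matching $\PLS$-hardness obtained by applying \autoref{lemma: width refuter lower bound} to the family $\{\EPHP_{(n+1)\to n}\}_{n\in\N}$. Membership is already established verbatim (via a uniform block-depth-$3$ decision-tree reduction to $\Iter$), so essentially all I need to do is instantiate the generic hardness lemma on $\EPHP$.

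For the hardness direction, I would take $F_n := \EPHP_{(n+1)\to n}$. Since $F_n$ is a $3$-CNF, $\width(F_n) = 3$, while \autoref{claim: width lower bound for EPHP} (Ben-Sasson--Wigderson) supplies $w(F_n\vdash_{\Res}\bot) \ge n/3$. Hence for all $n\ge 10$ the non-triviality hypothesis $w(F_n\vdash_{\Res}\bot) > \width(F_n)$ required by \autoref{lemma: width refuter lower bound} holds, and the lemma (together with the remark immediately after its proof noting that the construction extends to uniform families) yields a uniform block-depth-$2$ reduction from $\Iter$ to $\Refuter(w(\EPHP\vdash_{\Res}\bot)\le 3)$. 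Inspecting the construction, every node of the purported refutation produced by the reduction is either an axiom of $F_n$ or a weakening of the axiom $C_{-k}$ (or the empty clause $\bot$), so every clause has width at most $3$. In particular, the same refutation is a legitimate input to $\Refuter(w(\EPHP\vdash_{\Res}\bot)<n/3)$, with exactly the same set of invalid derivations; the reduction therefore shows hardness of the stronger refuter as well. Combined with the fact that $\Iter$ is $\PLS$-complete (\autoref{sec: prelim: blackbox tfnp}), this completes the $\PLS$-hardness half and, together with membership, proves $\PLS$-completeness.

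I do not expect any substantive obstacle. The only points requiring attention are bookkeeping: (i) the syntactic width budget $n/3$ in the target refuter must accommodate the constant-width instances produced by the reduction, which is automatic once $n\ge 10$ and is otherwise handled by the paddability convention from \autoref{sec: prelim: blackbox tfnp}; and (ii) the block-depth stays $\polylog$ in the input length (it is in fact the constant $2$), so the composition is an efficient decision-tree reduction. All real work is already contained in \autoref{thm: white-box PLS-completeness for width} and \autoref{lemma: width refuter lower bound}; this theorem is simply the specialization of the latter to $\EPHP$ matched against the former.
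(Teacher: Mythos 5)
Your proposal is correct and matches the paper's own argument: the paper obtains \autoref{theorem: EPHP PLS complete} precisely by combining the uniform $\PLS$-membership of \autoref{thm: white-box PLS-completeness for width} with the generic $\PLS$-hardness of \autoref{lemma: width refuter lower bound} (noting that the reduction also works for families and that the constant-width instances it produces fit within the $n/3$ width budget). No gaps.
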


Note that every clause generated in the reduction in \autoref{lemma: width refuter lower bound} has monotone width $O(n)$. Hence the same proof also shows the $\PLS$-hardness of \emph{monotone} width refuters (as in \autoref{thm: mono PLS-completeness}):
\begin{theorem}\label{theorem: monoPHP PLS complete}
    $\Refuter(w_{\Mono}(\PHP_{(n+1)\rightarrow n}\vdash_{\Res} \bot)<2n^2/9)$ is $\PLS$-complete.
\end{theorem}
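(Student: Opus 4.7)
The $\PLS$-membership has already been established in \autoref{thm: mono PLS-completeness}, so the plan is to prove $\PLS$-hardness by adapting the reduction used in \autoref{lemma: width refuter lower bound}. Concretely, given an instance $S\colon[L]\to[L]$ of reversed $\Iter$ with $L \le 2^{n^{o(1)}}$ (using paddability of $\Iter$ if necessary, so that each node of the refutation still fits in $\polylog$ bits of the total input length), I would build the same purported refutation $\Pi$ of $\PHP_{(n+1)\to n}$ as in that proof: the axiom nodes $C_{-k},\dots,C_{-1}$ are the clauses of $\PHP_{(n+1)\to n}$; for each $i$ with $S(i)=i$ I set $C_i := C_{-k}$ marked as a weakening of $C_{-k}$ (a legal derivation); for each $i$ with $S(i)<i$ and $S(S(i))<S(i)$ I set $C_i := \bot$ marked as a weakening of $C_{S(i)}$ (also legal, since $C_{S(i)}=\bot$); and for each $i$ that is a solution of the reversed $\Iter$ instance I set $C_i := \bot$ marked as a weakening of $C_{-k}$ (an invalid derivation, since $C_{-k}$ is not a subclause of $\bot$).

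The central step is to check that the instance thus produced is a legitimate input for $\Refuter(w_{\Mono}(\PHP_{(n+1)\rightarrow n}\vdash_{\Res}\bot)<2n^2/9)$ in the sense that \emph{no} clause in $\Pi$ has monotone width $\ge 2n^2/9$. The only clauses appearing in $\Pi$ are (i)~the pigeon axioms $\bigvee_{j\in[n]} x_{ij}$, which are already monotone and have width $n$; (ii)~the hole axioms $\overline{x}_{ij}\lor\overline{x}_{i'j}$, whose monotonization has width $2(n-1)$; and (iii)~the empty clause $\bot$, with monotone width $0$. Thus every clause has monotone width at most $2(n-1) < 2n^2/9$ for all sufficiently large $n$, which rules out solutions of the first type in \autoref{def: mono width refuter for PHP} and forces any returned solution to be an invalid derivation in $\Pi$. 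By construction, the invalid derivations are exactly the nodes $C_i$ corresponding to solutions of the reversed $\Iter$ instance.

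The reduction is uniform and, as in \autoref{lemma: width refuter lower bound}, has block-depth $2$: to compute the contents of node $C_i$ one only needs to query $S(i)$ and $S(S(i))$. Decoding a solution is trivial, since the returned index $i$ is already a solution of the original $\Iter$ instance. Combined with the $\PLS$-membership from \autoref{thm: mono PLS-completeness}, this gives $\PLS$-completeness.

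I do not expect any serious obstacle; the only subtlety is the parameter matching between $L$ and $n$. Because the verifier of a solution must be an efficient decision tree on the encoded input, the length $L$ should be at most $2^{n^{o(1)}}$, but this suffices to inherit $\PLS$-hardness via a standard padding argument, so the whole argument is essentially a bookkeeping exercise on top of the already-proved \autoref{lemma: width refuter lower bound}.
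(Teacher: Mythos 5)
Your proposal is correct and matches the paper's own argument: the paper obtains $\PLS$-hardness by observing that every clause produced in the reduction of \autoref{lemma: width refuter lower bound} has monotone width $O(n) < 2n^2/9$, so the only valid solutions are invalid derivations corresponding to solutions of the reversed $\Iter$ instance, and combines this with the $\PLS$-membership from \autoref{thm: mono PLS-completeness}. Your width bookkeeping (width $n$ pigeon axioms, width $2(n-1)$ monotonized hole axioms, $\bot$) just makes explicit what the paper states in one line.
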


\subsection{Hardness of Refuting Short Resolution Proofs}\label{sec: rwPHP(PLS)-hardness}

In this section, we show the $\rwPHP(\PLS)$-hardness of refuters for resolution \emph{size} lower bounds. In particular, for any family of unsatisfiable CNF formulas $\{F_n\}_{n\in \N}$ that requires resolution size $>s_F(n)$, if $s_F(n)$ is not too small, then $\rwPHP(\PLS)$ reduces to the problem $\Refuter(s(F_n\vdash_\Res\bot) \le s_F(n))$. This result and our $\rwPHP(\PLS)$ upper bounds (Theorems~\ref{thm: refuting Res lb for PHP is in rwPHP(PLS)}, \ref{thm: XOR-lifting is in rwPHP(PLS)}, \ref{thm: Res size lower bound for Tseitin}, and \ref{thm: refuters for random k-CNF lower bounds}) complement each other by showing that $\rwPHP(\PLS)$ is the tightest complexity class in all these results.

Recall that an $\rwPHP(\PLS)$ instance consists of $\mleft(f,\set{I_y}_{y\in[2M]},\set{g_y}_{y\in[2M]}\mright)$, where:
\begin{itemize}
	\item $f:[M]\to [2M]$ is a purported ``surjection'';
	\item for each $y\in[2M]$, $I_y := (L,S_y)$ is an instance of $\Iter$, where $S_y:[L]\to [L]$; and
	\item $g_y:[L]\rightarrow [M]$ maps solutions of $I_y$ to integers in $[M]$.
\end{itemize}

We now state and prove the main theorem of this subsection.

\begin{figure}[!th]
\begin{subfigure}{\textwidth}
    \centering
    \includegraphics[width=0.65\textwidth]{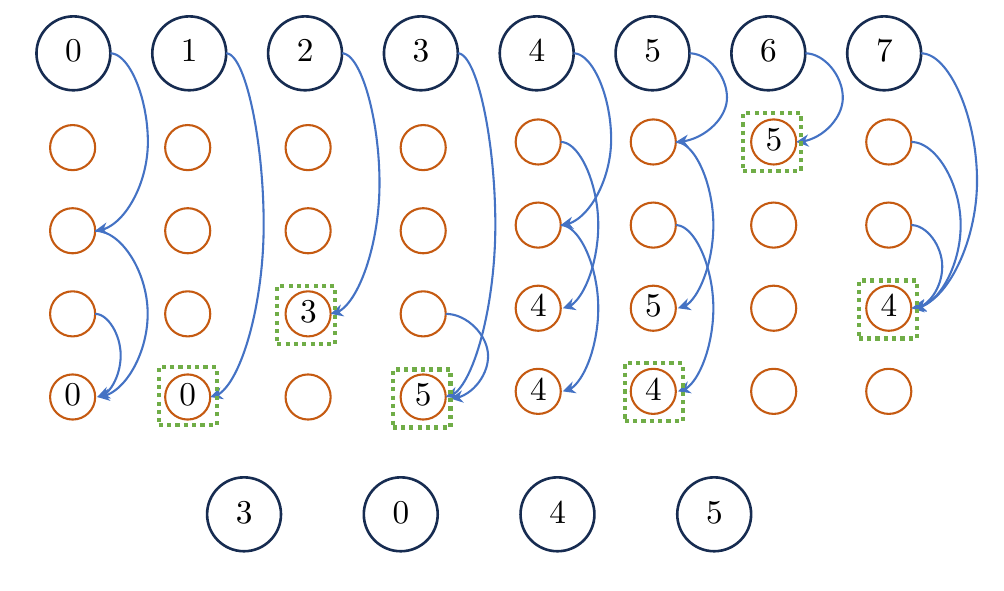}
    \caption{This is an $\rwPHP(\PLS)$ instance with $M=4$, compressing the top $2M$ elements to the bottom $M$ elements. The bottom four points represent the function $f: [M] \to [2M]$; i.e., in this example, $f(0)=3,f(1)=0,f(2)=4,f(3)=5$. Every column is a $\PLS$ instance (every vertex without an outgoing edge has a self-loop). Every sink is a solution of the corresponding $\PLS$ instance, on which we have $g_y:[L]\mapsto [M]$. The number on every sink represents $f(g_y(\cdot))$, which if different from $y$, would be a solution of the whole $\rwPHP(\PLS)$ instance. In this figure, every solution is marked with a dotted green box. \label{subfigure: monotone function} }
\end{subfigure}
\begin{subfigure}{\textwidth}
    \centering
    \includegraphics[width=0.7\textwidth]{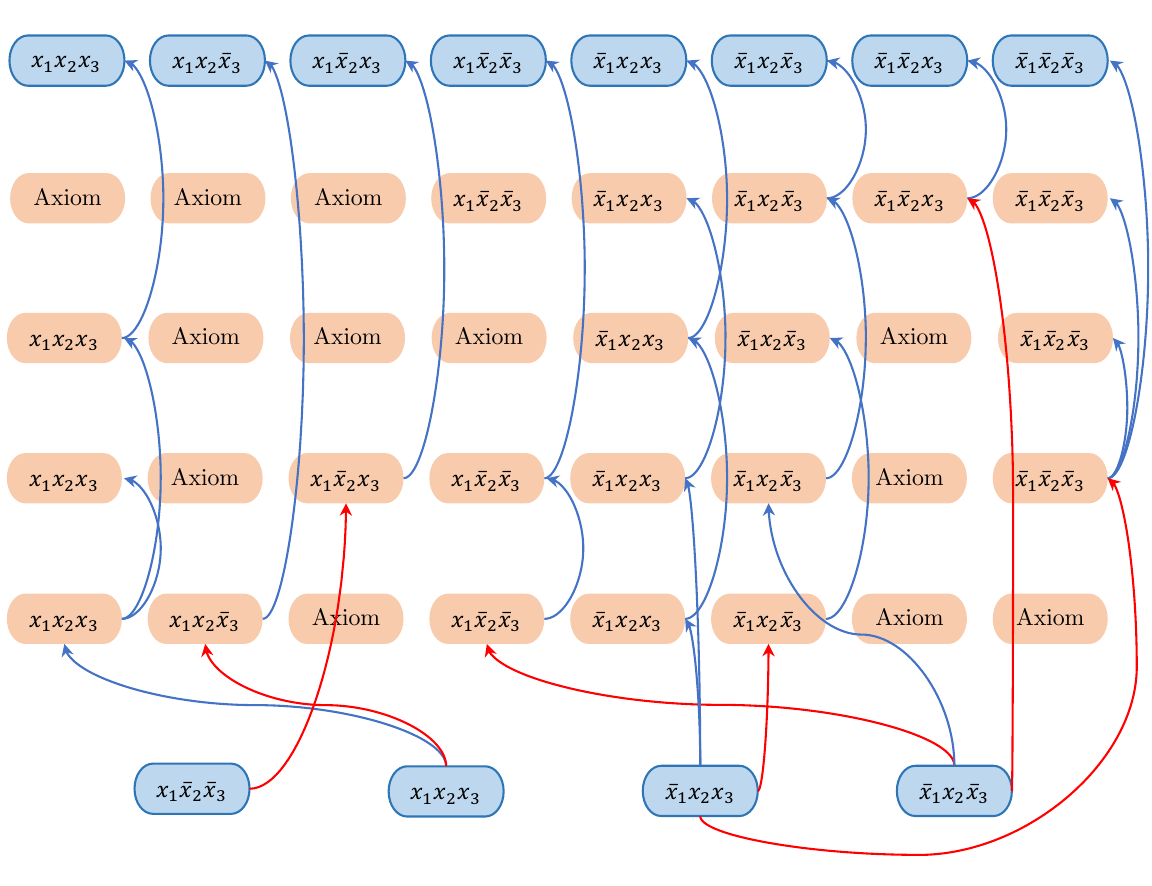}
    \caption{Part of the constructed resolution derivation $\Pi$. Initially, we have $2M=8$ clauses. At the bottom, we have $M=4$ clauses, which exactly correspond to the $3$rd, $0$th, $4$th, and $5$th clauses above. For every node $a$ in a $\PLS$ instance, if it is a self-loop, then we let the clause be a \underline{weakening} from some axiom (and it would never be used again). If $a$ is a solution of the $\PLS$ instance, we let it be the \underline{weakening} of clause $g_y(a)$ at the bottom. Otherwise, we let it be the \underline{weakening} of $S_y(a)$. All blue arrows are valid weakenings and \emph{all red arrows are invalid weakenings}. The invalid weakenings here will be the (only) solutions to the refuter problem.}
\end{subfigure}

\caption{The gadget to embed an $\rwPHP(\PLS)$ instance.}
\end{figure}

\begin{figure}[!ht]
    \centering
    \includegraphics[width=0.9\linewidth]{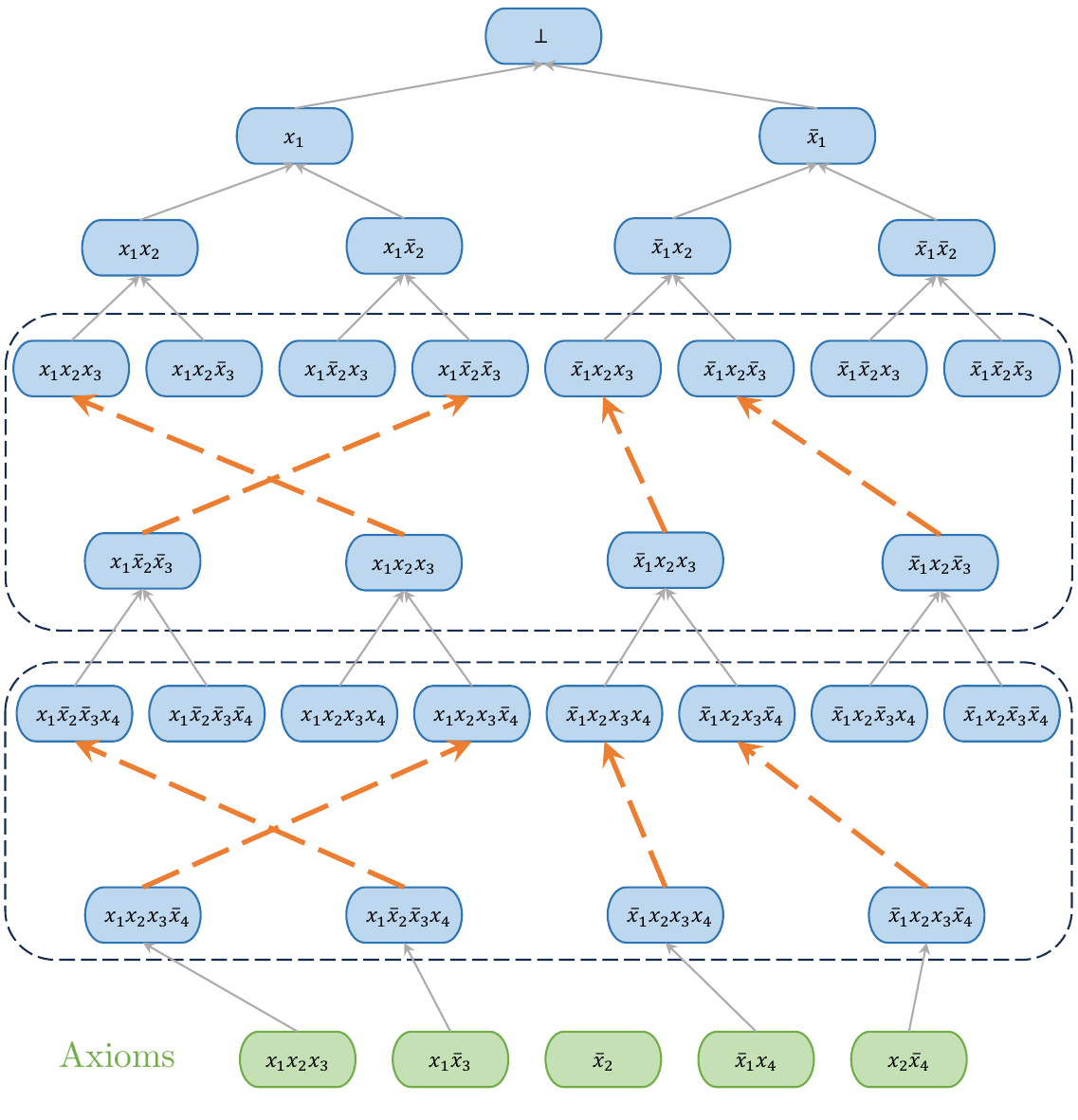}
    \caption{An illustration of our reduction from $\rwPHP(\PLS)$ to the size refuter problem. All gray arrows are valid resolution derivations (and the last layer is weakening from axioms). Every dashed box uses the gadget to embed an $\rwPHP(\PLS)$ instance that enforces every layer to have at most $2M$ clauses. Thus the only possible invalid derivations are those inside the gadget which, once found, would directly imply a solution of the original $\rwPHP(\PLS)$ instance. The overall reduction will produce a purported resolution refutation of size $O(nLM+|F|)$.}
\end{figure}

\begin{restatable}{theorem}{thmrwPHPPLShardnessofrefuters}\label{thm: rwPHP(PLS)-hardness of refuters}
	There is a universal constant $C\ge 2$ such that the following holds. Let $L, M\geq 1$ be the parameters of $\rwPHP(\PLS)$ instances and $n\geq 1$.
 
	For every unsatisfiable CNF formula $F$ over $n$ variables and parameter $s_F \ge C\cdot (nLM + |F|)$ such that every resolution refutation of $F$ requires more than $s_F$ clauses, there is a decision tree reduction of block-depth $O(n)$ from a $\rwPHP(\PLS)$ instance to a $\Refuter(s(F\vdash_\Res\bot) \le s_F)$ instance.
\end{restatable}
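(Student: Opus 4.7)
The plan is to implement, in $\TFNP^{\dt}$, the intuition sketched in \autoref{sec: overview of lower bounds}: from the $\rwPHP(\PLS)$ instance $\bigl(f,\{I_y\}_{y\in[2M]},\{g_y\}_{y\in[2M]}\bigr)$, construct a purported size-$s_F$ resolution refutation $\Pi$ of $F$ that mimics the brute-force depth-$n$ decision-tree refutation but uses $f$ and the $\PLS$ witnesses to compress each level from $2M$ would-be clauses down to $M$ actual clauses. Concretely, $\Pi$ is organized into $n+1$ ``levels'' $L_0,L_1,\dots,L_n$ of $M$ clauses each, computed top-down by $L_0:=(\bot)$ and $L_{i+1}[j]:=L_i[\lfloor f(j)/2\rfloor]\lor \ell_{f(j)}$ for $i=0,\dots,n-1$, where $\ell_y=x_{i+1}$ if $y$ is even and $\ell_y=\overline{x}_{i+1}$ otherwise. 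Each clause of $L_n$ has width $n$ and so is falsified by a unique total assignment, which in turn falsifies some axiom of $F$ (because $F$ is unsatisfiable); $\Pi$ therefore begins with the axioms of $F$, continues with $M$ weakening nodes deriving each $L_n[j]$ from a suitable axiom, and then applies $n$ gadgets in succession to derive $L_{n-1},L_{n-2},\dots,L_0=(\bot)$.

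The gadget deriving $L_i$ from $L_{i+1}$ is the core of the construction. Define the $2M$ ``top'' clauses $T^{(i)}_y:=L_i[\lfloor y/2\rfloor]\lor\ell_y$, so that by construction $L_{i+1}[j]=T^{(i)}_{f(j)}$. For each $y\in[2M]$, embed $I_y=(L,S_y)$ (using the reversed-$\Iter$ convention so that every non-sink $a$ satisfies $S_y(a)<a$) by allocating $L$ chain nodes all carrying the clause $T^{(i)}_y$: at each non-sink $a$, derive the copy at $a$ as a weakening of the copy at $S_y(a)$ (trivially valid, since both carry $T^{(i)}_y$); at each sink $a$, derive the copy at $a$ as a weakening of $L_{i+1}[g_y(a)]=T^{(i)}_{f(g_y(a))}$ (valid iff $T^{(i)}_{f(g_y(a))}\subseteq T^{(i)}_y$, which in particular forces $f(g_y(a))=y$). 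Finally, for each $j\in[M]$, derive $L_i[j]$ by resolving $T^{(i)}_{2j}$ and $T^{(i)}_{2j+1}$ on $x_{i+1}$.

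For correctness, axioms, resolutions, non-sink chain weakenings, and level-$n$ axiom weakenings are manifestly valid; the only derivations in $\Pi$ that can be \emph{invalid} are the sink-weakenings, and these are valid precisely when $f(g_y(a))=y$. The total size of $\Pi$ is $|F|+M+n(2ML+M)=O(nLM+|F|)\le s_F$ once $C$ is large enough, so $\Pi$ is a legitimate input to $\Refuter(s(F\vdash_\Res\bot)\le s_F)$. Since $F$ has no resolution refutation of size $\le s_F$, $\Pi$ cannot be valid, so some sink-weakening must be invalid, yielding a pair $(y,a)$ with $f(g_y(a))\ne y$---exactly an $\rwPHP(\PLS)$ solution. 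The decoder simply reads $(y,a)$ off the positional index of the invalid-derivation node. Both the reduction and the decoder are uniform, and every output bit of $\Pi$ is computed by $O(n)$ queries: the positional index determines the role of a node in $O(1)$ queries, and computing the underlying clause $L_i[j]$ unrolls the recurrence $L_{i+1}[j]=T^{(i)}_{f(j)}$ over at most $n$ levels, each step costing one query to $f$.

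The main technical nuisance is threading the $\PLS$ chains into $\Pi$ without self-weakenings or forward references. The reversed-$\Iter$ convention places every non-sink chain node's predecessor $S_y(a)$ at a strictly smaller index, and placing all the $L_{i+1}[\cdot]$ nodes before the chain nodes of gadget $i$ guarantees that sink-weakenings' predecessors are already present; fixed points $S_y(a)=a$ are harmlessly reclassified as sinks for the purpose of the construction, avoiding would-be self-weakenings. Beyond this bookkeeping, the reduction is a faithful extension of the $\NP$-hardness-of-automation argument from \cite{AtseriasM19,RezendeGNPR021}, with the single retraction function of $\rwPHP$ replaced by the $\PLS$-computable map $y\mapsto g_y(\text{sink of }I_y)$; the new ingredient is that ``$\PLS$-ness'' is faithfully accommodated by the length-$L$ weakening chain inside each $y$-slot of each gadget.
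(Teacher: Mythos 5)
Your construction is essentially the paper's: the compressed brute-force refutation in which $f$ selects $M$ representative clauses per level, with the $\Iter$ instances $I_y$ spliced in as weakening chains whose only possibly-invalid steps are the ``sink'' weakenings onto $L_{i+1}[g_y(a)]$. (That you compress at every level rather than only once the level width exceeds $M$, and that you run the chains in the reversed-$\Iter$ orientation, are immaterial. The parenthetical claim that validity of a sink-weakening ``forces'' $f(g_y(a))=y$ is actually false when duplicate clauses occur in a level, but only the direction you use later --- invalidity implies $f(g_y(a))\ne y$ --- is needed, so that slip is harmless; so is the indexing of $L_0$ as a single clause versus $M$ copies of $\bot$.)

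The genuine gap is your treatment of fixed points. A node $a$ with $S_y(a)=a$ (other than the designated boundary node of the reversed convention) is \emph{not} a solution of $I_y$, yet you ``reclassify it as a sink,'' i.e.\ you derive its copy of $T^{(i)}_y$ as a weakening of $L_{i+1}[g_y(a)]$. Nothing constrains $g_y$ at non-solutions, so this weakening can perfectly well be invalid; the refuter oracle is then entitled to return exactly that node, and your decoder outputs the pair $(y,a)$, which is \emph{not} a valid $\rwPHP(\PLS)$ solution because $a$ is not a solution of $I_y$. So, as written, the decoding map is unsound --- this is not mere bookkeeping. The fix is the one the paper uses: give interior fixed points a ``junk'' derivation that is unconditionally valid and never referenced downstream (write on them an axiom of $F$, derived by weakening from that axiom), observing that no chain node ever points to a fixed point (any $b$ with $S_y(b)=a<b$ and $S_y(a)=a$ is itself a solution and hence gets the sink treatment), while the boundary fixed point is a genuine solution under the correctly flipped convention, so the chain node feeding each resolution step still carries $T^{(i)}_y$. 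With that single change your argument goes through and coincides with the paper's proof of \autoref{thm: rwPHP(PLS)-hardness of refuters}.
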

\begin{proof}
	Let $\mleft(f,\set{I_y}_{y\in[2M]},\set{g_y}_{y\in[2M]}\mright)$ be an instance of $\rwPHP(\PLS)$ and we will reduce it to an instance of $\Refuter(s(F\vdash_\Res\bot) \le s_F)$. Our goal is to construct a size-$s_F$ resolution refutation $\Pi$ for $F$ such that any illegal derivation in $\Pi$ corresponds to a valid solution to the $\rwPHP(\PLS)$ instance.
	
	The nodes in $\Pi$ are partitioned into $n+1$ layers, numbered from layer $0$ to layer $n$. Each layer $t\in[n+1]$ has either two rows or one row: When layer $t$ has two rows, we denote the nodes in the first row by $\set{D^t_{(y,a)}}$ and those in the second row by $\set{E^t_i}$; when layer $t$ has only one row, we denote the nodes by $\set{E^t_i}$. (Therefore, $\set{E^t_i}$ always denote the \emph{last} row of layer $t$.) %
 After all these $n+1$ layers of clauses, we put the axioms of $F$ at the very end. It is easy to translate a resolution refutation in this layout into one in the format of \autoref{def: Resolution refutation} by decision trees of block-depth $1$.

  \paragraph{The construction.} The layer $0$ has one node $E_0^0:=\bot$. For each $t$ from $1,\dots,n$:
	\begin{enumerate}
		\item Let $E^{t-1}_0,\dots,E^{t-1}_{k-1}$ be the nodes on the last row of layer $t-1$; we will always guarantee that $k \le M$.
		\item \textbf{Case 1: $2k\le M$.} In this case, layer $t$ will only have one row of nodes, defined as follows. For every node $E^{t-1}_i$ on layer $t-1$, we generate $2$ nodes $E^t_{2i}$ and $E^t_{2i+1}$ on layer $t$, where the clauses written are $E^t_{2i}=E^{t-1}_i\vee x_t$ and $E^t_{2i+1}=E^{t-1}_i\vee \overline{x}_t$. We define $E^{t-1}_i$ to be \underline{resolved} from $E^t_{2i}$ and $E^t_{2i+1}$.
		\item \textbf{Case 2: $2k > M$.}\label{item: case 2}
		\begin{enumerate}
			\item First, prepare $2M$ nodes $D^t_{(0, 0)}, D^t_{(1, 0)}, \dots,D^t_{(2M-1, 0)}$. It would be instructive to think of $\{D^t_{(y, a)}: a \in [L]\}$ for each fixed $y$ as a chain and we are now preparing the heads of these $2M$ chains. In what follows, we denote $C_i = D^t_{(i, 0)}$ for ease of notation.
   
            For each $i\in[k]$, let $C_{2i}=E^{t-1}_i\vee x_t$, $C_{2i+1}=E^{t-1}_i\vee \overline{x}_t$, and define $E^{t-1}_i$ to be \underline{resolved} from $C_{2i}$ and $C_{2i+1}$. We make sure that there are \emph{exactly} $2M$ clauses on the first row by making several copies of $C_{2k-1}$: for each $i \in \set{2k, \dots, 2M-1}$, let $C_{i}=C_{2k-1}$.
			\item Generate $M$ nodes on the second row of layer $t$: for every $i\in[M]$, let $E_i^t:=C_{f(i)}$. (Intuitively, if $f:[M]\to[2M]$ \emph{were} a surjection, then every node in $\{C_i\}_{i\in [2M]}$ \emph{would} appear in $\{E_i^t\}_{i\in [M]}$.)
			\item Now, for each $y\in[2M]$, we ``link'' the node $C_y = D^t_{(y, 0)}$ to their corresponding $E_{f^{-1}(y)}^t$ on the second row, using the $\Iter$ instance $I_y$. Recall that for each $y\in[2M]$, $I_y$ consists of a function $S_y: [L] \to [L]$, and solutions of $I_y$ are those $a\in[L]$ such that\label{item: case 2 c}
			\begin{center}
				either $S_y(a)<a$ or ($S_y(a)>a$ and $S_y(S_y(a))=S_y(a)$).
			\end{center}
            As a special case, if $a=0$ and $S_y(0)=0$, then $0$ also counts as a solution.
			
            Every clause on the chain $\{D_{(y, a)}^t: a \in [L]\}$ will be equal to $C_y = D_{(y, 0)}^t$ except those on a ``junk'' node $a$ such that $S_y(a)=a$ (see \hyperref[item: case 2 c ii inactive node]{Case 2 (c) ii.}); these clauses will be weakenings of each other, and the instance $I_y$ dictates the structure of the weakening relationship. For every $a\in [L]$, the node $D_{(y, a)}^t$ is defined as follows:     %
			\begin{enumerate}
                \item If $a$ is a solution of $I_y$, then $g_y(a)$ is a purported pre-image of $y$. The clause written on $D_{(y,a)}^t$ is equal to $C_y$ and we define it to be a \underline{weakening} of $E_{g_y(a)}^t$. (Note that $E_{g_y(a)}^t = C_{f(g_y(a))}$ by definition, meaning that if the weakening from $E_{g_y(a)}^t$ to $D_{(y, a)}^t$ is an illegal derivation, then $f(g_y(a)) \ne y$.)\label{item: case 2 c i}
                \item If $a$ is not a solution and $S_y(a)=a$, then the clause written on $D_{(y,a)}^t$ is defined to be the \underline{weakening} of an arbitrary axiom in $F$ (say the first axiom). The node $D_{(y, a)}^t$ is considered a ``junk'' node and will never be used later. \label{item: case 2 c ii inactive node}
				\item Otherwise, we have $S_y(a)>a$. The clause written on $D_{(y, a)}^t$ is equal to $C_y$ and we define it to be a \underline{weakening} of $D_{(y,S_y(a))}^t$.
			\end{enumerate}
		\end{enumerate}
	\end{enumerate}
	
	After constructing all these nodes above, we put the axioms of $F$ at the very end. Each clause $E_i^n$ in the last row of layer $n$ will be a weakening of some axiom in $F$. In particular, note that each $E_i^n$ has exactly $n$ literals (this can be easily seen from induction) and therefore is satisfied by exactly one assignment $\alpha_i$. Recall that $F$ is an unsatisfiable CNF formula, so for each $E_i^n$, there exists an axiom $A$ in $F$ such that $\alpha_i$ falsifies $A$; hence we can define $E_i^n$ to be a \underline{weakening} of $A$.
	
	This finishes the construction. 

	The above construction gives a resolution refutation $\Pi$ for $F$ that has size $s_F := O(nLM + |F|)$. The only place in $\Pi$ where illegal derivations might occur is in \hyperref[item: case 2 c i]{Case 2 (c) i.}~when we define $D^t_{(y, a)}$ to be a \underline{weakening} of $E^t_{g_y(a)}$. If this is an illegal derivation, then $f(g_y(a)) \ne y$, which means that we have found a valid solution for the $\rwPHP(\PLS)$ instance. Therefore, the above construction is a correct reduction from $\rwPHP(\PLS)$ to $\Refuter(s(F\vdash_{\Res}\bot) > s_F)$, as long as $s_F > C\cdot (nLM + |F|)$ for some large universal constant $C$.
	
	Finally, we analyze the query complexity of this reduction. It suffices to show that every node $D^t_{(y, a)}$ and $E^t_i$ can be computed in block-depth $O(n)$ from the input $\rwPHP(\PLS)$ instance. Note that to compute one node, we need to calculate both its origin (i.e., resolved or weakening from which node) and the clause written on it. We use induction on $t$ to show that every clause in layer $t$ can be computed in block-depth $c\cdot (t+1)$ for some universal constant $c$. Fix a layer $t$ and we argue as follows.
    \begin{itemize}[align=left]
		\item {\bf (Base case)} If layer $t$ contains only one row, then we can read off the clause $E^t_i$ from the binary representation of $i$; the node $E^t_i$ is always \underline{resolved} from $E^{t+1}_{2i}$ and $E^{t+1}_{2i+1}$ (if layer $t+1$ also contains only one row) or $C^{t+1}_{2i}$ and $C^{t+1}_{2i+1}$ (otherwise).
		\item {\bf (Induction step)} If layer $t$ contains two rows, then we argue as follows.\begin{enumerate}
			\item For $i < 2k$, depending on the parity of $i$, we have that the clause written on $D^t_{(i, 0)}$ is either $E^{t-1}_{\lfloor i/2\rfloor}\lor x_t$ or $E^{t-1}_{\lfloor i/2\rfloor}\lor \overline{x}_t$. For $i \ge 2k$, the clause written on $D^t_{(i, 0)}$ is always equal to $D^t_{(2k-1, 0)}$. For every $i \in [2M]$ and $a \in [L]$, the clause written on $D^t_{(i, a)}$ is either equal to the clause written on $D^t_{(i, 0)}$, or equal to some axiom of $F$, and this can be decided in block-depth $2$ (see \hyperref[item: case 2 c ii inactive node]{Case 2 (c) ii.}). Since it takes block-depth $ct$ to compute $E^{t-1}_{\lfloor i/2\rfloor}$, it takes block-depth $ct+2$ to compute the clause written on $D^t_{(i, a)}$.
			\item Every $D^t_{(y, a)}$ (for $y \in [2M]$ and $a\in [L]$) belongs to one of the following three cases:
            \begin{itemize}
                \item if $a$ is a solution of $I_y$, then $D^t_{(y, a)}$ is a \underline{weakening} of $E^t_{g_y(a)}$;
                \item if $a\ne 0$ and $S_y(a) = a$, then $D^t_{(y, a)}$ is a \underline{weakening} of some axiom in $F$ and is a ``junk'' node;
                \item otherwise, $D^t_{(y, a)}$ is a \underline{weakening} of $D^t_{(y, S_y(a))}$.
            \end{itemize}
            Therefore, we can use $O(1)$ additional block-depth to determine all information regarding $D^t_{(y, a)}$.

            \item Let $i\in [M]$, then $E^t_i = D^t_{(f(i), 0)}$, and $E^t_i$ is either \underline{resolved} from $C^{t+1}_{2i}$ and $C^{t+1}_{2i+1}$ (when $t < n$) or a \underline{weakening} of some axiom (when $t = n$). This can be computed in constant additional block depth.
		\end{enumerate}
	\end{itemize}

    It follows that $\Pi$ can be computed from our input $\rwPHP(\PLS)$ instance in block-depth $O(n)$.
\end{proof}

\begin{corollary}\label{cor: main result}
	$\Refuter(s(\PHP_{(n+1)\to n}\vdash_{\Res}\bot) \leq 1.01^n)$ is complete for $\rwPHP(\PLS)$.
\end{corollary}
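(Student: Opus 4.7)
The plan is to obtain \autoref{cor: main result} by combining the two main results that have already been established for $\PHP_{(n+1)\to n}$. The membership direction is immediate: \autoref{thm: refuting Res lb for PHP is in rwPHP(PLS)} already exhibits a uniform decision-tree reduction of block-depth $3$ from $\Refuter(s(\PHP_{(n+1)\to n}\vdash_{\Res}\bot)\le 1.01^n)$ to $\rwPHP(\PLS)$, so nothing more needs to be done there. All of the work lies in the hardness direction, and for that I intend to invoke the general hardness result \autoref{thm: rwPHP(PLS)-hardness of refuters} with $F := \PHP_{(n+1)\to n}$ and $s_F := 1.01^n$. The hypothesis of that theorem -- namely that every resolution refutation of $F$ has more than $s_F$ clauses -- is exactly the statement of \autoref{thm: size lb for resolution}, so the theorem applies unconditionally.

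Concretely, I would start from an arbitrary $\rwPHP(\PLS)$ instance with parameters $L$ and $M$ (so input size $N = \poly(L,M)$), and then choose the parameter $n$ of the target refuter problem large enough that the size condition $s_F = 1.01^n \ge C\cdot (nLM + |F|)$ of \autoref{thm: rwPHP(PLS)-hardness of refuters} holds. Since $|\PHP_{(n+1)\to n}| = O(n^3)$ and $1.01^n$ grows exponentially in $n$ while $nLM$ is only polynomial in $L$ and $M$, picking $n = \Theta(\log(LM))$ with a sufficiently large hidden constant is enough. With this choice \autoref{thm: rwPHP(PLS)-hardness of refuters} produces a decision-tree reduction of block-depth $O(n) = O(\log(LM)) = \polylog(N)$, which is exactly what counts as efficient in the $\TFNP^{\dt}$ model. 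Combined with the membership direction, this yields $\rwPHP(\PLS)$-completeness.

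The only point that requires any care is the bookkeeping around the parameter scaling and paddability: one has to be sure that instantiating the target refuter problem at parameter $n = \Theta(\log(LM))$ is allowed (this is where the standing paddability convention from \autoref{sec: prelim: blackbox tfnp} is used) and that the block-depth bound $O(n)$ from \autoref{thm: rwPHP(PLS)-hardness of refuters} translates to $\polylog$ of the \emph{refuter problem's} input length rather than $n$ itself. Both are straightforward once observed, because the refuter problem's input length is $1.01^n \cdot \poly(n) = \poly(N)$, so $O(n) = \polylog(N) = \polylog(\text{input length})$. I expect no genuine obstacle beyond this routine parameter arithmetic; the real content has already been absorbed into \autoref{thm: refuting Res lb for PHP is in rwPHP(PLS)} and \autoref{thm: rwPHP(PLS)-hardness of refuters}, and the corollary is their direct consequence for the particular tautology family $\{\PHP_{(n+1)\to n}\}_{n}$ paired with the Haken-type size threshold $s_F = 1.01^n$.
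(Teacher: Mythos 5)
Your proposal is correct and follows essentially the same route as the paper: the corollary is obtained by combining \autoref{thm: refuting Res lb for PHP is in rwPHP(PLS)} (membership) with \autoref{thm: rwPHP(PLS)-hardness of refuters} instantiated at $F=\PHP_{(n+1)\to n}$, $s_F=1.01^n$, whose hypothesis is supplied by \autoref{thm: size lb for resolution}. The parameter bookkeeping you flag (choosing $n=\Theta(\log(LM))$ and noting that the $O(n)$ block-depth with $\poly(n)$-bit blocks is efficient relative to the refuter instance length $1.01^n\cdot\poly(n)$) is exactly the observation the paper makes in its proof sketch.
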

\begin{proof}[Proof Sketch]
    By combining \autoref{thm: refuting Res lb for PHP is in rwPHP(PLS)} and \autoref{thm: rwPHP(PLS)-hardness of refuters}. Note that the reductions have $\poly(n)$ block-depth and each block contains $\poly(n)$ bits, therefore they are polynomial-time (many-one) reductions.
\end{proof}

The above hardness result in $\TFNP$ can be interpreted as a reversal result in bounded reverse mathematics as well. To state this reversal result, we define the following two families of $\forall\Sigma_1^b(\alpha)$-sentences. For $\PV(\alpha)$ function symbols $F, I, G$, let $\rwPHP(\PLS)(F, I, G)$ denote the natural $\forall\Sigma_1^b(\alpha)$-sentence expressing the existence of a solution for the $\rwPHP(\PLS)$-instance defined by $(F, I, G)$:\begin{itemize}
	\item For every auxiliary input $z$ and every $t, L$, there exists $y \in [2t]$ and $ans \in [L]$ such that $ans$ is a $\PLS$ solution for the $\Iter$ instance $I_{z, y}: [L] \to [L]$ and that ($G_{z, y}(ans) > t$ or $F_z(G_{z, y}(ans)) \ne y$).
\end{itemize}
\def\LBPHP{\mathrm{LB}^{\sf Res}_{\sf PHP}}
Similarly, let $\LBPHP(\PV(\alpha))$ denote the family of $\forall\Sigma_1^b(\alpha)$-sentences consisting of
\[\forall n\in\Log\,\forall z\,\exists i\in [1.01^n]~\mistake_\PHP(n, M, z, i)\]
for every $\PV(\alpha)$ function symbol $M(i, z)$ (here $z$ is a parameter). %

\begin{theorem}
	For every $\PV(\alpha)$ function symbols $F, I, G$,
    \[\PV(\alpha) + \LBPHP(\PV(\alpha))\vdash \rwPHP(\PLS)(F, I, G).\]
\end{theorem}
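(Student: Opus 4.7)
The plan is to translate the reduction from Theorem~\ref{thm: rwPHP(PLS)-hardness of refuters} into the language of bounded arithmetic, invoking $\Haken(\alpha)$ exactly where the $\TFNP^\dt$ proof invokes the size lower bound for $\PHP$. Given $\PV(\alpha)$ function symbols $F, I, G$ defining an $\rwPHP(\PLS)$ instance with parameters $z, t, L$, the first step is to construct a new $\PV(\alpha)$ function symbol $M_{F,I,G}$ which, on input $(n, i)$ and an auxiliary parameter $z' = \langle z, t, L \rangle$, outputs the $i$-th node of the length-$1.01^n$ purported resolution refutation of $\PHP_{(n+1)\to n}$ produced by that reduction (padded with trivial weakening steps if the reduction outputs a shorter proof). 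Since the reduction has block-depth $O(n)$ and each block occupies $\poly(n, \log t, \log L)$ bits, and since $F, I, G$ are $\PV(\alpha)$ function symbols, composing them with the polynomial-time algorithm implementing the reduction yields $M_{F,I,G}$ as a $\PV(\alpha)$ function symbol.

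Next, I would reason inside $\PV(\alpha) + \Haken(\alpha)$. Fix arbitrary $z, t, L$ and set $n := \lceil C' \cdot (|t| + |L|) \rceil$ for a large enough constant $C'$ so that $1.01^n \ge C \cdot (n \cdot t \cdot L + |\PHP_{(n+1)\to n}|)$, where $C$ is the constant from Theorem~\ref{thm: rwPHP(PLS)-hardness of refuters}. Since $|t|, |L|$ are polynomial in $|z'|$, this $n$ lies in $\Log$. Then invoke the specific instance of $\Haken(\alpha)$ corresponding to $M_{F,I,G}$ to obtain $\lnot \pf_\PHP(n, M_{F,I,G}, z')$, i.e., the $\Sigma_1^b(\alpha)$-statement asserting the existence of an index $i \in [1.01^n]$ that indexes an illegal derivation in the encoded proof.

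The key observation is that, by the explicit construction in Theorem~\ref{thm: rwPHP(PLS)-hardness of refuters}, every potentially illegal node in the constructed proof has the form $D^t_{(y, a)}$, a purported weakening of $E^t_{G_{z,y}(a)}$ where $a$ is a $\PLS$-solution of the $\Iter$ instance $I_{z, y}$, and this step is illegal precisely when $G_{z,y}(a) > t$ or $F_z(G_{z,y}(a)) \ne y$. Hence a $\PV(\alpha)$-definable decoding function extracts from the illegal index $i$ the pair $(y, a)$ witnessing $\rwPHP(\PLS)(F, I, G)$ for the fixed parameters, and universally generalizing over $z, t, L$ closes the derivation.

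I expect the main technical obstacle to be the careful bookkeeping needed to ensure that the reduction's output matches the specific bit-level encoding of resolution proofs presumed by $\pf_\PHP$ (as fixed implicitly by Theorem~\ref{thm: formalize Haken in T12 + dwPHP(PV)}), and to verify, \emph{provably in} $\PV(\alpha)$, that (i) every node of $M_{F,I,G}$ other than those of the form $D^t_{(y,a)}$ in ``Case 2 (c) i.'' of the reduction is automatically a legal derivation, and (ii) the decoding map from illegal indices to $\rwPHP(\PLS)$ witnesses is itself a $\PV(\alpha)$ term. Both facts are polynomial-time syntactic statements about the reduction, so they should be verifiable within $\PV(\alpha)$ once the encoding conventions are pinned down; after that, the derivation is essentially a line-by-line transcription of the proof of Theorem~\ref{thm: rwPHP(PLS)-hardness of refuters}.
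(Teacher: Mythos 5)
Your proposal is correct and follows essentially the same route as the paper's own proof sketch: express the proof constructed in \autoref{thm: rwPHP(PLS)-hardness of refuters} as a $\PV(\alpha)$ function symbol, invoke the corresponding instance of $\Haken(\alpha)$ to obtain an illegal derivation, and observe that any such derivation must occur in the weakening step of Case 2 (c) i., yielding the $\rwPHP(\PLS)$ witness $(y, ans)$. Your additional bookkeeping (choosing $n$ so that $1.01^n$ dominates the size of the constructed proof, padding, and encoding conventions) just makes explicit details the paper leaves implicit.
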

\begin{proof}[Proof Sketch]
    Argue in $\PV(\alpha)$. Let $\Pi$ be the purported resolution proof for PHP as constructed in the proof of \autoref{thm: rwPHP(PLS)-hardness of refuters} from $(F, I, G)$, then $\Pi$ can be expressed as a $\PV(\alpha)$ function symbol (that depends on $F, I$, and $G$). From $\LBPHP(\PV(\alpha))$, we know that there exists an illegal derivation in $\Pi$. This illegal derivation can only occur in \hyperref[item: case 2 c i]{Case 2 (c) i.}, and hence it points to a weakening from some $D^t_{(y, ans)}$ to some $E^t_{g_y(ans)}$. This means the existence of a solution $(y, ans)$ of the $\rwPHP(\PLS)$-instance $(F, I, G)$.
\end{proof}

We remark that like \autoref{thm: rwPHP(PLS)-hardness of refuters}, the proof of the above theorem does not depend on the hard tautology being $\PHP$.

We finish this section by the following nice-looking characterization of $\forall\Sigma_1^b$-consequences (i.e., provably total $\NP$ search problems) of $\T^1_2 + \dwPHP(\PV)$:
\begin{corollary}\label{cor: main reversal result}
	\begin{enumerate}
		\item $\Refuter(s(\PHP_{(n+1)\to n}\vdash_{\Res}\bot) \leq 1.01^n)$ is complete for the class of $\NP$ search problems provably total in $\T^1_2 + \dwPHP(\PV)$.
		\item A $\forall\Sigma_1^b(\alpha)$-sentence is provable in the theory $\T^1_2(\alpha) + \dwPHP(\PV(\alpha))$ if and only if it is provable in the theory $\PV(\alpha) + \LBPHP(\PV(\alpha))$.
	\end{enumerate}
\end{corollary}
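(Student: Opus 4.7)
The plan is to combine the $\rwPHP(\PLS)$-completeness of $\Refuter(s(\PHP_{(n+1)\to n}\vdash_\Res\bot) \le 1.01^n)$ (\autoref{cor: main result}) with the witnessing-theorem identification of $\rwPHP(\PLS)$ with $\T^1_2 + \dwPHP(\PV)$, together with the two ``formalization'' results already proved in this section: that $\T^1_2(\alpha) + \dwPHP(\PV(\alpha)) \vdash \Haken(\alpha)$ (\autoref{thm: formalize Haken in T12 + dwPHP(PV)}) and that $\PV(\alpha) + \Haken(\alpha)$ proves every instance of $\rwPHP(\PLS)(F,I,G)$ (the theorem immediately preceding this corollary).

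For Part 1, \autoref{cor: main result} already shows many-one completeness of the refuter for $\rwPHP(\PLS)$, so the remaining task is to identify $\rwPHP(\PLS)$ with the class of $\NP$ search problems whose totality is provable in $\T^1_2 + \dwPHP(\PV)$. This is exactly the witnessing correspondence sketched in \autoref{sec: rwPHP(PLS) from witnessing}: one direction uses the Atserias--Thapen reformulation (\autoref{eq: lemma 1 of AT14}) together with the Buss--\Krajicek witnessing theorem for $\T^1_2$ \cite{BussKrajicek94} to extract, from any $\forall\Sigma_1^b$-consequence of the theory, a $\PV$-reduction of its associated search problem to $\rwPHP(\PLS)$; the other direction verifies that $\T^1_2 + \dwPHP(\PV)$ proves the totality of any $\rwPHP(\PLS)$ instance by using $\dwPHP(\PV)$ to pick a non-output of $f$ and $\Sigma_1^b$-$\MIN$ to solve the resulting $\PLS$ instance.

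For Part 2, the $(\Leftarrow)$ direction is immediate, since $\T^1_2(\alpha) + \dwPHP(\PV(\alpha))$ extends $\PV(\alpha)$ and, by \autoref{thm: formalize Haken in T12 + dwPHP(PV)}, derives every sentence in $\Haken(\alpha)$. The $(\Rightarrow)$ direction mirrors Part 1: assuming a $\forall\Sigma_1^b(\alpha)$-sentence $\phi$ is provable in $\T^1_2(\alpha) + \dwPHP(\PV(\alpha))$, the relativized witnessing argument produces a $\PV(\alpha)$-reduction of the associated $\TFNP^\dt$ search problem to some instance of $\rwPHP(\PLS)(F,I,G)$, and the theorem immediately above this corollary supplies a $\PV(\alpha) + \Haken(\alpha)$-proof of every such instance; composing these $\PV(\alpha)$-reductions yields the desired derivation of $\phi$ from $\PV(\alpha) + \Haken(\alpha)$.

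The main technical delicacy will be making ``provability in the theory'' and ``reducibility in $\TFNP^\dt$'' line up precisely enough to close this loop. This typically requires passing to the universal variant of $\T^1_2(\alpha) + \dwPHP(\PV(\alpha))$ (cf.~\cite{Muller21}) and carefully threading parameters $\vec{z}$ through both the $\PV(\alpha)$ function symbols $F, I, G$ and the purported-proof symbols $M$ (cf.~\autoref{footnote: parameters in bounded arithmetic}). A secondary point is that the hardness reduction of \autoref{thm: rwPHP(PLS)-hardness of refuters} requires $s_F \gtrsim nLM + |F|$, so after the witnessing step one must check that the sizes $L, M$ produced by the $\rwPHP(\PLS)$-instance $(F,I,G)$ are comfortably below $1.01^n$; this is standard and can be arranged by a $\poly(n)$ padding argument on both sides of the composed reduction.
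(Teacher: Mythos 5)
Your proposal is correct and follows essentially the same route the paper intends: the corollary is presented as an immediate consequence of \autoref{cor: main result}, the witnessing correspondence for $\T^1_2 + \dwPHP(\PV)$ sketched in \autoref{sec: rwPHP(PLS) from witnessing}, \autoref{thm: formalize Haken in T12 + dwPHP(PV)}, and the theorem immediately preceding it, which is exactly the chain you assemble. Your closing remarks on universal variants, parameter threading, and padding the size condition are sensible housekeeping points that the paper likewise treats as standard.
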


\section{Refuters for Other Formulas}
\label{sec: more upper bounds}

This section presents additional upper bounds for the refuter problems associated with resolution lower bounds. We start with a \emph{universal} $\PLS$ upper bound for width refuters, showing that any resolution width lower bound \emph{that is true} can be refuted in non-uniform $\PLS$. Then, we provide further examples of resolution lower bounds proven by ``random restriction + width lower bounds'' and show that the refuter problems for these lower bounds are in $\rwPHP(\PLS)$. In particular, we present the following three classic resolution lower bounds and show that the refuter problems for all of them are in $\rwPHP(\PLS)$: 
\begin{enumerate}
    \item[(a)] size-width tradeoffs from $\XOR$-lifting \cite{DantchevR03, Krajicek11-Ramsey} (\autoref{sec: rwPHP(PLS) for XOR-lifting});
    \item[(b)] exponential size lower bounds for the Tseitin formulas \cite{Urquhart87,Schoning97} (\autoref{sec: more rwPHP(PLS) upper bounds}); and
    \item[(c)] exponential size lower bounds for random $k$-CNFs \cite{ChvatalS88,beame1996simplified} (\autoref{sec: random k CNF}).
\end{enumerate}
We believe that the case of random $k$-CNFs is especially compelling: the \emph{vast majority} of resolution lower bounds have refuters in $\rwPHP(\PLS)$!

\subsection{Universal Refuters for \emph{Every} Narrow Resolution Proof}
\label{subsection: width black-box}

This subsection shows a very general result: For \emph{every} (possibly non-uniform) family of unsatisfiable CNFs $\calF = \{F_n\}$ and \emph{every} sequence of integers $\{w_n\}$, if for every $n\in\N$, $w_n$ is indeed a resolution width lower bound for $F_n$, then the refuter problem corresponding to this width lower bound is in $\PLS$ under non-uniform decision tree reductions. 

We note that such a membership result is \emph{inherently} non-uniform since it is crucial to consider algorithms with unlimited \emph{computational} power. For example, in general, it is not obvious how to decide if $w_n$ is a valid resolution width lower bound for $F_n$ (although it is certainly computable with unlimited computational power). In fact, even checking if $F_n$ is unsatisfiable is itself $\NP$-complete. On the other hand, even though these two tasks are computationally hard, they only require querying at most $\poly(n)$ bits of the given resolution proof. Thus, we can still consider these refuter problems in $\TFNP^\dt$ and study its query complexity in the non-uniform setting. 

\begin{theorem}\label{lemma: width refuter upper bound}
	Let $\calF$ be any (possibly non-uniform) family of unsatisfiable CNFs with polynomially many clauses. Let $w_0=w(\calF\vdash_{\Res}\bot)$. Then there exists a (non-uniform) decision-tree reduction of block-depth $2$ from $\Refuter(w(\calF\vdash_{\Res} \bot)<w_0)$ to $\Iter$.
\end{theorem}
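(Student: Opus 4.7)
The plan is to non-uniformly generalize the $1/3$-$2/3$ potential-function approach from \autoref{thm: white-box PLS-completeness for width}, using advice depending on $F$ to define an ``all-or-nothing'' complexity measure. Specifically, let $D(F) := \{C : F \text{ has a resolution derivation of } C \text{ of width } < w_0\}$, and use $\mu(C) := \mathbf{1}[C \in D(F)]$ as the measure. The width lower bound hypothesis gives $\bot \notin D(F)$, while trivially every axiom of $F$ lies in $D(F)$. Crucially, $D(F)$ is closed under legal resolution and legal weakening, \emph{provided the derived clause has width $<w_0$}: one simply concatenates the existing width-$(<w_0)$ derivations of the parent(s) with the final inference step.

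The reduction to the reversed $\Iter$ instance is then almost immediate. To compute $S(i)$, first query node $i$ (block $1$) to read $C_i$, the inference tag, and the parent indices. If $\mu(C_i) = 1$, set $S(i) = i$ (self-loop). Otherwise, if $C_i$ is a weakening of $C_j$, set $S(i) = j$; if $C_i$ is a resolvent of $C_j$ and $C_k$, query node $j$ (block $2$) and set $S(i) = j$ when $\mu(C_j) = 0$, else $S(i) = k$. Edges pointing at an axiom are redirected to a fixed self-looping dummy vertex with $\mu$-value $1$. This decision tree reduction has block-depth $2$, with the advice being the (exponentially large) truth-table of $\mu$ on all clauses of width $<w_0$.

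For correctness, a reversed-$\Iter$ solution $i$ satisfies $\mu(C_i) = 0$ and $\mu(C_{S(i)}) = 1$. In the weakening case, the single parent has $\mu = 1$; in the resolution case, the selection rule forces $S(i) = k$ and additionally $\mu(C_j) = 1$ (else we would have set $S(i) = j$, contradicting $\mu(C_{S(i)}) = 1$). Either way, all parents of $C_i$ lie in $D(F)$, so if the derivation at $i$ were legal the closure property would place $C_i$ itself in $D(F)$ (since $C_i$ has syntactic width $<w_0$), contradicting $\mu(C_i) = 0$. Hence node $i$ must be an invalid derivation, and we output $i$.

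The main obstacle is not mathematical but combinatorial: to specify $\mu$ we must know $D(F)$, which is in general uncomputable in polynomial time and has up to $(2n)^{w_0}$ entries. This is exactly why the reduction must be ``highly non-uniform'', but the theorem statement only constrains the block-depth, so this is permitted. Once $\mu$ is hardcoded, the decision tree itself is just a table lookup on each queried clause, and the rest of the argument is routine.
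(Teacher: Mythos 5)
Your proposal is correct and takes essentially the same route as the paper's own proof: your advice measure $\mu(C)=\mathbf{1}[C\in D(F)]$ is exactly the paper's test ``$\width(F\vdash_{\Res} C)<w_0$'', the successor function, the block-depth-$2$ bound, and the correctness argument via closure of $D(F)$ under legal inferences (using that every node syntactically has width $<w_0$) all coincide with the paper's reduction to reversed $\Iter$. The only cosmetic difference is that the paper explicitly notes that a solution with $S(i)>i$ (a forward-pointing parent) immediately gives an invalid derivation, a case your construction handles only implicitly.
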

\begin{proof}
	Consider any instance of $\Refuter(w(\calF\vdash_{\Res} \bot)<w_0)$. Recall from \autoref{def: Resolution refutation} that the instance is a purported resolution refutation $\Pi$ that consists of clauses $C_{-k}, \dots, C_{-1}, C_0, \dots, C_{L-1}$ where $C_{-k}, \dots, C_{-1}$ are the axioms of $\calF$ and $C_{L-1} = \bot$. Also, recall that we syntactically ensure the width of $\Pi$ is $<w_0$ by only allocating $w_0-1$ literals for each clause. The key point in the reduction is that, for any clause $C_i$ that is resolved from $C_{j_1}$ and $C_{j_2}$, if $\width(F\vdash_{\Res} C_i)\geq w_0$, then either $\width(F\vdash_{\Res} C_{j_1})\geq w_0$, or $\width(F\vdash_{\Res} C_{j_2})\geq w_0$.
	
	The length of the reduced reversed $\Iter$ instance is exactly $L$. Next, we define the successor pointers: for every $i\in[L]$, let $C_i$ be the $i$-th clause and $C_{j_1}$ and $C_{j_2}$ with $j_1<j_2<i$ be the two clauses from which $C_i$ is resolved, then
	\begin{equation*}
	S(i) := \begin{cases}
		i & \text{if $\width(F\vdash_{\Res} C_i)<w_0$};\\
		j_1 & \text{if $\width(F\vdash_{\Res} C_{j_1})\geq w_0$};\\
		j_2 & \text{otherwise}.
	\end{cases}
	\end{equation*}
	
	Clearly, this is a query-efficient reduction with block-depth $2$. It is not time-efficient because it needs to compute whether $\width(F\vdash_{\Res} C)<w_0$ for some clauses $C$. 
	
	To show correctness, we consider any possible solution of the constructed reversed $\Iter$. For any $i$ such that $S(i)>i$, we have either $j_1 > i$ or $j_2 > i$, which means that $C_i$ is an invalid derivation. Now consider any $i$ such that $S(i) < i$ and $S(S(i)) = S(i)$. Since $S(i) < i$, we have that $\width(F\vdash_{\Res} C_i)\geq w_0$. Since $S(S(i)) = S(i)$, we have both $\width(F\vdash_{\Res} C_{j_1}) < w_0$ and $\width(F\vdash_{\Res} C_{j_2}) < w_0$. Thus, the resolution step from $C_{j_1}$ and $C_{j_2}$ to $C_i$ must be an invalid derivation. This finishes the proof.
\end{proof}

Note that \autoref{lemma: width refuter lower bound} already shows a universal $\PLS$-hardness, which even holds for uniform reduction. Combining the
the $\PLS$-membership  (\autoref{lemma: width refuter upper bound}) above, we have the following corollary. 

\begin{corollary}
\label{thm: black-box model PLS-completeness for width}
    Let $\calF$ be any (possibly non-uniform) family of unsatisfiable CNFs with polynomially many clauses. Let $w_0:=w(\calF\vdash_{\Res}\bot)$. Then $\Refuter(w(\calF\vdash_{\Res} \bot)<w_0)$ is $\PLS$-complete under (non-uniform) decision tree reductions.
\end{corollary}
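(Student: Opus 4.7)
The plan is to obtain this corollary by directly combining the two width-refuter results already established in the paper, so no new technical machinery is required. For the $\PLS$-membership, I would simply cite \autoref{lemma: width refuter upper bound}, which produces, for any (possibly non-uniform) family $\calF$ with a correct width lower bound $w_0$, a non-uniform decision-tree reduction of block-depth $2$ from $\Refuter(w(\calF \vdash_\Res \bot) < w_0)$ directly to the reversed $\Iter$ problem. Since $\Iter$ is the canonical $\PLS$-complete problem (as recalled in \autoref{sec: prelim: blackbox tfnp}), this immediately places the refuter problem in $\PLS$ under non-uniform decision-tree reductions.

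For the $\PLS$-hardness direction I would invoke \autoref{lemma: width refuter lower bound}, which gives a \emph{uniform} block-depth-$2$ reduction from reversed $\Iter$ to $\Refuter(w(F \vdash_\Res \bot) \le \width(F))$ whenever $F$ is a single unsatisfiable CNF with a nontrivial width lower bound $w(F\vdash_\Res\bot) > \width(F)$. As observed in the remark immediately after its proof, the construction is entirely parametric in $F$: it only encodes the $\Iter$ successor pointers into a chain of $\bot$-clauses each declared as a weakening of the first axiom of $F$, and this construction is indifferent to the combinatorial structure of $F$. Consequently the reduction lifts verbatim to any family $\{F_n\}$ with $w_0(n) > \width(F_n)$, yielding a uniform (and a fortiori non-uniform) decision-tree reduction witnessing $\PLS$-hardness.

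The only (mild) subtlety to address is the degenerate regime where $w_0(n) = \width(F_n)$ for some indices $n$: here the refuter problem restricts attention to purported proofs whose width is strictly below that of an axiom of $F_n$, and one must either verify that the hardness construction still fits syntactically or argue that such indices contribute nothing to completeness. I would handle this by splitting the family into the indices where $w_0(n) > \width(F_n)$ (on which \autoref{lemma: width refuter lower bound} applies directly) and the remaining indices (on which the refuter problem has a trivial solution given by the location of a wide axiom, hence is solvable with constant query complexity and does not affect $\PLS$-completeness). Beyond this bookkeeping there is no genuine obstacle, since the two lemmas together already supply matching upper and lower bounds.
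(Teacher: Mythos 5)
Your proposal is correct and matches the paper's own argument exactly: the corollary is obtained by citing \autoref{lemma: width refuter upper bound} for the (non-uniform) $\PLS$-membership and observing that the uniform hardness reduction of \autoref{lemma: width refuter lower bound} applies verbatim to families, which is precisely how the paper derives it. Your extra bookkeeping for the degenerate regime $w_0(n)=\width(F_n)$ goes beyond what the paper records (it tacitly assumes the lower bound is non-trivial, as in the hypothesis of \autoref{lemma: width refuter lower bound}), and is harmless.
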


\subsection{Refuters for \texorpdfstring{$\XOR$}{XOR}-Lifted Lower Bounds}\label{sec: rwPHP(PLS) for XOR-lifting}

We show that for a large family of resolution lower bounds proved by \emph{lifting theorems}, their corresponding refuter problems are in $\rwPHP(\PLS)$.

Given an unsatisfiable CNF $F$ which is hard for a ``weak'' proof system, a \emph{lifting theorem} produces another unsatisfiable CNF $F'$ (typically by composing $F$ with some \emph{gadgets}) that is hard for a ``stronger'' proof system. Lifting is a very influential technique for proving lower bounds in proof complexity, see e.g.~\cite{huynh2012virtue,goos2014communication,de2016limited,GGKS,de2020lifting}. This subsection examines one of the simplest lifting theorems for proving lower bounds for resolution, which originated from the technique of ``relativization'' \cite{DantchevR03, Krajicek11-Ramsey} (see also \cite[Section 13.2]{krajicek_proof_complexity}).

Let $F(z_1, z_2, \dots, z_n)$ be an unsatisfiable CNF. Roughly speaking, the CNF $F\circ\XOR$ is obtained by replacing each variable $z_i$ with $x_i\oplus y_i$, where $x_i$ and $y_i$ are new variables corresponding to $z_i$. More formally, the formula $F\circ\XOR$ takes $2n$ Boolean variables $x_1, x_2, \dots, x_n$ and $y_1, y_2, \dots, y_n$ as inputs. Denoting $z_i^b = z_i$ if $b = 1$ and $\overline{z}_i$ if $b = 0$; each width-$d$ clause
\[z_{i_1}^{b_1} \lor z_{i_2}^{b_2} \lor \dots \lor z_{i_d}^{b_d}\]
becomes a set of $2^d$ width-$2d$ clauses
\[\mleft\{\mleft(x_{i_1}^{r_1\oplus 1} \lor y_{i_1}^{b_1\oplus r_1}\mright) \lor \mleft(x_{i_2}^{r_2\oplus 1} \lor y_{i_2}^{b_2\oplus r_2}\mright) \lor \dots \lor \mleft(x_{i_d}^{r_d\oplus 1} \lor y_{i_d}^{b_d\oplus r_d}\mright)\mright\}_{r_1, r_2, \dots, r_d \in \{0, 1\}}.\]

A classical lifting theorem states that if $F$ requires large resolution \emph{width}, then $F\circ\XOR$ requires large resolution \emph{size}. Here, the ``weak'' proof system is \emph{narrow} resolution and the ``strong'' proof system is \emph{short} resolution. More formally:

\begin{theorem}\label{thm: XOR lifting for Resolution}
    Let $F$ be an unsatisfiable CNF that requires resolution width $\ge w$, then $F\circ\XOR$ requires resolution size $\ge 2^{w/3}$.
\end{theorem}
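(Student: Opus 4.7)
The plan is to prove the contrapositive by a random-restriction argument: I will show that any resolution refutation $\Pi$ of $F \circ \XOR$ of size $< 2^{w/3}$ can be reduced, via a suitable restriction, to a resolution refutation of $F$ (up to sign flips) of width $< w$, contradicting the assumed width lower bound on $F$.

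Concretely, for each index $i \in [n]$ I define a random restriction $\rho$ as follows: independently for each $i$, flip a fair coin; with probability $1/2$ assign $x_i$ a uniform random bit (leaving $y_i$ free), and with probability $1/2$ assign $y_i$ a uniform random bit (leaving $x_i$ free). The key structural observation is that after $\rho$, each block $i$ has exactly one free variable which (up to possible negation depending on the fixed bit) plays the role of $z_i$; consequently, each surviving axiom of $F \circ \XOR$ collapses to an axiom of $F$ with some literals possibly negated. Define the \emph{block-width} of a clause $C$ over the variables of $F\circ\XOR$ as $W(C):=|\{i : C \text{ contains a literal on } x_i \text{ or } y_i\}|$, and note that $C|_\rho$ has resolution width at most $W(C)$. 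For any clause $C$ with $W(C) \ge w$, I bound $\Pr[C|_\rho \not\equiv 1]$ block-by-block: in each of the $\ge w$ blocks where $C$ has a literal, with probability $\ge 1/4$ the restriction fixes the corresponding variable to the value that satisfies that literal (prob.~$1/2$ of picking the right variable, times prob.~$1/2$ of hitting the right value). Since the blocks are independent, $\Pr[C|_\rho \not\equiv 1] \le (3/4)^w$.

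A union bound over the $< 2^{w/3}$ clauses of $\Pi$ yields probability at most $2^{w/3}(3/4)^w = (2^{1/3}\cdot 3/4)^w$ that some wide clause survives; this is strictly less than $1$ because $2^{1/3} \approx 1.26 < 4/3$. Thus there exists a fixed restriction $\rho^\star$ killing every clause of block-width $\ge w$ in $\Pi$. Applying $\rho^\star$ to $\Pi$ yields a resolution refutation of $F \circ \XOR|_{\rho^\star}$, which up to variable renaming is $F$ itself; all surviving clauses have width $< w$, contradicting $w(F \vdash_\Res \bot) \ge w$. The main point to verify carefully is the restriction/renaming step — namely, that restriction is compatible with resolution (a standard fact: resolution steps restrict to resolution steps or weakenings/trivializations) and that $F \circ \XOR|_{\rho^\star}$ really is $F$ after identifying each free variable with $z_i$ or $\bar z_i$ according to the fixed bit; since negating variables does not change resolution width, this does not affect the contradiction. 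The numerical constant $1/3$ in the exponent is exactly tight for this particular random-restriction scheme because $2^{1/3} < 4/3$.
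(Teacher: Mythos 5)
Your proof is correct, and it is essentially the classical ``random restriction + width lower bound'' argument that the paper itself sketches immediately before its official proof of this theorem: kill all wide clauses of $\Pi$ by a restriction that fixes one of $x_i,y_i$ per block, observe that the restricted proof is a refutation of $F$ up to negating variables, and invoke the width lower bound. The only substantive difference is one of phrasing: the paper deliberately recasts the union bound as an explicit \emph{compression} argument --- encoding every restriction that fails to kill a fixed wide clause $C_i$ in $(\log_2 3)w + 2(n-w)$ bits and concluding via the dual weak pigeonhole principle that some restriction lies outside the range of the resulting map $f:[L]\times[3^w 4^{n-w}]\to[4^n]$ --- because that deterministic/witnessing form is exactly what later yields the decision-tree reduction of the refuter problem to $\rwPHP(\PLS)$ and the formalization in $\T^1_2(\alpha)+\dwPHP(\PV(\alpha))$. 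Your probabilistic version buys nothing less for the theorem as stated, and your block-width bookkeeping ($3/4$ failure probability per block, independence across blocks) is in fact slightly cleaner than the paper's per-literal count, which tacitly treats the $w$ literals as lying in distinct blocks. One small quibble: your closing remark that the exponent $1/3$ is ``exactly tight'' for this scheme is not right --- the same calculation works for any size bound $2^{cw}$ with $c<\log_2(4/3)\approx 0.415$ --- but this is a side comment and does not affect the proof.
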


The classical proof of this theorem goes through a random restriction argument. Let $\Pi$ be a purported resolution proof of $F\circ\XOR$ of length $L < 2^{w/3}$. Consider a random restriction $\rho$ as follows: For each index $i$, with probability $1/2$, we set $\rho_{x_i}=0/1$ uniformly at random and $\rho_{y_i}=*$; otherwise, we set $\rho_{y_i}=0/1$ uniformly at random and $\rho_{x_i}=*$. By the construction above, $\Pi|_\rho$ is a resolution proof of $F$ up to substituting some variables by their negations, for any $\rho$ in the support. Moreover, for any clause $C \in \Pi$ of width at least $t$, $C$ is killed by a random restriction $\rho$ (i.e., $C|_\rho\equiv 1$) w.p.~at least $1-2^{-\Omega(t)}$. By a union bound over all $L<2^{w/3}$ clauses in $\Pi$, it follows that there is a random restriction $\rho$ killing every clause of width $>w$ in $\Pi$. Therefore, $\Pi|_\rho$ is a resolution refutation for $F$, contradicting the width lower bound for $F$.

To obtain a reduction to $\rwPHP(\PLS)$, it would be helpful to rephrase the above proof as a \emph{compression} argument:

\begin{proof}[Proof of \autoref{thm: XOR lifting for Resolution}]
    Let $\calR$ be the space of the random restrictions in the above proof. Each $\rho \in \calR$ can be described in $2n$ bits:\begin{itemize}
        \item For each index $i$, if $\rho_{x_i} = *$, then we write down $0y_i$ (the first bit being $0$ indicates that $x_i$ is set to $*$, and the second bit encodes $y_i$); otherwise we write down $1x_i$.
    \end{itemize}
    We call the above encoding the \emph{standard encoding} of a restriction; this encoding is a bijection between $\calR$ and $\{0, 1\}^{2n}$, showing that $|\calR| = 4^n$.
    
    In contrast, if $C$ is a clause of width $w$ and $\rho\in\calR$ is a restriction such that $C|_\rho \ne 1$, then given $C$, such a $\rho$ can be described in $(\log_2 3)w + 2(n-w) < 2n$ bits. This is because for each literal in $C$ (say $x_i$, $y_i$, $\overline{x}_i$, or $\overline{y}_i$), if this literal is not simplified to $1$, then there are only $3$ possible choices for $(\rho_{x_i}, \rho_{y_i})$; for example, if this literal is $x_i$, then $(\rho_{x_i}, \rho_{y_i})$ might be one of $(0, *)$, $(*, 0)$, or $(*, 1)$, but never $(1, *)$. We call this the \emph{short} encoding of $\rho$ w.r.t.~$C$; note that this encoding only works when $C|_\rho \ne 1$. %
    
    Now, let $\Pi = (C_0, C_1, \dots, C_{L-1})$ be a resolution refutation of $F\circ\XOR$ with $L < 2^{w/3}$ clauses. Let $f:[L]\times [3^w4^{n-w}] \to [4^n]$ be the function that on input $(i, \rho')$, where $i\in[L]$ and $\rho'$ is the short encoding of a restriction w.r.t.~$C_i$, outputs the standard encoding of $\rho'$ in $\{0, 1\}^{2n}$. Since
    \[L\times 3^w4^{n-w} \le 2^{w/3}\cdot 4^n (3/4)^w < 0.99\cdot 4^n\text{ (whenever $w\ge 1$)},\]
    it follows from the \emph{dual weak pigeonhole principle} that there exists a $\rho \in \{0, 1\}^{2n}$ outside the range of $f$. This restriction $\rho$ simplifies $\Pi$ into a width-$w$ resolution proof of $F$.
    
    In conclusion, if there is a resolution refutation of $F\circ\XOR$ with $<2^{w/3}$ clauses, then \emph{by the dual weak pigeonhole principle}, there is a resolution refutation of $F$ with width $<w$, contradicting the assumed hardness of $F$.
\end{proof}

Now we are ready to show the following result: for every unsatisfiable CNF of the form $F\circ\XOR$ whose resolution size lower bound can be derived from \autoref{thm: XOR lifting for Resolution}, the refuter problem for this resolution size lower bound is in $\rwPHP(\calP)$, where $\calP$ corresponds to the refuter problem for the width lower bound for $F$. Since the refuter problem corresponding to \emph{every} resolution width lower bound admits a non-uniform reduction to $\PLS$ (\autoref{thm: black-box model PLS-completeness for width}), the refuter problems corresponding to size lower bounds for $F\circ\XOR$ non-uniformly reduce to $\rwPHP(\PLS)$ as well. Even if we restrict ourselves to uniform reductions, the refuter problems for many interesting width lower bounds reduce to $\PLS$ (such as \autoref{thm: white-box PLS-completeness for width}), thus the refuter problems for size lower bounds for the corresponding lifted CNFs also reduce to $\rwPHP(\PLS)$.
\begin{theorem}\label{thm: XOR-lifting is in rwPHP(PLS)}
    Let $\{F_n\}$ be a family of unsatisfiable CNFs, $w(n)$ be a width lower bound for $F_n$, and $\calP$ denote the problem $\Refuter(w(F_n) > w(n))$. Then there is a decision tree reduction from $\Refuter(s(F_n \circ\XOR) < 2^{w(n) / 3})$ to $\rwPHP(\calP)$ with block-depth $1$.
\end{theorem}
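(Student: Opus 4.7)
The plan is to turn the compression proof of \autoref{thm: XOR lifting for Resolution} directly into a decision-tree reduction, with the dual weak pigeonhole principle of that proof replaced by the $\rwPHP$ skeleton of the target class. Given an input purported refutation $\Pi=(C_0,\dots,C_{L-1})$ of $F_n\circ\XOR$ with $L<2^{w(n)/3}$, write $w:=w(n)$, set $N:=4^n$ (the standard encodings of restrictions $\rho\in\calR$) and $M:=L\cdot 3^w\cdot 4^{n-w}$ (pairs of a clause-index and a short encoding). The inequality displayed in the proof of \autoref{thm: XOR lifting for Resolution} gives $M\le N/2$, which is exactly the condition needed to invoke $\rwPHP(\calP)$.

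I would then assemble the $\rwPHP(\calP)$-instance $(f,\{I_y\},\{g_y\})$ as follows. The function $f:[M]\to [N]$ sends $(i,\rho')$ to the standard encoding of the restriction whose short encoding relative to $C_i$ is $\rho'$ (and to a default value if $\rho'$ does not form a valid short encoding for $C_i$); each output bit depends only on the single block $C_i$. For each $y\in[N]$ decoding to a restriction $\rho$, the instance $I_y$ is a $\calP$-instance built from $\Pi|_\rho$: each clause $C_i|_\rho$ is syntactically truncated to width $\le w$, and whenever the pivot of a resolution step is set by $\rho$ the step is relabeled as a weakening of the surviving parent, so that the resulting object is a purported width-$w$ resolution refutation of (a relabeled copy of) $F_n$. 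The decoder $g_y$ takes any invalid derivation $\ell$ returned by $\calP$ and outputs $(\ell^\star,\rho'_{\ell^\star})$, where $\ell^\star\in\{\ell,j_\ell,k_\ell\}$ is any node whose restriction has width $>w$ (so that a short encoding of $\rho$ w.r.t.~$C_{\ell^\star}$ exists), and outputs a fixed default pair if no such $\ell^\star$ exists.

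Correctness then unfolds as in the proof of \autoref{thm: XOR lifting for Resolution}. Totality of $(f,\{I_y\},\{g_y\})$ follows from $M\le N/2$. Given any valid solution $(y,\ell)$, the requirement $f(g_y(\ell))\ne y$ automatically rules out the branch in which $g_y$ produced a wide-clause pair $(\ell^\star,\rho'_{\ell^\star})$, because for such a pair $f(g_y(\ell))$ would reconstruct $\rho=y$. Hence all three of $C_\ell, C_{j_\ell}, C_{k_\ell}$ restrict to clauses of width $\le w$, so no truncation affected the step at $\ell$, and its invalidity inside the truncated proof must reflect a genuine invalidity of $C_\ell$ in $\Pi$. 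Both $f$, $I_y$, and $g_y$ only query a constant number of blocks of $\Pi$, and with the right packing (in particular, storing each $C_i$ together with its parent pointers in a single block) the query complexity per bit is $1$.

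I expect the main obstacle to be the faithful treatment of truncation: I must verify that every invalidity spotted by $\calP$ in $\Pi'_\rho$ either pinpoints a truncated (wide) clause among $\{C_\ell,C_{j_\ell},C_{k_\ell}\}$ or witnesses a real invalidity in $\Pi$. This boils down to a small case analysis, based on whether the resolution pivot of the step is assigned by $\rho$ or is left free, together with the observation that the weakening-relabeling preserves validity whenever the underlying derivation in $\Pi$ is valid. The rest of the argument is a quantitative book-keeping that matches the compression inequality used in \autoref{thm: XOR lifting for Resolution}.
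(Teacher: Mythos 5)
Your proposal is correct and takes essentially the same route as the paper's proof: the same compression-based surjection $f$ from \autoref{thm: XOR lifting for Resolution}, the same per-restriction truncated proof $\Pi|_\rho$ used directly as the $\calP$-instance $I_\rho$, the same short-encoding decoder $g$, and the same soundness argument that $f(g_y(ans))\ne y$ rules out the truncation case and hence pinpoints a genuine invalid derivation in $\Pi$. Your additional care (relabeling steps whose pivot is set by $\rho$ as weakenings, and letting the decoder point to whichever of the node or its parents became too wide) only makes explicit bookkeeping that the paper leaves implicit; it does not change the argument.
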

\begin{proof}
    Let $\Pi$ be the input instance of $\Refuter(s(F_n\circ\XOR) < 2^{w(n) / 3})$. That is, $\Pi = (C_0, C_1, \dots, C_{L-1})$ is a purported resolution refutation of $F_n\circ \XOR$ with $L < 2^{w(n)/3}$ clauses, and we want to find an invalid derivation in $\Pi$.

    Let $f:[0.99N] \to [N]$ be the function defined in the proof of \autoref{thm: XOR lifting for Resolution}, where $N := 4^n$. That is, given a pair $(i, \rho')$ where $i\le L$ and $\rho'$ is the short encoding of a restriction w.r.t.~$C_i$, $f(i, \rho')$ is the standard encoding of this restriction. The range of $f$ consists of (the standard encodings of) all \emph{bad} restrictions, i.e., those that do \emph{not} simplify $\Pi$ to a width-$w$ resolution refutation. 

    Given any restriction $\rho \in \{0, 1\}^{2n}$, let $\Pi|_\rho$ denote the restriction of $\Pi$ under $\rho$ where we force every clause to have width at most $w$; $\Pi|_\rho$ is a purported width-$w$ resolution refutation for $F_n$. In particular, for each $(\rho, i)$, the $i$-th clause of $\Pi|_\rho$ is equal to the restriction of $C_i$ under $\rho$, truncated at width $w$. Note that if $F_n$ indeed requires resolution width $>w$, then $\Pi|_\rho$ must be an \emph{invalid} resolution refutation of $F_n$. Suppose that the $i$-th clause in $\Pi|_\rho$ is derived illegally, then it could be for the following two reasons:
    \begin{itemize}
        \item Either the derivation of $C_i$ in $\Pi$ is already illegal;
        \item or the width of $C_i|_\rho$ is actually $>w$ and the $i$-th clause in $\Pi|_\rho$ is illegal because it was truncated.
    \end{itemize}
    Let $g'_{\rho, i}$ denote the short encoding of $\rho$ w.r.t.~$C_i$, and define $g_{\rho, i} := (i, g'_{\rho, i})$. In the second case, we have a clause $C_i$ and a restriction $\rho$ such that $C_i|_\rho \ne 1$ (in fact, the width of $C_i|_\rho$ is large), thus the short encoding makes sense and, indeed, $f(g_{\rho, i}) = \rho$. In any case, if the short encoding does not make sense (i.e., $C_i|_\rho = 1$), we can set $g_{\rho, i}$ arbitrarily.
    
    Now we have all the ingredients needed in our reduction from the problem of finding an invalid derivation in $\Pi$ to $\rwPHP(\calP)$:\begin{enumerate}
        \item a purported ``surjection'' $f:[0.99N] \to [N]$;
        \item a $\calP$ instance $\Pi|_\rho$ for every $\rho\in [N]$;
        \item for every $\rho \in [N]$ and every solution $i$ of $\Pi|_\rho$ (as a $\calP$ instance), a number $g_{\rho, i}$ pointing to a purported pre-image in $f^{-1}(\rho)$.
    \end{enumerate}
    Every entry $f(i, \rho')$, $\Pi|_\rho(i)$, and $g_{\rho, i}$ only depend on $C_i$ and $\rho$, thus are computable by a decision tree of block-depth $1$.

    A solution of the above $\rwPHP(\calP)$ instance consists of a restriction $\rho$ and a solution $i$ of $\Pi|_\rho$ such that $f(g_{\rho, i}) \ne \rho$. In this case, the derivation of $C_i$ in $\Pi$ must be invalid. That is, given a solution of the $\rwPHP(\calP)$ instance, we can find an invalid derivation of $\Pi$ by a decision tree of block-depth $1$.
\end{proof}

\subsection{Refuters for Tseitin Formulas}\label{sec: more rwPHP(PLS) upper bounds}\label{sec: tseitin}

\paragraph{Tseitin formulas.} Let $G = (V, E)$ be a undirected connected graph, where each vertex $v\in V$ is associated with a value $\tau(v) \in \{0, 1\}$, and each edge $e\in E$ is associated with a Boolean variable $x_e$. The goal is to assign values to each $x_e$ so that for each vertex $v\in V$, the XOR of edge labels incident to $v$ is equal to $\tau(v)$; that is,
\begin{equation}
    \bigoplus_{e\sim v}x_e = \tau(v),\label{eq: Tseitin at v}
\end{equation}
where $e\sim v$ denotes that the edge $e$ is incident to the vertex $v$.

We say $\tau$ is an \emph{odd-weighted} function if $\bigoplus_{v\in V}\tau(v) = 1$. It is not hard to see that the above task is impossible if and only if $\tau$ is odd-weighted (\cite[Lemma 4.1]{Urquhart87}).

\begin{definition}\label{def: Tseitin formula}
	The \emph{Tseitin formula} $\Tseitin(G, \tau)$ \cite{tseitin1983complexity} consists of \autoref{eq: Tseitin at v} for every vertex $v$. When $G$ is a $d$-regular graph (i.e., every vertex $v$ is incident to exactly $d$ edges), we can write \autoref{eq: Tseitin at v} as a $d$-CNF with $2^{d-1}$ clauses:
	\begin{equation}
		\bigwedge_{y_1 \oplus y_2 \oplus \dots \oplus y_d \ne \tau(v)} \mleft((x_{e_1}\ne y_1) \lor (x_{e_2}\ne y_2) \lor \dots \lor (x_{e_d} \ne y_d)\mright), \tag{\ref{eq: Tseitin at v}'}
	\end{equation}
	where $e_1, e_2, \dots, e_d$ are edges incident to $v$.
\end{definition}

For every odd-weighted function $\tau$, $\Tseitin(G, \tau)$ is unsatisfiable; when $G$ is an \emph{expander} graph, $\Tseitin(G, \tau)$ becomes hard for resolution.
\begin{definition}\label{def: expander graphs}
	Let $G = (V, E)$ be an undirected graph. For $S, T\subseteq V$, denote $E(S, T)$ as the set of edges in $E$ with one endpoint in $S$ and the other endpoint in $T$. The \emph{expansion} of $G$ is defined as:
	\[e(G) := \min\{|E(S, V\setminus S)|: S\subseteq V, |V|/3 \le |S| \le 2|V|/3\}.\]
\end{definition}

This gives rise to a family of popular hard tautologies in proof complexity. The first exponential resolution lower bound for Tseitin formulas was proved by Urquhart \cite{Urquhart87}; the proof was subsequently simplified by \cite{Schoning97,Ben-SassonW01}. We restate the theorem from~\cite{Ben-SassonW01} below.

\begin{theorem}[{\cite[Theorem 4.4]{Ben-SassonW01}}]\label{thm: Tseitin width lower bound}
    For every undirected connected graph $G$ and odd-weighted function $\tau:V\to \{0, 1\}$, any resolution refutation of $\Tseitin(G, \tau)$ contains a clause $C$ with $w(C) \ge e(G)$.
\end{theorem}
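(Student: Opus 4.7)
The plan is to adapt the complexity-measure method used earlier in the paper for $\PHP$ and $\EPHP$ (see \autoref{thm: white-box PLS-completeness for width} and \autoref{thm: mono PLS-completeness}) to the Tseitin setting. For each clause $C$ in the refutation, define
\[\mu(C) := \min\mleft\{|S| : S\subseteq V,\ \bigwedge_{v\in S} \mathrm{Ax}_v \models C\mright\},\]
where $\mathrm{Ax}_v$ is the CNF at vertex $v$ encoding $\bigoplus_{e \sim v} x_e = \tau(v)$. First I would verify the usual properties: (i) $\mu(D)\le 1$ for every axiom $D$ of $\Tseitin(G,\tau)$; (ii) $\mu$ is subadditive under resolution and nonincreasing under weakening; and (iii) $\mu(\bot) = |V|$. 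Property (iii) uses that $\tau$ is odd-weighted (so $\bigwedge_{v\in V} \mathrm{Ax}_v$ is unsatisfiable) together with the connectivity of $G$ (which makes the rows of the parity-check matrix indexed by any proper $S\subsetneq V$ linearly independent, and hence $\bigwedge_{v\in S} \mathrm{Ax}_v$ satisfiable).

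Given (i)--(iii), the standard $1/3$-$2/3$ walk down the proof DAG---identical to the one used in \autoref{thm: white-box PLS-completeness for width}---produces a clause $C$ with $|V|/3 \le \mu(C) \le 2|V|/3$. Let $S$ be a minimum-size witness for $\mu(C)$, so that $|V|/3 \le |S| \le 2|V|/3$, and by the expansion hypothesis $|E(S,V\setminus S)| \ge e(G)$. The remaining task is to show that every boundary edge of $S$ appears as a literal in $C$; combined with the previous sentence this gives $w(C) \ge e(G)$ and completes the proof.

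The crux, and the only place that differs meaningfully from the $\PHP$ case, is this boundary-edge claim. Fix $e = (u,v) \in E(S,V\setminus S)$ with $u \in S$, $v \notin S$, and suppose for contradiction that neither $x_e$ nor $\overline{x}_e$ occurs in $C$. Take any assignment $\alpha'$ satisfying $\bigwedge_{w \in S\setminus\{u\}} \mathrm{Ax}_w$, and define $\alpha$ to agree with $\alpha'$ on every variable other than $x_e$, setting $\alpha(x_e)$ so that $\mathrm{Ax}_u$ is satisfied. This is well defined because toggling $x_e$ flips the parity at $u$ by exactly $1$; crucially, it affects no other parity constraint indexed by $S$, since $e$'s other endpoint $v$ lies outside $S$. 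Hence $\alpha$ satisfies $\bigwedge_{w \in S}\mathrm{Ax}_w$, so $\alpha \models C$; and since $\alpha$ and $\alpha'$ agree on every variable occurring in $C$, also $\alpha' \models C$, showing $\bigwedge_{w \in S\setminus\{u\}} \mathrm{Ax}_w \models C$ and contradicting the minimality of $|S|$. The main obstacle to watch out for is exactly this locality observation---that a cut edge of $S$ is incident to only one vertex of $S$---which makes the single-bit flip well defined; once that is handled, everything else is a direct translation of the $\PHP$ arguments already laid out in the paper.
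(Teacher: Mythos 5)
Your proof is correct, and it is essentially the original Ben-Sasson--Wigderson argument for \cite[Theorem 4.4]{Ben-SassonW01}: a subadditive progress measure with $\mu(\text{axiom})\le 1$ and $\mu(\bot)=|V|$, the $1/3$--$2/3$ step, and the observation that every cut edge of the witness set must occur in the medium-complexity clause; your use of odd weight (unsatisfiability of the full system) and connectivity (linear independence of any proper subset of parity rows, hence satisfiability) is exactly where those hypotheses are needed. The paper itself does not reprove the theorem non-constructively (it cites \cite{Ben-SassonW01}); what it proves, in \autoref{thm: Tseitin width lower bound refuter}, is a constructive analogue built on a slightly different measure: $\mathrm{cri}(C)$, the number of vertices $v$ admitting a $v$-critical assignment falsifying $C$, rather than your $\mu(C)$, the size of a minimum set of vertex-equations implying $C$. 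Both measures drive the same skeleton, but the boundary step is argued dually: you shrink a minimum witness (if a cut edge $(u,v)$ is missing from $C$, then $S\setminus\{u\}$ already implies $C$, contradicting minimality), whereas the paper flips a cut edge to turn a $u$-critical falsifying assignment into a $v$-critical one, contradicting that $v$ is not critical for $C$. The practical difference is computational: $\mathrm{cri}(C)$ is computable in polynomial time by $\mathbb{F}_2$ linear algebra, which is what makes the refuter reduction to $\PLS$ uniform and efficient, while your $\mu(C)$ involves a minimization over vertex subsets with no obvious efficient algorithm; so your route cleanly establishes the stated width lower bound but would not directly yield the constructive/refuter results of \autoref{sec: tseitin}.
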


In this paper, we only consider Tseitin formulas on graphs with constant degree $d = O(1)$.

\paragraph{Width Refuters.} Similar to the Pigeonhole Principle, we first study the width refuter for Tseitin formulas.

\begin{definition}\label{def: refutation problem for Tseitin width lower bounds}
    Let $\Refuter(w(\Tseitin\vdash_\Res\bot) < e(G))$ denote the following problem. The input consists of an undirected connected graph $G = (V, E)$ on $n$ vertices with degree $d=O(1)$, an odd-weighted assignment $\tau:V \to \{0, 1\}$, a parameter $e \le |E|$, and a purported resolution refutation $\Pi$ of $\Tseitin(G, \tau)$ with width less than $e$. A valid solution is either of the following:\begin{itemize}
        \item an index $i$ such that the $i$-th node in $\Pi$ is an invalid derivation, or
        \item a vertex set $S\subseteq V$ such that $|V|/3 \le |S| \le 2|V|/3$ and $|E(S, V\setminus S)| < e$.
    \end{itemize}
    (Note: in this $\TFNP^\dt$ problem, we think of $\poly(n)$-time algorithms as ``efficient'', hence an efficient procedure can read the whole graph $G$, verify that $\tau$ is indeed odd-weighted, or count the number of edges between $S$ and $V\setminus S$. When we calculate block-depth, the inputs $(G, \tau, e)$ are treated as a single block.)
\end{definition}

\begin{mdframed}[hidealllines=true,backgroundcolor=gray!10,skipabove=0.1em,skipbelow=-0.4em,innertopmargin=0]
	\small
\begin{remark}\label{remark: formalization of Tseitin}
    This definition is different from most refuter problems considered in this paper, as it is not for a single family of tautology, and it does not even \emph{guarantee} that the tautology is hard! Instead, it asks to find either an invalid derivation in the purported proof or a \emph{certificate} of the tautology being easy (i.e., a sparse cut in the graph).

    \def\Expander{\mathrm{Expander}}
    \def\Correct{\mathrm{Correct}}
    We argue that this is a natural definition. Let $\pf_{\Tseitin}^\alpha(G, \tau, e)$ denote the $\Pi_1^b(\alpha)$-sentence ``$\alpha$ encodes a width-$e$ proof of $\Tseitin(G, \tau)$'' (note that $\alpha$ is treated as an oracle, i.e., a second-order object, while $G$, $\tau$, and $e$ are inputs, i.e., first-order objects). That is,
    \[\pf_{\Tseitin}^\alpha(G, \tau, e) := \forall i~\Correct^\alpha(G, \tau, e, i),\]
    where $\Correct^\alpha(G, \tau, e, i)$ expresses that the $i$-th step of $\alpha$, as a width-$(e-1)$ proof of $\Tseitin(G, \tau)$, is correct. Similarly, let $\Expander(G, e)$ denote the $\Pi_1^b$-sentence that $e(G) \ge e$. That is,
    \[\Expander(G, e) := \forall S\subseteq V~\mleft(|S| \in \mleft[(1/3)|V|, (2/3)|V|\mright] \implies |E[S, V\setminus S]| \ge e\mright).\]
    The proof in \cite{Ben-SassonW01} actually shows that $\Expander(G, e) \implies \lnot \pf_{\Tseitin}^\alpha(G, \tau, e)$, which after rearranging is equivalent to:
    \begin{equation}\label{eq: Tseitin width lower bound}
        \exists i~\lnot \Correct^\alpha(G, \tau, e, i) \lor \exists S~\mleft(|S| \in \mleft[(1/3)|V|, (2/3)|V|\mright] \land |E[S, V\setminus S]| < e\mright).
    \end{equation}
    It is easy to see that \autoref{def: refutation problem for Tseitin width lower bounds} is exactly the $\TFNP^\dt$ problem corresponding to \autoref{eq: Tseitin width lower bound}.
\end{remark}
\end{mdframed}

\begin{theorem}\label{thm: Tseitin width lower bound refuter}
    $\Refuter(w(\Tseitin\vdash_\Res\bot) < e(G))$ is $\PLS$-complete. 
\end{theorem}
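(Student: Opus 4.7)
The plan is to mirror the template of \autoref{thm: white-box PLS-completeness for width} and \autoref{thm: mono PLS-completeness}, adapting the Ben-Sasson--Wigderson width lower bound (\autoref{thm: Tseitin width lower bound}) into a potential-function reduction. For the \textbf{$\PLS$-hardness} direction, I would simply invoke \autoref{lemma: width refuter lower bound} instantiated with any fixed family of hard Tseitin instances: take $\{G_n\}$ to be an explicit family of constant-degree graphs with $e(G_n) = \Omega(n)$ (e.g., Ramanujan expanders) and let $\tau_n$ be any odd-weighted charge on $G_n$. By \autoref{thm: Tseitin width lower bound} each $\Tseitin(G_n, \tau_n)$ has a genuine width lower bound of $e(G_n)$, so with $(G_n, \tau_n, e(G_n))$ hardcoded as part of the input and only $\Pi$ varying, the lemma yields a uniform block-depth-$2$ reduction from $\Iter$.

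For the \textbf{$\PLS$-membership}, I would introduce a potential function $\cri$ on clauses analogous to the one used for PHP. Call an assignment $\alpha \in \{0,1\}^E$ a \emph{$v$-critical} assignment if it satisfies the Tseitin constraint $T_u$ at every $u \ne v$ but violates $T_v$; equivalently, $\alpha$ is a satisfying assignment of the Tseitin system with the modified charge $\tau' := \tau \oplus \mathbf{1}_{\{v\}}$, which is even-weighted and hence solvable. Define
\[\cri(C) := |\{v \in V : \exists\, v\text{-critical } \alpha \text{ falsifying } C\}|.\]
I would verify the four standard properties: $\cri(\bot) = |V|$; $\cri(C) \le 1$ for every axiom (an axiom of $T_v$ can only be falsified by assignments that violate $T_v$); $\cri$ is subadditive under resolution (extending a $v$-critical falsifying assignment of the resolvent by the resolved variable produces a $v$-critical falsifying assignment of one of the two premises); and $\cri$ is non-increasing under weakening. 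Crucially, $\cri(C)$ is $\poly(n)$-time computable: for each fixed $v$, checking whether a $v$-critical assignment falsifies $C$ amounts to asking whether an explicit affine $\GF(2)$-subspace (the solutions of the $\tau'$-Tseitin system) intersects the subcube determined by the unique partial assignment that falsifies $C$, which is decidable by Gaussian elimination.

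Given these properties, the reduction to reversed $\Iter$ follows the usual $1/3$--$2/3$ trick verbatim: set $S(i) = i$ whenever $\cri(C_i) < 2|V|/3$, and otherwise let $S(i)$ point to the weakening source or to the resolution premise with larger $\cri$. A solution to reversed $\Iter$ yields either an index $i$ where subadditivity/monotonicity of $\cri$ is violated (which directly witnesses an invalid derivation) or a clause $C_{S(i)}$ with $\cri(C_{S(i)}) \in [|V|/3, 2|V|/3)$. In the latter case, let $V_1 := \{v : \exists\, v\text{-critical } \alpha \text{ falsifying } C_{S(i)}\}$, so $|V|/3 \le |V_1| \le 2|V|/3$. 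By a BSW-style boundary argument --- if $e = (u,v) \in E(V_1, V \setminus V_1)$ but $x_e$ did not occur in $C_{S(i)}$, then flipping $x_e$ in a $u$-critical falsifying assignment would yield a $v$-critical falsifying assignment, contradicting $v \notin V_1$ --- we obtain $w(C_{S(i)}) \ge |E(V_1, V \setminus V_1)|$. Since the input syntactically forces $w(C_{S(i)}) < e$, we conclude $|E(V_1, V \setminus V_1)| < e$, so $V_1$ is a valid sparse-cut solution to the refuter problem.

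The main obstacle I anticipate is managing the computation of $\cri$ uniformly and in block-depth $O(1)$: unlike the bipartite-matching subroutine used for $\EPHP$, here one works over $\GF(2)$ with the full Tseitin parity system, so some care is needed to set up the affine feasibility check (fix the variables occurring in $C$ to the values mandated by the unique assignment falsifying $C$'s literals, then check solvability of the resulting restriction of the $\tau'$-system). This is a routine linear algebra computation and yields a uniform polynomial-time reduction of block-depth $O(1)$, matching the parallel $\PLS$-membership results of \autoref{subsection: width white-box}.
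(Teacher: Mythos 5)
Your proposal is correct and takes essentially the same route as the paper's proof: the same critical-assignment measure $\mathrm{cri}$ with the four properties, polynomial-time computability via $\GF(2)$ linear algebra, the $1/3$--$2/3$ potential argument reducing to reversed $\Iter$, the edge-flipping boundary argument turning a middle-range clause into a sparse cut of width $< e$, and $\PLS$-hardness via the universal width-refuter hardness lemma. The only (harmless) deviation is that you compute $\mathrm{cri}$ by flipping the charge at $v$ and checking affine feasibility of the resulting even-weighted system directly, whereas the paper enumerates the $O(2^{d-1})$ possible falsified axiom clauses at $v$ before solving the linear system --- equivalent formulations.
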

\begin{proof}
We will show that there is a (uniform) decision tree reduction of block-width $3$ from this problem to $\Iter$.

Let $G=(V,E)$ be an undirected graph with purported expansion parameter $e$. Let $\tau:V\to \set{0,1}$ be an odd-weighted function. Let $\Pi$ be a purported resolution refutation that consists of clauses $C_{-k},\cdots,C_{-1},C_0,\cdots,C_{L-1}$. where $C_{-k},\cdots,C_{-1}$ are axioms of the unsatisfiable CNF associated with $G$ and $\tau$. Note that we can syntactically require that each $C_i$ has width at most $e-1$. Our goal is to find either an invalid derivation in $\Pi$ or a witness that the expansion of $G$ is, in fact, less than $e$. In particular, the witness is a vertex set $S\subseteq V$ such that $|V|/3 \le |S| \le 2|V|/3$ and $|E(S, V\setminus S)| < e$.

Similar to before, we first introduce a complexity measure for a clause $C$. Let $v\in V$ be a vertex. We say an assignment $\alpha$ is $v$-critical if $\alpha$ only falsifies the constraint associated with $v$ and satisfies all other constraints of the given unsatisfiable CNF. The complexity measure, denoted by $\cri(C)$, is defined as follows.
    $$\cri(C)\coloneqq \mleft|\set{v\in V : \exists v\text{-critical assignment } \alpha \text{  such that } C(\alpha)=0}\mright|. $$ 

    Note that $\cri$ has four important properties:
    \begin{itemize}
        \item $\cri(\bot)=n$;
        \item $\cri(C_i)=1$ for all $-k\leq i\leq -1$, namely, $\cri(C)=1$ for all axioms $C$;
        \item $\cri$ is subadditive with respect to resolution derivation, namely, if $C$ is resolved from $A$ and $B$, then $\cri(C)\leq \cri(A)+\cri(B)$;
        \item if $C$ is obtained from a weakening of $A$, then $\cri(C)\leq \cri(A)$. 
    \end{itemize}
    We first show that $\cri(\cdot)$ can be computed in polynomial time. Then we show that any clause $C_i$ such that $n/3\leq \cri(C_i)\leq 2n/3$ will give us a solution. The $\PLS$-membership follows from that the standard $1/3$-$2/3$ trick can be implemented via a reduction to reversed $\Iter$.

    \begin{lemma}
        For any clause $C$, $\cri(C)$ can be computed in $\poly(n)$ time.
    \end{lemma}
    \begin{claimproof}
    Fix any clause $C$. We will enumerate $v\in V$ and check the existence of $v$-critical assignments. 

    Note that the aimed assignment $\alpha$ needs to satisfy that $C(\alpha)=0$, so all literals in $C$ are fixed. For $\alpha$ being a $v$-critical assignment, the constraint associated with $v$ needs to be falsified. We enumerate an axiom in the constraint associated with $v$. Since $d$ is a constant, there are only $2^{d-1} = O(1)$ axioms that we need to enumerate.
    
    Fix such an axiom, and set all literals in this axiom to be 0 as well (if setting them to be 0 is not consistent with $C(\alpha)=0$, then skip this axiom and try the next one). Now we have fixed some variables and left other variables free. Let $\rho\in\set{0,1,*}^m$ be this partial assignment, where $m=|E|$. Note that $C(\rho)=0$ and the constraint associated with $v$ has also been falsified. So we only need to check if there is a complement $\alpha$ of $\rho$ such that all other constraints can be satisfied by $\alpha$. This reduces to checking whether a system of linear equations over $\mathbb{F}_2$ has a solution, which can be done in polynomial time.
\end{claimproof}

Then we show that finding a clause $C_i$ such that $\cri(C_i)\in[n/3,2n/3]$ can be reduced to $\Iter$.

\noindent \textbf{Reduction to $\Iter$:} 	
The instance of a reversed $\Iter$ is defined by the following function $S:[L] \to [L]$. For every $i\in[L]$:
\begin{itemize}
    \item if $\cri(C_{i})<\frac{2n}{3}$, then $S(i) = i$;
    \item otherwise, if $C_i$ is a weakening from $C_j$, then let $S(i)=j$;
    \item Finally, let $C_{i}$ be resolved from $C_{j}$ and $C_{k}$: If $\cri(C_{j}) \ge \cri(C_{k})$, then $S(i) = j$; otherwise $S(i) = k$.
\end{itemize}

It is easy to see that this reduction can be implemented in block-depth $3$: for example, if $C_i$ is resolved from $C_j$ and $C_k$, then one only needs to read the $i$-th, $j$-th, and $k$-th node in the resolution refutation.

Note that when we find any solution $i$ of this reversed $\Iter$ instance, it satisfies $S(i)<i$ and $S(S(i))=i$. This means $\cri(C_i)\geq 2n/3$ but $\cri(C_{S(i)})<2n/3$. Thus we have $\cri(C_{S(i)})\in[n/3,2n/3]$.

\vspace{0.2cm}
\noindent \textbf{Correctness of the Reduction:} 	
Fix $C$ such that $n/3\leq \cri(C)\leq 2n/3$. Let $$E'=\set{(u,v)\in E\mid u\in \cri(C), v\in V\setminus \cri(C)}.$$

We show that $C$ contains every variable that appears in $E'$. If not, let $e=(u,v)\in E'$ be a missing variable and suppose without loss of generality that $u\in\cri(C)$ and $v\not\in\cri(C)$. Since $u\in\cri(C)$, by definition we know there exists a $u$-critical assignment $\alpha_u$ such that $C(\alpha_u)=0$. Let $\alpha_u'$ be the same assignment but flipping $x_{(u,v)}$. Then by definition, we obtain a new assignment $\alpha_u'$ that is $v$-critical. However, recall that $v\not\in\cri(C)$, which leads to a contradiction.

Thus, suppose that $C$ is not obtained by an invalid derivation, then since $\width(C)<e$, we know that $|E'|<e$, which means that $\cri(C)$ is a witness that the expansion of $G$ is in fact less than $e$.

This finishes the proof.
\end{proof}

\paragraph{Size Refuter.} After the $\PLS$-membership of width refuter, we are ready to study the size refuter.

We consider Tseitin formulas where the underlying graph $G=(V,E)$ is an expander. Recall from \autoref{def: expander graphs} that the expansion of $G$, denoted as $e(G)$, is the minimum number of edges between $S$ and $V\setminus S$ over every subset $S\subseteq V$ such that $|V|/3 \le |S| \le 2|V|/3$. It is proved in \cite{Schoning97,Ben-SassonW01} that for every constant-degree expander $G$ with $e(G) \ge n$ and every odd-weighted function $\tau: V\to \{0, 1\}$, the tautology $\Tseitin(G, \tau)$ requires size-$2^{\Omega(n)}$ resolution proof.

Now, analogous to \autoref{def: refutation problem for Tseitin width lower bounds}, we define the refuter problem for the size lower bounds, where the graph $G$ is also given as an input, and a certificate for $G$ not being an expander is also a valid output:

\begin{definition}\label{def: refuter problem for Tseitin size lower bounds}
	Let $\Refuter(s(\Tseitin \vdash_\Res\bot) < 1.01^{n/d})$ denote the following problem. The input consists of an undirected connected $d$-regular graph $G = (V, E)$ on $n$ vertices, an odd-weighted assignment $\tau: V\to \{0, 1\}$, and a purported resolution refutation $\Pi$ for $\Tseitin(G, \tau)$ that contains at most $1.01^{n/d}$ clauses. A valid solution is either of the following:\begin{itemize}
		\item an index $i$ such that the $i$-th node in $\Pi$ is an invalid derivation, or
		\item a vertex set $S\subseteq V$ such that $|V|/3 \le |S| \le 2|V|/3$ and $|E(S, V\setminus S)| < n$.
	\end{itemize}
	Again, when we calculate the block-depth of reductions, we treat $(G, \tau)$ as one input block.
\end{definition}

\begin{theorem}\label{thm: Res size lower bound for Tseitin}
	Let $G = (V, E)$ be a $d$-regular undirected connected graph and $\tau: V \to \{0, 1\}$ be an odd-weighted function. Then, if $e(G)\ge n$, then $\Tseitin(G, \tau)$ requires resolution size $\ge 1.01^{n/d}$.
	
	Moreover, there is a uniform decision tree reduction from $\Refuter(s(\Tseitin\vdash_\Res\bot) < 1.01^{n/d})$ to $\rwPHP(\PLS)$ with block-depth $3$.
\end{theorem}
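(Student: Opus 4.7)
The plan is to follow the ``random restriction + width lower bound'' template from \autoref{sec: upper bound for Refuter for Resolution}, instantiated for Tseitin formulas on expander graphs. The two ingredients I will combine are \autoref{thm: Tseitin width lower bound} (every resolution refutation of $\Tseitin(G, \tau)$ has a clause of width $\geq e(G)$), whose refuter already sits in $\PLS$ by \autoref{thm: Tseitin width lower bound refuter}, together with a Sch\"oning-style random restriction argument that is essentially the Ben-Sasson--Wigderson size-width tradeoff specialised to Tseitin.

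I will consider the distribution $\calR$ over restrictions that keep each edge variable $x_e$ free independently with probability $p = 1/2$ and otherwise assign it a uniform bit. Under any $\rho \in \calR$, the tautology $\Tseitin(G, \tau)$ simplifies to $\Tseitin(G_\rho, \tau_\rho)$, where $G_\rho$ is the subgraph of unfixed edges and $\tau_\rho(v)$ is $\tau(v)$ XORed with the fixed values at edges incident to $v$. A standard compression argument shows that at most a $((1+p)/2)^{w'}$-fraction of $\rho \in \calR$ fail to shrink a fixed width-$w \geq w'$ clause below width $w' := \Theta(n/d)$. For $L := 1.01^{n/d}$, a union bound over $\Pi$ shows that most $\rho$ simplify every clause of $\Pi$ to width $< w'$, while a Chernoff bound shows that $e(G_\rho) \geq e(G)/4$ for most $\rho$ as well. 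Feeding such a $\rho$ into \autoref{thm: Tseitin width lower bound} contradicts the existence of $\Pi$, yielding the size lower bound.

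For the decision tree reduction to $\rwPHP(\PLS)$, I mirror the construction of \autoref{thm: refuting Res lb for PHP is in rwPHP(PLS)}: set $N := |\calR|$ and let $M \leq N/2$ enumerate pairs $(i, s)$ where $s$ is a short (compressed) description of some $\rho \in \calR$ failing to simplify $C_i$ below width $w'$; the function $f : [M] \to [N]$ decodes short descriptions back to standard ones. For each $\rho \in [N]$, the $\PLS$ instance $I_\rho$ is obtained by applying \autoref{thm: Tseitin width lower bound refuter} to $\Pi|_\rho$ at width threshold $w'$. A $\PLS$-solution to $I_\rho$ is one of: an illegal derivation in $\Pi|_\rho$ (also illegal in $\Pi$, output directly to the refuter); an overly wide restricted clause $C_i|_\rho$ (in which case $g_{\rho,i} := (i, \bad(C_i, \rho))$ satisfies $f(g_{\rho,i}) = \rho$); or a sparse cut $S$ of $G_\rho$ (output to the refuter provided it is also sparse in $G$). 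The block-depth $3$ bound follows from combining the block-depth $3$ of \autoref{thm: Tseitin width lower bound refuter} with $O(1)$ additional blocks of access to $\Pi$, exactly as in the $\PHP$ case.

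The principal obstacle is the interaction between the random restriction and the expansion of $G_\rho$: a sparse cut of $G_\rho$ is not automatically a sparse cut of $G$, since restriction only \emph{removes} edges. My plan is to tune parameters so that (i) with high probability over $\rho$, every vertex subset $S$ with $|V|/3 \leq |S| \leq 2|V|/3$ retains boundary at least $e(G)/4 \gg w'$ in $G_\rho$, and (ii) the few ``bad'' $\rho$ violating (i) get folded into the compression count $M$ by introducing a parallel short encoding of the $\approx e(G)$ boundary edges that $\rho$ must have erased to violate (i). With (i) and (ii) in place, any sparse cut of $G_\rho$ returned by $I_\rho$ is automatically sparse in $G$, closing the reduction; the balancing between $p$ and $w'$ is precisely what forces the slackness of the $1.01^{n/d}$ base (rather than a sharper $2^{\Omega(n/d)}$) in the stated bound.
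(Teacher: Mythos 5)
Your overall template is the right one and matches the paper's: restrict, reuse the width refuter of \autoref{thm: Tseitin width lower bound refuter} as the $\PLS$ instances, and encode ``bad'' restrictions compactly to build the purported surjection $f$, exactly as in \autoref{thm: refuting Res lb for PHP is in rwPHP(PLS)}. The genuine gap is in your choice of restriction distribution and in the repair you sketch for it. With independent Bernoulli-$1/2$ restrictions, preservation of expansion is only a high-probability event, and the obstacle you flag is real but your two fixes do not close it. For (i): the theorem only assumes $e(G)\ge n$, so a fixed balanced cut survives (retains $\ge e(G)/4$ edges) except with probability roughly $2^{-cn}$ for a constant $c<1$ coming from Chernoff, while there are up to $2^{n}$ balanced cuts to union over; the union bound fails unless $e(G)\ge Cn$ for a large constant $C$, which is not part of the hypothesis. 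For (ii): folding the expansion-destroying restrictions into the compression does not give a net saving --- to decode such a $\rho$ you must name the violated cut $S$ ($\approx n$ bits), and knowing that $\ge 3e(G)/4$ of its boundary edges are fixed saves only about $3e(G)/4\approx 3n/4$ indicator bits, a net loss when $e(G)\approx n$; moreover these bad restrictions are a property of $(G,\rho)$ alone, not of any clause $C_i$, so they do not fit the $(i,\text{short encoding})$ index format of $f$, and the map $g$ from $\PLS$-solutions back to purported preimages has nothing to point to in this case. Relatedly, with your restriction a sparse cut of $G_\rho$ need not be sparse in $G$, so the sparse-cut branch of the width refuter cannot be forwarded to the size refuter.

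The paper sidesteps all of this by using a different restriction family: fix \emph{exactly} $t:=n/10$ randomly chosen edges (encoded as ordered edge-sequences in a product space $\calR$). Then \emph{deterministically} every cut loses at most $t$ edges, so $e(G')\ge e(G)-t\ge 9n/10$ and \autoref{thm: Tseitin width lower bound} applies to every restricted instance with no probabilistic expansion-preservation, no union bound over cuts, and no second compression scheme; likewise any sparse cut of $G'$ is, after adding back at most $t$ edges, a cut of size $<n$ in $G$ and hence a legitimate output of the size refuter. The compression argument is then run only against the $L=1.01^{n/d}$ clauses (a restriction that fails to kill a width-$\ge e(G)-t$ clause is encoded by choosing, at each of the $t$ steps, a literal outside the surviving clause), and the rest of your reduction --- $f$ decoding $(i,\text{bad encoding})$ pairs, $I_\rho$ from \autoref{thm: Tseitin width lower bound refuter}, $g$ returning $(i,\text{bad encoding of }\rho\text{ w.r.t.\ }C_i)$, block-depth $3$ --- goes through verbatim. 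So the argument is salvageable, but only after replacing your restriction distribution; as written, step (i)/(ii) is where the proof breaks.
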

\begin{proof}
	We follow the proof in \cite{Schoning97}, which (also) uses a random restriction argument and a width lower bound. Our exposition about the random restrictions will be careful and slow (since this is relevant to our reduction to $\rwPHP(\PLS)$), but we will be sketchy about other parts.

	Consider a random restriction as follows. Let $t := n/10$, pick $t$ edges $E' = \{e_1, e_2, \dots, e_t\}$ uniformly at random, and for each edge $e_i$ assign a uniformly random bit to $x_{e_i}$. For an edge $e = (x, y)$, each time we assign $x_e \gets 0$, we do nothing with the function $\tau$; each time we assign $x_e\gets 1$, we flip both $\tau(x)$ and $\tau(y)$. After picking these $t$ edges, we reduced the formula $\Tseitin(G, \tau)$ to the formula $\Tseitin(G', \tau')$, where $G'$ is the graph $G$ with edges in $E'$ removed, and $\tau'$ is the assignment on vertices we obtained at the end. It is easy to see that $e(G') \ge e(G) - t$, hence by \autoref{thm: Tseitin width lower bound}, any resolution refutation for $\Tseitin(G', \tau')$ requires width $\ge e(G) - t$.
	
	It would be helpful to rigorously define the space of random restrictions. Fix an ordering $\prec$ (e.g., the lexicographic one) over the $nd = 2|E|$ literals. A restriction is described by a sequence $(i_0, i_1, \dots, i_{t-1})$ as follows. We first pick the $i_0$-th literal $\ell_0$ according to $\prec$ and set $\ell_0 := 1$. Now we are left with $nd-2$ literals (as both $\ell_0$ and $\bar{\ell_0}$ are set) and we pick the $i_1$-th literal $\ell_1$ among them, according to $\prec$. After setting $\ell_1 := 1$, we are left with $nd-4$ literals and we pick the $i_2$-th one, and so on. Each sequence corresponds to a restriction that sets the values of $t$ edges (but note that each restriction corresponds to $t!$ such sequences). The space of random restrictions is denoted as 
	\[\calR := [nd] \times [nd-2] \times [nd-4] \times \dots \times [nd-2t+2].\]
	
	Let $w := e(G) - t$ and fix a clause $C$ of width $\ge w$. If we know that a restriction $\rho$ does not kill $C$, then there is a more efficient way to describe $\rho$ by a sequence $(j_0, j_1, \dots, j_{t-1})$, as follows. We first pick the $j_0$-th literal $\ell_0$ \emph{among those $nd-w$ ones not in $C$}, according to $\prec$, and set $\ell_0 := 1$. After this round, there are at most $nd-w-1$ remaining literals not in $C$: If $\bar{\ell_0} \in C$ then there are exactly $nd-w-1$ such literals (i.e., excluding $\ell_0$), otherwise there are $nd-w-2$ remaining literals (i.e., excluding $\ell_0$ and $\bar{\ell}_0$). Anyway, we use $nd-w-1$ as an upper bound on the number of literals that we can choose after the first round. In the next round, we choose the $j_1$-th literal $\ell_1$ not in $C$ according to $\prec$, set $\ell_1 := 1$, and now there remains at most $nd-w-2$ literals. In the next round, we choose the $j_2$-th such literal, and so on. The space of ``bad'' restrictions that do not kill $C$ is
	\[\Bad := [nd-w] \times [nd-w-1] \times \dots \times [nd-w-t+1].\]
	
	Given any $C$ and $b \in \Bad$, we can compute $\seq(C, b)\in\calR$ as the ``$b$-th bad restriction corresponding to $C$''.\footnote{Note that some $b\in\Bad$ might not correspond to a valid restriction. We can set $\seq(C, b)$ to be an arbitrary value.} Given any clause $C$ of width $\ge w$ and any restriction $\rho \in \calR$ that does not kill $C$, we can compute an encoding $\bad(C, \rho) \in \Bad$ such that $\seq(C, \bad(C, \rho)) = \rho$. The following calculation corresponds to a ``union bound'' over $L := 1.01^{n/d}$ clauses in the purported resolution proof:
	\begin{align*}
		L\cdot \frac{|\Bad|}{|\calR|} =&\, L \cdot \prod_{i\in [t]}\frac{nd-e(G)+t-i}{nd-2i}\\
		\le&\, L\cdot \mleft(1-\frac{e(G)-3t}{nd-2t}\mright)^t\\
		\le&\, 1.01^{n/d}\cdot \mleft(1-\frac{7}{10d-2}\mright)^{n/10} \le 1/2.
	\end{align*}

	The lower bound argument proceeds as follows. Let $\Pi = (C_0, C_1, \dots, C_{L-1})$ be a purported size-$L$ resolution proof for $\Tseitin(G, \tau)$. By the above union bound, there is a restriction $\rho \in \calR$ that kills every clause in $\Pi$ with width $\ge w$. This restriction shrinks $\Pi$ into $\Pi|_\rho$ which is a width-$w$ resolution proof for $\Tseitin(G', \tau')$, contradicting the width lower bound. Therefore, every resolution proof for $\Tseitin(G, \tau)$ requires more than $1.01^{n/d}$ many clauses.

	Finally, we describe the reduction from $\Refuter(s(\Tseitin\vdash_\Res\bot) < 1.01^{n/d})$ to $\rwPHP(\PLS)$:\begin{itemize}
		\item [($f$)] The function $f: [L]\times \Bad \to \calR$ is defined as $f(i, b) := \seq(C_i, b)$.
		\item [($I_\rho$)] For every $\rho\in \Seq$, we obtain a purported width-$w$ resolution proof $\Pi|_\rho$ for $\Tseitin(G', \tau')$. Every node in $\Pi|_\rho$ can be computed in block-depth $1$ from $\Pi$. Using \Autoref{thm: Tseitin width lower bound refuter}, we reduce the problem of finding an invalid derivation in $\Pi_\rho$ to an $\Iter$ instance $I_\rho$, where each node in $I_\rho$ is computed in block-depth $3$ from $\Pi|_\rho$.
		\item [($g$)] For every $\rho \in \Seq$ and every valid solution $o$ of $I_\rho$, we can compute an index $i\in [L]$ from $o$ such that the $i$-th step in $\Pi|_\rho$ is an illegal derivation. We let $g_{\rho, o} := (i, \Bad(C_i, \rho))$. 
	\end{itemize}

	Suppose that $(\rho, o)$ is any solution to the $\rwPHP(\PLS)$ instance defined above. Let $i\in [L]$ be computed from $o$ as above, then we claim that the $i$-th step of $\Pi$ must be an illegal derivation. Indeed, since $o$ is a solution of $I_\rho$, the $i$-th step of $\Pi|_\rho$ must be illegal. On the other hand, if the $i$-th step of $\Pi$ is not illegal, then $C_i|_\rho$ is a clause of width $\ge w$, and thus
	\[f(g_{\rho, o}) = \seq(C_i, \Bad(C_i, \rho)) = \rho,\]
	contradicting that $(\rho, o)$ is a valid solution to the reduced $\rwPHP(\PLS)$ instance.
\end{proof}

\subsection{Refuters for Random \texorpdfstring{$k$}{k}-CNFs}
\label{sec: random k CNF}

Finally, we show that resolution lower bounds for random $k$-CNFs can be refuted in $\rwPHP(\PLS)$. More precisely, as in \cite{ChvatalS88}, we consider the distribution $\calF(k, n, m)$ over $k$-CNFs with $n$ variables and $m$ clauses where each clause is i.i.d.~chosen from all $\binom{n}{k}2^k$ ordinary clauses of size $k$ over the $n$ variables. (A clause is \emph{ordinary} if there is no variable $x_i$ such that both $x_i$ and $\bar{x}_i$ occur in this clause.) Let $c \ge 1, \eps > 0$ be constants, and $\{F_n\}_{n\in \N}$ be a family of $k$-CNFs, where each $F_n$ is a $k$-CNF over $n$ variables and $cn$ clauses. In the search problem
\[\Refuter(s(F_n) < (1+\eps)^n),\]
we are given query access to a purported resolution refutation $\Pi$ for $F_n$ that contains at most $(1+\eps)^n$ clauses, and our goal is to locate an invalid derivation in $\Pi$.

\begin{theorem}\label{thm: refuters for random k-CNF lower bounds}
	For every large enough positive integer $k$ and $c \ge 0.7\cdot 2^k$, there is a constant $\eps > 0$ such that the following holds. Let $\{F_n\}_{n\in \N}$ be a sequence of random $k$-CNFs where each $F_n$ is independently chosen according to the distribution $\calF(k, n, cn)$. With probability $1$, there is a non-uniform decision tree reduction of block-depth $2$ from the problem $\Refuter(s(F_n) < (1+\eps)^n)$ to $\rwPHP(\PLS)$ that works for all large enough $n$.
\end{theorem}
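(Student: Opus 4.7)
The plan is to instantiate the ``random restriction + width lower bound'' template established for $\PHP$ (\autoref{thm: refuting Res lb for PHP is in rwPHP(PLS)}) and Tseitin formulas (\autoref{thm: Res size lower bound for Tseitin}), now for random $k$-CNFs. The proof proceeds in three conceptual steps: (i) identify a robust width lower bound that survives random restrictions w.h.p.\ over $F_n$, (ii) rephrase the random restriction via a compression argument to obtain a purported surjection $f$, and (iii) combine these into an $\rwPHP(\PLS)$ instance exactly as before.

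First, for the width lower bound, I invoke the classical boundary-expansion argument of Chv\'atal--Szemer\'edi~\cite{ChvatalS88}: for $k$ large and $c \geq 0.7 \cdot 2^k$, there exist constants $\delta, \alpha > 0$ such that with probability $1 - \exp(-\Omega(n))$ over $F_n \sim \calF(k, n, cn)$, every sub-formula of $F_n$ with at most $\delta n$ clauses contains a variable occurring in exactly one of those clauses (``boundary expansion''), and moreover this property is preserved under any restriction touching at most $\alpha n$ variables. By the standard argument of \cite{Ben-SassonW01}, boundary expansion implies a width lower bound $w := \Omega(n)$ for any resolution refutation of $F_n|_\rho$. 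Crucially, the \emph{refuter} for such a width lower bound reduces non-uniformly to $\PLS$ in block-depth $2$ via \autoref{lemma: width refuter upper bound}, with $F_n$ taken as non-uniform advice (the non-uniform advice is needed only to encode $F_n$ and its expansion certificate; the reduction itself queries only $\Pi|_\rho$).

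Second, I instantiate the compression step as in \autoref{thm: Res size lower bound for Tseitin}. Let $\calR$ be the space of size-$t$ restrictions with $t := \alpha n$, enumerated as sequences of literal choices $(j_0,\dots,j_{t-1})$; for a clause $C_i$ of width $\geq w$, any restriction not killing $C_i$ can alternatively be encoded as a sequence in $\Bad$ whose choices avoid the $w$ literals of $C_i$. A routine calculation gives $|\Bad|/|\calR| \leq (1 - w/(2n))^t \leq \exp(-\alpha\delta/2)$, so choosing $\eps > 0$ with $(1+\eps) < \exp(\alpha\delta/2)$ yields $L \cdot |\Bad|/|\calR| \leq 1/2$ for $L := (1+\eps)^n$. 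Setting $M := L \cdot |\Bad|$ and $N := |\calR|$ gives $M \leq N/2$, and $f(i, b) := \seq(C_i, b)$ is the purported surjection. The $\rwPHP(\PLS)$ instance is then assembled verbatim as in the PHP and Tseitin cases: $I_\rho$ is the $\PLS$ instance produced by \autoref{lemma: width refuter upper bound} applied to $\Pi|_\rho$ (clauses truncated to width $w$), and $g_{\rho, ans} := (i, \bad(C_i, \rho))$ where $i$ is the invalid clause index extracted from $ans$. Any valid solution $(\rho, ans)$ with $f(g_{\rho, ans}) \neq \rho$ forces $C_i$ to be genuinely invalid in $\Pi$. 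Each component queries $\Pi$ in block-depth at most $2$, so the overall reduction has block-depth $2$. Finally, the failure probability $\exp(-\Omega(n))$ is summable, so Borel--Cantelli promotes ``w.h.p.~for fixed $n$'' to ``probability $1$ for all sufficiently large $n$.''

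The main obstacle is the \emph{robust} boundary expansion claim: standard $k$-CNF expansion arguments bound the failure probability for $F_n$ itself, but we need the property to survive \emph{arbitrary} restrictions of $\alpha n$ variables. A naive union bound over such restrictions would multiply the failure probability by $4^{\alpha n}$ and destroy the exponential decay, so one instead argues that a random restriction of $F_n$ is distributed like a random $k'$-CNF on $(1-\alpha)n$ variables with effective density $c' \gtrsim (1-\alpha)^{k-1} c$, and then picks $\alpha$ small enough (depending on $k$) so that $c'$ remains above the expansion threshold. This is precisely where the slack $c \geq 0.7 \cdot 2^k$ together with large $k$ is used; the calculation is standard but delicate, and I would largely transcribe it from \cite{ChvatalS88,beame1996simplified}.
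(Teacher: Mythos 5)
Your high-level template (compression argument giving a purported surjection $f$, a width-refuter-to-$\PLS$ instance $I_\rho$ for each restriction, and the standard decoding) matches the paper, as does the Borel--Cantelli-style passage to probability $1$. But there is a genuine gap exactly at the step you defer: the claim that boundary expansion of $F_n$ ``is preserved under any restriction touching at most $\alpha n$ variables'' is false for the restriction space you use (arbitrary size-$t$ partial assignments encoded as literal sequences, as in the Tseitin case). Already a restriction of only $k\le\alpha n$ variables can falsify one clause of $F_n$ outright, so $F_n|_\rho$ has a trivial (width-$0$) refutation; more generally an adversarial $\rho$ can destroy the width lower bound. This matters because of quantifier order: in the $\rwPHP(\PLS)$ reduction the restriction is \emph{not} random --- correctness requires that for \emph{every} $y=\rho$ in the codomain of $f$, the instance $I_\rho$ built from \autoref{lemma: width refuter upper bound} is a sound width-refuter instance, i.e., the width lower bound for $F_n|_\rho$ must actually be true for every $\rho$ in the support (otherwise the reversed $\Iter$ instance hardwired from $\width(F_n|_\rho\vdash C)$ is built on a false premise and its solutions need not decode to invalid derivations of $\Pi$ or to compressible pairs). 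Your fallback --- that a \emph{random} restriction of $F_n$ looks like a random $k'$-CNF of density $c'\gtrsim(1-\alpha)^{k-1}c$ --- only controls the typical $\rho$ and cannot be converted into the needed ``for all $\rho$'' statement by a union bound (there are $2^{\Theta(n)}$ restrictions, and the statement is simply false for some of them).

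This is precisely why the paper does not use arbitrary variable restrictions but the Chv\'atal--Szemer\'edi machinery: property $Q(a,b)$ supplies ``good'' sets $S$ of size $\lfloor bn\rfloor$ with a subset $D(S)$ such that every family of at most $an$ clauses has an SDR disjoint from $D(S)$, and the restriction space consists of pairs $(S,\rho)$ with $S$ good and $\rho\in\{0,1\}^{D(S)}$. The SDR-avoidance guarantees $\mu(\bot)>an/2$ (hence width $\ge an/8$ via property $P(a)$) for \emph{every} assignment $\rho$ to $D(S)$, which is what makes each $I_{(S,\rho)}$ sound; the compression/counting then has to be done over this structured space (type $0$: $|\Vars(C_i)\cap S|$ too small, compressing $S$; type $1$: $\rho$ fails to kill $C_i$, compressing $\rho$ on $\Vars(C_i)\cap D(S)$), with $Q(a,b)$ ensuring at least half of all size-$s$ sets are good so the pigeonhole counting closes. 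Without replacing your restriction family and compression bookkeeping by something of this kind, the reduction as you describe it is not correct, and the part you propose to ``transcribe'' from \cite{ChvatalS88,beame1996simplified} is not a delicate calculation but the missing structural core of the argument.
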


The unsatisfiability of $\{F_n\}_{n\in \N}$ and the resolution lower bounds for $\{F_n\}_{n\in \N}$ are already shown in the seminal work of Chv{\'{a}}tal and Szemer{\'{e}}di \cite{ChvatalS88}. We prove \autoref{thm: refuters for random k-CNF lower bounds} by formalizing their resolution lower bound proofs as decision tree reductions to $\rwPHP(\PLS)$.

The reason that our reduction in \autoref{thm: refuters for random k-CNF lower bounds} is non-uniform is very similar to that in \autoref{subsection: width black-box}. First, it appears infeasible to decide if $(1+\eps)^n$ is indeed a valid resolution size lower bound for the input formula $F_n$. Second, the proofs in \cite{ChvatalS88} involve some objects that appear to be infeasible to compute given $F_n$; however, these objects do not depend on the purported size-$(1+\eps)^n$ resolution refutation, thus can be hardwired in a non-uniform decision tree. It might be possible to obtain a ``uniform version'' of \autoref{thm: refuters for random k-CNF lower bounds} like what we did for Tseitin formulas (\autoref{def: refutation problem for Tseitin width lower bounds}, \autoref{remark: formalization of Tseitin}, \autoref{def: refuter problem for Tseitin size lower bounds}), by completely formalizing \cite{ChvatalS88} in bounded arithmetic. We choose not to do so because we believe that a non-uniform upper bound of $\rwPHP(\PLS)$ already supports our claim that $\rwPHP(\PLS)$ captures the complexity of proving \emph{most} resolution lower bounds; dealing with extra details in \cite{ChvatalS88} would only be distracting.

We assume familiarity with the (quite involved) proof in \cite{ChvatalS88}. In particular, we need the following definitions and theorems:
\begin{itemize}
	\item Fix a $k$-CNF $F$ over $n$ variables and $cn$ clauses. Let $X = \{x_1, x_2, \dots, x_n\}$ denote the set of variables of $F$. The ``structure'' of $F$ can be described by a $k$-uniform (multi-)hypergraph $H$ over the vertex set $X$, where each clause $F_i$ of $F$ corresponds to the hyperedge
	\[E_i := \{x_j \in X: F_i\text{ contains }x_j\text{ or }\bar{x}_j\}.\]
	\item Let $E'$ be a subset of hyperedges in $H$, the \emph{boundary} of $E'$ is the set of all vertices that belong to exactly one hyperedge in $E'$. We say that $H$ has \underline{property $P(a)$} if, for every $m \le an$, every family of $m$ edges has boundary size at least $m/2$.
	\item Let $\calS$ be a family of subsets of $X$ (note that $\calS$ might be a multiset). A \emph{system of distinct representatives} (SDR) of $\calS$ is a mapping from each $S \in \calS$ to an element in $S$ such that different subsets in $\calS$ are mapped to different elements. Alternatively, consider the bipartite graph $(\calS, X)$ such that an edge between $S \in \calS$ and $x_i \in X$ is drawn if and only if $x_i \in S$, then an SDR of $\calS$ is an $\calS$-perfect matching of this bipartite graph (that is, every vertex in $\calS$ is matched).
	\item Let $S$ be a subset of vertices in $H$ of size $s := \lfloor bn\rfloor$. We say that $S$ is \emph{good} if there is a subset $D$ of $S$ with $|S\setminus D|\le (a/32)|S|$ such that every family of at most $an$ edges has an SDR that is disjoint from $D$. We denote this subset as $D(S)$. We say that $H$ has \underline{property $Q(a, b)$} if a random size-$s$ subset $S\subseteq X$ is good with probability at least $1/2$.
	\item \cite[Lemma 3]{ChvatalS88} showed that any hypergraph satisfying certain ``sparsity'' conditions will have properties $P(a)$ and $Q(a, b)$. As a corollary (\cite[Lemma 4]{ChvatalS88}), for every large enough integers $k$ and $c\ge 0.7\cdot 2^k$, there are $a, b > 0$ with $b\le a/8$ such that a random $k$-uniform hypergraph with $n$ vertices and $cn$ hyperedges has properties $P(a)$ and $Q(a, b)$ with probability $\ge 1-n^{-2}$ for large enough $n$.\footnote{If this probability is at least $1-n^{-2}$, then we can argue that with probability $1$ over an infinite family of random $k$-CNFs, our reduction to $\rwPHP(\PLS)$ is correct on all but finitely many input lengths; see the end of the proof of \autoref{thm: refuters for random k-CNF lower bounds}. Although \cite{ChvatalS88} only claimed a probability of $1-o(1)$, their proof actually shows a probability of $1-n^{-\Omega(k)}$ where the big $\Omega$ hides some absolute constant. This is at least $1-n^{-2}$ when $k$ is large enough; we suspect that our results can be extended to all $k\ge 3$ via a more careful argument.} %
\end{itemize}

Now we outline the strategy of \cite{ChvatalS88}. Let $F = F_n$ be a $k$-CNF over $n$ variables and $cn$ clauses whose underlying hypergraph $H$ satisfies $P(a)$ and $Q(a, b)$. We first choose a ``special pair'' $(S, \rho)$ where $S$ is a random subset of $s := \lfloor bn \rfloor$ vertices and $\rho \in \{0, 1\}^{D(S)}$ is a uniformly random restriction on variables in $D(S)$. Then we use a \emph{random restriction} argument to reduce the size lower bound to a \emph{width lower bound}:
\def\Vars{\mathrm{Vars}}
\begin{itemize}[align=left]
	\item [{\bf Random restriction:}] Let $C$ be any clause in the purported resolution refutation for $F$ such that the width of $C$ is at least $an/8$. With probability $1-2^{-\Omega(n)}$ over the choice of $S$, we have $|\Vars(C) \cap S| \ge as/16$, where $\Vars(C)$ denotes the set of variables contained in $C$. Since $|S\setminus D(S)| \le as/32$, it follows that $|\Vars(C) \cap D(S)| \ge as/32$, hence the probability over $\rho$ that $C$ is not killed by $\rho$ is at most $2^{-as/32}$. A union bound over all $C\in \Pi$ implies that with high probability over $(S, \rho)$, every clause of width $\ge an/8$ in $\Pi$ is killed by $\rho$.
	\item [{\bf Width lower bound:}] Now we are left with a purported resolution refutation $\Pi|_\rho$ of width less than $an/8$ for the statement $F|_\rho$. For a clause $C \in \Pi$, let $\mu(C)$ denote the minimum number of clauses from $F$ that logically implies $C$ under $\rho$. (That is, for any assignment extending $\rho$, if all these clauses are satisfied, then $C$ is also satisfied.) Every subset of $\le an/2$ clauses $F'\subseteq F$ can be satisfied by some assignment on $X\setminus D(S)$ (indeed, we can simply choose an SDR for $F'$ that is disjoint from $D(S)$, and fix this SDR), hence $\mu(\bot) > an/2$. On the other hand, every clause in $F$ has $\mu$ value at most $1$. Let $C' \in \Pi|_\rho$ be the first clause in $\Pi|_\rho$ such that $\mu(C') > an/2$ (recall that $\bot$ is the last clause in $\Pi|_\rho$). One can use a classical argument to show that $an/2 < \mu(C') \le an$, i.e., the smallest subset of clauses $F'\subseteq F$ that logically implies $C'$ has size between $an/2$ and $an$. Since $H$ satisfies $P(a)$ and $|F'| \le an$, the boundary of $F'$ contains at least $|F'|/2 \ge an/4$ variables. It can be shown that $C'$ contains every variable in the boundary of $F'$ but not in $S$, hence $w(C') \ge an/8$, a contradiction.
\end{itemize}

Now we are ready to prove \autoref{thm: refuters for random k-CNF lower bounds}.
\begin{proof}[Proof of \autoref{thm: refuters for random k-CNF lower bounds}]
	Let $F = F_n$ be the random $k$-CNF and $H = H_n$ be the underlying hypergraph for $F$. Let $a, b > 0$ be constants that arise from \cite[Lemma 4]{ChvatalS88}, we assume that $H$ has properties $P(a)$ and $Q(a, b)$ (this assumption will be justified at the end of the proof). Our reduction needs the following non-uniform advice $\{S_i\}, \{D_i\}, \{\calR_{i, \rho}\}$ (of course, they only depend on $F$ and is independent of the purported resolution refutation):
	\begin{itemize}
		\item A list of subsets $S\subseteq X$ with size $s := \lfloor bn\rfloor$ that are good. Since $H$ has property $Q(a, b)$, there are at least $N_{\sf good} := \binom{n}{s}/2$ such subsets and we only need to encode the first $N_{\sf good}$ ones. For each $i\in [N_{\sf good}]$, denote the $i$-th good subset as $S_i$, we also need the subset $D_i \subseteq S_i$ of size $\ge (1-a/32)s$ such that every family of at most $an$ edges has an SDR disjoint from $D_i$.
		\item For each index $i$ and each restriction $\rho \in \{0, 1\}^{D_i}$, we compute the subformula $F|_\rho$. The above width lower bound argument (along with properties $P(a)$ and $Q(a, b)$) implies that $F|_\rho$ requires resolution width $>an/8$. Invoking \autoref{lemma: width refuter upper bound}, we obtain a non-uniform decision tree reduction from $\Refuter(w(F|_\rho) \le an/8)$ to $\PLS$ with block-depth $2$, which we denote as $\calR_{i, \rho}$.
	\end{itemize}

	Let $\Pi$ be a purported resolution refutation for $F$ consisting of at most $L := (1+\eps)^n$ clauses. Now we describe our reduction from $\Refuter(s(F_n) \le (1+\eps)^n)$ to $\rwPHP(\PLS)$: 
	\begin{itemize}
		\item [($f$)] The function $f$ takes as inputs $i \in [L]$, $type \in \{0, 1\}$, and $w \in [\binom{n}{s} \cdot 2^{(1-a/32)s}\cdot 2^{-c'n}]$, where $c' > 0$ is a small enough constant depending on $a$ and $b$. Essentially, it treats $(i, type, w)$ as the compression of a bad ``special pair'' $(S, \rho)$ (where $S$ is a good size-$s$ subset and $\rho$ is an assignment over $D(S)$) and decompresses it. We start by checking that $w(C_i) \ge an/8$; if this is not the case then $f$ outputs $\bot$. Next:\begin{itemize}
			\item If $type = 0$, then this means $|\Vars(C_i)\cap S| < as/16$. Recall that if $|S| = s$ is chosen uniformly at random, then the probability that $|\Vars(C_i) \cap S| < as/16 \le 0.5\cdot |C_i|s/n$ should be at most $2^{-c'n}$ for some small enough constant $c' > 0$. Hence, $(S, \rho)$ can be compressed into $(\log\binom{n}{s} - c'n) + |D(S)|$ bits. We treat $w$ as this compression and recover $(S, \rho)$ from $w$.
			\item If $type = 1$, then $|\Vars(C_i)\cap S| \ge as/16$ but $C_i$ is not killed under $\rho$. In this case, the values of $\rho$ over $\Vars(C_i) \cap D(S)$ can be inferred from $C_i$. Since $|\Vars(C_i) \cap D(S)| \ge as/16-as/32 = as/32$, this provides us a way to compress $(S, \rho)$ into $\log\binom{n}{s} + (|D(S)| - as/32) \le \log\binom{n}{s} + |D(S)| - c'n$ bits. Again, we treat $w$ as this compression and recover $(S, \rho)$ from $w$.
		\end{itemize}
		Now that we obtained $(S, \rho)$, we can find an index $j\in [N_{\sf good}]$ such that $S = S_j$ (using non-uniformity). If such $j$ does not exist, then $f$ outputs $\bot$; otherwise $f$ outputs $(j, \rho)$.

		Hence we have $f: [L]\times \{0, 1\} \times [\binom{n}{s} \cdot 2^{(1-a/32)s}\cdot 2^{-c'n}] \to [N_{\sf good}] \times \{0, 1\}^{(1-a/32)s}$. (If $f$ outputs $\bot$ then we can assume that it outputs a default value, say $(0, 0^{(1-a/32)s})$, instead.) Recall that $L = (1+\eps)^n$ and $N_{\sf good} = \binom{n}{s}/2$, which means if $\eps > 0$ is small enough then
		\begin{equation}\label{eq: dwPHP used in random kCNF}
			\frac{2L\cdot \binom{n}{s}\cdot 2^{(1-a/32)s}\cdot 2^{-c'n}}{N_{\sf good}\cdot 2^{(1-a/32)s}} \le 2^{-\Omega(n)} \ll 1,
		\end{equation}
		hence $f$ is indeed shrinking. Given an input $(i, type, w)$, its $f$ value can be computed by a non-uniform decision tree of block-depth $1$.
		\item [($I_{j, \rho}$)] Given $j\in [N_{\sf good}]$ and $\rho \in \{0, 1\}^{(1-a/32)s}$, we compute a $\PLS$ instance $I_{j, \rho}$ as follows. Abusing notation, we also use $\rho$ to denote the restriction that equals to $\rho$ on $D_j$ and does not restrict any variable outside $D_j$. Let $\Pi|_\rho$ denote the restriction of $\Pi$ over $\rho$, then each clause of $\Pi|_\rho$ can be computed in block-depth $1$ from $\Pi$. Then we apply the reduction $\calR_{i, \rho}$ on $\Pi|_\rho$ to obtain the $\PLS$ instance $I_{j, \rho}$.
		\item [($g$)] Let $j \in [N_{\sf good}]$ and $\rho \in \{0, 1\}^{(1-a/32)s}$. Given a valid solution $o$ of $I_{j, \rho}$, we can compute an index $i\in [L]$ from $o$ such that the $i$-th step in $\Pi|_\rho$ is an illegal derivation. As in the definition of $f$, we can (assume $w(C_i) \ge an/8$ and) compress $(j, \rho)$ as $(i, type, w)$; then we set $g_{(j, \rho), o} = (i, type, w)$. If $f(i, type, w) \ne (j, \rho)$, then it must be the case that the $i$-th step in $\Pi$ is already incorrect (instead of the case that $w(C_i)$ is too large).
	\end{itemize}

	The above reduction is correct as long as $H$ has properties $P(a)$ and $Q(a, b)$, and its block-depth is $2$.
	
	It remains to show that our reduction is correct with probability $1$. In fact, for each $N\ge 1$, the probability that for every $n\ge N$, $H_n$ has properties $P(a)$ and $Q(a, b)$ is at least
	\[\prod_{n\ge N}(1-n^{-2}) = \frac{N-1}{N}.\]
	It follows that with probability $1$ over the family $\{F_n\}_{n\in \N}$, all but finitely many $H_n$ has properties $P(a)$ and $Q(a, b)$. In this case, our reduction will be correct on all but finitely many input lengths.
\end{proof}

\subsection{Open Problems: What We \texorpdfstring{\emph{Failed}}{Failed} to Formalize}\label{sec: what we failed to formalize}

One interesting problem left open by this work is whether the general size-width trade-offs in \cite{Ben-SassonW01} can be proved in $\rwPHP(\PLS)$. Ben-Sasson and Wigderson showed that for any unsatisfiable $k$-CNF $F$, if $F$ requires resolution width $w$ to refute, then $F$ also requires resolution size $2^{\Omega(w-k)^2/n}$ to refute. This naturally leads to the following conjecture:

\begin{conjecture}[Informal]\label{informal conjecture on BSW01}
	Let $F$ be an unsatisfiable $k$-CNF with resolution width $> w_F$ and let $s_F := 2^{\Omega(w_F-k)^2/n}$. Let $\calP$ denote the problem $\Refuter(w(F\vdash_\Res\bot) \le w_F)$, then there is an efficient decision-tree reduction from $\Refuter(s(F\vdash_\Res\bot) \le s_F)$ to $\rwPHP(\calP)$. In particular, there is always an efficient non-uniform decision tree reduction from $\Refuter(s(F\vdash_\Res\bot) \le s_F)$ to $\rwPHP(\PLS)$.
\end{conjecture}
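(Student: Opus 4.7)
The plan is to adapt the ``random restrictions + width lower bound'' template to formalize the Ben-Sasson--Wigderson size-width tradeoff. Set the two parameters $b := t := \Theta(\sqrt{n \ln s_F})$ so that $b + t + k < w_F$ by the choice $s_F = 2^{\Omega((w_F - k)^2 / n)}$. Given a purported resolution refutation $\Pi$ of $F$ with at most $s_F$ clauses, the plan is to use $\dwPHP(\PV(\alpha))$ to find a ``good'' restriction $\rho$ of $t$ literals (each set to $1$) that kills every clause of width $\ge b$ in $\Pi$, then to invoke the width refuter $\calP$ on a purported width-$<w_F$ refutation of $F$ synthesized from $\Pi|_\rho$ via the BSW combining algorithm.

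The $\dwPHP$ step follows the compression arguments of \autoref{thm: refuting Res lb for PHP is in rwPHP(PLS)} and \autoref{thm: Res size lower bound for Tseitin}. Encode the space $\calR$ of restrictions as ordered sequences of $t$ literals from the $2n$ literals of $F$; if $\rho \in \calR$ fails to kill a specific wide clause $C_i \in \Pi$ (width $\ge b$), then every literal of $\rho$ lies outside $C_i$, so $\rho$ admits a shorter encoding drawn from the $2n - b$ literals outside $C_i$. Because $tb > 2n \ln(2 s_F)$, the compression map $f \colon [s_F] \times [\text{short encodings}] \to \calR$ has image of size at most $|\calR|/2$, so $\dwPHP(\PV(\alpha))$ yields some $\rho \notin \Range(f)$; such a $\rho$ kills every clause of width $\ge b$ in $\Pi$, making $\Pi|_\rho$ a purported width-$<b$ refutation of $F|_\rho$.

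To invoke $\calP$, which expects a refutation of $F$, we run the BSW combining algorithm on $\Pi|_\rho$: descend a binary tree of depth $t$ over the variables in $V(\rho)$, pick a fat literal at each internal node, lift the two sub-refutations by adjoining that literal to every clause, and resolve. This produces a purported width-$<w_F$ refutation $\Pi''_\rho$ of $F$ each of whose clauses is computable from $\Pi$ via a small-depth decision tree. An invalid derivation in $\Pi''_\rho$ returned by $\calP$ decodes, via the map $g_{\rho, ans}$, either to an actual illegal step in $\Pi$, or to a wide clause of $\Pi$ not killed by $\rho$; in the latter case, $g_{\rho, ans}$ outputs the wide-clause index together with its short encoding of $\rho$, so that $f(g_{\rho, ans}) = \rho$. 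Hence any $\rwPHP(\calP)$-solution $(\rho, ans)$ with $f(g_{\rho, ans}) \ne \rho$ pinpoints an actual illegal derivation in $\Pi$.

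The main obstacle is the BSW combining step. Its recursion needs narrow refutations of $F|_{\rho'}$ for every sub-restriction $\rho' \in \{0, 1\}^{V(\rho)}$, and the fat literal at each recursion level must be chosen \emph{adaptively} from the current restricted proof; a non-adaptive choice of $V(\rho)$ is provably insufficient, since for any fixed set $V'$ of $t$ variables one can always exhibit some $\rho' \in \{0,1\}^{V'}$ that fails to kill a given wide clause. A single $\dwPHP$ call only secures the good $\rho$ at the root, so the natural implementation introduces nested $\dwPHP$ applications along each of the $2^t$ paths of the recursion tree. The expected fix is to collapse these $O(t)$ nested applications into one $\dwPHP(\PV(\alpha))$ via the amplification of $\rwPHP$ outlined in \autoref{sec: amplification for wPHP}, but verifying this is the technically most delicate part. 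The black-box version of the conjecture uses the same construction; once the iterated $\dwPHP$ is collapsed, \autoref{lemma: width refuter upper bound} provides a non-uniform $\PLS$ reduction for the width refuter of $F$, yielding the claimed $\rwPHP(\PLS)$ upper bound.
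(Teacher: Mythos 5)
You should first note that the paper does not prove this statement at all: it is stated as \autoref{informal conjecture on BSW01} in \autoref{sec: what we failed to formalize}, precisely because the authors could not formalize the Ben-Sasson--Wigderson size-width tradeoff in $\T^1_2(\alpha) + \dwPHP(\PV(\alpha))$; the obstacle they name is the averaging step of \cite[Theorem 3.5]{Ben-SassonW01}, which seems to need $\APC_2$-style approximate counting of fat clauses. So there is no proof to match yours against, and the question is whether your proposal closes that gap. It does not: the step you yourself flag as ``the technically most delicate part'' is exactly the open obstacle, and the suggested fix does not work as described. A single $\dwPHP$ application gives one good restriction $\rho$, which only yields a narrow refutation of $F|_\rho$ (for which no width lower bound is available for general $F$); to feed the width refuter $\calP$ for $F$ you must run the BSW combining recursion, and that recursion needs a good restriction (equivalently, a separate $\dwPHP$ witness) on each of the $2^t$ adaptively generated branches. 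The amplification result of \autoref{sec: amplification for wPHP} cannot collapse this: it composes a fixed chain of stretching maps whose retractions are computed in a class closed under Turing reductions, whereas here the maps at deeper levels depend on witnesses obtained at earlier levels, and the full family of witnesses you would need is an exponential-size (second-order) object that no single $\dwPHP$ instance can supply, nor can it be expressed in a $\Sigma_1^b(\alpha)$ induction hypothesis.

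A second unjustified claim is that each clause of the combined proof $\Pi''_\rho$ ``is computable from $\Pi$ via a small-depth decision tree.'' In the BSW combining step the literal chosen at an internal node is determined by a global statistic of the current restricted proof (a literal occurring in a $\ge b/2n$ fraction of its fat clauses), so computing even a single clause of $\Pi''_\rho$ requires counting over exponentially many clauses of $\Pi$ --- this is the approximate-counting obstruction the authors cite, now reappearing inside your reduction rather than being avoided. It also cannot be pushed into non-uniform advice, since the choices depend on the input proof $\Pi$ itself. The parts of your proposal that do go through (the compression encoding of restrictions, the union-bound computation $tb > 2n\ln(2s_F)$, and the decoding of an $\rwPHP$ solution into either an illegal step of $\Pi$ or a surviving fat clause) are faithful adaptations of \autoref{thm: refuting Res lb for PHP is in rwPHP(PLS)} and \autoref{thm: Res size lower bound for Tseitin}, but those theorems succeed only because $\PHP$ and Tseitin are self-reducible under the chosen restrictions, so the width lower bound applies directly to $F|_\rho$ and no combining step is needed. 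For general $F$ that escape hatch is unavailable, and your plan leaves the conjecture open.
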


Roughly speaking, one obstacle against proving \autoref{informal conjecture on BSW01} is that the averaging argument used in the proof of \cite[Theorem 3.5]{Ben-SassonW01} seems to rely on ``$\APC_2$-style'' \cite{Jerabek-APC2} approximate counting: one needs to estimate the number of ``fat'' clauses up to an $(1+\eps)$-multiplicative factor. Therefore, we have been unable to formalize the proof of \cite[Theorem 3.5]{Ben-SassonW01} in $\T^1_2 + \dwPHP(\PV)$ where only ``$\APC_1$-style'' \cite{Jerabek-APC1} approximate counting is available.

We also leave open the complexity of proving resolution lower bounds by combining monotone circuit lower bounds \cite{Razborov_monotone,AlonB87, Haken95} with feasible interpolation \cite{razborov1995unprovability, Krajicek97, Pudlak97}. To formalize Razborov's approximation method \cite{Razborov_monotone}, it seems that we need to iteratively define exponentially many set families (one for each node in the resolution proof) and apply the sunflower lemma \cite{erdos1960intersection, ALWZ-sunflower} to each of them. It is unclear to us how to formalize such arguments in $\T^1_2 + \dwPHP(\PV)$. (See also \cite{GGKS} who used lifting techniques to prove monotone circuit lower bounds and resolution lower bounds.)

We showed in \autoref{thm: black-box model PLS-completeness for width} that the refuter problem for every true resolution width lower bound is $\PLS$-complete under non-uniform reductions. It would be very interesting to see whether the size lower bound analog holds or not. We propose the following conjecture (which is stronger than the non-uniform version of \autoref{informal conjecture on BSW01}):
\begin{conjecture}[Informal]\label{informal conjecture: every Resolution size lb is in rwPHP(PLS)}
    Let $F$ be an unsatisfiable CNF that requires resolution size $\ge s_F$ to refute. Then the problem $\Refuter(s(F\vdash_\Res\bot) < s_F)$ is $\rwPHP(\PLS)$-complete under non-uniform decision tree reductions.
\end{conjecture}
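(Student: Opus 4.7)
The hardness direction comes essentially for free from \autoref{thm: rwPHP(PLS)-hardness of refuters}, whose proof does not rely on any specific structure of the hard tautology and already produces an efficient reduction whenever $s_F$ is mildly larger than the $\rwPHP(\PLS)$ instance size. So the substantive content of the conjecture is the \emph{universal non-uniform} $\rwPHP(\PLS)$ upper bound, and my plan is to mimic the ``random restriction + width lower bound'' template of \autoref{sec: more upper bounds}, but instantiated non-uniformly so that it applies to every hard $F$, precisely paralleling how \autoref{lemma: width refuter upper bound} made the width story work universally in the non-uniform model.

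Concretely, given an unsatisfiable $F$ with $s(F \vdash_\Res \bot) > s_F$, I would first invoke a size-width tradeoff of Ben-Sasson--Wigderson type to obtain a width parameter $w \approx \sqrt{n \log s_F}$ and, non-uniformly hardwired into the reduction, a distribution $\calR$ of restrictions such that (i) $F|_\rho$ still requires resolution width $> w$ for every $\rho \in \calR$, and (ii) most $\rho \in \calR$ simplify any fixed purported proof $\Pi$ of size $s_F$ to a width-$w$ proof of $F|_\rho$. Property (ii) would be realized by a compression map $f$ sending pairs $(i, \rho')$, where $i$ indexes a purportedly wide clause $C_i \in \Pi$ and $\rho'$ is a short description of a $\rho$ that fails to kill $C_i$, to the standard encoding of $\rho$, with $|\Domain(f)| < |\Range(f)| = |\calR|$. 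Following the template of \autoref{thm: refuting Res lb for PHP is in rwPHP(PLS)}, this $f$ plays the role of the ``surjection'' of the $\rwPHP(\PLS)$ instance, the $\PLS$ components $\{I_\rho\}$ come from \autoref{lemma: width refuter upper bound} applied to $\Pi|_\rho$, and $\{g_{\rho, ans}\}$ decodes an illegal derivation in $\Pi|_\rho$ back to a compressed description of $\rho$; a valid $\rwPHP(\PLS)$ answer then pinpoints an illegal derivation in $\Pi$ itself.

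The central obstacle is exactly the one identified in \autoref{sec: what we failed to formalize}: the standard proof of BSW uses an averaging/iterative argument on the number of surviving fat clauses, which seems to require multiplicative approximate counting of the $\APC_2$ flavor unavailable in $\T^1_2 + \dwPHP(\PV)$. The plan is to replace iteration by a \emph{single-shot} compression: for each clause $C_i$ of width $\ge w$ and each restriction $\rho$ that fails to reduce $C_i$ to width below some threshold, describe $\rho$ by writing down (a) the positions of the live variables of $\rho$ outside $\mathrm{Vars}(C_i)$ and (b) a small correction for those inside, aiming for a saving of at least $\log(2 s_F)$ bits relative to a generic $\rho \in \calR$. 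For product-type distributions $\calR$ the hope is that the saving is precisely $\Omega((w-k)^2/n)$ bits per clause, recovering the BSW exponent. If this direct encoding falls short of $\log(2s_F)$, a partial weakening of the conjecture (to a worse exponent) would already follow, and one could then try to recover the full exponent by conditioning on a non-uniformly chosen ``hard subformula'' of $F$ and running the compression relative to that structure.
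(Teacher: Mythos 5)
The statement you are proving is not a theorem of the paper at all: it is \autoref{informal conjecture: every Resolution size lb is in rwPHP(PLS)}, stated in \autoref{sec: what we failed to formalize} precisely as an open problem, and the paper offers no proof of it. Your proposal correctly observes that the hardness direction is already supplied by \autoref{thm: rwPHP(PLS)-hardness of refuters} (which indeed does not depend on the choice of hard tautology), so the whole weight rests on the universal non-uniform membership in $\rwPHP(\PLS)$ — and that part of your argument is a plan, not a proof. The plan assumes, as hardwired non-uniform advice, a restriction family $\calR$ satisfying simultaneously (i) $F|_\rho$ still requires width $>w$ for \emph{every} $\rho\in\calR$, and (ii) $\calR$ has enough entropy that any restriction failing to shrink a fixed fat clause can be compressed by more than $\log(2s_F)$ bits. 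For the specific formulas treated in \autoref{sec: more upper bounds} (PHP, Tseitin, random $k$-CNFs, XOR-lifts) such families exist because the restrictions are tailored to the formula's self-reducible structure: a restricted PHP is a smaller PHP, a restricted Tseitin instance is Tseitin on a still-expanding graph, etc. For an arbitrary unsatisfiable $F$, no such family is known to exist, and nothing in your sketch establishes it; note that non-uniformity does not help here, since the issue is existence, not computability.

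The deeper reason this is exactly where the paper gets stuck is that the general size-width tradeoff of Ben-Sasson and Wigderson does not use a fixed, proof-independent restriction family at all: it restricts variables one at a time, each chosen \emph{adaptively} as a literal occurring in many fat clauses of the given proof $\Pi$, and its analysis tracks the (multiplicative) decrease in the number of surviving fat clauses across $\Theta(\sqrt{n\log s_F})$ rounds. That iterative counting is precisely the ``$\APC_2$-style'' step the paper reports being unable to carry out in $\T^1_2+\dwPHP(\PV)$ (see the discussion of \autoref{informal conjecture on BSW01}), and replacing it by a ``single-shot'' compression against a product-type $\calR$ is the step you would have to invent — your own text concedes the saving may fall short of $\log(2s_F)$ and retreats to ``a partial weakening'' or a ``hard subformula'' heuristic. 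So the proposal reproduces the known template of \autoref{thm: refuting Res lb for PHP is in rwPHP(PLS)} plus \autoref{lemma: width refuter upper bound}, restates the obstacle the authors already identify, and leaves the genuinely open step (existence of a hardness-preserving, compressible restriction family for general $F$, or some substitute for the iterative BSW counting) unresolved. As it stands, this is a research program, not a proof of the conjecture.
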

(Note that the \emph{average-case} version of \autoref{informal conjecture: every Resolution size lb is in rwPHP(PLS)}, where $F$ is a random $k$-CNF and $s_F = 2^{\Omega(n)}$, is already proved in \autoref{sec: more rwPHP(PLS) upper bounds}, by formalizing the resolution size lower bounds of \cite{ChvatalS88}.)

We end this subsection by mentioning a subtle technical issue in our proofs. There are two natural properties in the completeness of resolution (i.e., resolution can prove every true statement within size $2^n$): the proof does not require weakening, and it avoids producing duplicate clauses. However, in our current $\PLS$-hardness of refuting resolution width lower bounds and $\rwPHP(\PLS)$-hardness of refuting resolution size lower bounds, the resolution proofs produced in our reduction rely on both weakening rules and duplicated clauses. This raises an open question: What is the complexity of the corresponding refuter problems if the proofs are restricted from using either weakening rules or duplicate clauses?

\setlist[description]{leftmargin=\parindent,labelindent=\parindent}
    
    \newcommand{\probP}{\mathcal{P}}
    \newcommand{\probQ}{\mathcal{Q}}
    \newcommand{\probR}{\mathcal{R}}
    \newcommand{\probS}{\mathcal{S}}
    \newcommand{\pAss}{\rho}
    \newcommand{\ue}{=^U}
    \newcommand{\uleq}{\leq^U}
    
    \newcommand{\FWLB}{\mathcal{F}_{\mathsf{wLB}}^w}
    \newcommand{\FSLB}{\mathcal{F}_{\mathsf{sLB}}^L}
    \def\criEPHP{\mathrm{cri_EPHP}}
    \def\criTs{\mathrm{cri_Tseitin}}
    
    \newcommand{\Rftdm}[4]{\textnormal{\textsc{Refuter}}_{#3,#4}(#1 \rightarrow #2)}

\section{Applications}\label{sec: applications}

    \subsection{Proof Complexity of Proof Complexity Lower Bounds}\label{sec: proof complexity of proof complexity lower bounds}
    In this subsection, we translate our $\TFNP$ upper bounds for the refuter problems into \emph{proof complexity upper bounds} for \emph{proof complexity lower bounds}, showing that resolution lower bounds can actually be proved in weak proof systems! In particular, we show that low-width resolution (itself) can prove lower bounds on resolution width (\autoref{thm: low width res prove res LB}), while low-width \emph{random resolution} (as defined in \cite{BussKT14, PudlakT19}) can prove resolution size lower bounds (\autoref{thm: low width rRes prove size lb}).\footnote{More precisely, we use \autoref{lemma: width refuter upper bound} to show that low-width resolution can prove \emph{every} resolution width lower bound that is true, and use \autoref{thm: refuters for random k-CNF lower bounds} to show that low-width random resolution can prove \emph{most} resolution size lower bounds.} 
    This stands in stark contrast to the results proven in \cite{AtseriasM19, Garlik19, RezendeGNPR021} that resolution cannot prove size lower bounds against itself.

    \paragraph{Formalization of proof complexity lower bounds as CNFs.} Suppose a family of formulas $\calF = \{F_n\}$ does not have a width $w_F$ resolution refutation (i.e., $w(\calF \vdash_\Res\bot) > w_F$). Then we can transform the refuter problem $\Refuter(w(\calF\vdash_{\Res} \bot) \le w_F)$ into a family of unsatisfiable CNFs $\FWLB$ using via false clause search problem (\autoref{equ: search F}). That is, an unsatisfiable CNF $F^w_{\sf wLB}$ in the family $\FWLB$ is defined as follows:\begin{itemize}
        \item The input of $F^w_{\sf wLB}$ is a purported length-$L$ resolution refutation for $F_n$ represented as a list of nodes $C_0, C_1, \dots, C_{L-1}$ and each node $C_i$ can be encoded in $O(w_F\log n)$ bits.
        \item Each potential solution $sol$ of the refuter problem can be verified by a decision tree of block-depth $3$, hence they can be turned into a CNF $C_{sol}$ of width $O(w_F\log n)$. $F^w_{\sf wLB}$ is simply the conjunction of these CNFs.
    \end{itemize}
    We can similarly transform a resolution \emph{size} lower bound $s(\calF \vdash_\Res \bot) > L$ into a family of unsatisfiable CNFs $\FSLB$ via the refuter problem $\Refuter(s(\calF\vdash_{\Res} \bot) \le L)$. The only difference is that each node consists of an (unbounded-width) clause and thus is encoded in $O(n + \log L)$ bits.
    
    It is easily seen that $\FWLB$ are CNFs of width $O(w_F\log n)$ and $\FSLB$ are CNFs of width $O(n+\log L)$. (When $L = 2^{n^{\Omega(1)}}$, these width parameters are $\polylog(L)$ and can be thought of as ``efficient''.)

\begin{mdframed}[hidealllines=true,backgroundcolor=gray!10,skipabove=0.3em,skipbelow=-0.4em,innertopmargin=0]
	\small
\begin{remark}[Comparison with previous formalizations]\label{remark: formalization of resolution lower bounds}
    Similar formalizations of resolution lower bound statements have also appeared in \cite{AtseriasM19, Garlik19, RezendeGNPR021}. The biggest difference between these formalizations is that in \cite{RezendeGNPR021}, the predecessors of each node are represented in binary and as $O(\log N)$ bits; while in \cite{AtseriasM19, Garlik19}, the predecessors are represented in unary and we have tables $L[i, j]$ and $R[i, j]$ denoting whether node $j$ is a predecessor of node $i$. Note that in the unary representation, it requires an axiom of width $L$ to express that every node $u$ has at least one predecessor $L[u]$ and at least one predecessor $R[u]$. Thus it is impossible to prove resolution width lower bounds in resolution width $O(w\log N) \ll L$. Therefore, we choose to use the binary formalization as in \cite{RezendeGNPR021}.

    The formalization in \cite{RezendeGNPR021} allows \emph{disabled} nodes in the resolution proof. Our proof complexity upper bounds hold regardless of whether such nodes are allowed in the formalization.
\end{remark}
\end{mdframed}

    \paragraph{Low-width resolution can prove resolution width lower bounds.} First, we show that:

    \begin{theorem}\label{thm: low width res prove res LB}
        For every family of unsatisfiable CNFs $\calF$, if $w(\calF \vdash_\Res\bot) > w_F$, then $w(\FWLB \vdash_\Res\bot) \leq O(w_F\log N)$.
    \end{theorem}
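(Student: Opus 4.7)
The plan is to combine two results already stated in the excerpt: the universal membership of width refuters in $\PLS$ (\autoref{lemma: width refuter upper bound}) with the folklore characterization of low-width resolution by $\PLS$ via false clause search (\autoref{thm: pls characterize resolution}). The bridge between the two is the observation that $\FWLB$ is, by construction, exactly the CNF encoding of the refuter problem in question: by definition of the false clause search translation (\autoref{equ: search F}), a falsified clause of $F^w_{\sf wLB}$ on input $\Pi$ is precisely a witness to $\Pi$ containing an invalid derivation, so $\SearchCNF(\FWLB)$ and $\Refuter(w(\calF\vdash_\Res\bot) \le w_F)$ coincide as $\TFNP^\dt$ problems.

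Given this identification, the argument is essentially one line: \autoref{lemma: width refuter upper bound} provides a (non-uniform) decision-tree reduction from $\Refuter(w(\calF\vdash_\Res\bot) \le w_F)$ to $\Iter$, hence $\SearchCNF(\FWLB)\in\PLS$; then \autoref{thm: pls characterize resolution} converts this $\PLS$ membership into a narrow resolution refutation of $\FWLB$. The only real work is to quantify the width carefully. The reduction of \autoref{lemma: width refuter upper bound} has block-depth $2$, and each block corresponds to one node of the purported refutation, which is encoded in $b = O(w_F\log n)$ bits. Tracing through the standard translation behind \autoref{thm: pls characterize resolution} (querying a constant number of blocks becomes a low-width resolution derivation whose width is proportional to the total number of queried bits), the resulting refutation of $\FWLB$ has width $O(b) = O(w_F\log n)$. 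Since $N = L\cdot O(w_F\log n)$ and $L\ge 1$, we have $\log n = O(\log N)$, and so the width is $O(w_F\log N)$ as claimed.

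The main obstacle I anticipate is purely bookkeeping: writing down the translation from a block-depth-$d$ decision-tree reduction to $\Iter$ into a width-$O(bd)$ resolution refutation of the associated $\SearchCNF$ CNF with explicit constants. In particular one must check that the ``disabled node'' convention in the formalization (\autoref{remark: formalization of resolution lower bounds}) does not inflate the width, that the width-$O(w_F\log n)$ axioms of $\FWLB$ (which encode the validity of each single node) are themselves available as initial clauses, and that the induction-on-potential argument underlying the $\PLS$-to-resolution translation only resolves clauses that mention at most two blocks at a time. None of these issues is deep; they are simply the kind of careful clause counting typically absorbed into the statement of \autoref{thm: pls characterize resolution}. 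Once those details are in place, the theorem follows immediately, with no need to reprove any resolution width lower bound or even to know which $w_F$ is correct for $\calF$---the non-uniformity is already absorbed inside \autoref{lemma: width refuter upper bound}.
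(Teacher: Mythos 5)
Your proposal is correct and takes essentially the same route as the paper: the paper itself states that this theorem follows from \autoref{lemma: width refuter upper bound} combined with \autoref{thm: pls characterize resolution} (precisely your composition, using that $\SearchCNF(\FWLB)$ is by construction the width-refuter problem), and its displayed proof merely unfolds that composition into an explicit Prover--Delayer strategy with memory $O(w_F\log N)$ for the sake of intuition. The bookkeeping you flag (a block-depth-$2$ reduction with $O(w_F\log n)$-bit blocks plus $O(\log L)$ bits of index yielding width $O(w_F\log N)$) is exactly what that game argument, and the quantitative lemmas in the paper's appendix behind \autoref{thm: pls characterize resolution}, carry out.
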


    \autoref{thm: low width res prove res LB} follows from the proof of \autoref{lemma: width refuter upper bound} and \autoref{thm: pls characterize resolution}: since the refuter problem corresponding to resolution width lower bounds can be solved in $\PLS$ and the totality of $\PLS$ can be proved in low resolution width, it follows that resolution width lower bounds themselves can be proved in low resolution width. For the sake of intuition, we also present an equivalent but more direct proof using \emph{Prover-Delayer games}~\cite{Pudlak00ProofsAsGames}. The necessary backgrounds on Prover-Delayer games are presented in \autoref{app:res_pls:pls2res}.
    
    \begin{proof}
         It suffices to construct a Prover strategy with memory size $O(w_F\log N)$ in the {Prover-Delayer game} for $\FWLB$. The Prover starts by querying the last node in the purported resolution proof, which should contain the empty clause $\bot$. The Prover maintains the invariant that she is always at some (not disabled) clause $C_i$ such that $w(\calF \vdash C_i) > w_F$, i.e., it requires resolution width $> w_F$ to derive $C_i$ from the axioms. Each time the Prover is at some clause $C_i$:\begin{itemize}
             \item Suppose $C_i$ is \emph{resolved} from the clauses $C_j, C_k$. Then the Prover queries $C_j$ and $C_k$; if $j\ge i$, $k\ge i$, or the derivation from $(C_j, C_k)$ to $C_i$ is invalid, then she wins the game. Otherwise, since the widths of $C_j$ and $C_k$ are at most $w_F$ (recall that this is guaranteed syntactically by only allocating $w_F$ variables to each clause), one of $C_j, C_k$ must require $> w_F$ width to derive. Suppose it is $C_j$; that is, $w(\calF\vdash C_j) > w_F$. Then the Prover forgets $C_i$ and $C_k$ and only remembers $C_j$.
             \item Suppose $C_i$ is a \emph{weakening} of a clause $C_j$. The Prover queries $C_j$; if $j \ge i$ or the weakening from $C_j$ to $C_i$ is invalid, then she wins the game. Otherwise, it must be the case that $w(\calF \vdash C_j) > w_F$. Then the Prover forgets $C_i$ and only remembers $C_j$.
         \end{itemize}
         Since the index $i$ is always decreasing, the Prover is guaranteed to win the game. The Prover only needs to memorize $O(1)$ resolution nodes, i.e., $O(w_F\log N)$ bits.
    \end{proof}

    \paragraph{Low-width random resolution can prove resolution size lower bounds.} We first define the random resolution system (denoted as $\rRes$):

    \begin{definition}[\cite{BussKT14, PudlakT19}]
    An $\varepsilon$-\emph{random} resolution refutation of an unsatisfiable formula $F$ is a distribution ${\calD}$ supported on pairs $(\Pi,B)$, such that 
    \begin{enumerate}
        \item each $B$ is a CNF formula over the variables of $F$,
        \item $\Pi$ is a resolution refutation of $F \wedge B$, and
        \item for any assignment $x \in\{0,1\}^n$, $\Pr_{(\Pi,B)\sim {\calD}}[B(x)=1] \geq 1-\varepsilon$.\label{item: most B satisfies worst-case x}
    \end{enumerate}
    The \emph{size} $s(F \vdash_{\rRes}\bot)$, and \emph{width} $w(F \vdash_{\rRes}\bot)$ of a random resolution refutation $\calD$ for $F$ are the maximum size and width of a proof $\Pi$ in the support of $\calD$, respectively.
    \end{definition}

    We remark that random resolution is not a standard (i.e., Cook--Reckhow) proof system since the distribution $\calD$ might potentially require exponentially many bits to describe and it is also unclear how to verify \autoref{item: most B satisfies worst-case x} above. (In fact, random resolution cannot be simulated by a Cook--Reckhow proof system unless $\P = \NP$ \cite[Proposition 3.3]{PudlakT19}.) On the other hand, strong lower bounds on both width and size are known for random resolution \cite{PudlakT19}, suggesting that it may be classified as a ``weak'' proof system.

    \begin{theorem}\label{thm: low width rRes prove size lb}
        For every $k\ge 3$ and $c\ge 0.7\cdot 2^k$, there exists some $\eps > 0$ such that the following holds. Let $F$ be a random $k$-CNF formula chosen from the distribution $\calF(k, n, cn)$, $L := (1+\eps)^n$, and $\FSLB(F)$ be the CNF formula encoding the lower bound that $F$ requires size-$L$ resolution refutation. With probability tending to $1$ (when $n\to \infty$) over $F$, $\FSLB(F)$ admits a $\poly(n)$-width $\gamma$-\emph{random} resolution refutation with $\gamma := 2^{-\Omega(n)}$.
    \end{theorem}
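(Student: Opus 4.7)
The plan is to lift the $\rwPHP(\PLS)$ upper bound from \autoref{thm: refuters for random k-CNF lower bounds} into a random resolution refutation, via the equivalence $\PLS = \Res$ of \autoref{thm: pls characterize resolution}. Condition on the event (of probability at least $1 - n^{-2}$) that the reduction of \autoref{thm: refuters for random k-CNF lower bounds} succeeds for $F = F_n$; this probability tends to $1$ and supplies the ``with high probability'' conclusion. Under that event, there is a non-uniform block-depth-$2$ decision tree reduction that, from any purported size-$L$ refutation $\Pi$ of $F$ (equivalently, any assignment $x$ to $\FSLB(F)$), produces an $\rwPHP(\PLS)$ instance $(f_x, \{I_y\}, \{g_y\})$ with codomain $\calC$ satisfying $|\mathrm{range}(f_x)|/|\calC| \le 2^{-\Omega(n)}$ (this is exactly \autoref{eq: dwPHP used in random kCNF}), and such that any solution $(y,o)$ decodes to an illegal derivation in $\Pi$.

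Using this reduction I would define the random resolution distribution $\calD$ as follows. Sample $y^\star \in \calC$ uniformly at random; let
\[ B_{y^\star}(x) := \bigwedge_{(i,type,w)} \bigl(f(i,type,w) \ne y^\star\bigr), \]
and take $\Pi_{\sf ref}$ to be the refutation of $\FSLB(F) \wedge B_{y^\star}$ described below. Because each entry $f(i,type,w)$ depends on $\Pi$ through only a single block (of size $O(n + \log L) = \poly(n)$), each conjunct of $B_{y^\star}$ expands to a $\poly(n)$-width clause and $B_{y^\star}$ is a $\poly(n)$-width CNF overall. For any fixed assignment $x = \Pi$, the probability that $B_{y^\star}(x) = 0$ equals $|\mathrm{range}(f_x)|/|\calC| \le 2^{-\Omega(n)}$, giving $\gamma = 2^{-\Omega(n)}$. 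To construct $\Pi_{\sf ref}$, I would invoke $\PLS = \Res$ to derive in $\poly(n)$ width a solution $o$ of the $\PLS$-instance $I_{y^\star}$ from the CNF encoding of $I_{y^\star}$ (which is itself derivable from $\FSLB(F)$ via the block-depth-$O(1)$ reduction). Decoding $o$ yields $i$ and $g_{y^\star, o} = (i,type,w)$, and the correctness of the reduction guarantees that either (a) the $i$-th step of $\Pi$ is an illegal derivation, so some axiom of $\FSLB(F)$ is falsified, or (b) $f(i,type,w) = y^\star$, which falsifies the corresponding axiom of $B_{y^\star}$. Resolution closes the proof in either branch.

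The main obstacle I anticipate is bookkeeping rather than conceptual. One must verify that every intermediate object produced by the reduction---the function $f$, the $\PLS$-instance $I_{y^\star}$, the decoder $g_{y^\star}$---admits a $\poly(n)$-width CNF encoding of its local consistency conditions, so that the $\poly(n)$-width resolution proof of ``$I_{y^\star}$ has a solution'' supplied by $\PLS = \Res$ composes cleanly with these encodings into a $\poly(n)$-width refutation of $\FSLB(F) \wedge B_{y^\star}$. All widths come out to $\poly(n)$ because the reduction has constant block-depth and each block of $\Pi$ occupies $O(n + \log L) = \poly(n)$ bits. Beyond this composition no extra randomness- or circuit-complexity argument should be required; in particular, the probability-$1$ qualifier is inherited directly from \autoref{thm: refuters for random k-CNF lower bounds}.
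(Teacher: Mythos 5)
Your proposal is correct and is essentially the paper's argument: the paper likewise samples a random special pair (a good set $S_j$ together with $\rho\in\{0,1\}^{D_j}$, i.e., your $y^\star$), takes as the side formula $B$ the assertion that every clause of the purported proof is either narrow or killed by $\rho$ (logically the same content as your ``$y^\star\notin\mathrm{range}(f_x)$''), and bounds $\Pr[B(x)=0]$ for every fixed $x$ by the same counting as \autoref{eq: dwPHP used in random kCNF}. The only presentational difference is that the paper writes out the $\poly(n)$-width refutation of $\FSLB(F)\wedge B$ directly via a Prover strategy using the measure $\mu$, whereas you obtain it by composing the constant-block-depth reduction with the generic $\PLS$-to-resolution translation of \autoref{thm: pls characterize resolution} --- a packaging the paper itself describes as equivalent.
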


    Similarly, \autoref{thm: low width rRes prove size lb} is a corollary of \autoref{thm: refuters for random k-CNF lower bounds}: if a search problem reduces to $\rwPHP(\PLS)$, then it also \emph{randomly} reduces to $\PLS$, and such a random reduction can be translated into a random resolution refutation. Nevertheless, for the sake of intuition, we present an (equivalent) proof that directly constructs the random resolution refutation $(\Pi, B)$.

    \def\Ngood{N_{\sf good}}
    \begin{proof}[Proof Sketch]
        We assume familiarity with the proofs in \autoref{sec: random k CNF}. We use the parameters $a, b$ from \cite[Lemma 4]{ChvatalS88}, and denote $s := \lfloor bn\rfloor$. We assume that the properties $P(a)$ and $Q(a, b)$ holds for $F$; by \cite[Lemma 4]{ChvatalS88}, this is true with high probability over $F\gets\calF(k, n, cn)$.
        
        Recall that the variables in $\FSLB(F)$ encode a length-$L$ resolution refutation $C_0, \dots, C_{L-1}$ of $F$, where $L := (1+\eps)^n$. Let $S_0, S_1, \dots, S_{\Ngood-1}$ denote the first $\Ngood := \binom{n}{s}/2$ good size-$s$ subsets. For each $j\in [\Ngood]$, also let $D_j$ denote any subset of $S_j$ of size $\ge (1-a/32)s$ such that every family of at most $an$ edges has an SDR disjoint from $D_j$. To sample a pair $(\Pi, B)$:\begin{enumerate}
            \item We first pick a random $j\in [\Ngood]$ and then pick a string $\rho \gets \{0, 1\}^{D_j}$. We also treat $\rho$ as a restriction that fixes every variable in $D_j$ and leaves everything else unchanged.
            \item For each $i\in [L]$, let $B_i$ be the decision tree verifying that either $w(C_i) < an/8$ or $\rho$ kills $C_i$. Note that $B_i$ only depends on the clause $C_i$, which can be encoded in $\poly(n)$ bits. Let $B \coloneqq \bigwedge_{i\in [L]}B_i$, then $B$ is a $\poly(n)$-width CNF.
            
            Moreover, following the same calculation as \autoref{eq: dwPHP used in random kCNF},  we can show that for any assignment $x$ to the variables in $\FSLB(F)$ (i.e., $x$ encodes a purported resolution refutation of $F$), the probability over $B$ (i.e., over $j$ and $\rho$) that $B(x) = 1$ is at least $1-2^{-\Omega(n)}$.

            \item It remains to argue that there always exists a $\poly(n)$-width resolution refutation $\Pi$ for $\FSLB(F) \land B$. We can use a similar Prover's strategy as described in \autoref{thm: low width res prove res LB}. Recall that for any clause $C$, $\mu(C)$ denotes the minimum number of clauses from $F$ that logically implies $C$ under $\rho$, and that $\mu(\bot) > an/2$. The Prover starts from $C_{L-1} = \bot$ and maintains the invariant that she is always at some clause $C_i$ where $\mu(C_i) > an/2$. In addition, when the Prover is at some clause $C_i$, she also ensures that $B_i$ is satisfied. At some stage, she will encounter some $C_i$ that is resolved from $C_j, C_k$, such that $\mu(C_i) > an/2$ and $\mu(C_j), \mu(C_k) < an/2$. But due to the width lower bound in \autoref{sec: random k CNF}, this will imply that either the $i$-th derivation is invalid, or that $B_i$ is violated.
            
            It is easy to check that this Prover strategy only requires $\poly(n)$ memory.\qedhere
        \end{enumerate}
    \end{proof}

\subsection{Complexity of Black-Box \texorpdfstring{$\TFNP$}{TFNP} Separations}\label{sec:app:tfnp}

    In this subsection, we introduce a new type of refuter problems --- \emph{$\TFNP^\dt$ refuter} --- which corresponds to the ``complexity'' of proving black-box $\TFNP^\dt$ separations. We present the definition and several basic properties of them in \autoref{sec:app:tfnp:def}. In \autoref{sec:app:tfnp:pls}, we relate the $\TFNP^\dt$ refuter to the resolution width refuter (\autoref{lem: res_ref to pls_ref}). Combining this with our results on resolution width refuter for $\EPHP$ and $\Tseitin$, we characterize the ``complexity'' of separating $\PPA$ and $\PPP$ from $\PLS$ in the black-box setting by the class $\PLS$ itself.

    \paragraph{Notations.} For two $\TFNP^{\dt}$ problems $\probP, \probQ$, we write $\probP \leq_m \probQ$ if there is a \emph{many-one} reduction from $\probP$ to $\probQ$; if the reduction is also uniform, we write $\probP \leq_m^U \probQ$. %

\subsubsection{Black-Box \texorpdfstring{$\TFNP$}{TFNP} Refuters and its Properties}\label{sec:app:tfnp:def}

We start by providing a formal definition of the $\TFNP^\dt$ refuter problems. Roughly speaking, in the problem $\Rftdm{\probP}{\probQ}{d}{M}$, we are given a shallow decision tree that claims to reduce $\probP$ to $\probQ$, and our goal is to find a witness that this shallow decision tree is incorrect.

    \begin{mdframed}[hidealllines=true,backgroundcolor=gray!10]
    \begin{center}
        \textbf{Problem $\Rftdm{\probP}{\probQ}{d}{M}$}
    \end{center}

    \underline{Parameters}: Two $\TFNP^\dt$ problems $\probP = \{P_N\}, \probQ = \{Q_N\}$ and two functions $d \coloneqq d(N), M \coloneqq M(N)$ such that there is no depth-$d(N)$ decision tree reduction from $P_N$ to $Q_{M(N)}$ for any $N$.
    
    \underline{Input}: A purported depth-$d$ decision tree reduction $(f_i, g_o)_{i \in M, o \in O_Q}$ from $P_N = \{0,1\}^N \times O_P$ to $Q_M = \{0,1\}^M \times O_Q$.
    
    \underline{Output}: A pair $(\pAss, o^*)$, where
            \begin{itemize}
                \item $\pAss \in \{0,1,*\}^N$ is a partial assignment encoded by specifying the locations and the values of all non-$*$ bits;
                \item $o^* \in O_Q$ is a solution of the problem $Q_M$.
            \end{itemize}
            
            The pair $(\pAss, o^*)$ satisfy that for any input $x \in \{0,1\}^N $ consistent with $\rho$,
            \begin{enumerate}
                \item $(f(x), o^*) \in \probQ$ and $(x, g_{o^*}(x)) \notin \probP$;
                \item only bits specified in $\pAss$ are ever queried when calculating $g_{o^*}(x)$ and verifying $(f(x), {o^*}) \in \probQ$ and $(x, g_{o^*}(x)) \notin \probP$.
            \end{enumerate}
\end{mdframed}

    In this section, we only consider refuting \emph{low-depth} many-one decision tree reduction. Thus, we always assume the functions $d(N)$ and $\log M(N)$ are poly-logarithmic in $N$ when we write ``for any $d, M$''. We also assume $M(N) \geq N$, so we will not consider reductions that are too weak. Note that it is necessary to have $o^*$ as part of the solution for this problem to be in $\TFNP^\dt$; otherwise, it might take too many queries to the input (reduction) to find $o^*$, which is used to refute the reduction later. %

    We call a $\TFNP^\dt$ problem $\mathcal{R}$ \emph{syntactical} if all the decision trees $(T_o)$ for verifying the solution can be replaced by a single polynomial-time oracle Turing machine.\footnote{A \emph{syntactical} $\TFNP^\dt$ problem is essentially a type-$2$ $\TFNP$ ($\TFNP^2$) problem, see~\cite{BeameCEIP98}.} Since we mostly care about the black-box separations between \emph{syntactical} \TFNP\ subclasses, we assume all the $\TFNP^\dt$ problems in this section are syntactical.

    We now present several basic properties regarding the $\TFNP^\dt$ separation refuter.
    First, a weaker reduction, which has a lower depth or smaller instance size, is easier to refute. The proof trivially follows from the definition (where item 2 needs the problem $\probQ$ to be paddable).

    \begin{lemma}\label{lem:tfnp_ref:trivial}
        \begin{enumerate}
            \item If $d_1 \leq d_2$, then
                $\Rftdm{\probP}{\probQ}{d_1}{M} \uleq_m \Rftdm{\probP}{\probQ}{d_2}{M}$.
            \item If $M_1 \leq M_2$, then
                $\Rftdm{\probP}{\probQ}{d}{M_1} \uleq_m \Rftdm{\probP}{\probQ}{d}{M_2}$.
        \end{enumerate}
    \end{lemma}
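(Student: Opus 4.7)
The plan is to give direct uniform many-one reductions using only unfoldings of definitions, so both items should fall out essentially from the setup of $\Rftdm{\cdot}{\cdot}{\cdot}{\cdot}$.

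For the first item, I would use the identity map on instances. Since every depth-$d_1$ decision tree is a fortiori a depth-$d_2$ decision tree when $d_1\leq d_2$, any input $(f_i, g_o)_{i\in[M], o\in O_Q}$ to $\Rftdm{\probP}{\probQ}{d_1}{M}$ is already a syntactically valid input to $\Rftdm{\probP}{\probQ}{d_2}{M}$. The correctness conditions on a candidate refutation $(\pAss, o^*)$ spelled out in the problem definition depend only on the semantic behaviour of $(f, g)$ (namely that $(f(x), o^*)\in \probQ$ and $(x, g_{o^*}(x))\notin \probP$ for all completions of $\pAss$, with $\pAss$ querying all bits read by $g_{o^*}$), not on the claimed depth bound. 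Hence any solution returned by the $d_2$-solver is immediately a solution of the original $d_1$-instance.

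For the second item, I would appeal to the standing paddability of $\probQ$ (assumed in \autoref{sec: prelim: blackbox tfnp}). Paddability yields a depth-$1$ many-one reduction $(p, q)$ from $Q_{M_1}$ to $Q_{M_2}$, where $p\colon \{0,1\}^{M_1}\to \{0,1\}^{M_2}$ copies the input into the first $M_1$ coordinates and assigns default values elsewhere, and $q\colon O_{Q_{M_2}}\to O_{Q_{M_1}}$ decodes solutions. Given an input $(f_i, g_o)$ of $\Rftdm{\probP}{\probQ}{d}{M_1}$, the reduction produces the composed tuple: for each $i\in[M_2]$, replace $f_i$ by $p_i\circ f$ (which is either a constant tree or equal to some $f_j$, hence still has depth at most $d$), and for each $o^*\in O_{Q_{M_2}}$, replace $g_{o^*}$ by $g_{q(o^*)}$. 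This is a syntactically valid input to $\Rftdm{\probP}{\probQ}{d}{M_2}$. Given a refutation $(\pAss, o^*)$ of the composed instance, the pair $(\pAss, q(o^*))$ refutes the original $M_1$-instance, since the queries $g_{q(o^*)}$ makes on $\pAss$ are exactly those made by the composed reduction, and $(f(x), q(o^*))\in \probQ$ follows from $(p(f(x)), o^*)\in \probQ$ by the definition of the padding reduction.

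The only mildly delicate point is ensuring that composition with the padding reduction in part 2 does not inflate the depth; this is automatic because $p$ is depth-$1$ (each $p_i$ reads at most one bit of its input), so $(p\circ f)_i$ has depth equal to the underlying $f_j$ or is a constant tree. Both transformations described above are explicit and uniformly computable from the syntactic input, so we obtain $\uleq_m$ rather than merely $\leq_m$. Overall, neither item poses a genuine obstacle --- the lemma is bookkeeping and is included mainly to justify the informal statement that weakening the depth or size parameter makes the refuter easier.
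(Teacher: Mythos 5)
Your proof is correct and follows essentially the same route as the paper, which dismisses item 1 as immediate from the definition (a depth-$d_1$ tree is a depth-$d_2$ tree) and notes that item 2 is exactly where paddability of $\probQ$ is invoked, just as you do via composition with a depth-$1$ padding reduction. The only detail you gloss over—ensuring the partial assignment $\pAss$ also covers the bits queried when verifying $(f(x), q(o^*)) \in \probQ$ rather than $(p(f(x)), o^*) \in \probQ$—is handled by extending $\pAss$ along the verification path as in the paper's proof of the analogous composition lemmas, and the paper itself leaves this level of detail implicit here.
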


    Even in the easiest parameter settings, i.e., $d=0, M(N) = N$, the refuter problem $\Rftdm{\probP}{\probQ}{0}{N}$ is least as hard as $\probQ$ itself, because a valid solution of $\probQ$ is always required to witness a mistake given by the input reduction.

    \begin{lemma}\label{lem:tfnp_ref:depth0 reduction}
        $\probQ \uleq_m \Rftdm{\probP}{\probQ}{0}{N}$.
    \end{lemma}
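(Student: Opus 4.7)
\textbf{Proof Plan for \autoref{lem:tfnp_ref:depth0 reduction}.} The plan is to exhibit a uniform many-one reduction that, given an instance $y \in \{0,1\}^N$ of $\probQ$, constructs a ``trivial'' (depth-$0$) purported reduction whose only way to be refuted is to hand back an honest $\probQ$-solution for $y$. Concretely, fix an arbitrary element $\bar o \in O_P$ (the paddability of $\probP$ ensures that such an element can be named uniformly). Given $y$, define the purported depth-$0$ reduction from $P_N$ to $Q_N$ by letting $f$ be the constant string $y$ (so each bit $f_i$ is a depth-$0$ decision tree outputting $y_i$), and by letting $g_o$ be the constant $\bar o$ for every $o \in O_Q$. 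Feed this reduction to the $\Rftdm{\probP}{\probQ}{0}{N}$ oracle to obtain a refuter solution $(\pAss, o^*)$, and output $o^*$ as the proposed solution of $y$.

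Correctness follows immediately from unpacking the definition of a refuter solution. By the first bullet of the solution condition, $(f(x), o^*) \in Q_N$ for every $x$ consistent with $\pAss$; but since $f$ is the constant $y$, this simply says $(y, o^*) \in Q_N$, i.e.~$o^*$ is a valid solution of the original $\probQ$-instance. The second bullet --- namely that $(x, g_{o^*}(x)) = (x, \bar o)$ fails $\probP$ on all $x$ consistent with $\pAss$, witnessed by polylogarithmically many queries --- is automatically satisfied because $\Rftdm{\probP}{\probQ}{0}{N}$ is total under the hypothesis that no depth-$0$ reduction exists. Since our many-one reduction only needs to describe the constant functions $f$ and $\{g_o\}$ and to copy $o^*$ to the output, it is manifestly uniform and runs in polynomial time in $N$.

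The only point requiring care (and the step I would flag as the ``main obstacle'', though it is quite mild) is justifying that the constructed refuter instance is \emph{in the domain} of $\Rftdm{\probP}{\probQ}{0}{N}$ --- that is, that our trivial $(f,\{g_o\})$ really is a depth-$0$ reduction in the syntactic sense expected by the problem. This is a formatting check rather than a mathematical difficulty: the definition allows each output bit of $f$ and each back-map $g_o$ to be specified by a depth-$0$ decision tree, and constant trees are permitted. Once this is in place, the totality of the refuter problem (guaranteed by the parameter hypothesis) supplies the desired $o^*$, and no structural property of $\probP$ beyond the standing assumption $\probP \not\subseteq \probQ$ is used.
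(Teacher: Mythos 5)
Your proposal is correct and matches the paper's proof essentially verbatim: both fix an arbitrary $o_P \in O_P$, build the constant depth-$0$ reduction with $f_i(x)=y_i$ and $g_o(x)=o_P$, and read off from the first solution condition that the returned $o^*$ solves the original $\probQ$-instance $y$, with uniformity of the outer reduction noted in the same way. Your extra remarks about the second bullet and domain membership are harmless side observations, not departures from the paper's argument.
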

    
    \begin{proof}
        Let $y \in \{0,1\}^N$ be an instance of problem $Q_N \in \{0,1\}^N \times O_Q$ and let $o_P$ be an arbitrary fixed solution of problem $P_N$.
        We construct a trivial depth-$0$ reduction  $(f_i, g_o)_{i \in N, o \in O_Q}$, where 
        \[ f_i(x) = y_i, \forall i \in N; g_o(x) = o_P, \forall o \in O_Q. \]
    
        Consider such reduction $(f_i, g_o)$ as an instance of $\Rftdm{\probP}{\probQ}{0}{N}$, and let $(\pAss, o^*)$ be any solution of it. By definition, $o^*$ is a valid solution of instance $y$. Moreover, our reduction is uniform, though $(f_i, g_o)$ is not.
    \end{proof}

    Finally, we present two useful lemmas, which state that it is easier to refute a reduction when the \emph{difficulty gap} between these two problems becomes larger.

    \begin{lemma}\label{lem:tfnp_ref:gap1}
        If $\probP \uleq_m \probS$, and $d_1(N), \log M_1(N) = \polylog(N)$, then 
        \[\Rftdm{\probS}{\probQ}{d_1}{M_1} \uleq_m \Rftdm{\probP}{\probQ}{d_2}{M_2},\]
        for some $d_2(N), \log M_2(N) = \polylog(N)$.
    \end{lemma}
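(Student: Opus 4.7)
The plan is to compose the given purported reduction $(f, g)$ with the fixed uniform many-one reduction $r = (r_{\mathsf{red}}, r_{\mathsf{dec}})$ witnessing $\probP \uleq_m \probS$ (say of depth $d_r = \polylog$). Concretely, I would form the composed reduction $(F, G) := (f \circ r_{\mathsf{red}},\, r_{\mathsf{dec}} \circ g)$, meaning $F(x) = f(r_{\mathsf{red}}(x))$ and $G_o(x) = r_{\mathsf{dec}}(g_o(r_{\mathsf{red}}(x)),\, x)$. This is a purported depth-$d_2$ reduction from $\probP$ to $\probQ$ for some $d_2 = O(d_1 d_r) = \polylog$, with the same target size $M_2 = M_1$, so $\log M_2 = \polylog$ as well. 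Feeding $(F, G)$ into a subroutine for $\Rftdm{\probP}{\probQ}{d_2}{M_2}$ returns some refutation $(\pAss, o^*)$, which I will then convert into a refutation $(\pAss', o^*)$ of the original input $(f, g)$.

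To build $\pAss'$, I would first observe that the refuter's guarantees on $(\pAss, o^*)$ imply that every $x$-query made while computing $G_{o^*}(x)$ and verifying both $(F(x), o^*) \in \probQ$ and $(x, G_{o^*}(x)) \notin \probP$ lies inside $\pAss$. In particular, every bit of $y := r_{\mathsf{red}}(x)$ that is read by $g_{o^*}$ on $y$ or by the $\probQ$-membership verifier on $f(y)$ is computed by a depth-$d_r$ decision tree over $x$ whose queries lie entirely in $\pAss$, and is therefore uniquely determined by $\pAss$. I would initially set $\pAss'$ to fix exactly these $y$-bits to their determined values, which gives $|\pAss'| = O(d_1) = \polylog$.

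The main obstacle --- and the subtle part of the argument --- is that $\pAss'$ must locally certify $(y, g_{o^*}(y)) \notin \probS$, whereas the refutation of the composed reduction only gives a local certificate for $(x, G_{o^*}(x)) \notin \probP$. To bridge this gap, I would pick any completion $x^* \models \pAss$ (say, by setting unspecified bits to $0$), let $y^* := r_{\mathsf{red}}(x^*)$, and invoke the correctness of $r$: since $G_{o^*}(x^*) = r_{\mathsf{dec}}(g_{o^*}(y^*), x^*)$ fails to be a valid $\probP$-solution of $x^*$, the value $g_{o^*}(y^*)$ must fail to be a valid $\probS$-solution of $y^*$. Consequently, the $\probS$-verifier run on $(y^*, g_{o^*}(y^*))$ rejects along a specific path that queries at most $\polylog(M_1)$ bits of $y^*$, and I would augment $\pAss'$ by fixing exactly those bits to their values in $y^*$. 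For any $y$ consistent with the augmented $\pAss'$ we then have $g_{o^*}(y) = g_{o^*}(y^*)$ (because $g_{o^*}$ only reads bits already in $\pAss'$), and the $\probS$-verifier on $(y, g_{o^*}(y))$ traces the same rejecting path and outputs reject, so $(y, g_{o^*}(y)) \notin \probS$ is certified using only bits of $\pAss'$; the condition $(f(y), o^*) \in \probQ$ is witnessed analogously by the $\probQ$-verifier queries already contained in $\pAss'$. Every step is uniform polynomial time, so the overall reduction is a uniform many-one reduction with $d_2, \log M_2 = \polylog$.
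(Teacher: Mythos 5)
Your proposal is correct and follows essentially the same route as the paper's proof: compose the input reduction with the fixed $\probP \uleq_m \probS$ reduction, run the $\probP$-to-$\probQ$ refuter on the composition, and then transfer the returned partial assignment to the $\probS$-instance by noting that the $g_{o^*}$/$\probQ$-verification queries are subsumed by the composed computation, while the $\probS$-non-membership is certified by fixing a completion and adding the bits on the verifier's rejecting path, which rejects by correctness of the $\probP\to\probS$ reduction. (The only nitpick is the bound $|\pAss'| = O(d_1)$, which should be roughly $d_1$ times the $\probQ$-verifier's query complexity, but this stays $\polylog(N)$ and does not affect the argument.)
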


    \begin{proof}
        Given a depth-$d_1$ reduction $(f_i, g_o)_{i \in M_1, o \in O_Q}$ from $S_N$ to $Q_{M_1}$, we compose it with any (uniform) low-depth reduction $(h_i, l_o)_{i \in N, o \in O_S}$ from $P_{N'}$ to $S_N$. Now we get a depth-$d'$ reduction $(f'_i, g'_o)$ from $P_{N'}$ to $Q_{M_1}$ with
        \[f'_i(x) = f_i(h(x)),\forall i \in [M_1]; \; \quad \; g'_o(x) = l_s(x), s \coloneqq g_o(h(x)), \forall o \in O_Q. \]
        Let $d_2(N') \coloneqq d', M_2(N') \coloneqq M_1$, and it is easy to verify that $d_2(N'), \log M_2(N') = \polylog(N')$.
    
        Consider a pair $(\pAss_P, o^*)$ that refutes $(f'_i, g'_o)$. Let $x$ be any input of $P_{N'}$ that is consistent with $\pAss_P$ and define $y \coloneqq h(x)$.
        We show how to construct a partial assignment $\pAss_S$ consistent with $y$ such that $(\pAss_S, o^*)$ refutes $(f_i, g_o)$. 
        We start with setting $\pAss_S$ to all $*$ strings, and then execute the process of

        \begin{center}
            \begin{minipage}{0.28\linewidth}
                \centering
                \textbf{S1:} calculating $g_{o^*}(y)$;
            \end{minipage}
            \begin{minipage}{0.34\linewidth}
                \centering
                \textbf{S2:} verifying $(f(y), {o^*}) \in \probQ$;
            \end{minipage}
            \begin{minipage}{0.32\linewidth}
                \centering
                \textbf{S3:} verifying $(y, g_{o^*}(y)) \notin \probS$.
            \end{minipage}
        \end{center}

        During the above process, if $y_i$ is queried and $y_i$ has not been specified by $\pAss_S$, we will execute $h_i(x)$ to calculate $y_i$ and then store its value in $\pAss_S$.
    
        For the correctness of our construction, recall that only bits specified in $\pAss_P$ are ever queried when 
        \begin{center}
            \begin{minipage}{0.28\linewidth}
                \centering
                \textbf{P1:} calculating $g'_{o^*}(x)$;
            \end{minipage}
            \begin{minipage}{0.34\linewidth}
                \centering
                \textbf{P2:} verifying $(f'(x), {o^*}) \in \probQ$;
            \end{minipage}
            \begin{minipage}{0.32\linewidth}
                \centering
                \textbf{P3:} verifying $(x, g'_{o^*}(x)) \notin \probP$.
            \end{minipage}
        \end{center}

        By our construction, process \textbf{S1, S2} are sub-procedures of \textbf{P1, P2}, and thus they will only query locations of $x$ that are already specified in $\pAss_P$. However, process \textbf{S3} might query some locations that are not specified in $\pAss_P$. In this case, it is safe to return arbitrary values for those queries. This is because the correctness of reduction $(h_i, l_o)$ guarantees that there must be $(y, g_{o^*}(y)) \notin \probS$. %

        Finally, note that our whole reduction, including the construction of $(f'_i, g'_o)$ and the execution of process \textbf{S1, S2, S3}, can be done in a uniform manner.
    \end{proof}
    
    With a similar argument, we can also formalize the other direction.
    
    \begin{lemma}\label{lem:tfnp_ref:gap2}
        If $\probS \uleq_m \probQ$ and let $d_1(N), \log M_1(N) = \polylog(N)$, then \[\Rftdm{\probP}{\probS}{d_1}{M_1} \uleq_m \Rftdm{\probP}{\probQ}{d_2}{M_2},\]
        for some $d_2(N), \log M_2(N) = \polylog(N)$.
    \end{lemma}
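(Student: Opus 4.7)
The plan mirrors the proof of Lemma~\ref{lem:tfnp_ref:gap1}, but composes on the target side rather than the source side. Fix a uniform many-one reduction $(h_j, l_s)_{j \in [M_2'], s \in O_S}$ from $S_{M_1}$ to $Q_{M_2'}$ of depth $d_{\rm red}$ supplied by the hypothesis $\probS \uleq_m \probQ$; here $d_{\rm red}, \log M_2'$ are $\polylog(M_1) = \polylog(N)$. Given any input $(f_i, g_o)_{i \in [M_1], o \in O_S}$ to $\Rftdm{\probP}{\probS}{d_1}{M_1}$, form the composed reduction $(f'_j, g'_o)_{j \in [M_2'], o \in O_Q}$ from $P_N$ to $Q_{M_2'}$ by $f'_j(x) = h_j(f(x))$ and $g'_o(x) = g_{l_o(f(x))}(x)$. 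This is a depth-$d_2$ many-one reduction with $d_2 \le d_1 + d_{\rm red} = \polylog(N)$ and $\log M_2' = \polylog(N)$. Feed it into the $\Rftdm{\probP}{\probQ}{d_2}{M_2}$-oracle (invoking Lemma~\ref{lem:tfnp_ref:trivial} for padding if needed) to obtain some refutation $(\pAss_Q, o^*)$.

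To convert $(\pAss_Q, o^*)$ into a solution $(\pAss_S, s^*)$ of the original refuter problem, observe first that computing $g'_{o^*}(x)$ already passes through the intermediate value $s^* \coloneqq l_{o^*}(f(x))$; hence $s^*$ is fully determined by the bits specified in $\pAss_Q$ and can be read off in low depth. Moreover, since $g'_{o^*}(x) = g_{s^*}(x)$ by construction, the guarantee $(x, g'_{o^*}(x)) \notin \probP$ directly yields $(x, g_{s^*}(x)) \notin \probP$, so the ``wrong $P$-solution'' clause of the refutation transfers for free. Correctness of the hardwired reduction $(h, l)$ together with $(h(f(x)), o^*) \in \probQ$ furthermore forces $(f(x), s^*) \in \probS$, so $s^*$ is a genuine $S$-solution at input $f(x)$ for every $x$ consistent with $\pAss_Q$.

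The only remaining task is to enlarge $\pAss_Q$ into a partial assignment $\pAss_S$ that also covers the bits queried when verifying $(f(x), s^*) \in \probS$. Simulate the $S$-verifier on the proposed solution $s^*$; each query to a coordinate $f_j(x)$ is answered by running the depth-$d_1$ function $f_j$, whose $O(d_1)$ bit-queries to $x$ are resolved by $\pAss_Q$ where specified and by an arbitrary default (say $0$) otherwise, with all newly queried bits recorded in $\pAss_S$. Correctness is automatic because any $x$ consistent with $\pAss_S$ is a fortiori consistent with $\pAss_Q$, so the $S$-verifier must accept by the argument of the previous paragraph. The construction is syntactic and uniform, with total depth $\polylog(N)$, giving the required $\Rftdm{\probP}{\probS}{d_1}{M_1} \uleq_m \Rftdm{\probP}{\probQ}{d_2}{M_2}$.

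The subtle point demanding care, and the analog of Step \textbf{S3} in the proof of Lemma~\ref{lem:tfnp_ref:gap1}, is that the simulated $S$-verifier's query pattern itself depends on the partial answers it sees; this is precisely why $\pAss_S$ must be grown on-the-fly with safe default values rather than precomputed from $\pAss_Q$ alone. Everything else (composing the reductions, extracting $s^*$, and transferring the ``bad $P$-solution'' clause) is a routine book-keeping exercise parallel to Lemma~\ref{lem:tfnp_ref:gap1}.
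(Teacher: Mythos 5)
Your proof is correct and is precisely the ``similar argument'' the paper leaves implicit for this lemma: it mirrors the proof of \autoref{lem:tfnp_ref:gap1} with the composition done on the target side, reading off $s^* = l_{o^*}(f(x))$ from $\rho_Q$ (well-defined since the composed decoder's queries are covered by $\rho_Q$), transferring the $(x,g_{s^*}(x))\notin\mathcal{P}$ clause verbatim, and growing $\rho_S$ on the fly with default values during the $S$-verification, which is exactly the analog of step \textbf{S3}. The only nitpick is quantitative: composing decision trees multiplies depths, so the composed reduction has depth roughly $d_1\cdot d_{\mathrm{red}}$ rather than $d_1 + d_{\mathrm{red}}$, which is still $\polylog(N)$ and does not affect the conclusion.
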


    We often consider all low-depth reductions between two $\TFNP^{\dt}$ classes with no valid low-depth reductions possible. So, it is convenient to introduce a new kind of $\TFNP^{\dt}$ subclasses for this type of problem.

    \begin{definition}
        For two $\TFNP^\dt$ classes $\mathsf{A}, \mathsf{B}$ ($\mathsf{A} \nsubseteq \mathsf{B}$) with $\probP, \probQ$ being any complete problems of $\mathsf{A}$ and $\mathsf{B}$ respectively, $\Rft{\mathsf{A}}{\mathsf{B}}$ is defined as the class of $\TFNP^{\dt}$ problems that are reducible to $\Rftdm{\probP}{\probQ}{d}{M}$ for some $d(N), \log M(N) = \polylog(N)$.
    \end{definition}
    
    This notation is well-defined because \autoref{lem:tfnp_ref:gap1} and \autoref{lem:tfnp_ref:gap2} guarantee that the choice of the complete problems does not matter. We also have the following corollary of \autoref{lem:tfnp_ref:trivial} and \autoref{lem:tfnp_ref:depth0 reduction}.

    \begin{corollary}\label{cor: tfnp_ref: general_lb}
        For any two $\TFNP^\dt$ classes $\mathsf{A}, \mathsf{B}$ such that $\mathsf{A} \nsubseteq \mathsf{B}$, $\mathsf{B} \subseteq \Rft{\mathsf{A}}{\mathsf{B}}.$
    \end{corollary}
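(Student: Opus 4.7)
The plan is to derive this corollary essentially as an immediate consequence of \autoref{lem:tfnp_ref:depth0 reduction} together with the closure properties of $\Rft{\mathsf{A}}{\mathsf{B}}$. First, I would fix complete problems $\probP$ for $\mathsf{A}$ and $\probQ$ for $\mathsf{B}$. Since $\mathsf{A} \nsubseteq \mathsf{B}$, in particular there is no depth-$0$ (many-one) decision-tree reduction from $P_N$ to $Q_N$ for sufficiently large $N$, so the refuter problem $\Rftdm{\probP}{\probQ}{0}{N}$ is well-defined, and by definition it lies in $\Rft{\mathsf{A}}{\mathsf{B}}$ (using parameters $d(N)=0$ and $M(N)=N$, both of which are trivially $\polylog(N)$).

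Next, I would invoke \autoref{lem:tfnp_ref:depth0 reduction} to obtain a uniform many-one reduction $\probQ \uleq_m \Rftdm{\probP}{\probQ}{0}{N}$. Since $\probQ$ is complete for $\mathsf{B}$, every problem $\probR \in \mathsf{B}$ satisfies $\probR \leq_m \probQ$; composing this with the reduction from $\probQ$ to $\Rftdm{\probP}{\probQ}{0}{N}$ (using transitivity of many-one decision-tree reductions, which holds because the composition of two $\polylog$-depth reductions is still $\polylog$-depth) yields $\probR \leq_m \Rftdm{\probP}{\probQ}{0}{N}$, and hence $\probR \in \Rft{\mathsf{A}}{\mathsf{B}}$. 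This gives the desired inclusion $\mathsf{B} \subseteq \Rft{\mathsf{A}}{\mathsf{B}}$.

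There is essentially no obstacle here --- the entire content of the corollary is already packaged inside \autoref{lem:tfnp_ref:depth0 reduction}, and the corollary is just a rephrasing in the ``$\Rft{\cdot}{\cdot}$''-class language. The only minor point worth remarking is that one should briefly justify why the depth-$0$, size-$N$ parameter regime is an admissible choice when instantiating the definition of $\Rft{\mathsf{A}}{\mathsf{B}}$; this is handled by \autoref{lem:tfnp_ref:trivial}, which ensures that reductions at these minimal parameters can be padded up to the generic $\polylog(N)$-depth, $N^{\polylog(N)}$-size regime used in the definition of the class.
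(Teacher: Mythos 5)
Your proposal is correct and matches the paper's intent exactly: the paper states this as an immediate corollary of \autoref{lem:tfnp_ref:trivial} and \autoref{lem:tfnp_ref:depth0 reduction} without writing out the derivation, and your argument (take complete problems $\probP,\probQ$, apply the depth-$0$ reduction lemma, then use completeness of $\probQ$, transitivity of low-depth reductions, and the parameter monotonicity/padding lemma to place everything in $\Rft{\mathsf{A}}{\mathsf{B}}$) is precisely that derivation.
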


\subsubsection{Refuter for Separating from \texorpdfstring{$\PLS$}{PLS}}\label{sec:app:tfnp:pls}

    Now we study the complexity of refuting separations between $\PLS$ and other classes in $\TFNP^\dt$, in particular the separations
    \[\PPA^\dt \nsubseteq \PLS^\dt \quad\text{and}\quad\PPP^\dt\nsubseteq \PLS^\dt.\]
    Our main tool is the equivalence between resolution and $\PLS$ via the \emph{false clause search} problem (cf.~\cite{RezendeGR22survey}): recall that $\SearchCNF(\calF) \in \PLS$ if and only if $\calF$ have a $\polylog(N)$-width resolution refutation (\autoref{thm: pls characterize resolution}).

    Studying this equivalence from a computational perspective, we related the $\TFNP^\dt$ refuter for $\PLS$ with the resolution width refuter.%

    \begin{restatable}{lemma}{LemmaResReftoPLSRef}\label{lem: res_ref to pls_ref}
        For any family of unsatisfiable CNF $\calF$ that has no $\polylog(N)$-width resolution refutation,
        \[\Rftdm{\SearchCNF(\calF)}{\Iter}{d}{M} \leq_m \Refuter(w(\calF\vdash_{\Res} \bot)<w_0)\] for some $w_0 = \polylog(N)$ that may depend on $d, M$.

        Furthermore, this reduction is uniform when $\calF$ is a uniform family of unsatisfiable CNFs.
    \end{restatable}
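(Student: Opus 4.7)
The plan is to follow the constructive ``$\PLS$ reduction $\Rightarrow$ narrow resolution refutation'' direction of \autoref{thm: pls characterize resolution}, keeping track of how errors in the reduction propagate to invalid derivations. Given a purported depth-$d$ reduction $(f_i, g_o)_{i\in[M], o\in O_Q}$ from $\SearchCNF(\calF)$ to $\Iter_M$, I will explicitly construct a purported resolution refutation $\Pi$ of $\calF$ of width $w_0 = \polylog(N)$ and argue that any invalid derivation in $\Pi$ yields a valid $\TFNP^\dt$ refutation $(\rho, o^*)$.

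For the construction, I would fix a standard Prover strategy for $\Iter_M$ that uses only $O(\log M)$ abstract memory, e.g., binary search maintaining an interval $[a,b]$ with the invariant $S(a)\ge a$ and $S(b)\le b$ and forgetting the old endpoint witness whenever the interval shrinks. Lifting this strategy through the depth-$d$ decision trees $\{f_i\}$ yields a Prover strategy for $\SearchCNF(\calF)$ whose memory $\rho$ is a partial assignment to $x$ of size $O(d + \log M + w_{\calF}) = \polylog(N)$ at all times: during binary search $\rho$ holds only the $O(1)$ active endpoint/midpoint witnesses (each of $d$ bits), and at the terminal stage it additionally holds the $d$ bits computing $i^* := g_{o^*}(x)$ together with the $w_{\calF}$ bits of the claimed falsified clause $C_{i^*}$. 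I would then translate the strategy into $\Pi$ via the textbook correspondence: each reachable memory state $\rho$ becomes the clause $\neg\rho$; a ``query $x_j$'' or ``forget $x_j$'' move is realized by resolving the two sibling clauses $\neg\rho\vee x_j$ and $\neg\rho\vee\overline{x}_j$ on $x_j$; terminal states are derived by weakening the axiom $C_{i^*}$ into $\neg\rho$. By construction $\width(\Pi)\le w_0$, and crucially every internal resolution step is \emph{syntactically} valid independent of whether $(f_i, g_o)$ is correct, since it only recombines literals of consistent sibling partial assignments.

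For the extraction, I would observe that since internal steps are always valid, any invalid derivation in $\Pi$ must be a terminal weakening $C_{i^*}\to\neg\rho$, which fails precisely when some literal of $C_{i^*}$ is satisfied by $\rho$, i.e., when $\rho$ \emph{satisfies} $C_{i^*}$. In that case I would return $(\rho, o^*)$: for every $x$ consistent with $\rho$, the $f$-queries stored in $\rho$ certify that $o^*$ is a valid $\Iter$ solution of $f(x)$, while the fact that $\rho$ satisfies $C_{i^*}$ forces $x$ to satisfy $C_{i^*}$, so $g_{o^*}(x)=i^*$ is not a falsified clause and hence $(x, g_{o^*}(x))\notin \SearchCNF(\calF)$. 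Uniformity follows because the Prover strategy for $\Iter$ is fixed once and the unfolding through $\{f_i, g_o\}$ is mechanical, so when $\calF$ is uniform the whole construction is uniform. The main obstacle is managing the ``forget'' moves: to forget $d$ bits witnessing an outdated endpoint via resolution, all $2^d$ possible sibling continuations must be accounted for as legitimate Prover substates in $\Pi$ — this blows up the \emph{size} of $\Pi$ (to roughly $M\cdot 2^d$, which is still permitted) but must be arranged so as not to blow up the \emph{width} beyond $w_0$, and one must verify that no hidden dependence on the correctness of $(f_i, g_o)$ sneaks into these sub-proofs.
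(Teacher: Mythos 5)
Your overall architecture is the same as the paper's: lift a small-memory Prover strategy for $\Iter$ through the decision trees $\{f_i,g_o\}$, translate the resulting Prover--Delayer strategy into a narrow purported resolution refutation in which every interior step is syntactically valid, observe that an invalid step can only occur at a terminal weakening of an axiom $C_{i^*}$ with $i^*=g_{o^*}(x)$, and extract $(\rho,o^*)$ from such a step. However, the specific Prover strategy you chose for $\Iter$ is wrong, and this is a genuine gap rather than a cosmetic one. Binary search with the invariant ``$S(a)\ge a$ and $S(b)\le b$'' does not corner the Delayer: when the interval shrinks to adjacent points $a,a+1$, the memory may contain, say, $S(a)=a$ and $S(a+1)=a+1$ (two fixed points, neither pointed to), or $S(a)>a+1$ and $S(a+1)=a+1$; in neither case is any of the three solution conditions of $\Iter$ witnessed, so no clause of the totality CNF is falsified and the Prover cannot legally output. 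This is not fixable by a cleverer invariant of the same kind: a fixed point of $S$ is only a solution when it is \emph{pointed to}, which is exactly why $\Iter$ is $\PLS$-complete and admits no ``interval-halving'' totality argument. The correct small-memory strategy is path-following (the paper's \autoref{lem: pls to pro\-ver} modulo the typo: \autoref{lem: pls to prover}): remember the current node $v$ and its predecessor $u$, query $f_v$ to get $w=S(v)$; since the index strictly increases the walk terminates, and when $S(v)\le v$ the pair $(u,v)$ locally certifies a solution ($v$ if $S(v)<v$, else $u$). With this substitution your construction coincides with the paper's proof via \autoref{lem: prover to res proof}.

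Two smaller points. First, an invalid terminal weakening means some literal of $C_{i^*}$ fails to appear in $\neg\rho$, which can happen either because $\rho$ satisfies it or because it is \emph{unassigned}; the paper handles this by completing $\rho$ on the variables of $C_{i^*}$ so that every consistent $x$ satisfies the clause (and so that the verification queries stay inside $\rho$), whereas you assert ``$\rho$ satisfies $C_{i^*}$''. Your version is only correct because you stipulate that the Prover queries all $w_{\calF}$ variables of $C_{i^*}$ before outputting; make that step explicit, since otherwise condition 2 of the refuter definition (no queries outside $\rho$) breaks. Second, for the reduction to be a legitimate decision-tree reduction (and a uniform one), you need every node of $\Pi$ to be computable from its index with $\polylog$ queries to $(f,g)$ and the interior steps to be syntactically valid \emph{by construction} even for ``garbage'' indices; the paper does this by indexing nodes with a monotone encoding of the Prover's memory and padding invalid encodings with trivially correct derivations (\autoref{lem: memory encoding}). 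Your proposal gestures at this (``the unfolding is mechanical'') but this bookkeeping is where the uniformity claim actually lives.
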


    The proof of \autoref{lem: res_ref to pls_ref} follows from the standard procedure that transforms a low-depth decision tree reduction to $\PLS$
    (i.e., a $\PLS$ formulation) to a low-width resolution proof, using the \emph{Prover-Delayer game}~\cite{Pudlak00ProofsAsGames}. This proof is rather straightforward, but many details have to be taken care of to make sure that the reduction is uniform.
    To be self-contained, we formally present the transformation from a $\PLS$ formulation to a resolution proof in \autoref{app:res_pls:pls2res};\footnote{This transformation is a well-known folklore among the \emph{Proof Complexity and} $\TFNP$ community. However, to the best of the authors' knowledge, it has not yet been formally written down in any previous literature.} we then prove  \autoref{lem: res_ref to pls_ref} in \autoref{app:res_pls:fullproof}.

    As an application, we combine \autoref{lem: res_ref to pls_ref} with our results on resolution width refuter for $\EPHP$ and $\Tseitin$ formulas. Note that $\SearchCNF(\EPHP)$ and $\SearchCNF(\Tseitin)$ are in $\PPP$ and $\PPA$ respectively. Therefore, we can reduce the $\TFNP^\dt$ refuter for $\PPP^\dt\nsubseteq \PLS^\dt$ and $\PPA^\dt\nsubseteq \PLS^\dt$ to the resolution width refuters for $\EPHP$ and $\Tseitin$ respectively.

    \begin{theorem}\label{thm: TFNP ref PPP PPA PLS}
        Let $\probP, \probQ$ be any complete problems for $\PPP$ and $\PPA$ respectively, then for any $d, M$, both
        $\Rftdm{\probP}{\Iter}{d}{M}$ and $\Rftdm{\probQ}{\Iter}{d}{M}$ are $\PLS$-complete via uniform reductions. 

        In particular, $\Rft{\PPP}{\PLS} = \Rft{\PPA}{\PLS} = \PLS$.
    \end{theorem}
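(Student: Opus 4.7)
My plan is to prove both $\PLS$-hardness and $\PLS$-membership of $\Rftdm{\probP}{\Iter}{d}{M}$ (for $\probP$ a $\PPP$-complete problem) and $\Rftdm{\probQ}{\Iter}{d}{M}$ (for $\probQ$ a $\PPA$-complete problem). The two directions will use very different machinery: hardness is a direct corollary of the general lemmas in \autoref{sec:app:tfnp:def}, while membership is the genuinely new content that leverages the resolution width refuter upper bounds from \autoref{sec: PHP} and \autoref{sec: more upper bounds}.

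For the $\PLS$-hardness, I would simply chain \autoref{lem:tfnp_ref:depth0 reduction} and \autoref{lem:tfnp_ref:trivial}: the former gives $\Iter \uleq_m \Rftdm{\probP}{\Iter}{0}{N}$ (since $\Iter$ itself is $\PLS$-complete), and the latter lifts this to arbitrary $d,M$ as long as we are in the regime $d,\log M = \polylog(N)$. The same argument handles $\probQ$.

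For the $\PLS$-membership, the plan is a three-step chain of reductions. First, since $\SearchCNF(\EPHP) \in \PPP$ and $\probP$ is $\PPP$-complete, there is a uniform low-depth reduction $\SearchCNF(\EPHP) \uleq_m \probP$; applying \autoref{lem:tfnp_ref:gap1} with $(\probP, \probS)$ instantiated as $(\SearchCNF(\EPHP), \probP)$ yields a uniform reduction $\Rftdm{\probP}{\Iter}{d}{M} \uleq_m \Rftdm{\SearchCNF(\EPHP)}{\Iter}{d'}{M'}$ for some $d',\log M' = \polylog(N)$. Second, \autoref{lem: res_ref to pls_ref} gives a uniform reduction into $\Refuter(w(\EPHP \vdash_{\Res}\bot) < w_0)$ for some $w_0 = \polylog(N)$; here I would pad the EPHP instance so that the pigeon count $n$ satisfies $n/3 > w_0$ (which is possible since $w_0$ is polylog and the EPHP instance still has polylog size). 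Third, \autoref{theorem: EPHP PLS complete} places the width refuter in $\PLS$. Composing these three uniform reductions yields $\Rftdm{\probP}{\Iter}{d}{M} \in \PLS$. The $\probQ$ case is completely parallel, replacing $\SearchCNF(\EPHP)$ with $\SearchCNF(\Tseitin)$ (which lies in $\PPA$), using \autoref{thm: Tseitin width lower bound refuter} instead of \autoref{theorem: EPHP PLS complete}, and taking the underlying expander graph large enough so that $e(G) > w_0$.

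The final statement $\Rft{\PPP}{\PLS} = \Rft{\PPA}{\PLS} = \PLS$ then follows by combining the above with \autoref{cor: tfnp_ref: general_lb}: the containments $\PLS \subseteq \Rft{\PPP}{\PLS}$ and $\PLS \subseteq \Rft{\PPA}{\PLS}$ are immediate from that corollary (using the known separations $\PPP \nsubseteq \PLS$ and $\PPA \nsubseteq \PLS$), while the reverse inclusions follow from the $\PLS$-membership just established together with the definition of $\Rft{\cdot}{\cdot}$ in terms of reducibility to $\Rftdm{\cdot}{\Iter}{d}{M}$. The main thing to watch out for is consistency of parameters across the three reduction steps, particularly ensuring that the combinatorial width bounds ($n/3$ for EPHP and $e(G)$ for Tseitin) dominate the polylogarithmic $w_0$ produced by \autoref{lem: res_ref to pls_ref}; this is a routine padding argument but requires care to preserve uniformity, which is guaranteed because the EPHP and Tseitin families are themselves uniform.
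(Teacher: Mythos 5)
Your skeleton matches the paper's: hardness via \autoref{lem:tfnp_ref:depth0 reduction} (lifted by \autoref{lem:tfnp_ref:trivial}), and membership via the chain \autoref{lem:tfnp_ref:gap1} $\to$ \autoref{lem: res_ref to pls_ref} $\to$ the width-refuter upper bounds for $\EPHP$ and $\Tseitin$. But there is a genuine gap in your third step: you invoke \autoref{theorem: EPHP PLS complete} (and \autoref{thm: Tseitin width lower bound refuter}) as black boxes, and they do not apply as stated in the parameter regime you land in. Those theorems are proved for $w_0 = n/3$ (resp.\ $w_0 = e(G)$), where the purported proof has length $2^{n^{\Omega(1)}}$, so the uniform reduction is allowed $\poly(n)$ time; concretely, it computes the measure $\cri(C)$ by solving a bipartite matching problem on the full $(n+1)\times n$ graph. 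After \autoref{lem: res_ref to pls_ref}, however, the width bound is only $w_0 = \polylog$ while the formula still has $\poly(N)$ variables and the proof length is merely quasi-polynomial in the number of variables, so ``efficient uniform'' now means $\polylog$ \emph{time}, and the $\poly(n)$-time matching computation is too slow. The paper closes exactly this hole with an extra argument: since $|C| \le w_0 = \polylog$, only polylogarithmically many pigeons are involved, and $\cri(C)$ can be computed in $\polylog$ time by reducing to a matching problem on the sparse part of the graph; for $\Tseitin$ one must additionally replace the input graph by a fixed, strongly explicit expander family (e.g.\ a wrap-around grid) so that the critical-assignment measure is locally computable, rather than reusing \autoref{def: refutation problem for Tseitin width lower bounds} where $(G,\tau)$ is an input block read in full.

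Your proposed fix --- ``pad the $\EPHP$ instance so that $n/3 > w_0$'' --- does not address this. The inequality $n/3 > w_0$ holds automatically, since the formula underlying the $\PPP$-complete problem has $\poly(N)$ variables (it is not of polylog size, as your parenthetical claims), so no padding is needed for correctness; and padding cannot repair the time-efficiency of the $\cri$ computation, which is the actual obstacle to uniformity. The query-complexity (block-depth) part of the old reductions transfers fine; it is the uniform running time of the $\cri$ oracle inside the reduction that must be re-proved in the new regime, and that is the one piece of new content your proposal is missing. The rest of your argument, including the derivation of $\Rft{\PPP}{\PLS} = \Rft{\PPA}{\PLS} = \PLS$ from \autoref{cor: tfnp_ref: general_lb}, is correct and as in the paper.
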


    Equivalently, \autoref{thm: TFNP ref PPP PPA PLS} says that \emph{local search} arguments are both \emph{necessary} and \emph{sufficient} for separating $\PPP$ and $\PPA$ from $\PLS$ in the black-box setting.

    \begin{proof}%
        Note that \autoref{lem:tfnp_ref:depth0 reduction} already gives the $\PLS$-hardness result, we will focus on showing that for any $d, M$, $\Rftdm{\probP}{\Iter}{d}{M}$ and $\Rftdm{\probQ}{\Iter}{d}{M}$ are in $\PLS$ via uniform reductions.

        We start with $\PPP$ versus $\PLS$. Note that $\SearchCNF(\EPHP)$ is in $\PPP$, thus, 
        by \autoref{lem:tfnp_ref:gap1}, we have \[ \Rftdm{\probP}{\Iter}{d}{M} \leq_m^U \Rftdm{\SearchCNF(\EPHP)}{\Iter}{d'}{M'},\] where $d', \log M'$ are also poly-logarithmic in $N$. Combining with \autoref{lem: res_ref to pls_ref}, there is
        \[ \Rftdm{\probP}{\Iter}{d}{M} \leq_m^U \Refuter(w(\EPHP\vdash_{\Res} \bot)<w_0)\] for some $w_0 = \polylog(N)$ that may depend on $d, M$.

        Recall that \autoref{thm: white-box PLS-completeness for width} shows that $\Refuter(w(\EPHP\vdash_{\Res} \bot)< w_0)$ is in $\PLS$ via a uniform reduction when $w_0 = n/3$. The same reduction to $\Iter$ would still work when $w_0 = \polylog(N)$, and the only issue is to make sure that the $\cri(C)$ function could be calculated ``efficiently'' in this different parameter regime. Note that when $w_0 = \poly(n)$ (and $N = 2^{\Omega(n)})$), a $\poly(n)$ time procedure (\autoref{lem: cri EPHP efficient}) would be considered as time efficient; however, when $w_0 = \polylog(n)$ and $N$ being quasi-polynomial in $n$, only a $\polylog(n)$ running time is acceptable. Since $|C| \leq w_0 = \polylog(n)$, it suffices to prove that the following claim.
        \begin{claim}
            $\cri(C)$ can be calculated in $\polylog(n)$ time when $|C| = \polylog(n)$. 
        \end{claim}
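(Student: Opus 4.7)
The plan is to exploit the locality of $C$ and speed up the matching-check procedure from \autoref{lem: cri EPHP efficient}. Since $|C| = \polylog(n)$, only a $\polylog(n)$-size set of pigeons and holes is directly touched by $C$: let $P_C \subseteq [n+1]$ be the pigeons appearing in some literal of $C$, $H_C \subseteq [n]$ the holes so appearing, and $H_{\mathrm{forced}} \subseteq H_C$ the holes $j$ such that $\overline{x}_{i,j} \in C$ for some $i$. All three sets have size at most $|C|$. The crucial observation is that every ``generic'' pigeon $i \notin P_C$ has the same allowed neighborhood $[n] \setminus H_{\mathrm{forced}}$ in the reduced bipartite graph defined in the proof of \autoref{lem: cri EPHP efficient}.

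I would first reduce the enumeration over $\ell$ from $n+1$ cases to $O(\polylog n)$: for each $\ell \in P_C$, check the existence of an $\ell$-critical assignment directly; for $\ell \notin P_C$, by the permutation symmetry among generic pigeons the answer depends only on $C$, so a single check yields the contribution of all $|[n+1]\setminus P_C|$ such $\ell$ to $\cri(C)$. Each remaining matching check reduces to a small bipartite matching problem: a perfect matching in the full graph exists if and only if there is a matching $M$ of $A := P_C \setminus \{\ell\}$ into $[n]$ such that (i) $M(i) \in N(i)$ for every $i \in A$, where $N(i)$ is $i$'s allowed neighborhood dictated by literals of $C$ involving $i$, and (ii) $H_{\mathrm{forced}} \subseteq M(A)$. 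The equivalence holds because the remaining $n - |A|$ generic pigeons form a complete bipartite graph on the $n - |A|$ leftover holes (which all lie in $[n] \setminus H_{\mathrm{forced}}$), while holes in $H_{\mathrm{forced}}$ are unreachable by any generic pigeon.

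The main obstacle is realizing the ``small'' matching of $A$ in $\polylog(n)$ time despite the $y$ and $\overline{y}$ literals, which can restrict a special pigeon's neighborhood $N(i)$ to a narrow interval rather than merely a small perturbation of $[n]$. To handle this I would represent each $N(i)$ as an intersection of at most $|C|$ simple constraints (single-hole fixings from $\overline{x}$-literals, single-hole exclusions from $x$-literals, and half-line restrictions from $y$ and $\overline{y}$ literals), so that the ``critical'' holes---interval endpoints and exclusion points taken across all $i \in A$---form a set of only $O(\polylog n)$ values. The matching problem then transfers to an auxiliary bipartite graph whose right side consists of these critical holes together with $O(\polylog n)$ ``segment'' vertices of bounded capacity (representing the remaining runs of generic holes between consecutive critical points), yielding a bipartite matching instance of size $\polylog(n)$ that can be solved by a standard algorithm in $\polylog(n)$ time. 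Combining all $O(\polylog n)$ such checks yields $\cri(C)$ within the required time bound.
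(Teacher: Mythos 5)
Your overall plan is legitimate and, in its second step, genuinely different from the paper's: the paper also cuts the enumeration down to the $\polylog(n)$ involved pigeons plus one generic representative, but it then solves each matching check by repeatedly discarding high-degree pigeons (first those of full degree, then those whose degree exceeds the number of surviving pigeons) and running a standard matching algorithm on the remaining $\polylog(n)$-size core, whereas you instead characterize exactly when a matching of the special pigeons $A$ extends over the generic pigeons and then contract runs of generic holes into capacitated segment vertices. The symmetry step and the segment-contraction step are fine.

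There is, however, a concrete gap: your claimed equivalence fails when $C$ contains a literal $\overline{x}_{\ell,j}$ for the very pigeon $\ell$ being tested. In that case no $\ell$-critical assignment falsifying $C$ exists at all (falsifying $C$ forces $x_{\ell,j}=1$, while $\ell$-criticality forces $x_{\ell,j}=0$; equivalently, in the reduced graph of \autoref{lem: cri EPHP efficient} hole $j$ becomes isolated, since every edge $(i',j)$ with $i'\neq\ell$ is deleted and pigeon $\ell$ is excluded from the matching). Yet your conditions (i) and (ii) can still be satisfiable, because $N(i)$ is built only from literals involving $i$, so $j$ may remain in $N(i)$ for $i\in A$, and condition (ii) then lets some special pigeon cover $j$. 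Concretely, for $C=\overline{x}_{1,5}\vee y_{2,3}$ and $\ell=1$ you would accept $M(2)=5$ and declare pigeon $1$ critical, although it is not; this overcounts $\cri(C)$. (Your bookkeeping does handle the case where the forcing literal $\overline{x}_{i'',j}$ has $i''\in A$: then $N(i'')\subseteq\{j\}$, so (i) plus injectivity protects $j$. The failure is precisely when the forcing pigeon is $\ell$ itself.) The fix is easy and is exactly the special case the paper's \autoref{lem: cri EPHP efficient} makes explicit: declare $\ell$ non-critical outright whenever some $\overline{x}_{\ell,j}$ appears in $C$ (equivalently, remove from every $N(i)$ all holes forced to a pigeon other than $i$, so that condition (ii) becomes unsatisfiable in that situation). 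With that correction, your route goes through in $\polylog(n)$ time.
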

        \begin{claimproof}
            We modify the algorithm described in the proof of \autoref{lem: cri EPHP efficient}. 
            First, notice that we do not have to enumerate all possible $\ell \in [n+1]$, because only $\polylog(n)$ pigeons are \emph{involved} in the clause $C$, where we say a pigeon $\ell$ is \emph{involved} in $C$ if a literal related to $\ell$ appears in $C$. Any pigeons that are not involved in $C$ would be equivalent, thus, we only need to consider any one of them.
            
            For a fixed $\ell$, deciding whether an $\ell$-critical assignment exists for $C$ is reduced to the following graph problem: Given a complete bipartite graph with $n$ pigeons on the left and $n$ holes on the right, $\polylog(n)$ sets of edges are then deleted, determine whether a perfect matching still exists in the end. Each deleted set can be described by a triple $(i, j_1, j_2)$, representing the set $\{(i,j): j_1 \leq j \leq j_2\}$. 
        
            It is not difficult to design an $\polylog(n)$ time algorithm for this problem by exploiting the sparsity:
            \begin{enumerate}
                \item We first ignore all pigeons with full degree $n$, because they could always be matched in the end.
                \item Suppose we have $t_1 = \polylog(n)$ pigeons left after the first step. We then ignore all pigeons with the degree at least $t+1$ for the same reason.
                \item We have $t_2 = \polylog(n)$ pigeons left now, and there are at most $t_1 \cdot t_2 = \polylog(n)$ edges connected to those pigeons. So, we can run the standard maximum matching algorithm on the subgraph of the remaining pigeons. The original graph has a perfect matching if and only if all $t_2$ pigeons could be matched.\qedhere
            \end{enumerate}
        \end{claimproof}

        A similar argument works for $\PPA$. We use the fact that $\SearchCNF(\Tseitin(G, \tau))$ is in $\PPA$ when the graph $G$ has a constant degree. We will fix a family of strongly explicit expander graph $G$ and an odd-weighted function $\tau$, rather than giving them as input as we did in \autoref{sec: tseitin}. For example, we can take $G$ as a 2D-grid with a boundary being wrapping around, and $\tau(v) = 1$ only if $v$ is some designated vertex (say $(1, 1)$). Then, we claim that the $\cri(C)$ function (defined differently for the $\Tseitin$ formula in \autoref{thm: Tseitin width lower bound refuter}) can also be calculated in $\polylog(n)$ time when $|C| = \polylog(n)$ by exploiting the sparsity of $C$. We omit the proof this claim here.
        Finally, using the same proof of \autoref{thm: Tseitin width lower bound refuter}, we show that $\Refuter(w(\Tseitin\vdash_\Res\bot) < w_0$ is in $\PLS$ via a uniform reduction when $w_0 = \polylog(N)$, which concludes the proof.
    \end{proof}

\section*{Acknowledgments}
Jiawei thanks Igor C.~Oliveira for introducing him to refuters and their connections with $\TFNP$ and thanks Robert Robere and Noah Fleming for knowledge of proof complexity. 

Yuhao thanks Toniann Pitassi for the knowledge and guidance in the field of proof complexity and thanks Robert Robere for introducing him to beneficial intuition about proof systems and $\TFNP^{\dt}$ classes. 

Hanlin thanks Svyatoslav Gryaznov and Iddo Tzameret for helpful discussions regarding \cite{RezendeGNPR021} and proof complexity in general, and Rahul Santhanam and Ján Pich for beneficial conversations. %

We thank Lijie Chen, Jiatu Li, and Igor C.~Oliveira for sending us a preliminary version of \cite{ChenLiOliveira24}. We thank Michal \Garlik for helpful discussions on \cite{Garlik19}. We thank Ján Pich and anonymous referees for their helpful suggestions that improve the presentation of this paper. %

{\small \bibliography{main}}

\newcommand{\etalchar}[1]{$^{#1}$}
\begin{thebibliography}{dRMN{\etalchar{+}}20}

\bibitem[AB87]{AlonB87}
Noga Alon and Ravi~B. Boppana.
\newblock The monotone circuit complexity of {B}oolean functions.
\newblock {\em Comb.}, 7(1):1--22, 1987.
\newblock \href {https://doi.org/10.1007/BF02579196}
  {\path{doi:10.1007/BF02579196}}.

\bibitem[ABM23]{AtseriasBM23}
Albert Atserias, Sam Buss, and Moritz M{\"{u}}ller.
\newblock On the consistency of circuit lower bounds for non-deterministic
  time.
\newblock In {\em {STOC}}, pages 1257--1270. {ACM}, 2023.
\newblock \href {https://doi.org/10.1145/3564246.3585253}
  {\path{doi:10.1145/3564246.3585253}}.

\bibitem[AKPS24]{ArtecheKPS24}
Noel Arteche, Erfan Khaniki, J{\'{a}}n Pich, and Rahul Santhanam.
\newblock From proof complexity to circuit complexity via interactive
  protocols.
\newblock In {\em {ICALP}}, volume 297 of {\em LIPIcs}, pages 12:1--12:20.
  Schloss Dagstuhl - Leibniz-Zentrum f{\"{u}}r Informatik, 2024.
\newblock \href {https://doi.org/10.4230/LIPICS.ICALP.2024.12}
  {\path{doi:10.4230/LIPICS.ICALP.2024.12}}.

\bibitem[ALWZ21]{ALWZ-sunflower}
Ryan Alweiss, Shachar Lovett, Kewen Wu, and Jiapeng Zhang.
\newblock {Improved bounds for the sunflower lemma}.
\newblock {\em Annals of Mathematics}, 194(3):795--815, 2021.
\newblock \href {https://doi.org/10.4007/annals.2021.194.3.5}
  {\path{doi:10.4007/annals.2021.194.3.5}}.

\bibitem[AM20]{AtseriasM19}
Albert Atserias and Moritz M{\"{u}}ller.
\newblock Automating resolution is {$\NP$}-hard.
\newblock {\em J. {ACM}}, 67(5):31:1--31:17, 2020.
\newblock \href {https://doi.org/10.1145/3409472} {\path{doi:10.1145/3409472}}.

\bibitem[AT14]{AtseriasT14}
Albert Atserias and Neil Thapen.
\newblock The ordering principle in a fragment of approximate counting.
\newblock {\em {ACM} Trans. Comput. Log.}, 15(4):29:1--29:11, 2014.
\newblock \href {https://doi.org/10.1145/2629555} {\path{doi:10.1145/2629555}}.

\bibitem[Ats03]{Atserias03}
Albert Atserias.
\newblock Improved bounds on the weak pigeonhole principle and infinitely many
  primes from weaker axioms.
\newblock {\em Theor. Comput. Sci.}, 295:27--39, 2003.
\newblock \href {https://doi.org/10.1016/S0304-3975(02)00394-8}
  {\path{doi:10.1016/S0304-3975(02)00394-8}}.

\bibitem[BB17]{BeckmannB17}
Arnold Beckmann and Sam Buss.
\newblock The {$\NP$} search problems of {F}rege and {E}xtended {F}rege proofs.
\newblock {\em {ACM} Trans. Comput. Log.}, 18(2):11:1--11:19, 2017.
\newblock \href {https://doi.org/10.1145/3060145} {\path{doi:10.1145/3060145}}.

\bibitem[BCE{\etalchar{+}}98]{BeameCEIP98}
Paul Beame, Stephen~A. Cook, Jeff Edmonds, Russell Impagliazzo, and Toniann
  Pitassi.
\newblock The relative complexity of {$\NP$} search problems.
\newblock {\em J. Comput. Syst. Sci.}, 57(1):3--19, 1998.
\newblock \href {https://doi.org/10.1006/JCSS.1998.1575}
  {\path{doi:10.1006/JCSS.1998.1575}}.

\bibitem[Bel20]{Bell20}
Zo{\"{e}} Bell.
\newblock Automating regular or ordered resolution is {$\NP$}-hard.
\newblock {\em Electron. Colloquium Comput. Complex.}, {TR20-105}, 2020.
\newblock URL: \url{https://eccc.weizmann.ac.il/report/2020/105}.

\bibitem[BFI23]{BFI23}
Sam Buss, Noah Fleming, and Russell Impagliazzo.
\newblock {$\TFNP$} characterizations of proof systems and monotone circuits.
\newblock In {\em {ITCS}}, volume 251 of {\em LIPIcs}, pages 30:1--30:40.
  Schloss Dagstuhl - Leibniz-Zentrum f{\"{u}}r Informatik, 2023.
\newblock \href {https://doi.org/10.4230/LIPIcs.ITCS.2023.30}
  {\path{doi:10.4230/LIPIcs.ITCS.2023.30}}.

\bibitem[BIK{\etalchar{+}}94]{BeameIKPP94}
Paul Beame, Russell Impagliazzo, Jan Kraj\'{\i}\v{c}ek, Toniann Pitassi, and
  Pavel Pudl{\'{a}}k.
\newblock Lower bound on {Hilbert's Nullstellensatz} and propositional proofs.
\newblock In {\em {FOCS}}, pages 794--806. {IEEE} Computer Society, 1994.
\newblock \href {https://doi.org/10.1109/SFCS.1994.365714}
  {\path{doi:10.1109/SFCS.1994.365714}}.

\bibitem[BIK{\etalchar{+}}97]{BussIPRS97}
Samuel~R. Buss, Russell Impagliazzo, Jan Kraj\'{\i}\v{c}ek, Pavel Pudl{\'{a}}k,
  Alexander~A. Razborov, and Jir{\'{\i}} Sgall.
\newblock Proof complexity in algebraic systems and bounded depth {F}rege
  systems with modular counting.
\newblock {\em Comput. Complex.}, 6(3):256--298, 1997.
\newblock \href {https://doi.org/10.1007/BF01294258}
  {\path{doi:10.1007/BF01294258}}.

\bibitem[BJ12]{BussJ12}
Samuel~R. Buss and Alan~S. Johnson.
\newblock Propositional proofs and reductions between {$\NP$} search problems.
\newblock {\em Ann. Pure Appl. Log.}, 163(9):1163--1182, 2012.
\newblock \href {https://doi.org/10.1016/J.APAL.2012.01.015}
  {\path{doi:10.1016/J.APAL.2012.01.015}}.

\bibitem[BK94]{BussKrajicek94}
Samuel~R. Buss and Jan Krajíček.
\newblock An application of {B}oolean complexity to separation problems in
  bounded arithmetic.
\newblock {\em Proceedings of the London Mathematical Society}, s3-69(1):1--21,
  1994.
\newblock \href {https://doi.org/10.1112/plms/s3-69.1.1}
  {\path{doi:10.1112/plms/s3-69.1.1}}.

\bibitem[BKO20]{BydzovskyKO20}
Jan Bydzovsky, Jan Kraj{\'{\i}}cek, and Igor~C. Oliveira.
\newblock Consistency of circuit lower bounds with bounded theories.
\newblock {\em Log. Methods Comput. Sci.}, 16(2), 2020.
\newblock \href {https://doi.org/10.23638/LMCS-16(2:12)2020}
  {\path{doi:10.23638/LMCS-16(2:12)2020}}.

\bibitem[BKT14]{BussKT14}
Samuel~R. Buss, Leszek~Aleksander Kołodziejczyk, and Neil Thapen.
\newblock Fragments of approximate counting.
\newblock {\em J. Symb. Log.}, 79(2):496--525, 2014.
\newblock \href {https://doi.org/10.1017/JSL.2013.37}
  {\path{doi:10.1017/JSL.2013.37}}.

\bibitem[BKZ15]{buss2015collapsing}
Samuel~R. Buss, Leszek~Aleksander Ko{\l}odziejczyk, and Konrad Zdanowski.
\newblock Collapsing modular counting in bounded arithmetic and constant depth
  propositional proofs.
\newblock {\em Trans. Amer. Math. Soc.}, 367(11):7517--7563, 2015.
\newblock \href {https://doi.org/10.1090/S0002-9947-2015-06233-3}
  {\path{doi:10.1090/S0002-9947-2015-06233-3}}.

\bibitem[BM04]{B-OM04}
Josh Buresh{-}Oppenheim and Tsuyoshi Morioka.
\newblock Relativized {$\NP$} search problems and propositional proof systems.
\newblock In {\em {CCC}}, pages 54--67. {IEEE} Computer Society, 2004.
\newblock \href {https://doi.org/10.1109/CCC.2004.1313795}
  {\path{doi:10.1109/CCC.2004.1313795}}.

\bibitem[BM20]{BydzovskyM20}
Jan Bydzovsky and Moritz M{\"{u}}ller.
\newblock Polynomial time ultrapowers and the consistency of circuit lower
  bounds.
\newblock {\em Arch. Math. Log.}, 59(1-2):127--147, 2020.
\newblock \href {https://doi.org/10.1007/S00153-019-00681-Y}
  {\path{doi:10.1007/S00153-019-00681-Y}}.

\bibitem[BP96]{beame1996simplified}
Paul Beame and Toniann Pitassi.
\newblock Simplified and improved resolution lower bounds.
\newblock In {\em {FOCS}}, pages 274--282. IEEE, 1996.
\newblock \href {https://doi.org/10.1109/SFCS.1996.548486}
  {\path{doi:10.1109/SFCS.1996.548486}}.

\bibitem[Bro11]{Brouwer}
L.~E.~J. Brouwer.
\newblock Über abbildung von mannigfaltigkeiten.
\newblock {\em Mathematische Annalen}, 71:97--115, 1911.
\newblock In German.
\newblock \href {https://doi.org/10.1007/BF01456931}
  {\path{doi:10.1007/BF01456931}}.

\bibitem[Bus85]{Buss85}
Samuel~R. Buss.
\newblock {\em Bounded arithmetic}.
\newblock Princeton University, 1985.

\bibitem[BW01]{Ben-SassonW01}
Eli Ben{-}Sasson and Avi Wigderson.
\newblock Short proofs are narrow - resolution made simple.
\newblock {\em J. {ACM}}, 48(2):149--169, 2001.
\newblock \href {https://doi.org/10.1145/375827.375835}
  {\path{doi:10.1145/375827.375835}}.

\bibitem[CDT09]{chen2009settling}
Xi~Chen, Xiaotie Deng, and Shang-Hua Teng.
\newblock Settling the complexity of computing two-player {N}ash equilibria.
\newblock {\em J. {ACM}}, 56(3):1--57, 2009.
\newblock \href {https://doi.org/10.1145/1516512.1516516}
  {\path{doi:10.1145/1516512.1516516}}.

\bibitem[CEI96]{clegg1996using}
Matthew Clegg, Jeffery Edmonds, and Russell Impagliazzo.
\newblock Using the {G}roebner basis algorithm to find proofs of
  unsatisfiability.
\newblock In {\em {STOC}}, pages 174--183, 1996.
\newblock \href {https://doi.org/10.1145/237814.237860}
  {\path{doi:10.1145/237814.237860}}.

\bibitem[CJSW24]{CJSW21}
Lijie Chen, Ce~Jin, Rahul Santhanam, and Ryan Williams.
\newblock Constructive separations and their consequences.
\newblock {\em {TheoretiCS}}, {volume 3}, February 2024.
\newblock \href {https://doi.org/10.46298/theoretics.24.3}
  {\path{doi:10.46298/theoretics.24.3}}.

\bibitem[CK07]{CookK07}
Stephen~A. Cook and Jan Kraj\'{\i}\v{c}ek.
\newblock Consequences of the provability of {$\NP\subseteq \P/_\poly$}.
\newblock {\em J. Symb. Log.}, 72(4):1353--1371, 2007.
\newblock \href {https://doi.org/10.2178/JSL/1203350791}
  {\path{doi:10.2178/JSL/1203350791}}.

\bibitem[CKKO21]{CarmosinoKKO21}
Marco Carmosino, Valentine Kabanets, Antonina Kolokolova, and Igor~C. Oliveira.
\newblock {LEARN}-uniform circuit lower bounds and provability in bounded
  arithmetic.
\newblock In {\em {FOCS}}, pages 770--780. {IEEE}, 2021.
\newblock \href {https://doi.org/10.1109/FOCS52979.2021.00080}
  {\path{doi:10.1109/FOCS52979.2021.00080}}.

\bibitem[CLO24]{ChenLiOliveira24}
Lijie Chen, Jiatu Li, and Igor~C. Oliveira.
\newblock Reverse mathematics of complexity lower bounds.
\newblock In {\em {FOCS}}, pages 505--527. {IEEE}, 2024.
\newblock \href {https://doi.org/10.1109/FOCS61266.2024.00040}
  {\path{doi:10.1109/FOCS61266.2024.00040}}.

\bibitem[CLO25]{CLO24b}
Lijie Chen, Jiatu Li, and Igor~Carboni Oliveira.
\newblock On the unprovability of circuit size bounds in intuitionistic
  {$\S^1_2$}.
\newblock {\em Logical Methods in Computer Science}, Volume 21, Issue 3, Sep
  2025.
\newblock \href {https://doi.org/10.46298/lmcs-21(3:26)2025}
  {\path{doi:10.46298/lmcs-21(3:26)2025}}.

\bibitem[CN10]{Cook-Nguyen}
Stephen~A. Cook and Phuong Nguyen.
\newblock {\em Logical Foundations of Proof Complexity}, volume~11.
\newblock Cambridge University Press, 2010.
\newblock \href {https://doi.org/10.1017/CBO9780511676277}
  {\path{doi:10.1017/CBO9780511676277}}.

\bibitem[Cob64]{cobham1964intrinsic}
Alan Cobham.
\newblock The intrinsic computational difficulty of functions.
\newblock In {\em Proc. Logic, Methodology, and the Philosophy of Science},
  pages 24--30, 1964.

\bibitem[Coo75]{Cook75}
Stephen~A. Cook.
\newblock Feasibly constructive proofs and the propositional calculus
  (preliminary version).
\newblock In {\em {STOC}}, pages 83--97. {ACM}, 1975.
\newblock \href {https://doi.org/10.1145/800116.803756}
  {\path{doi:10.1145/800116.803756}}.

\bibitem[Coo07]{Cook07}
Stephen~A. Cook.
\newblock Bounded reverse mathematics, 2007.
\newblock Plenary lecture for CiE 2007.

\bibitem[CP90]{CookP90}
Stephen~A. Cook and Toniann Pitassi.
\newblock A feasibly constructive lower bound for resolution proofs.
\newblock {\em Inf. Process. Lett.}, 34(2):81--85, 1990.
\newblock \href {https://doi.org/10.1016/0020-0190(90)90141-J}
  {\path{doi:10.1016/0020-0190(90)90141-J}}.

\bibitem[CS88]{ChvatalS88}
Vasek Chv{\'{a}}tal and Endre Szemer{\'{e}}di.
\newblock Many hard examples for resolution.
\newblock {\em J. {ACM}}, 35(4):759--768, 1988.
\newblock \href {https://doi.org/10.1145/48014.48016}
  {\path{doi:10.1145/48014.48016}}.

\bibitem[CTW23]{CTW23}
Lijie Chen, Roei Tell, and Ryan Williams.
\newblock Derandomization vs refutation: {A} unified framework for
  characterizing derandomization.
\newblock In {\em {FOCS}}, pages 1008--1047. {IEEE}, 2023.
\newblock \href {https://doi.org/10.1109/FOCS57990.2023.00062}
  {\path{doi:10.1109/FOCS57990.2023.00062}}.

\bibitem[DGP09]{daskalakis2009complexity}
Constantinos Daskalakis, Paul~W. Goldberg, and Christos~H. Papadimitriou.
\newblock The complexity of computing a {N}ash equilibrium.
\newblock {\em {SIAM} J. Comput.}, 39(1):195--259, 2009.
\newblock \href {https://doi.org/10.1137/070699652}
  {\path{doi:10.1137/070699652}}.

\bibitem[DLL62]{DavisLL62}
Martin Davis, George Logemann, and Donald~W. Loveland.
\newblock A machine program for theorem-proving.
\newblock {\em Commun. {ACM}}, 5(7):394--397, 1962.
\newblock \href {https://doi.org/10.1145/368273.368557}
  {\path{doi:10.1145/368273.368557}}.

\bibitem[DP60]{DavisP60}
Martin Davis and Hilary Putnam.
\newblock A computing procedure for quantification theory.
\newblock {\em J. {ACM}}, 7(3):201--215, 1960.
\newblock \href {https://doi.org/10.1145/321033.321034}
  {\path{doi:10.1145/321033.321034}}.

\bibitem[DR03]{DantchevR03}
Stefan~S. Dantchev and S{\o}ren Riis.
\newblock On relativisation and complexity gap for resolution-based proof
  systems.
\newblock In {\em {CSL}}, volume 2803 of {\em Lecture Notes in Computer
  Science}, pages 142--154. Springer, 2003.
\newblock \href {https://doi.org/10.1007/978-3-540-45220-1\_14}
  {\path{doi:10.1007/978-3-540-45220-1\_14}}.

\bibitem[dRGN{\etalchar{+}}21]{RezendeGNPR021}
Susanna~F. de~Rezende, Mika G{\"{o}}{\"{o}}s, Jakob Nordstr{\"{o}}m, Toniann
  Pitassi, Robert Robere, and Dmitry Sokolov.
\newblock Automating algebraic proof systems is {$\NP$}-hard.
\newblock In {\em {STOC}}, pages 209--222. {ACM}, 2021.
\newblock \href {https://doi.org/10.1145/3406325.3451080}
  {\path{doi:10.1145/3406325.3451080}}.

\bibitem[dRGR22]{RezendeGR22survey}
Susanna~F. de~Rezende, Mika G{\"{o}}{\"{o}}s, and Robert Robere.
\newblock Proofs, circuits, and communication.
\newblock {\em {SIGACT} News}, 53(1):59--82, 2022.
\newblock \href {https://doi.org/10.1145/3532737.3532746}
  {\path{doi:10.1145/3532737.3532746}}.

\bibitem[dRMN{\etalchar{+}}20]{de2020lifting}
Susanna~F. de~Rezende, Or~Meir, Jakob Nordstr{\"o}m, Toniann Pitassi, Robert
  Robere, and Marc Vinyals.
\newblock Lifting with simple gadgets and applications to circuit and proof
  complexity.
\newblock In {\em {FOCS}}, pages 24--30. {IEEE}, 2020.
\newblock \href {https://doi.org/10.1109/FOCS46700.2020.00011}
  {\path{doi:10.1109/FOCS46700.2020.00011}}.

\bibitem[dRNV16]{de2016limited}
Susanna~F. de~Rezende, Jakob Nordstr{\"o}m, and Marc Vinyals.
\newblock How limited interaction hinders real communication (and what it means
  for proof and circuit complexity).
\newblock In {\em {FOCS}}, pages 295--304. {IEEE}, 2016.
\newblock \href {https://doi.org/10.1109/FOCS.2016.40}
  {\path{doi:10.1109/FOCS.2016.40}}.

\bibitem[Ebt23]{ebtehaj2023variants}
Mohammad~Hossein Ebtehaj.
\newblock Variants of pseudo-deterministic algorithms and duality in {$\TFNP$}.
\newblock Master's thesis, University of Waterloo, 2023.
\newblock URL: \url{http://hdl.handle.net/10012/19721}.

\bibitem[ER60]{erdos1960intersection}
Paul Erd{\"o}s and Richard Rado.
\newblock Intersection theorems for systems of sets.
\newblock {\em Journal of the London Mathematical Society}, 1(1):85--90, 1960.
\newblock \href {https://doi.org/10.1112/jlms/s1-35.1.85}
  {\path{doi:10.1112/jlms/s1-35.1.85}}.

\bibitem[FGPR24]{PPP-Turing}
Noah Fleming, Stefan Grosser, Toniann Pitassi, and Robert Robere.
\newblock Black-box {$\PPP$} is not {T}uring-closed.
\newblock In {\em {STOC}}, pages 1405--1414. {ACM}, 2024.
\newblock \href {https://doi.org/10.1145/3618260.3649769}
  {\path{doi:10.1145/3618260.3649769}}.

\bibitem[Gar19]{Garlik19}
Michal Garl{\'{\i}}k.
\newblock Resolution lower bounds for refutation statements.
\newblock In {\em {MFCS}}, volume 138 of {\em LIPIcs}, pages 37:1--37:13.
  Schloss Dagstuhl - Leibniz-Zentrum f{\"{u}}r Informatik, 2019.
\newblock \href {https://doi.org/10.4230/LIPICS.MFCS.2019.37}
  {\path{doi:10.4230/LIPICS.MFCS.2019.37}}.

\bibitem[Gar24]{Garlik24}
Michal Garl{\'{\i}}k.
\newblock Failure of feasible disjunction property for {$k$-DNF} resolution and
  {$\NP$}-hardness of automating it.
\newblock In {\em {CCC}}, volume 300 of {\em LIPIcs}, pages 33:1--33:23.
  Schloss Dagstuhl - Leibniz-Zentrum f{\"{u}}r Informatik, 2024.
\newblock \href {https://doi.org/10.4230/LIPICS.CCC.2024.33}
  {\path{doi:10.4230/LIPICS.CCC.2024.33}}.

\bibitem[GGKS20]{GGKS}
Ankit Garg, Mika G{\"{o}}{\"{o}}s, Pritish Kamath, and Dmitry Sokolov.
\newblock Monotone circuit lower bounds from resolution.
\newblock {\em Theory Comput.}, 16:1--30, 2020.
\newblock \href {https://doi.org/10.4086/TOC.2020.V016A013}
  {\path{doi:10.4086/TOC.2020.V016A013}}.

\bibitem[GHJ{\etalchar{+}}22]{GHMPRT22separation}
Mika G{\"{o}}{\"{o}}s, Alexandros Hollender, Siddhartha Jain, Gilbert Maystre,
  William Pires, Robert Robere, and Ran Tao.
\newblock Separations in proof complexity and {$\TFNP$}.
\newblock In {\em {FOCS}}, pages 1150--1161. {IEEE}, 2022.
\newblock \href {https://doi.org/10.1109/FOCS54457.2022.00111}
  {\path{doi:10.1109/FOCS54457.2022.00111}}.

\bibitem[GKMP20]{GoosKMP20}
Mika G{\"{o}}{\"{o}}s, Sajin Koroth, Ian Mertz, and Toniann Pitassi.
\newblock Automating cutting planes is {$\NP$}-hard.
\newblock In {\em {STOC}}, pages 68--77. {ACM}, 2020.
\newblock \href {https://doi.org/10.1145/3357713.3384248}
  {\path{doi:10.1145/3357713.3384248}}.

\bibitem[GP18a]{GoldbergP18}
Paul~W. Goldberg and Christos~H. Papadimitriou.
\newblock Towards a unified complexity theory of total functions.
\newblock {\em J. Comput. Syst. Sci.}, 94:167--192, 2018.
\newblock \href {https://doi.org/10.1016/J.JCSS.2017.12.003}
  {\path{doi:10.1016/J.JCSS.2017.12.003}}.

\bibitem[GP18b]{goos2014communication}
Mika G{\"{o}}{\"{o}}s and Toniann Pitassi.
\newblock Communication lower bounds via critical block sensitivity.
\newblock {\em {SIAM} J. Comput.}, 47(5):1778--1806, 2018.
\newblock \href {https://doi.org/10.1137/16M1082007}
  {\path{doi:10.1137/16M1082007}}.

\bibitem[GST07]{GutfreundST07}
Dan Gutfreund, Ronen Shaltiel, and Amnon Ta{-}Shma.
\newblock If {$\NP$} languages are hard on the worst-case, then it is easy to
  find their hard instances.
\newblock {\em Comput. Complex.}, 16(4):412--441, 2007.
\newblock \href {https://doi.org/10.1007/S00037-007-0235-8}
  {\path{doi:10.1007/S00037-007-0235-8}}.

\bibitem[Hak85]{Haken85}
Armin Haken.
\newblock The intractability of resolution.
\newblock {\em Theor. Comput. Sci.}, 39:297--308, 1985.
\newblock \href {https://doi.org/10.1016/0304-3975(85)90144-6}
  {\path{doi:10.1016/0304-3975(85)90144-6}}.

\bibitem[Hak95]{Haken95}
Armin Haken.
\newblock Counting bottlenecks to show monotone {$\P\ne\NP$}.
\newblock In {\em {FOCS}}, pages 36--40. {IEEE} Computer Society, 1995.
\newblock \href {https://doi.org/10.1109/SFCS.1995.492460}
  {\path{doi:10.1109/SFCS.1995.492460}}.

\bibitem[HN12]{huynh2012virtue}
Trinh Huynh and Jakob Nordstr{\"{o}}m.
\newblock On the virtue of succinct proofs: Amplifying communication complexity
  hardness to time-space trade-offs in proof complexity.
\newblock In {\em {STOC}}, pages 233--248. {ACM}, 2012.
\newblock \href {https://doi.org/10.1145/2213977.2214000}
  {\path{doi:10.1145/2213977.2214000}}.

\bibitem[ILW23]{IlangoLW23}
Rahul Ilango, Jiatu Li, and R.~Ryan Williams.
\newblock Indistinguishability obfuscation, range avoidance, and bounded
  arithmetic.
\newblock In {\em {STOC}}, pages 1076--1089. {ACM}, 2023.
\newblock \href {https://doi.org/10.1145/3564246.3585187}
  {\path{doi:10.1145/3564246.3585187}}.

\bibitem[IR22]{ItsyksonR22}
Dmitry Itsykson and Artur Riazanov.
\newblock Automating {OBDD} proofs is {$\NP$}-hard.
\newblock In {\em {MFCS}}, volume 241 of {\em LIPIcs}, pages 59:1--59:15.
  Schloss Dagstuhl - Leibniz-Zentrum f{\"{u}}r Informatik, 2022.
\newblock \href {https://doi.org/10.4230/LIPICS.MFCS.2022.59}
  {\path{doi:10.4230/LIPICS.MFCS.2022.59}}.

\bibitem[Je{\v{r}}04]{Jerabek04}
Emil Je{\v{r}}{\'{a}}bek.
\newblock Dual weak pigeonhole principle, {Boolean} complexity, and
  derandomization.
\newblock {\em Ann. Pure Appl. Log.}, 129(1-3):1--37, 2004.
\newblock \href {https://doi.org/10.1016/j.apal.2003.12.003}
  {\path{doi:10.1016/j.apal.2003.12.003}}.

\bibitem[Je{\v{r}}07a]{Jerabek-APC1}
Emil Je{\v{r}}{\'{a}}bek.
\newblock Approximate counting in bounded arithmetic.
\newblock {\em J. Symb. Log.}, 72(3):959--993, 2007.
\newblock \href {https://doi.org/10.2178/JSL/1191333850}
  {\path{doi:10.2178/JSL/1191333850}}.

\bibitem[Je{\v{r}}07b]{Jerabek-independence}
Emil Je{\v{r}}{\'{a}}bek.
\newblock On independence of variants of the weak pigeonhole principle.
\newblock {\em J. Log. Comput.}, 17(3):587--604, 2007.
\newblock \href {https://doi.org/10.1093/LOGCOM/EXM017}
  {\path{doi:10.1093/LOGCOM/EXM017}}.

\bibitem[Je{\v{r}}09]{Jerabek-APC2}
Emil Je{\v{r}}{\'{a}}bek.
\newblock Approximate counting by hashing in bounded arithmetic.
\newblock {\em J. Symb. Log.}, 74(3):829--860, 2009.
\newblock \href {https://doi.org/10.2178/JSL/1245158087}
  {\path{doi:10.2178/JSL/1245158087}}.

\bibitem[JPY88]{DBLP:journals/jcss/JohnsonPY88}
David~S. Johnson, Christos~H. Papadimitriou, and Mihalis Yannakakis.
\newblock How easy is local search?
\newblock {\em J. Comput. Syst. Sci.}, 37(1):79--100, 1988.
\newblock \href {https://doi.org/10.1016/0022-0000(88)90046-3}
  {\path{doi:10.1016/0022-0000(88)90046-3}}.

\bibitem[Kam19]{Pritish_Kamath_PhD}
Pritish Kamath.
\newblock {\em Some hardness escalation results in computational complexity
  theory}.
\newblock PhD thesis, Massachusetts Institute of Technology, 2019.
\newblock URL: \url{https://hdl.handle.net/1721.1/128290}.

\bibitem[KKMP21]{KKMP21}
Robert Kleinberg, Oliver Korten, Daniel Mitropolsky, and Christos~H.
  Papadimitriou.
\newblock Total functions in the polynomial hierarchy.
\newblock In {\em ITCS}, volume 185 of {\em LIPIcs}, pages 44:1--44:18, 2021.
\newblock \href {https://doi.org/10.4230/LIPIcs.ITCS.2021.44}
  {\path{doi:10.4230/LIPIcs.ITCS.2021.44}}.

\bibitem[KNT11]{KolodziejczykNT11}
Leszek~Aleksander Ko{\l}odziejczyk, Phuong Nguyen, and Neil Thapen.
\newblock The provably total {$\NP$} search problems of weak second order
  bounded arithmetic.
\newblock {\em Ann. Pure Appl. Log.}, 162(6):419--446, 2011.
\newblock \href {https://doi.org/10.1016/J.APAL.2010.12.002}
  {\path{doi:10.1016/J.APAL.2010.12.002}}.

\bibitem[KO17]{KrajicekO16}
Jan Kraj\'{\i}\v{c}ek and Igor~C. Oliveira.
\newblock Unprovability of circuit upper bounds in {C}ook's theory {$\PV$}.
\newblock {\em Log. Methods Comput. Sci.}, 13(1), 2017.
\newblock \href {https://doi.org/10.23638/LMCS-13(1:4)2017}
  {\path{doi:10.23638/LMCS-13(1:4)2017}}.

\bibitem[Kor21]{Korten21}
Oliver Korten.
\newblock The hardest explicit construction.
\newblock In {\em FOCS}, pages 433--444. {IEEE}, 2021.
\newblock \href {https://doi.org/10.1109/FOCS52979.2021.00051}
  {\path{doi:10.1109/FOCS52979.2021.00051}}.

\bibitem[Kor22]{Korten22}
Oliver Korten.
\newblock Derandomization from time-space tradeoffs.
\newblock In {\em {CCC}}, volume 234 of {\em LIPIcs}, pages 37:1--37:26.
  Schloss Dagstuhl - Leibniz-Zentrum f{\"{u}}r Informatik, 2022.
\newblock \href {https://doi.org/10.4230/LIPIcs.CCC.2022.37}
  {\path{doi:10.4230/LIPIcs.CCC.2022.37}}.

\bibitem[KP89]{KrajicekP89}
Jan Kraj{\'{\i}}{\v{c}}ek and Pavel Pudl{\'{a}}k.
\newblock Propositional provability and models of weak arithmetic.
\newblock In {\em {CSL}}, volume 440 of {\em Lecture Notes in Computer
  Science}, pages 193--210. Springer, 1989.
\newblock \href {https://doi.org/10.1007/3-540-52753-2\_40}
  {\path{doi:10.1007/3-540-52753-2\_40}}.

\bibitem[Kra95]{Krajicek_BA_PL_CT}
Jan Kraj\'{\i}\v{c}ek.
\newblock {\em Bounded arithmetic, propositional logic, and complexity theory},
  volume~60 of {\em Encyclopedia of mathematics and its applications}.
\newblock Cambridge University Press, 1995.
\newblock \href {https://doi.org/10.1017/CBO9780511529948}
  {\path{doi:10.1017/CBO9780511529948}}.

\bibitem[Kra97]{Krajicek97}
Jan Kraj{\'{\i}}\v{c}ek.
\newblock Interpolation theorems, lower bounds for proof systems, and
  independence results for bounded arithmetic.
\newblock {\em J. Symb. Log.}, 62(2):457--486, 1997.
\newblock \href {https://doi.org/10.2307/2275541} {\path{doi:10.2307/2275541}}.

\bibitem[Kra01]{Krajicek01b}
Jan Krajíček.
\newblock On the weak pigeonhole principle.
\newblock {\em Fundamenta Mathematicae}, 170(1-2):123--140, 2001.
\newblock URL: \url{http://eudml.org/doc/282141}.

\bibitem[Kra04]{Krajicek04a}
Jan Kraj\'{\i}\v{c}ek.
\newblock Dual weak pigeonhole principle, pseudo-surjective functions, and
  provability of circuit lower bounds.
\newblock {\em J. Symb. Log.}, 69(1):265--286, 2004.
\newblock \href {https://doi.org/10.2178/jsl/1080938841}
  {\path{doi:10.2178/jsl/1080938841}}.

\bibitem[Kra11a]{Krajicek11-Ramsey}
Jan Kraj\'{\i}\v{c}ek.
\newblock A note on propositional proof complexity of some {R}amsey-type
  statements.
\newblock {\em Arch. Math. Log.}, 50(1-2):245--255, 2011.
\newblock \href {https://doi.org/10.1007/S00153-010-0212-9}
  {\path{doi:10.1007/S00153-010-0212-9}}.

\bibitem[Kra11b]{Krajicek11}
Jan Kraj\'{\i}\v{c}ek.
\newblock On the proof complexity of the {Nisan-Wigderson} generator based on a
  hard {$\NP \cap\coNP$} function.
\newblock {\em J. Math. Log.}, 11(1), 2011.
\newblock \href {https://doi.org/10.1142/S0219061311000979}
  {\path{doi:10.1142/S0219061311000979}}.

\bibitem[Kra19]{krajicek_proof_complexity}
Jan Krajíček.
\newblock {\em Proof Complexity}.
\newblock Encyclopedia of Mathematics and its Applications. Cambridge
  University Press, 2019.
\newblock \href {https://doi.org/10.1017/9781108242066}
  {\path{doi:10.1017/9781108242066}}.

\bibitem[KST07]{KrajicekST07}
Jan Kraj\'{\i}\v{c}ek, Alan Skelley, and Neil Thapen.
\newblock {$\NP$} search problems in low fragments of bounded arithmetic.
\newblock {\em J. Symb. Log.}, 72(2):649--672, 2007.
\newblock \href {https://doi.org/10.2178/JSL/1185803628}
  {\path{doi:10.2178/JSL/1185803628}}.

\bibitem[KT22]{KolodziejczykT22}
Leszek~Aleksander Kołodziejczyk and Neil Thapen.
\newblock Approximate counting and {$\NP$} search problems.
\newblock {\em J. Math. Log.}, 22(3):2250012:1--2250012:31, 2022.
\newblock \href {https://doi.org/10.1142/S021906132250012X}
  {\path{doi:10.1142/S021906132250012X}}.

\bibitem[LC11]{Le-Cook}
Dai Tri~Man Le and Stephen~A. Cook.
\newblock Formalizing randomized matching algorithms.
\newblock {\em Log. Methods Comput. Sci.}, 8(3), 2011.
\newblock \href {https://doi.org/10.2168/LMCS-8(3:5)2012}
  {\path{doi:10.2168/LMCS-8(3:5)2012}}.

\bibitem[LO23]{LiO23}
Jiatu Li and Igor~C. Oliveira.
\newblock Unprovability of strong complexity lower bounds in bounded
  arithmetic.
\newblock In {\em {STOC}}, pages 1051--1057. {ACM}, 2023.
\newblock \href {https://doi.org/10.1145/3564246.3585144}
  {\path{doi:10.1145/3564246.3585144}}.

\bibitem[Maa84]{Maass84}
Wolfgang Maass.
\newblock Quadratic lower bounds for deterministic and nondeterministic
  one-tape {T}uring machines (extended abstract).
\newblock In {\em {STOC}}, pages 401--408. {ACM}, 1984.
\newblock \href {https://doi.org/10.1145/800057.808706}
  {\path{doi:10.1145/800057.808706}}.

\bibitem[Mor01]{morioka2001classification}
Tsuyoshi Morioka.
\newblock Classification of search problems and their definability in bounded
  arithmetic.
\newblock Master's thesis, University of Toronto, 2001.
\newblock URL: \url{https://hdl.handle.net/1807/16458}.

\bibitem[MP91]{MegiddoP91}
Nimrod Megiddo and Christos~H. Papadimitriou.
\newblock On total functions, existence theorems and computational complexity.
\newblock {\em Theor. Comput. Sci.}, 81(2):317--324, 1991.
\newblock \href {https://doi.org/10.1016/0304-3975(91)90200-L}
  {\path{doi:10.1016/0304-3975(91)90200-L}}.

\bibitem[MP96]{MacielP96}
Alexis Maciel and Toniann Pitassi.
\newblock Towards lower bounds for bounded-depth {F}rege proofs with modular
  connectives.
\newblock In {\em Proof Complexity and Feasible Arithmetics}, volume~39 of {\em
  {DIMACS} Series in Discrete Mathematics and Theoretical Computer Science},
  pages 195--227. {DIMACS/AMS}, 1996.
\newblock \href {https://doi.org/10.1090/DIMACS/039/12}
  {\path{doi:10.1090/DIMACS/039/12}}.

\bibitem[MP20]{MullerP20}
Moritz M{\"{u}}ller and J{\'{a}}n Pich.
\newblock Feasibly constructive proofs of succinct weak circuit lower bounds.
\newblock {\em Ann. Pure Appl. Log.}, 171(2), 2020.
\newblock \href {https://doi.org/10.1016/j.apal.2019.102735}
  {\path{doi:10.1016/j.apal.2019.102735}}.

\bibitem[MPW02]{MacielPW02}
Alexis Maciel, Toniann Pitassi, and Alan~R. Woods.
\newblock A new proof of the weak pigeonhole principle.
\newblock {\em J. Comput. Syst. Sci.}, 64(4):843--872, 2002.
\newblock \href {https://doi.org/10.1006/JCSS.2002.1830}
  {\path{doi:10.1006/JCSS.2002.1830}}.

\bibitem[M{\"{u}}l21]{Muller21}
Moritz M{\"{u}}ller.
\newblock Typical forcings, {$\NP$} search problems and an extension of a
  theorem of {R}iis.
\newblock {\em Ann. Pure Appl. Log.}, 172(4):102930, 2021.
\newblock \href {https://doi.org/10.1016/J.APAL.2020.102930}
  {\path{doi:10.1016/J.APAL.2020.102930}}.

\bibitem[Nas51]{nash1951non}
John Nash.
\newblock Non-cooperative games.
\newblock {\em Annals of mathematics}, 54(2):286--295, 1951.
\newblock \href {https://doi.org/10.2307/1969529} {\path{doi:10.2307/1969529}}.

\bibitem[Ngu08]{Nguyen-PhD}
Phuong Nguyen.
\newblock {\em Bounded reverse mathematics}.
\newblock PhD thesis, University of Toronto, 2008.

\bibitem[Pap94]{Papadimitriou94}
Christos~H. Papadimitriou.
\newblock On the complexity of the parity argument and other inefficient proofs
  of existence.
\newblock {\em J. Comput. Syst. Sci.}, 48(3):498--532, 1994.
\newblock \href {https://doi.org/10.1016/S0022-0000(05)80063-7}
  {\path{doi:10.1016/S0022-0000(05)80063-7}}.

\bibitem[Pap24]{Papamakarios24}
Theodoros Papamakarios.
\newblock Depth-{$d$} frege systems are not automatable unless {$\P = \NP$}.
\newblock In {\em {CCC}}, volume 300 of {\em LIPIcs}, pages 22:1--22:17.
  Schloss Dagstuhl - Leibniz-Zentrum f{\"{u}}r Informatik, 2024.
\newblock \href {https://doi.org/10.4230/LIPICS.CCC.2024.22}
  {\path{doi:10.4230/LIPICS.CCC.2024.22}}.

\bibitem[Pic15]{Pich15}
J{\'{a}}n Pich.
\newblock Circuit lower bounds in bounded arithmetics.
\newblock {\em Ann. Pure Appl. Log.}, 166(1):29--45, 2015.
\newblock \href {https://doi.org/10.1016/J.APAL.2014.08.004}
  {\path{doi:10.1016/J.APAL.2014.08.004}}.

\bibitem[PS19]{PichS19}
J{\'{a}}n Pich and Rahul Santhanam.
\newblock Why are proof complexity lower bounds hard?
\newblock In {\em {FOCS}}, pages 1305--1324. {IEEE} Computer Society, 2019.
\newblock \href {https://doi.org/10.1109/FOCS.2019.00080}
  {\path{doi:10.1109/FOCS.2019.00080}}.

\bibitem[PS21]{PichS21}
J{\'{a}}n Pich and Rahul Santhanam.
\newblock Strong co-nondeterministic lower bounds for {$\NP$} cannot be proved
  feasibly.
\newblock In {\em {STOC}}, pages 223--233. {ACM}, 2021.
\newblock \href {https://doi.org/10.1145/3406325.3451117}
  {\path{doi:10.1145/3406325.3451117}}.

\bibitem[PS26]{PichS23}
J{\'{a}}n Pich and Rahul Santhanam.
\newblock Towards {$\P\ne\NP$} from {E}xtended {F}rege lower bounds.
\newblock {\em J. ACM}, 2026.
\newblock \href {https://doi.org/10.1145/3801091} {\path{doi:10.1145/3801091}}.

\bibitem[PT12]{PudlakT12}
Pavel Pudl{\'{a}}k and Neil Thapen.
\newblock Alternating minima and maxima, {N}ash equilibria and {B}ounded
  {A}rithmetic.
\newblock {\em Ann. Pure Appl. Log.}, 163(5):604--614, 2012.
\newblock \href {https://doi.org/10.1016/J.APAL.2011.06.014}
  {\path{doi:10.1016/J.APAL.2011.06.014}}.

\bibitem[PT19]{PudlakT19}
Pavel Pudl{\'{a}}k and Neil Thapen.
\newblock Random resolution refutations.
\newblock {\em Comput. Complex.}, 28(2):185--239, 2019.
\newblock \href {https://doi.org/10.1007/S00037-019-00182-7}
  {\path{doi:10.1007/S00037-019-00182-7}}.

\bibitem[Pud97]{Pudlak97}
Pavel Pudl{\'{a}}k.
\newblock Lower bounds for resolution and cutting plane proofs and monotone
  computations.
\newblock {\em J. Symb. Log.}, 62(3):981--998, 1997.
\newblock \href {https://doi.org/10.2307/2275583} {\path{doi:10.2307/2275583}}.

\bibitem[Pud00]{Pudlak00ProofsAsGames}
Pavel Pudl{\'{a}}k.
\newblock Proofs as games.
\newblock {\em Am. Math. Mon.}, 107(6):541--550, 2000.
\newblock \href {https://doi.org/10.2307/2589349} {\path{doi:10.2307/2589349}}.

\bibitem[Pud20]{Pudlak20}
Pavel Pudl{\'{a}}k.
\newblock Reflection principles, propositional proof systems, and theories.
\newblock {\em arXiv}, abs/2007.14835, 2020.
\newblock \href {https://doi.org/10.48550/arXiv.2007.14835}
  {\path{doi:10.48550/arXiv.2007.14835}}.

\bibitem[PWW88]{ParisWW88}
Jeff~B. Paris, A.~J. Wilkie, and Alan~R. Woods.
\newblock Provability of the pigeonhole principle and the existence of
  infinitely many primes.
\newblock {\em J. Symb. Log.}, 53(4):1235--1244, 1988.
\newblock \href {https://doi.org/10.1017/S0022481200028061}
  {\path{doi:10.1017/S0022481200028061}}.

\bibitem[Raz85]{Razborov_monotone}
Alexander~A. Razborov.
\newblock Lower bounds on the monotone complexity of some {B}oolean function.
\newblock In {\em Soviet Math. Dokl.}, volume~31, pages 354--357, 1985.
\newblock URL: \url{https://www.mathnet.ru/eng/dan9192}.

\bibitem[Raz87]{Razborov87}
Alexander~A. Razborov.
\newblock Lower bounds on the size of bounded depth circuits over a complete
  basis with logical addition.
\newblock {\em Mathematical Notes of the Academy of Sciences of the USSR},
  41(4):333--338, 1987.

\bibitem[Raz95a]{Razborov_feasible_mathematics_II}
Alexander~A. Razborov.
\newblock Bounded arithmetic and lower bounds in {B}oolean complexity.
\newblock In {\em Feasible Mathematics II}, pages 344--386. Birkh{\"a}user
  Boston, 1995.
\newblock \href {https://doi.org/10.1007/978-1-4612-2566-9_12}
  {\path{doi:10.1007/978-1-4612-2566-9_12}}.

\bibitem[Raz95b]{razborov1995unprovability}
Alexander~A. Razborov.
\newblock Unprovability of lower bounds on circuit size in certain fragments of
  bounded arithmetic.
\newblock {\em Izvestiya: mathematics}, 59(1):205, 1995.
\newblock \href {https://doi.org/10.1070/IM1995v059n01ABEH000009}
  {\path{doi:10.1070/IM1995v059n01ABEH000009}}.

\bibitem[Raz98]{Razborov98_PC}
Alexander~A. Razborov.
\newblock Lower bounds for the polynomial calculus.
\newblock {\em Comput. Complex.}, 7(4):291--324, 1998.
\newblock \href {https://doi.org/10.1007/S000370050013}
  {\path{doi:10.1007/S000370050013}}.

\bibitem[Raz15]{Razborov15}
Alexander~A. Razborov.
\newblock Pseudorandom generators hard for {$k$-DNF} resolution and polynomial
  calculus resolution.
\newblock {\em Annals of Mathematics}, 181(2):415--472, 2015.
\newblock \href {https://doi.org/10.4007/annals.2015.181.2.1}
  {\path{doi:10.4007/annals.2015.181.2.1}}.

\bibitem[Sch97]{Schoning97}
Uwe Sch{\"{o}}ning.
\newblock Resolution proofs, exponential bounds, and {K}olmogorov complexity.
\newblock In {\em {MFCS}}, volume 1295 of {\em Lecture Notes in Computer
  Science}, pages 110--116. Springer, 1997.
\newblock \href {https://doi.org/10.1007/BFB0029954}
  {\path{doi:10.1007/BFB0029954}}.

\bibitem[Smo87]{Smolensky87}
Roman Smolensky.
\newblock Algebraic methods in the theory of lower bounds for boolean circuit
  complexity.
\newblock In {\em {STOC}}, pages 77--82. {ACM}, 1987.
\newblock \href {https://doi.org/10.1145/28395.28404}
  {\path{doi:10.1145/28395.28404}}.

\bibitem[ST11]{skelley2011provably}
Alan Skelley and Neil Thapen.
\newblock The provably total search problems of bounded arithmetic.
\newblock {\em Proceedings of the London Mathematical Society},
  103(1):106--138, 2011.
\newblock \href {https://doi.org/10.1112/plms/pdq044}
  {\path{doi:10.1112/plms/pdq044}}.

\bibitem[ST21]{SanthanamT21}
Rahul Santhanam and Iddo Tzameret.
\newblock Iterated lower bound formulas: a diagonalization-based approach to
  proof complexity.
\newblock In {\em {STOC}}, pages 234--247. {ACM}, 2021.
\newblock \href {https://doi.org/10.1145/3406325.3451010}
  {\path{doi:10.1145/3406325.3451010}}.

\bibitem[SW14]{SanthanamW14}
Rahul Santhanam and Ryan Williams.
\newblock On uniformity and circuit lower bounds.
\newblock {\em Comput. Complex.}, 23(2):177--205, 2014.
\newblock \href {https://doi.org/10.1007/S00037-014-0087-Y}
  {\path{doi:10.1007/S00037-014-0087-Y}}.

\bibitem[Tha02]{Thapen-PhD}
Neil Thapen.
\newblock {\em The weak pigeonhole principle in models of bounded arithmetic}.
\newblock PhD thesis, University of Oxford, 2002.

\bibitem[Tse83]{tseitin1983complexity}
Grigori~S Tseitin.
\newblock On the complexity of derivation in propositional calculus.
\newblock {\em Automation of reasoning: 2: Classical papers on computational
  logic 1967--1970}, pages 466--483, 1983.
\newblock \href {https://doi.org/10.1007/978-3-642-81955-1_28}
  {\path{doi:10.1007/978-3-642-81955-1_28}}.

\bibitem[Urq87]{Urquhart87}
Alasdair Urquhart.
\newblock Hard examples for resolution.
\newblock {\em J. {ACM}}, 34(1):209--219, 1987.
\newblock \href {https://doi.org/10.1145/7531.8928}
  {\path{doi:10.1145/7531.8928}}.

\bibitem[VS23]{von2023zero}
Bernhard Von~Stengel.
\newblock Zero-sum games and linear programming duality.
\newblock {\em Mathematics of Operations Research}, 2023.
\newblock \href {https://doi.org/10.1287/moor.2022.0149}
  {\path{doi:10.1287/moor.2022.0149}}.

\end{thebibliography}

\appendix

\section{Amplification for \texorpdfstring{$\rwPHP(\calP)$}{rwPHP(P)}}\label{sec: amplification for wPHP}

In this section, we prove \autoref{informal thm: amplification of rPHP}, showing that the relationship between $N$ and $M$ does not influence the complexity of $\rwPHP(\calP)$ (provided that $M$ and $N$ are not too close to each other), thus our choice of $N = 2M$ is indeed without loss of generality. This result requires $\calP$ to be closed under Turing reductions \cite{BussJ12}, as defined below:

\begin{definition}[Turing Reductions in $\TFNP^\dt$]
    Let $\calP, \calQ$ be problems in $\TFNP^\dt$. We say there is a time-$t$ (uniform) Turing reduction from $\calQ$ to $\calP$ if there is a time-$t$ oracle Turing machine $R^{x, \calP}$ that solves $\calQ$ in the following manner. Let $x\in\{0, 1\}^N$ be the input to $\calQ$. Besides work tapes and a query tape for access to $x$, $R$ has another query tapes for access to a $\calP$ oracle. Each query $q$ to $\calP$ is described as $(1^{t'}, L, M_q^x)$, where $L$ is the length of the query input, and $M_q^x$ is a time-$t'$ Turing machine with query access to $x$. The input to $\calP$ is defined as the $L$-bit string whose $i$-th bit is $M_q^x(i)$. The answer to this query would be any valid solution for this $L$-bit string as an input to $\calP$. Finally, for every $x\in \{0, 1\}^N$ and every valid computational history of $R$ (i.e., every query to $\calP$ is answered correctly), the output of $R$ should be a valid output of $x$ for $\calQ$.
\end{definition}

\begin{assumption}\label{assumption: P is closed under Turing reductions}
    $\calP$ is closed under Turing reductions. More precisely, for some function $\gamma(t)$ (think of $\gamma(t) \le \poly(t)$), if a $\TFNP^\dt$ problem $\calQ$ admits a time-$t$ Turing reduction to $\calP$, then $\calQ$ also admits a uniform depth-$\gamma(t)$ decision tree reduction to $\calP$.
\end{assumption}

Recall that $\rPHP(\calP)_{M\to N}$ denotes the $\rwPHP(\calP)$ problem where the purported ``surjection'' is $f:[M]\to [N]$.

\begin{fact}
	Let $M < N_1 \le N_2$, then there is a depth-$1$ decision tree reduction from $\rPHP(\calP)_{M\to N_2}$ to $\rPHP(\calP)_{M\to N_1}$.
\end{fact}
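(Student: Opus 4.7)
The plan is to construct the $\rPHP(\calP)_{M\to N_1}$ instance by simply restricting the purported surjection to outputs in $[N_1]$ and keeping the $\calP$ instances and the retraction functions indexed by $y\in[N_1]$ unchanged. Concretely, given an input $(f, \{I_y\}_{y\in[N_2]}, \{g_y\}_{y\in[N_2]})$ to $\rPHP(\calP)_{M\to N_2}$, I would define $f':[M]\to[N_1]$ by
\[
f'(x) := \begin{cases} f(x) & \text{if } f(x) < N_1, \\ 0 & \text{otherwise,} \end{cases}
\]
and take $I'_y := I_y$, $g'_y := g_y$ for each $y\in[N_1]$. Each output bit of $f'(x)$ depends only on $f(x)$ (one query to the original instance), while $I'_y$ and $g'_y$ literally equal their counterparts; hence the reduction is depth-$1$, and $M<N_1$ ensures the constructed instance is a syntactically valid $\rPHP(\calP)_{M\to N_1}$ input.

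For correctness, I would argue that any solution $(y, ans)$ of the constructed $\rPHP(\calP)_{M\to N_1}$ instance is automatically a valid solution of the original $\rPHP(\calP)_{M\to N_2}$ instance. First, $y\in[N_1]\subseteq[N_2]$ and $ans$ remains a valid solution of the same $\calP$ instance $I_y=I'_y$. It thus remains to verify $f(g_y(ans))\neq y$. Writing $z := g_y(ans)$, there are two cases: if $f(z) < N_1$, then $f'(z)=f(z)$, so the assumption $f'(z)\neq y$ gives $f(z)\neq y$ directly; if $f(z)\geq N_1$, then since $y\in[N_1]$ we have $f(z)\geq N_1 > y$, so again $f(z)\neq y$. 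Either way $(y, ans)$ refutes the surjectivity of the original $f$.

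There is no real obstacle here; the only subtle point worth flagging is that the ``dummy'' fallback value $0$ does not cause collisions, because even for $y=0$ the condition $f'(z)\neq 0$ forces $f(z)$ to lie in $[N_1]\setminus\{0\}$ and hence differ from $y$. The same template works verbatim in the bounded-arithmetic translation, so the reduction is uniform whenever the original is.
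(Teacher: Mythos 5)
Your proposal is correct: clamping out-of-range values of $f$ to a fixed element of $[N_1]$ while keeping $I_y, g_y$ for $y\in[N_1]$, and observing that any solution of the clamped instance decodes to one of the original (since $f(g_y(ans))\ge N_1>y$ already forces $f(g_y(ans))\ne y$), is exactly the natural argument; the paper states this fact without proof precisely because this is the intended routine verification. The case analysis, the depth-$1$ bound, and the remark about the dummy value $0$ are all handled correctly, so nothing is missing.
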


\begin{theorem}[Formal version of {\autoref{informal thm: amplification of rPHP}}]\label{thm: amplification for rwPHP}
	Let $N\ge 2M$ and $\eps > 0$ be parameters, and let $d := \gamma(O(\eps^{-1}))\cdot \gamma(O(\log \frac{N}{M}))$. If \autoref{assumption: P is closed under Turing reductions} holds for $\calP$, then there is a depth-$d$ decision tree reduction from $\rPHP(\calP)_{M\to (1+\eps)M}$ to $\rPHP(\calP)_{M\to N}$.
\end{theorem}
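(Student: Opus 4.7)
The plan is to amplify the ``gap'' in the retraction weak pigeonhole principle by a \emph{direct product} construction, applied in two phases to match the stated depth bound. The core subroutine takes an instance of $\rPHP_{m\to\alpha m}(\calP)$ (with $f:[m]\to[\alpha m]$, subinstances $\{I_y\}_{y\in[\alpha m]}$, and retractions $\{g_y\}$) together with a parameter $j$, and produces an instance of $\rPHP_{m^j\to(\alpha m)^j}(\calP)$ whose purported surjection is $F(x_1,\ldots,x_j) = (f(x_1),\ldots,f(x_j))$, raising the gap from $\alpha$ to $\alpha^j$. I apply this first with $j_1 = \lceil\log 2/\log(1+\eps)\rceil = O(1/\eps)$ (so that $(1+\eps)^{j_1}\geq 2$), and then with $j_2 = \lceil\log_2(N/M)\rceil = O(\log(N/M))$ (so that $2^{j_2}\geq N/M$), yielding an amplified instance whose codomain-to-domain ratio is at least $N/M$, and hence belongs to the target family.

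The main technical step is defining, for each $Y = (y_1,\ldots,y_j)$, a single $\calP$-instance $I'_Y$ whose solutions encode $j$-tuples $(ans_1,\ldots,ans_j)$ with $ans_i$ solving $I_{y_i}$. The ``parallel solve'' task $\calQ_Y$ admits an obvious time-$O(j)$ Turing reduction to $\calP$: query each sub-instance in turn and output the tuple. By \autoref{assumption: P is closed under Turing reductions}, this Turing reduction converts into a uniform depth-$\gamma(O(j))$ many-one decision tree reduction from $\calQ_Y$ to $\calP$, producing both $I'_Y$ and a decoder that extracts $(ans_1,\ldots,ans_j)$ from any $ans'\in I'_Y$. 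Setting $G_Y(ans') = (g_{y_1}(ans_1),\ldots,g_{y_j}(ans_j))$ then completes the amplified instance. For the decoding direction: any solution $(Y,ans')$ of the amplified problem satisfies $F(G_Y(ans'))\neq Y$, so some coordinate $i$ must have $f(g_{y_i}(ans_i))\neq y_i$, yielding the sought solution $(y_i,ans_i)$ of the original instance.

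For depth accounting, each bit of the phase-one output is computable from the original instance in depth $\gamma(O(j_1))=\gamma(O(1/\eps))$, and each bit of the phase-two output is computable from the phase-one intermediate in depth $\gamma(O(j_2))=\gamma(O(\log(N/M)))$. Composing these two decision tree reductions multiplies the depths and gives the promised total $d=\gamma(O(\eps^{-1}))\cdot\gamma(O(\log(N/M)))$; the output-instance size $(m^{j_1 j_2})$ is automatically below $\exp(d)$. The main obstacle is precisely the $\calP$-packaging step: without closure of $\calP$ under Turing reductions, the parallel-solve problem $\calQ_Y$ need not live in $\calP$, and the amplified instance would fail to be a bona fide $\rPHP(\calP)$ instance. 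The closure assumption is exactly what allows one to pay only a $\gamma$-factor blowup per amplification phase, after which the direct-product amplification and its two-phase tuning are routine bookkeeping.
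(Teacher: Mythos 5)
You have a genuine gap, and it lies in what your construction actually outputs. The theorem requires the reduced instance to be an instance of $\rPHP(\calP)_{M\to N}$, i.e., the purported surjection must go from $[M]$ to $[N]$ with the \emph{same} $M$ as the source instance (this is also what the headline application needs, e.g.\ landing in the family $\rPHP_{M\to M^{100}}(\calP)$ where the codomain is a function of the instance's own domain size). A direct product blows up the domain: your final instance has purported surjection $[M^{j_1j_2}]\to[((1+\eps)M)^{j_1j_2}]$, so it is not an instance of $\rPHP(\calP)_{M\to N}$ at all, and measured against its \emph{own} domain size its stretch is only $(1+\eps)^{j_1j_2}$, which is far too small: products multiply domain sizes at least as fast as they multiply stretches, so you can never reach, say, codomain $=(\text{domain})^{100}$ this way (that would force $(1+\eps)^{j}\ge M^{99j}$). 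Appealing to ``codomain-to-domain ratio at least $N/M$'' quietly redefines the target family by ratio alone, which begs the question the amplification theorem is meant to settle; and there is no obvious further reduction from the big-domain instance $\rPHP(\calP)_{M^{j}\to(1+\eps)^{j}M^{j}}$ back down to $\rPHP(\calP)_{M\to N}$, since that would require compressing $M^{j}$ ``holes'' into $M$.

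The paper's construction avoids this by \emph{composing} rather than tensoring: in the $(1+\eps)\to 2$ step it pads $f$ with an identity map on the extra coordinates to get maps $f_k:[M+k\eps M]\to[M+(k+1)\eps M]$, and in the $2\to N/M$ step it applies $f$ blockwise on $2^k$ parallel copies to get $f_k:[2^kM]\to[2^{k+1}M]$; in both cases the amplified surjection is the chain $f'=f_{d-1}\circ\cdots\circ f_0:[M]\to[N]$, so the domain stays $[M]$. Inverting $f'$ then means running the $\calP$-computable retraction $g$ \emph{sequentially} $d$ times, where each call depends on the previous answer — this adaptive iteration is exactly where \autoref{assumption: P is closed under Turing reductions} is doing real work, whereas in your proposal closure is only used to bundle independent, non-adaptive sub-instances. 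Your decoding argument (locate a coordinate/stage where $f(g_{y}(ans))\ne y$) and the two-phase depth bookkeeping are fine in spirit and reappear in the paper's proof, but they need to ride on the composition construction, not the direct product.
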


We prove \autoref{thm: amplification for rwPHP} in two steps: in \autoref{lemma: stretch reduction 1+eps to 2} we reduce $\rwPHP$ with stretch $(1+\eps)$ (i.e., $\rPHP(\calP)_{M\to (1+\eps)M}$) to $\rwPHP$ with stretch $2$, and in \autoref{lemma: stretch reduction 2 to large} we reduce $\rwPHP$ with stretch $2$ to $\rwPHP$ with arbitrarily large stretch. \autoref{thm: amplification for rwPHP} follows easily from \autoref{lemma: stretch reduction 1+eps to 2} and \ref{lemma: stretch reduction 2 to large}.

\def\Lhis{L_{\sf his}}

\begin{lemma}\label{lemma: stretch reduction 1+eps to 2}
	Let $M \ge 1$, $\eps > 0$ be parameters. Suppose that \autoref{assumption: P is closed under Turing reductions} holds for $\calP$. Then there is a depth-$\gamma(O(\eps^{-1}))$ decision tree reduction from $\rPHP(\calP)_{M\to \lfloor (1+\eps) M\rfloor}$ to $\rPHP(\calP)_{M\to 2M}$.
\end{lemma}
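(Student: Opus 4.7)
The plan is to amplify the stretch by taking a $k$-fold tensor power of the given instance, where $k = O(1/\eps)$ is chosen so that $(1+\eps)^k \ge 2$, and then invoke the preceding Fact (reducing $\rPHP$ with larger range to $\rPHP$ with smaller range) to land in the target problem $\rPHP(\calP)_{M'\to 2M'}$.

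Concretely, let $N := \lfloor (1+\eps)M\rfloor$, fix $k := \lceil \log 2 / \log(1+\eps) \rceil = O(1/\eps)$, and set $M'' := M^k$, $N'' := N^k$; by the choice of $k$ we have $N'' \ge 2M''$. Given an input instance $(f, \{I_y\}_{y\in[N]}, \{g_y\}_y)$ of $\rPHP(\calP)_{M\to N}$, I would construct an $\rPHP(\calP)_{M''\to N''}$ instance $(\tilde f, \{\tilde I_Y\}_{Y\in[N'']}, \{\tilde g_Y\}_Y)$ by tensoring:
\begin{itemize}
    \item $\tilde f:[M''] \to [N'']$ sends $(x_1,\dots,x_k)\in[M]^k$ to $(f(x_1),\dots,f(x_k))\in[N]^k$.
    \item For each $Y=(y_1,\dots,y_k)\in[N'']$, the instance $\tilde I_Y$ is the ``parallel conjunction'' problem ``simultaneously output solutions $(ans_1,\dots,ans_k)$ with each $ans_i$ a valid solution of $I_{y_i}$.'' This problem admits a trivial Turing reduction to $\calP$ making $k=O(1/\eps)$ oracle calls, so by Assumption~\ref{assumption: P is closed under Turing reductions} it reduces via a uniform depth-$\gamma(O(1/\eps))$ decision tree reduction to a single $\calP$-instance, which we take as $\tilde I_Y$.
    \item $\tilde g_Y(Ans) := (g_{y_1}(ans_1),\dots,g_{y_k}(ans_k))$, where $Ans$ decodes into $(ans_1,\dots,ans_k)$ via the reduction above.
\end{itemize}
Any solution $(Y,Ans)$ of the tensored instance satisfies $\tilde f(\tilde g_Y(Ans))\ne Y$ and hence $f(g_{y_i}(ans_i))\ne y_i$ for some coordinate $i$, yielding a solution $(y_i,ans_i)$ of the original source instance at decoding cost $O(k)$ additional queries.

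Finally, since $N'' \ge 2M''$, the preceding Fact reduces the tensored $\rPHP(\calP)_{M''\to N''}$ instance to an $\rPHP(\calP)_{M''\to 2M''}$ instance at depth $1$; composing the tensor construction with this Fact gives the desired depth-$\gamma(O(\eps^{-1}))$ reduction from $\rPHP(\calP)_{M\to \lfloor(1+\eps)M\rfloor}$ to $\rPHP(\calP)_{M\to 2M'}$ with $M'=M^k$ (and by the paddability of $\calP$-problems and the preceding amplification Fact for larger sizes, this can be adjusted to have target size exactly $M\to 2M$ if desired, though the statement only asks for domain $M'=M^k$ and range $2M'$).

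The main subtle step is the second bullet: we need $\tilde I_Y$ to be a \emph{single} $\calP$-instance, not a $k$-fold conjunction of $\calP$-instances. This is precisely where Assumption~\ref{assumption: P is closed under Turing reductions} enters --- the ``solve $k$ independent $\calP$-instances in parallel'' task is a Turing reduction to $\calP$ with $k$ queries, and closure under Turing reductions converts it to a depth-$\gamma(O(1/\eps))$ many-one reduction, which dominates the overall depth. The rest of the verification (the preimage/tensoring algebra, the retraction $\tilde g_Y$, and the depth accounting) is straightforward and mechanical.
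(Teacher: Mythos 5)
Your core construction is sound and genuinely different from the paper's. You amplify the stretch by taking a $k$-fold \emph{tensor power} of the instance ($k=O(1/\eps)$) and use Assumption~\ref{assumption: P is closed under Turing reductions} to merge the $k$ parallel $\calP$-instances into a single one; the coordinate-wise decoding of a solution back to the source instance is correct, and the depth accounting $\gamma(O(\eps^{-1}))$ matches. The paper instead composes $d=1/\eps$ \emph{range-padded copies} of $f$ in a chain, $f_k:[M+k\eps M]\to[M+(k+1)\eps M]$ with $f_k(x)=f(x)$ for $x<M$ and $f_k(x)=x+\eps M$ otherwise, and correspondingly composes the $\calP$-computable inverses via a ``history'' problem before invoking closure under Turing reductions. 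What the paper's route buys is that the domain never grows: the target is literally $\rPHP(\calP)_{M\to 2M}$ with the \emph{same} $M$, which is what the lemma (and the formal amplification theorem it feeds into) asserts. Your route only buys the class-level message.

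That parameter mismatch is the one genuine issue. Your reduction lands in $\rPHP(\calP)_{M^k\to 2M^k}$, not $\rPHP(\calP)_{M\to 2M}$, and your parenthetical repair does not work: the statement does \emph{not} ``only ask for domain $M'=M^k$'', and paddability, as defined in the paper, lets one embed a \emph{smaller} instance into a \emph{larger} one, so it cannot compress your $M^k$-sized target back down to domain $[M]$. The blowup is polynomial for constant $\eps$ but quasi-polynomial in the regime $\eps=1/\polylog(M)$ that the informal amplification theorem targets; this is tolerable for concluding that the stretch does not affect the class $\rwPHP(\calP)$ (decision-tree depth stays $\polylog$), but it means your argument proves a parameter-shifted variant rather than the lemma as stated, and composing it with the second amplification lemma would likewise yield the formal theorem only with $M^k$ in place of $M$. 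To get the stated lemma you would need something like the paper's iterated-composition trick (or otherwise re-index the tensor target back into domain $[M]$), since no padding argument in the available direction closes this gap.
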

\begin{proof}
	Without loss of generality, assume that both $\eps\cdot M$ and $d := 1/\eps$ are integers. Let $(f, \{I_y\}, \{g_y\})$ be an instance of $\rPHP(\calP)_{M\to (1+\eps)M}$ and we want to reduce it to an instance $(f', \{I'_y\}, \{g'_y\})$ of $\rPHP(\calP)_{M\to 2M}$. Recall that:
	\begin{itemize}
		\item $f:[M]\to [(1+\eps)M]$ is the purported ``surjection''.
		\item For every $y\in [(1+\eps)M]$, $I_y$ is a $\calP$ instance where every possible answer $ans$ of $I_y$ is labelled with an integer $g_y(ans) \in [M]$.
		\item The goal is to find some $y\in[(1+\eps)M]$ and a solution $ans$ of $I_y$ such that $f(g_y(ans)) \ne y$.
	\end{itemize}
	
	For every $k\in [d]$ (recall $d = 1/\eps$), define $f_k:[M+k\eps M]\to [M+(k+1)\eps M]$ as the following function: on input $x\in[M+k\eps M]$, if $x < M$, then $f_k(x) := f(x)$; otherwise $f_k(x) := x + \eps M$. The function $f'$ in our reduction is simply $f' := f_{d-1} \circ f_{d-2} \circ \dots \circ f_0$. Intuitively, if (a weak theory thinks that) $f:[M]\to [(1+\eps)M]$ is a surjection, then (it also thinks that) $f':[M] \to [2M]$ is a surjection.
	
	Next we define the instances $\{I'_y\}$ and the functions $\{g'_y\}$. Roughly speaking, the input instance $\{I_y\}$ and $\{g_y\}$ defines a \emph{$\calP$-computable multi-function} (also denoted as) $g:[(1+\eps)M]\to [M]$, which is a purported inverse of $f$. By padding $g$, we obtain $\calP$-computable multi-functions $g_k:[M+(k+1)\eps M]\to [M+k\eps M]$ for each $k\in[d]$, and each $g_k$ is a purported inverse of $f_k$. We compose these multi-functions $g_k$ to obtain a single multi-function $g:[2M]\to [M]$ that can be computed by a Turing reduction to $\calP$. Details follow.
	
	Consider a Turing machine with oracle access to $I_y$ and $\calP$ that on input $y\in [2M]$, operates as follows. Let $y_d := y$. For each $k$ from $d-1$ downto $0$:\begin{itemize}
		\item if $y_{k+1} \ge (1+\eps)M$, then we define $y_k := y_{k+1} - \eps M$;
		\item otherwise we query $\calP$ to obtain a valid answer $ans_k$ for $I_{y_{k+1}}$ and let $y_k := g(ans_k)$.
	\end{itemize}
	Finally, the machine outputs $y_0$ (as a purported preimage of $y$ under $f'$).
	
	The computational history of this Turing machine defines a total search problem $\Lhis$ as follows. The input to $\Lhis$ consists of $(f, \{I_y\}, \{g_y\})$, as well as some $y\in [2M]$. The output consists of a sequence $(ans_0, ans_1, \dots, ans_{d-1})$. Denote $y_d = y$ and
	\[y_k = \begin{cases}
		y_{k+1} - \eps M & \text{if }y_{k+1} \ge (1+\eps)M,\\
		g_{y_{k+1}}(ans_k) & \text{otherwise}
	\end{cases}\]
	for each $k\in[d]$. We accept the output if for every $k$ such that $y_{k+1} < (1+\eps)M$, $ans_k$ is a valid solution for $I_{y_{k+1}}$; otherwise we reject the output.
	
	Clearly, the above Turing machine itself is a time-$O(d)$ Turing reduction from $\Lhis$ to $\calP$. Since $\calP$ is closed under Turing reductions, there is also a depth-$\gamma(O(d))$ mapping reduction from $\Lhis$ to $\calP$. That is, there is a depth-$\gamma(O(d))$ decision tree that on input $(f, \{I_y\}, \{g_y\})$ as well as $y\in [2M]$, outputs a $\calP$ instance (that we call $I'_y$), and a mapping that given any valid answer $ans$ of $I'_y$, finds a valid sequence $(ans_0, ans_1, \dots, ans_{d-1})$. We compute each $\{y_k\}_{k\in [d+1]}$ as above and define $g'_y(ans) := y_0$.
	
	This finishes the description of our reduction from $\rPHP(\calP)_{M\to (1+\eps)M}$ to $\rPHP(\calP)_{M\to 2M}$; it is easy to see that it has depth $\gamma(O(\eps^{-1}))$. Now, given a valid solution $(y', ans')$ for $(f', \{I'_y\}, \{g'_y\})$, we can compute a valid solution $(y, ans)$ for $(f, \{I_y\}, \{g_y\})$ as follows. First, since $ans'$ is a valid solution for $I'_{y'}$, we can unpack $ans'$ to obtain a sequence $(ans_0, ans_1, \dots, ans_{d-1})$. Then we define each $\{y_k\}_{k\in [d+1]}$ as above (starting with $y_d = y'$). Also, for every $k\in[d+1]$, define $f_{\ge k} := f_{d-1} \circ \dots \circ f_k$, then $f_{\ge k}$ is a purported surjection from $[M+k\eps M]$ to $[2M]$. (As special cases, $f_{\ge d}:[2M]\to [2M]$ is the identity function and $f_{\ge 0} = f'$.) Since $(y', ans')$ is a valid solution, we know that $f'(g'_{y'}(ans')) \ne y'$, which translates to $f_{\ge 0}(y_0) \ne y_d$. Since $f_{\ge d}(y_d) = y_d$, there is some integer $k\in [d]$ such that $f_{\ge k}(y_k) \ne y_d$ but $f_{\ge k+1}(y_{k+1}) = y_d$. We argue that $(y_{k+1}, ans_k)$ is a valid solution for $(f, \{I_y\}, \{g_y\})$:
	\begin{itemize}
		\item First, it must be the case that $y_{k+1} < (1+\eps)M$. If $y_{k+1} \ge (1+\eps)M$, then $y_k = y_{k+1} - \eps M \ge M$ and thus $f_k(y_k) = y_{k+1}$. It follows that
		\begin{equation}\label{eq: k vs k+1}
			y_d \ne f_{\ge k}(y_k) = f_{\ge k+1}(f_k(y_k)) = f_{\ge k+1}(y_{k+1}) = y_d,
		\end{equation}
		a contradiction.
		\item Since $(ans_0, ans_1, \dots, ans_{d-1})$ is a valid sequence, $ans_k$ is a valid solution for $I_{y_{k+1}}$.
		\item Finally, if $f(g(ans_k)) = f(y_k) = y_{k+1}$, then \eqref{eq: k vs k+1} holds, which is a contradiction. Therefore, it must be the case that $f(g_{y_{k+1}}(ans_k)) \ne y_{k+1}$ and thus $(y_{k+1}, ans_k)$ is a valid solution for $(f, \{I_y\}, \{g_y\})$.\qedhere
	\end{itemize}
\end{proof}

\begin{lemma}\label{lemma: stretch reduction 2 to large}
	Let $N\ge 2M$. Suppose that \autoref{assumption: P is closed under Turing reductions} holds for $\calP$. There is a depth-$\gamma(O(\log \frac{N}{M}))$ decision tree reduction from $\rPHP(\calP)_{M \to 2M}$ to $\rPHP(\calP)_{M\to N}$.
\end{lemma}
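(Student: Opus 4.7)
The plan is to amplify the stretch from $2$ to any $N \ge 2M$ through a chain of $k := \lceil\log_2(N/M)\rceil$ iterated applications of $f$. For simplicity I will assume $N = 2^k M$; the general case follows by first constructing an $\rPHP(\calP)_{M\to 2^k M}$ instance and composing with the preceding depth-$1$ fact to descend from $2^k M$ down to $N$. Identifying $[N]$ with $\{0,1\}^k \times [M]$ and $[2M]$ with $\{0,1\}\times [M]$, I build $F\colon [M]\to [N]$ by iterating $f$: given $u_0 \in [M]$, set $(b_i, u_i) := f(u_{i-1}) \in \{0,1\}\times [M]$ for $i = 1, \dots, k$, and output $F(u_0) := (b_1, \dots, b_k, u_k)$. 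The intuition is that if $f$ is ``surjective'' then so is $F$, because any target $(b_1^*, \dots, b_k^*, v) \in [N]$ can be reached by winding backwards along preimages of $f$.

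For each $z = (b_1^*, \dots, b_k^*, v) \in [N]$, define $J_z$ as the \emph{sequential composition} of $k$ instances of $\calP$ that unwinds this backwards chain. Concretely, starting from $v_k := v$, a solution is a tuple $(ans_1, \dots, ans_k)$ where, setting $v_{i-1} := g_{(b_i^*, v_i)}(ans_i)$ for $i = k, k-1, \dots, 1$, each $ans_i$ must be a valid solution of $I_{(b_i^*, v_i)}$. This describes a Turing reduction to $\calP$ with $k$ adaptive queries and $O(k)$ overhead, so by \autoref{assumption: P is closed under Turing reductions} it admits a uniform many-one decision tree reduction to $\calP$ of depth $\gamma(O(k))$, which I take as the instance $J_z$. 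Then set $h_z(ans_1, \dots, ans_k) := v_0$.

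For correctness, suppose $(z, ans)$ with $ans = (ans_1, \dots, ans_k)$ is a valid solution of the new instance, so each $ans_i$ solves the corresponding $I_{(b_i^*, v_i)}$ while $F(v_0) \ne z$. A direct induction shows that if $f(v_{i-1}) = (b_i^*, v_i)$ held for every $i$, then tracing $F$ forwards from $v_0$ would reproduce exactly $z$; this contradicts $F(v_0) \ne z$. Hence there exists an index $i^*$ with $f(v_{i^*-1}) \ne (b_{i^*}^*, v_{i^*})$, locatable by directly checking the chain in $O(k)$ queries. The pair $(y, ans_{i^*})$ with $y := (b_{i^*}^*, v_{i^*})$ is then a valid solution of the original stretch-$2$ instance, since $ans_{i^*}$ solves $I_y$ and $f(g_y(ans_{i^*})) = f(v_{i^*-1}) \ne y$.

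The main obstacle is encoding $J_z$ as a \emph{single} instance of $\calP$ rather than an adaptive sequence of $\calP$ queries; this is the sole place \autoref{assumption: P is closed under Turing reductions} enters, and the resulting depth $\gamma(O(k)) = \gamma(O(\log(N/M)))$ matches the claimed bound. All other pieces (computing $F$ bit-by-bit through the chain, locating the failing index $i^*$, and the final padding step from $2^k M$ down to $N$) contribute only an additive $O(k)$ to the depth and are uniform, mirroring the structure of \autoref{lemma: stretch reduction 1+eps to 2}.
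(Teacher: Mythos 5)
Your proposal is correct and is essentially the paper's own argument: iterate $f$ to build the stretched map, define each target's instance as a backwards chain of $O(\log(N/M))$ adaptive $\calP$-queries, collapse it to a single $\calP$ instance via \autoref{assumption: P is closed under Turing reductions}, and decode by locating a broken link in the chain. The only differences are presentational — the paper realizes the iterated map as a composition of parallel-padded copies $f_k:[2^kM]\to[2^{k+1}M]$ and finds the faulty level via the hybrid maps $f_{\ge k}$, while you identify $[N]\cong\{0,1\}^k\times[M]$ and check the links of the chain directly — which yields the same construction and the same depth bound.
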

\begin{proof}
	The proof is similar to that of \autoref{lemma: stretch reduction 1+eps to 2}. Without loss of generality, we may assume that $d := \log\frac{N}{M}$ is an integer (i.e., $N/M$ is a power of $2$). Let $(f, \{I_y\}, \{g_y\})$ be an instance of $\rPHP(\calP)_{M\to 2M}$ and we want to reduce it to an instance $(f', \{I'_y\}, \{g'_y\})$ of $\rPHP(\calP)_{M\to N}$. Recall that:
	\begin{itemize}
		\item $f:[M] \to [2M]$ is the purported ``surjection''.
		\item For every $y\in[2M]$, $I_y$ is a $\calP$ instance where every possible answer $ans$ of $I_y$ is labelled with an integer $g(ans) \in [M]$.
		\item The goal is to find some $y\in[2M]$ and a solution $ans$ of $I_y$ such that $f(g_y(ans))\ne y$.
	\end{itemize}

	For every integer $k\in [d]$, we put $2^k$ copies of the instance $(f, \{I_y\}, \{g_y\})$ in parallel and obtain the instance $(f_k, \{(I_k)_y\}, \{g_{k, y}\}$ of $\rPHP(\calP)_{(2^k M) \to (2^{k+1} M)}$. More precisely:
	\begin{itemize}[align=left]
		\item [{\underline{(Definition of $f_k$)}}] Let $x\in [2^k M]$ be the input, and let $x = x_0\cdot M + x_1$ where $x_0\in [2^k]$ and $x_1\in [M]$. We define $f_k(x) := x_0\cdot 2M + f(x_1)$.
		\item [{\underline{(Definition of $(I_k)_y$ and $g_{k, y}$)}}] Let $y\in [2^{k+1}M]$ be the input, and let $y = y_0\cdot 2M + y_1$ where $y_0\in [2^k]$ and $y_1 \in [2M]$. We define $(I_k)_y := I_{y_1}$, and for every $ans$ that is a possible solution of $(I_k)_y = I_{y_1}$, define $g_{k, y}(ans) := y_0\cdot M + g_y(ans)$.
	\end{itemize}

	The mapping from $(f, \{I_y\}, \{g_y\})$ to $(f_k, \{(I_k)_y\}, \{g_{k, y}\})$ can be computed by a depth-$1$ decision tree. Given a valid solution $(y, ans)$ for $(f_k, \{(I_k)_y\}, \{g_{k, y}\})$, we write $y = y_0\cdot 2M + y_1$ where $y_0 \in [2^k]$ and $y_1\in [2M]$. Since $ans$ is a solution of $(I_k)_y = I_{y_1}$ and 
	\[y_0\cdot 2M + y_1 = y \ne f_k(g_{k, y}(ans)) = y_0\cdot 2M + f(g(ans)) \implies f(g_y(ans)) \ne y_1,\]
	it follows that $(y_1, ans)$ is also a valid solution for $(f, \{I_y\}, \{g_y\})$. Therefore, there is a depth-$1$ decision tree reduction from $\rPHP(\calP)_{M\to 2M}$ to $\rPHP(\calP)_{(2^kM)\to (2^{k+1}M)}$.
	
	Now, we compose the instances $(f_k, \{(I_k)_y\}, g_{k, y})$ for every $k\in [d]$ to obtain the instance $(f', \{I'_y\}, \{g'_y\})$. In particular, the ``surjection'' $f':[M] \to [2^dM]$ is defined as $f' := f_{d-1} \circ f_{d-2} \circ \dots \circ f_0$.
	
	To define $I'_y$ and $g'_y$, consider the Turing machine with oracle access to $\calP$ that, on input $y\in [2^dM]$, operates as follows. Let $y_d := y$. For each $k$ from $d-1$ to $0$, the machine queries $\calP$ to obtain a valid answer $ans_k$ for $(I_k)_{y_{k+1}}$, and then sets $y_k := g_{k, y_{k+1}}(ans_k)$. Finally, the machine outputs the number $y_0 \in [M]$.
	
	We define a total search problem $\Lhis$ based on the computational history of this machine. The input of $\Lhis$ consists of $M, N, (f, \{I_y\}, \{g_y\})$, as well as some $y\in[2^dM]$; note that given these inputs, one can define the $\rPHP(\calP)_{(2^kM)\to (2^{k+1}M)}$ instances $(f_k, \{(I_k)_y\}, \{g_{k, y}\})$ as before. The output consists of a sequence $(ans_0, ans_1, \dots, ans_{d-1})$. Denoting $y_d = y$ and $y_k = g_{k, y_{k+1}}(ans_k)$ for every $k\in [d]$, accept the output if for every $k\in[d]$, $ans_k$ is a valid solution for $(I_k)_{y_{k+1}}$; otherwise reject the output.
	
	Clearly, the above Turing machine itself is a time-$O(d)$ Turing reduction from $\Lhis$ to $\calP$. Since $\calP$ is closed under Turing reductions, there is also a depth-$\gamma(O(d))$ mapping reduction from $\Lhis$ to $\calP$. Therefore, there is a depth-$\gamma(O(d))$ decision tree that on input $(f, \{I_y\}, \{g_y\})$ as well as $y\in[2^dM]$, outputs a $\calP$ instance (that we call $I'_y$), and a mapping that given any valid answer $ans$ of $I'_y$, finds a valid sequence $(ans_0, ans_1, \dots, ans_{d-1})$. We define $g'_y(ans) := g_{0, y}(ans_0)$.

	This finishes the description of our reduction from $\rPHP(\calP)_{M\to 2M}$ to $\rPHP(\calP)_{M\to N}$; it is easy to see that it has depth $\gamma(O(\log \frac{N}{M}))$. Now, given a valid solution $(y', ans')$ for $(f', \{I'_y\}, \{g'_y\})$, we can compute a valid solution $(y, ans)$ for $(f, \{I_y\}, \{g_y\})$ as follows. First, since $ans'$ is a valid solution for $I'_{y'}$, we can unpack $ans'$ to obtain a sequence $ans_0, ans_1, \dots, ans_{d-1}$. Let $y_d = y'$ and $y_k = g_{k, y_{k+1}}(ans_k)$ for every $k$ from $d-1$ downto $0$, then $f'(y_0) \ne y_d$. For every $k\in \{0, 1, \dots, d\}$, let $f_{\ge k} := f_{d-1} \circ f_{d-2} \circ \dots \circ f_k$; notice that $f_{\ge k}$ is a purported surjection from $[2^kM]$ to $[2^dM]$. (Note that as special cases, $f_{\ge 0} = f'$ and $f_{\ge d}:[2^dM]\to[2^dM]$ is the identity function.) Since $f_{\ge 0}(y_0) \ne y_d$ but $f_{\ge d}(y_d) = y_d$, there is an integer $k\in [d]$ such that $f_{\ge k}(y_k) \ne y_d$ but $f_{\ge (k+1)}(y_{k+1}) = y_d$. We claim that $(y_{k+1}, ans_k)$ is a valid solution to the instance $(f_k, \{(I_k)_y\}, \{g_{k, y}\})$.
	\begin{itemize}
		\item Since $(ans_0, ans_1, \dots, ans_{d-1})$ is a valid solution of $\Lhis$ on input $(f, \{I_y\}, \{g_y\}, y')$, $ans_k$ is a valid solution for $(I_k)_{y_{k+1}}$.
		\item Suppose $f_k(g_{k, y_{k+1}}(ans_k)) = y_{k+1}$, then $f_{\ge k}(y_k) = f_{\ge (k+1)}(f_k(g_{k, y_{k+1}}(ans_k))) = f_{\ge (k+1)}(y_{k+1})$. However, the RHS is equal to $y_d$ while the LHS is not equal to $y_d$. Therefore it must be the case that $f_k(g_{k, y_{k+1}}(ans_k)) \ne y_{k+1}$.
	\end{itemize}

	It follows that given a valid solution for $(f', \{I'_y\}, \{g'_y\})$, one can always find some $k$ and a valid solution for $(f_k, \{(I_k)_y\}, \{g_{k, y}\})$. That is, there is a depth-$\gamma(O(\log \frac{N}{M}))$ reduction from solving $\rPHP(\calP)_{M\to N}$ to solving one of $\{\rPHP(\calP)_{(2^kM)\to (2^{k+1}M)}\}_{k\in [d]}$. Composing this with the aforementioned depth-$1$ reduction from $\rPHP(\calP)_{M\to 2M}$ to $\rPHP(\calP)_{(2^kM) \to (2^{k+1}M)}$ completes our reduction.
\end{proof}

\section{Comparing \texorpdfstring{$\Refuter$}{Refuter} with \texorpdfstring{$\WrongProof(\Res)$}{WrongProof(Res)}}
\label{sec: discuss wrong proof}

We discuss the similarities and differences between the refuter problems and the $\WrongProof$ problem. %
We first recall the formal definition of $\WrongProof(\Res)$ \cite{BeckmannB17, GoldbergP18}:

\begin{mdframed}[hidealllines=true,backgroundcolor=gray!10]
    \begin{center}
        \textbf{Problem $\WrongProof(\Res)$}
    \end{center}
    
    \underline{Input:} A CNF $F$ with $n$ variables and $k$ clauses; a purported resolution refutation $\Pi$ for $F$ represented as $C_{-k}, \dots, C_{-1}, C_0, C_1, \dots, C_{L-1}$, where $C_{-k}, \dots, C_{-1}$ are axioms of $F$, $C_{L-1} = \bot$, and $L = 2^{n^{\Omega(1)}}$; and a purported satisfying assignment $\alpha \in \{0, 1\}^n$.
    
    \underline{Output:} A number $i\in [L]$ such that $C_i$ is obtained by an invalid resolution derivation, or a number $-k \le j\le -1$ such that $\alpha$ does not satisfy $C_j$.
\end{mdframed}

At first glance, the $\Refuter$ problem looks similar to the $\WrongProof$ problem. First, both problems take as input a purported (but not correct) resolution proof. Second, both are looking for an invalid derivation as a solution. Moreover, when we consider the resolution proof system (and consider refuting \emph{width} lower bounds), both $\WrongProof$ and $\Refuter$ are $\PLS$-complete.

However, we think that they are fundamentally different. One primary difference is the reason of totality: When introduced to a (non-promise) $\TFNP$ problem, the initial inquiry ought to be: \emph{why is the problem total?} The totality of $\WrongProof(\Res)$ follows from the \emph{reflection principle} for resolution \cite{Pudlak20, BFI23}, i.e., it is impossible to derive $\bot$ from a satisfiable CNF. The same reasoning holds for every sound proof system, regardless of their power. However, the totality of $\Refuter$ is far from trivial: \emph{They rely on non-trivially proven width or size lower bounds}.

Furthermore, for comparison with $\Refuter$, we include a proof that $\WrongProof(\Res)$ is $\PLS$-complete (this is a folklore result, see e.g., \cite{BFI23}). The proof is seemingly similar to that of \autoref{lemma: width refuter lower bound} and \autoref{lemma: width refuter upper bound}, but there are crucial differences. For example, the reduction from $\WrongProof(\Res)$ to $\PLS$ is uniform, since the totality of $\WrongProof(\Res)$ relies on simpler reasoning. In contrast, the uniform $\PLS$-membership of $\Refuter(w(F\vdash_{\Res} \bot))$ crucially relies on nice properties of the family of CNFs (e.g., $\EPHP$), and it is possible that for some families, the refuter problem cannot be uniformly reduced to $\PLS$ at all. This demonstrates another difference between $\WrongProof$ and $\Refuter$.

\begin{lemma}
    $\WrongProof(\Res)$ is in $\PLS$.
\end{lemma}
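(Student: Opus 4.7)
The plan is to exhibit a uniform, efficient decision-tree reduction from $\WrongProof(\Res)$ to (reversed) $\Iter$. The single key observation driving the construction is the \emph{per-step soundness of resolution}: if $\alpha$ falsifies a resolvent $C_i$ obtained from parents $C_j$ and $C_k$ on a variable $x$, then $\alpha$ must falsify at least one of $C_j, C_k$. Indeed, $\alpha$ makes every literal of $C_i$ false, and exactly one of the literals $x, \overline{x}$ is false under $\alpha$, forcing the corresponding parent clause to be entirely false as well. Similarly, if $C_i$ is a weakening of $C_j$ and $\alpha$ falsifies $C_i$, then $\alpha$ falsifies $C_j$. This gives a natural ``walk along falsified clauses'' from $\bot$ toward the axioms that must terminate at an axiom falsified by $\alpha$, unless the walk encounters an illegal derivation first.

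Concretely, I will re-index the clauses so that index $N-1$ corresponds to $C_{L-1} = \bot$ and smaller indices correspond to clauses further up in the proof (with axioms occupying indices $0, \dotsc, k-1$). I will define the successor function $S : [N] \to [N]$ as follows. For the index $i$ of an axiom, set $S(i) := i$ (a fixed point). For the index $i$ of a derived clause that $\alpha$ satisfies, also set $S(i) := i$ (these nodes will be unreachable from the walk and are benign). For the index $i$ of a derived clause falsified by $\alpha$, read the annotated predecessor indices and set $S(i)$ to be the smallest-indexed predecessor that is strictly less than $i$ and is itself falsified by $\alpha$. If no such predecessor exists---because a claimed predecessor index is $\ge i$, because the resolution rule is malformed, or because both resolution parents happen to be satisfied by $\alpha$---set $S(i) := i+1$, so that $i$ becomes an $\Iter$ solution. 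Each entry $S(i)$ can be computed by querying $O(1)$ clause blocks of $\Pi$ plus $\polylog(N)$ bits of $\alpha$, so the reduction is efficient and uniform.

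For correctness, first check that $\bot$ has some predecessor at a strictly smaller index that is falsified by $\alpha$; if not, the derivation of $\bot$ is already invalid and $L-1$ is output directly. Otherwise $S(N-1) < N-1$, so (the reversed) $\Iter$ applies. Given any $\Iter$ solution $x$, I claim it yields a $\WrongProof(\Res)$ witness. If $S(x) > x$, then by construction $x$ is a node we deliberately flagged as invalid, so the derivation at $x$ is illegal. If instead $S(x) < x$ with $S(S(x)) = S(x)$, the walk has stabilised at a fixed point. By induction along the chain $x \to S(x) \to S(S(x)) \to \dotsb$, each visited node corresponds to a clause falsified by $\alpha$; the terminal fixed point therefore cannot be a satisfied derived clause and must be an axiom falsified by $\alpha$, which is the second type of $\WrongProof$ solution.

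The main point to verify carefully will be that, at every falsified derived node actually reached by the walk, the per-step soundness observation really does force the existence of a strictly-smaller falsified predecessor unless the derivation itself is malformed; once this is spelled out separately for the resolution and weakening rules, the rest is bookkeeping. I do not anticipate any genuine obstacle: the argument is essentially the textbook soundness proof of resolution, repackaged as a local-search reduction to $\Iter$.
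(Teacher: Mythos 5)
Your proposal is correct and takes essentially the same route as the paper's proof: both reduce to reversed $\Iter$ by walking from $\bot$ along $\alpha$-falsified clauses to their predecessors, making satisfied clauses and axioms fixed points, so that any $\Iter$ solution yields either an illegal derivation step or an axiom falsified by $\alpha$. Your explicit rule of pointing to a \emph{falsified} parent (and flagging the node when no smaller-index falsified predecessor exists) is exactly the per-step soundness argument the paper relies on, and is if anything spelled out a bit more carefully than the paper's terse writeup.
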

\begin{proof}
    Let $(C_{-k}, \dots, C_{-1}, C_0, \dots, C_{L-1})$ be a purported resolution refutation of a CNF $F$, and $\alpha$ be a purported satisfying assignment of $F$. We will reduce this $\WrongProof(\Res)$ instance to an instance $S:\{-k, \dots, L-1\} \to \{-k, \dots, L-1\}$ of reversed $\Iter$.

    It would be convenient to think of a clause $C_i$ as ``active'' if $C_i(\alpha) = 0$. An invalid derivation in the resolution refutation corresponds to an edge from an active node to an inactive node. For every $i\in \{-k, \dots, L-1\}$, if $C_i(\alpha) = 1$ (i.e., $C_i$ is inactive), then we define $S(i) = i$. Otherwise, if $i < 0$ (i.e., $C_i$ is an axiom \emph{not} satisfied by $\alpha$), then we define $S(i) = 0$, making $i$ a solution since $S(i) > i$. Otherwise, suppose $C_i$ is derived from $C_j$ (i.e., $C_i$ is a \emph{weakening} of $C_j$, or $C_i$ is \emph{resolved} from $C_j$ and some other $C_k$), then we define $S(i) = j$. If $i$ is a solution for the reversed $\Iter$ instance $S$, then either $j < i$ or $j$ is inactive (which means $S(j) = j$), and in either case $i$ is a valid solution for $\WrongProof(\Res)$.
\end{proof}
\begin{mdframed}[hidealllines=true,backgroundcolor=gray!10,skipabove=0em,skipbelow=-0.4em,innertopmargin=0]
	\small
\begin{remark}
    The proof above is easy, but one can see that the crucial components are 1) the resolution proof system is sound, and 2) a resolution proof is a ``DAG''-like structure. This proof strategy can potentially be easily extended to other proof systems with similar properties.
\end{remark}
\end{mdframed}

\begin{restatable}{lemma}{lemmawrongproofisplshard}\label{lemma: wrongproof is pls hard}
    $\WrongProof(\Res)$ is $\PLS$-hard.
\end{restatable}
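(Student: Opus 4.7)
The plan is to build a uniform decision-tree reduction of block-depth $2$ from (reversed) $\Iter$ to $\WrongProof(\Res)$, structurally mirroring the reduction used in \autoref{lemma: width refuter lower bound}. The underlying idea is the same: we embed the successor-pointer structure of the $\Iter$ instance into the weakening edges of a purported resolution proof and force every \emph{possible} invalid derivation to correspond to an $\Iter$ solution. The only new twist compared to \autoref{lemma: width refuter lower bound} is that we need $F$ to be \emph{satisfiable} (so that $\alpha$ is a genuine witness), rather than unsatisfiable.

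Concretely, given a reversed $\Iter$ instance $S:[L]\to[L]$ with $S(L-1)<L-1$, I plan to set $F$ to be the trivially satisfiable CNF with the single axiom $C_{-1}=\{x\}$, take $\alpha=(x\mapsto 1)$, and construct the purported refutation $C_0,\dots,C_{L-1}$ as follows:
\begin{itemize}
    \item if $S(i)=i$, let $C_i=\{x\}$, marked as a \emph{weakening} of $C_{-1}$;
    \item if $S(i)<i$, let $C_i=\bot$, marked as a \emph{weakening} of $C_{S(i)}$;
    \item if $S(i)>i$, let $C_i=\bot$, marked as a \emph{weakening} of $C_{S(i)}$ (syntactically illegal because the predecessor index is not smaller than $i$).
\end{itemize}
Note $C_{L-1}=\bot$ as required, since $S(L-1)<L-1$.

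The verification then reduces to a clean case analysis. A node $C_i$ with $S(i)=i$ is a valid (identity) weakening of $\{x\}$. A node with $S(i)<i$ and $S(S(i))\ne S(i)$ weakens $C_{S(i)}=\bot$ into $\bot$, which is valid. The only invalid derivations arise in exactly the two cases $S(i)>i$, or $S(i)<i$ with $S(S(i))=S(i)$ (where $C_{S(i)}=\{x\}\not\subseteq\bot$); these are precisely the solutions of reversed $\Iter$. Conversely, $\alpha=(x\mapsto 1)$ satisfies the unique axiom, so the only way to solve $\WrongProof(\Res)$ on the produced instance is to point out an invalid derivation, which is an $\Iter$ solution. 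Each $C_i$ depends only on $S(i)$ and $S(S(i))$, so the reduction has block-depth $2$ and is clearly uniform.

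I do not anticipate any serious obstacle; the only minor subtlety to double-check is that our formalization in \autoref{def: Resolution refutation} really treats a ``weakening'' node whose declared predecessor $j$ satisfies $j\ge i$ as invalid (so that the $S(i)>i$ case indeed produces an invalid derivation for $\WrongProof$), and that edge cases like $S(i)=i$ do not accidentally create solutions; both are immediate from the definition. Everything else is bookkeeping.
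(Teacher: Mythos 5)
Your construction is correct and is essentially the paper's own argument: both embed the reversed $\Iter$ pointer structure into the weakening edges of a purported refutation of a fixed satisfiable CNF, exactly as in \autoref{lemma: width refuter lower bound}, with a satisfying $\alpha$ ruling out the axiom-falsification type of solution. The one place where you diverge is the case $S(i)>i$: you encode it as a weakening whose declared predecessor index is $\ge i$ and rely on the reading of \autoref{def: Resolution refutation} under which such a forward pointer is representable and counts as an invalid derivation; since the definition phrases ``$-m\le j<i$'' as part of the node format, one could instead read such an instance as malformed, and then your reduction would not produce legal inputs. The paper sidesteps this entirely by making every $\Iter$-solution node an invalid weakening of an \emph{axiom} (clause $\bot$ declared as a weakening of a nonempty axiom), which is illegal under either reading; you can adopt the same fix in your $S(i)>i$ case (point the weakening at $C_{-1}=\{x\}$, just as you already do implicitly in the case $S(i)<i$, $S(S(i))=S(i)$) and the rest of your case analysis, block-depth-$2$ bound, and uniformity go through unchanged.
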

\begin{proof}
    We will reduce any reversed $\Iter$ instance to an instance of $\WrongProof(\Res)$. The construction below is very similar to the proof of \autoref{lemma: width refuter lower bound}. In fact, all clauses and derivations in the construction of the proof of \autoref{lemma: width refuter lower bound} are sound except for the solutions of the given reversed $\Iter$ instance.

    Let $F$ be any satisfiable CNF with $k$ clauses $C_{-k}, \dots, C_{-1}$ and $\alpha$ be any satisfying assignment of $F$. Without loss of generality assume there are two clauses $C_{-2}$ and $C_{-1}$ that we can apply a valid resolution step and call the resolved clause $D$. Let $S:[L] \to [L]$ be an instance of reversed $\Iter$ where $S(L) < L$. We construct a purported resolution refutation $\Pi = (C_{-k}, \dots, C_{-1}, C_0, \dots, C_{L-1})$ as follows: \begin{itemize}
        \item For every $i$ such that $S(i) = i$, we let $C_i := D$ to be \emph{resolved} from $C_{-2}$ and $C_{-1}$.
        \item For every $i$ that is a solution for $S$, let $C_i := \bot$ be a \emph{weakening} from an axiom (say $C_{-k}$). Note that this weakening step is invalid and $C_i$ becomes a solution for the $\WrongProof(\Res)$ instance.
        \item Finally, for every $i$ such that $S(i) < i$ and $S(S(i)) < S(i)$, let $C_i := \bot$ be a \emph{weakening} of $C_{S(i)}$. Note that $C_{S(i)}$ is also $\bot$, hence this is a valid derivation.
    \end{itemize}
    It is easy to see that the invalid derivations in $\Pi$ correspond exactly to the solutions of $S$.
\end{proof}

\section{Prover-Delayer Games, \texorpdfstring{$\PLS$}{PLS}, and the 
Proof of \autoref{lem: res_ref to pls_ref}}\label{app:res_pls}

In \autoref{app:res_pls:pls2res}, we provide a self-contained description of the transformation from a $\PLS$ formulation to a low-width resolution proof using \emph{Prover-Delayer} game, along with several properties of this transformation that are useful when proving \autoref{lem: res_ref to pls_ref}. We then prove \autoref{lem: res_ref to pls_ref} in \autoref{app:res_pls:fullproof}.

\subsection{From \texorpdfstring{$\PLS$}{PLS} to Resolution using Prover-Delayer Game}\label{app:res_pls:pls2res}

Introduced by Pudl{\'{a}}k~\cite{Pudlak00ProofsAsGames}, the \emph{Prover-Delayer game} provides an elegant characterization of resolution width. There are two players in the game, the \emph{Prover} (she) and the \emph{Delayer} (he). Fixing an unsatisfiable CNF formula $F$, and let $x = (x_1, \ldots, x_n)$ be the variables in $F$. At first, the Prover's memory is empty. Then, in each step, she can either
\begin{itemize}
    \item \emph{query} the Delayer for the value of a certain variable, and add that value to her memory;
    \item \emph{forget} the value of a certain variable stored in her memory; or
    \item \emph{output} a clause of $F$ that is falsified by the partial assignment stored in her memory, which means she wins the game.
\end{itemize} 

We assume the Delayer also has access to Prover's memory. If the Prover queries a variable that is currently in its memory, then the Delayer's answer must be consistent with the memory; otherwise, his answer could be arbitrary. Note that if the Prover queries a variable, forgets it, and queries it again, the Delayer is allowed to answer different values to these two queries of the same variable.

Of course, Prover can always win the game by querying all variables without forgetting any of them. However, for the connection with resolution width, her goal is to win the game with the minimum memory size, where the memory size is the maximum number of variables she remembered during the whole execution of the game. The Delayer is \emph{adversarial} to Prover's goal, i.e., wants her to spend as much memory as possible.

The following theorem shows that the minimum resolution width of an unsatisfiable CNF is characterized by the minimum Prover memory in the corresponding Prover-Delayer game.
\begin{theorem}[\cite{Pudlak00ProofsAsGames}]
    For any unsatisfiable CNF formula $F$, there exists a width-$w$ resolution refutation of $F$ if and only if there is a winning strategy for Prover using memory size $w$ in the Prover-Delayer game for $F$.
\end{theorem}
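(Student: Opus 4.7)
\medskip

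The plan is to prove both directions by exhibiting a canonical correspondence between memory configurations of the Prover and clauses in a resolution refutation. For a partial assignment $M$ of at most $w$ variables, define the \emph{complement clause}
\[
C_M \;:=\; \bigvee_{x_i \in \Domain(M),\, M(x_i)=0} x_i \;\lor\; \bigvee_{x_i \in \Domain(M),\, M(x_i)=1} \overline{x}_i,
\]
so that $|C_M| = |M|$ and $M$ is the unique assignment to $\Domain(M)$ falsifying $C_M$. Under this correspondence, the empty memory matches $\bot$, and a memory falsifying an axiom $A$ matches a clause $C_M \supseteq A$.

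For the forward direction (width-$w$ refutation gives a memory-$w$ Prover strategy), I would have the Prover traverse the resolution DAG in reverse, starting at $\bot$, maintaining the invariant that her current memory is an assignment falsifying the current clause $C$ in the refutation. When $C$ was derived by resolving $C' \lor x$ with $C'' \lor \overline{x}$, she queries $x$; depending on the Delayer's answer she moves to whichever of the two premises is now falsified by the extended assignment, and then forgets any variables no longer appearing in the new current clause. When $C$ was obtained from $C'$ by weakening, she similarly forgets all variables no longer appearing in $C'$ (and possibly queries any new ones, though $C' \subseteq C$ so none are needed). When $C$ is an axiom, she outputs it. Because her memory always tracks only the variables appearing in the current clause, its size never exceeds $w$.

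For the reverse direction (memory-$w$ Prover strategy gives a width-$w$ refutation), I would unfold the strategy into a tree whose nodes are the reachable memory configurations, and assign the clause $C_M$ to each node $M$. The refutation is then obtained by orienting this tree from leaves to root: at each leaf, the Prover outputs some axiom $A$ falsified by $M$, so $A \subseteq C_M$ and $C_M$ is a weakening of $A$; at a query node on variable $x$ with children $M \cup\{x=0\}$ and $M \cup\{x=1\}$, the two children's clauses are $C_M \lor x$ and $C_M \lor \overline{x}$, which resolve on $x$ to yield $C_M$; at a forget node going from $M$ to $M' = M \setminus\{x=b\}$, we have $C_{M'} \subseteq C_M$, so $C_M$ is a weakening of $C_{M'}$. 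Since every reachable $M$ satisfies $|M| \le w$, every clause $C_M$ in the constructed proof has width at most $w$, and the root gives $C_\emptyset = \bot$.

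The construction is essentially a line-by-line dictionary between the two objects, so there is no real technical obstacle; the only detail to watch is the treatment of \emph{forget} moves, which become weakenings in the reverse direction and are handled by a simple memoryless forgetting step in the forward direction (always trimming the memory to exactly the variables of the current clause). With this, both directions follow immediately and the width bound is preserved on the nose.
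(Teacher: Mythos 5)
Your ``reverse'' direction (memory-$w$ Prover strategy $\Rightarrow$ width-$w$ refutation) is essentially identical to what the paper actually proves for this statement: the appendix only establishes that direction (\autoref{lem: prover to res proof}), via exactly your correspondence between a memory configuration $M$ and the unique clause $C_M$ over $\Domain(M)$ falsified by it, with query steps becoming resolution steps, forget steps becoming weakenings, and output steps becoming weakenings of axioms; the other direction is cited to Pudl\'ak and not proved in the paper. Two small remarks on your write-up of that direction: you should unfold the strategy into a tree of \emph{plays} (or note that the Prover's behaviour may depend on internal state, so the same partial assignment can recur), and, as the paper does, assume w.l.o.g.\ that the Prover never re-queries a variable already in memory so that the resolution step at a query node is legal; both are cosmetic.

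The one concrete issue is in your ``forward'' direction, specifically the claim that the memory bound is preserved ``on the nose.'' Under the game as defined (a query immediately adds the answered value to memory, and memory size is the maximum number of stored variables at any moment), your Prover at a clause $C$ of width exactly $w$ that is derived by resolution holds $\mathrm{vars}(C)$ plus the freshly queried resolved variable before she can decide which premise to move to and forget, so her memory transiently reaches $w+1$. As written, your argument therefore yields ``width $w$ $\Rightarrow$ memory $w+1$,'' not memory $w$; this is the same $\pm 1$ slack familiar from the Atserias--Dalmau pebble-game characterization of width. To recover the exact statement you must either adopt a convention under which the in-flight query is not charged until after the subsequent forget, or argue separately that the peak can be avoided; otherwise the equivalence should be stated with this one-unit slack. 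For the paper's purposes this is harmless (only the direction you reproduce is used, and all width bounds there are asymptotic), but it is a genuine gap relative to the theorem as literally stated.
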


In this section, we prove the ``if'' direction in the previous lemma and highlight some nice properties of the obtained low-width resolution proof that will be helpful when proving \autoref{lem: res_ref to pls_ref}.

\paragraph{Making Prover's strategy uniform.} Note that both Prover's and Delayer's strategies could be quite non-uniform in the Prover-Delayer game model described above. Here, we twist the model a little bit by allowing the Prover to explicitly store several \emph{state registers} in its memory, besides a partial assignment of variables. These internal state registers are also counted in the memory size of her strategy. Later, in the proof of \autoref{lem: res_ref to pls_ref}, it is more convenient to describe a uniform Prover's strategy with state registers.%

\paragraph{From $\PLS$ to Prover-Delayer game.}

    A $\PLS$ formulation of a search problem $\SearchCNF(F_n)$ is a decision tree reduction $(f_i,g_o)_{i, o \in M}$ from $\SearchCNF(F_n)$ to $\Iter_M$. Let $x = (x_1, \ldots, x_n)$ be the variables in $F_n$, then we have $S(v) \coloneqq f_v(x)$, where $S$ is the successor function in the $\Iter_M$ instance reduced from $\SearchCNF(F_n)$. %

    \begin{lemma}[Folklore]\label{lem: pls to prover}
        Given a $\PLS$ formulation $(f,g)_{i, o \in M}$ of depth $d$ for $\SearchCNF(F_n)$, there exists a Prover's strategy of memory size $O(d + \log M)$ for $F_n$.
    \end{lemma}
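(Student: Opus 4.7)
The plan is to have the Prover faithfully simulate a walk along the $\Iter$ instance defined by the $\PLS$ formulation, using the decision trees $f_v$ to compute successors and $g_o$ to decode a final clause. At the start, the Prover sets her state register to $v \gets 0$. Repeatedly, she evaluates the depth-$d$ decision tree $f_v$ against the Delayer, thereby determining $S(v) = f_v(x)$ after at most $d$ queries. Branching on the outcome: if $S(v) > v$, she records $v$ into a second register as ``previous node'' $u$, updates $v \gets S(v)$, and then \emph{forgets} every variable that was queried during the evaluation of $f_v$, keeping only the two integer registers $u, v \in [M]$. Otherwise, either $S(v) < v$ or $S(v) = v$; in the first case $v$ itself is an $\Iter$-solution $o$, and in the second case (together with the edge-case $v = 0, S(0) = 0$) the previous node $u$ is an $\Iter$-solution $o$. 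Since $S$ strictly increases along the walk, the walk terminates in at most $M$ iterations.

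Having identified a solution index $o \in [M]$, the Prover now runs the depth-$d$ decision tree $g_o$ against the Delayer; she may first forget the intermediate queries used to evaluate the last call to $f_v$, so she enters this phase carrying only $u, v$ in her state registers. Because the $\PLS$ formulation is a valid reduction, $g_o$ outputs (the index of) some clause $C_i$ of $F_n$ that is falsified by $x$, and all literals of $C_i$ are determined by the at most $d$ bits that $g_o$ queries. Thus after running $g_o$ the Prover has, in her memory, a falsifying assignment to every literal of $C_i$; she outputs $C_i$ and wins. Throughout, her memory consists of the two $O(\log M)$-bit registers $u, v$, at most $O(d)$ variables currently in the decision tree being simulated, and $O(1)$ auxiliary state bits indicating which phase of the strategy she is in. This totals $O(d + \log M)$.

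The main obstacles are bookkeeping rather than conceptual: one must specify exactly when each queried bit is forgotten so that the $O(d)$ budget for the ``current'' decision tree simulation does not accumulate across iterations, and one must uniformly encode ``which phase of the walk am I in'' (computing $f_v$, transitioning between nodes, or evaluating $g_o$) using only $O(1)$ extra state bits, so that the strategy description itself is uniform rather than a $\Iter$-dependent program. A small additional subtlety is verifying that the ``double-check'' built into the $\Iter$ solution definition (the condition $S(S(u)) = S(u)$) is automatically handled by the walk, since whenever the simulation halts with $S(v) = v$, the stored previous node $u$ is guaranteed to witness $S(u) > u$ and $S(S(u)) = S(v) = v = S(u)$, so no extra decision tree evaluation is needed before invoking $g_o$.
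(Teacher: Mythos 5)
Your overall plan is the same as the paper's (simulate the $\Iter$ walk with two node registers, evaluate $f_v$ to advance, then decode a clause with $g_o$), but there is a genuine gap in your forgetting schedule. You instruct the Prover, upon seeing $S(v)>v$, to forget \emph{all} variables queried during the evaluation of $f_v$ before moving on, and again to forget the last $f_v$-evaluation's variables before running $g_o$. This breaks the correctness argument. The Prover wins only if her \emph{current memory} falsifies a clause of $F_n$, and the only way to guarantee that is the standard argument: every total assignment $x$ consistent with her memory $\rho$ must make $o$ a valid $\Iter$ solution and make $g_o(x)$ output the same clause; then the reduction's correctness forces every literal of that clause to be set false by $\rho$. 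For this, the bits \emph{certifying} that $o$ is a solution must still be in memory. In your case $S(v)=v$, the solution is the previous node $u$, certified by $S(u)=v$ (bits queried by $f_u$) together with $S(v)=v$ (bits queried by $f_v$); you have already erased the $f_u$-bits, so there are extensions of $\rho$ under which $S(u)\neq v$ and $u$ is not a solution, and the reduction then gives no guarantee that the clause returned by $g_u$ is falsified by $\rho$ (equivalently, the Delayer is free to re-answer those variables inconsistently once they leave memory). Your extra step of dropping the last $f_v$-bits before running $g_o$ creates the same problem even in the easy case $S(v)<v$, where $v$'s certification lives in $f_v$'s queries. Your closing remark that the condition $S(S(u))=S(u)$ is ``automatically handled'' is exactly where this surfaces: it holds along the walk for the answers the Delayer gave at the time, but not necessarily for assignments consistent with what you still remember.

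The fix is what the paper does: while at current node $v$ with previous node $u$, keep in memory all variables queried by \emph{both} $f_u$ and $f_v$, and only when advancing (i.e., $w=f_v(x)>v$) forget those $f_u$-variables not re-queried by $f_v$; when a solution is found, retain these bits while evaluating $g_v$ (if $w<v$) or $g_u$ (if $w=v$). This costs at most $3d$ variables at any time plus the $O(\log M)$ register bits, so the $O(d+\log M)$ bound is unaffected --- your aggressive forgetting was not needed for the memory bound and is what invalidates the proof. (Also, the start-up case $v=0$ with $S(0)=0$ has no previous node; there $0$ itself is the solution and one queries $g_0$, a minor point the paper handles separately.)
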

    
    \begin{proof}
        We say the Prover queries a decision tree $T$ if she evaluates $T(x)$, and stores the queried variables in her memory in each step. Now we describe the Prover's strategy in what follows.
    
        \begin{description}
            \item[1.] The Prover starts from the node $0$ of the $\Iter_M$ instance and queries the decision tree $f_0$. If $f_0(x) = 0$, then $0$ is a valid solution for the $\Iter_M$ instance, hence she can query $g_0(x)$ to obtain a falsified clause in $F_n$. Otherwise, we say that the Prover is currently at node $v = f_0$ and previously visited node $0$.
            \item[2.] Assume the Prover is at node $v \in [M]$, and the previous node she visited is $u$. She queries the decision tree $f_v$ and obtains the next node $w = f_v(x)$.
            \item[3a.] If $w \leq v$, then she has found a solution of the $\Iter_M$ instance. In particular, if $w < v$ then the solution is $v$; if $w = v$ then the solution is $u$. Note that all the variables queried by $f_u, f_v, g_u$ are still in her memory. Therefore, the clause $F$ returned by $g_v(x)$ (if $w < v$) or $g_u(x)$ (if $w = v$) must be falsified by the variables in her memory.
            \item[3b.] If $w > v$, then the Prover forgets all the variables that are queried in $f_u$ and not queried in $f_v$. She then updates the current node as $w$ and the previous node as $v$, and loops back to \textbf{Step 2}.
        \end{description}
    
        The Prover's strategy will always end, since the index of $v$ increases in every step. The Prover needs to remember at most $3d$ variables at any time, and $O(d + \log M)$ bits to remember the current state to execute this strategy.
    \end{proof}

    By further examining the proof of \autoref{lem: pls to prover}, we obtain several properties that are useful for the proof of \autoref{lem: res_ref to pls_ref} later.

    \begin{observation}\label{obs: uniform prover}
        In \autoref{lem: pls to prover}, the Prover's strategy can be implemented in a uniform manner if the $\PLS$ formulation $(f,g)$ is given via oracle access.
    \end{observation}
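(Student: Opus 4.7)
The plan is to observe that the pseudo-code already given in the proof of \autoref{lem: pls to prover} is essentially a single oracle algorithm, so I would verify this by writing it out explicitly as a uniform oracle Turing machine $\mathcal{M}^{(f,g)}$. The machine maintains a constant number of state registers: the current $\Iter$-node $v \in [M]$, the previous node $u \in [M]$, and two small buffers $B_u, B_v$ storing the variables queried during the most recent evaluations of $f_u$ and $f_v$ (each buffer records variable indices together with the Delayer's answers, using $O(d \log n)$ bits). These registers, together with the partial assignment implicit in $B_u \cup B_v$, constitute the Prover's entire memory. The initial configuration is $u = \bot$, $v = 0$, $B_u = B_v = \varnothing$.

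Inside the main loop, $\mathcal{M}^{(f,g)}$ first evaluates $f_v(x)$ by making an oracle call to $f$ with index $v$ and traversing the returned decision tree; at each internal node it queries the Delayer for the corresponding variable and appends the response to $B_v$ (reading from $B_v$ if the variable has already been queried, to maintain consistency). Let $w$ be the returned value. If $w < v$, the machine issues the oracle call $g_v$, uses its existing buffer $B_v$ to answer any variables the decision tree for $g_v$ queries (by the contract of the reduction, $g_v$ only queries variables already read by $f_v$), and outputs the resulting falsified clause of $F_n$. If $w = v$, the same is done with $g_u$ and $B_u$. Otherwise $w > v$, and the machine reassigns $u \gets v$, $v \gets w$, $B_u \gets B_v$, $B_v \gets \varnothing$, forgets the variables of the old $B_u$ that do not appear in the new $B_u$, and loops.

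The only subtlety is the ``forgetting'' step, but this is a mechanical scan over the buffer $B_u$ and costs no non-uniformity; everything else is a direct simulation of oracle-given decision trees. Since each iteration strictly increases $v$ and $v \in [M]$, the loop terminates after at most $M$ iterations. The total memory is $O(d + \log M)$ as before, and the description of $\mathcal{M}^{(f,g)}$ is manifestly uniform, depending on the $\PLS$ formulation $(f,g)$ solely through oracle calls. I expect no real obstacle here; the observation is essentially that the informal description in the proof of \autoref{lem: pls to prover} transcribes verbatim into a uniform oracle machine.
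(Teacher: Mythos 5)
Your plan is the right one and matches what the paper intends: the observation carries no separate proof because the strategy in \autoref{lem: pls to prover} is already an algorithm that touches $(f,g)$ only through oracle evaluations, so transcribing it as an oracle Turing machine with registers for $u$, $v$ and the query buffers is exactly the justification. Two small inaccuracies in your transcription are worth fixing, though neither threatens uniformity. First, your parenthetical ``by the contract of the reduction, $g_v$ only queries variables already read by $f_v$'' is not a property of decision-tree reductions as defined in the paper: the decoding tree $g_o$ is an arbitrary depth-$d$ decision tree over $x$ and may query fresh variables. As written, your machine has no answer for such queries. The repair is immediate and uniform: when evaluating $g_v$ (or $g_u$) the machine simply queries the Delayer for any variable not in its buffers and records the answer; this is what the paper's strategy does, and it is why the memory bound there is $3d$ variables (buffers for $f_u$, $f_v$, and one $g$-evaluation) rather than $2d$. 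The returned clause is still falsified by the memory contents, since the bits certifying that $v$ (resp.\ $u$) solves the $\Iter$ instance together with the $g$-path force every literal of the output clause false. Second, handle the initial corner case explicitly: when $v=0$, $u$ is undefined, and if $f_0(x)=0$ then $0$ itself is the $\Iter$ solution (via the special case in the definition of $\Iter$), so the machine should call $g_0$, not $g_u$. With these two adjustments your machine coincides with the Prover of \autoref{lem: pls to prover} and is manifestly uniform given oracle access to $(f,g)$.
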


    \begin{lemma}\label{lem: memory encoding}
        In \autoref{lem: pls to prover}, there exists an efficient binary encoding of Prover's memory, such that:
        \begin{enumerate}
            \item The encoding has bit-length $\poly(d,\log M)$.
            \item It is computationally efficient to transform Prover's memory into an encoding and vice versa. 
            \item The encoding of the Prover's memory is lexicographically increasing as the Prover's strategy proceeds.
            \item We say an encoding is \emph{invalid} if it is in the wrong format, or Prover's internal state registers are inconsistent with the partial assignment of variables w.r.t.~$(f,g)$. There is an efficient uniform algorithm for checking whether an encoding is invalid given oracle access to $(f,g)$. 
        \end{enumerate}
    \end{lemma}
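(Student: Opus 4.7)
The plan is to introduce an explicit binary layout for the Prover's memory and then verify the four bullets in sequence. Concretely, at every point in the strategy the Prover's working memory is determined by four pieces of data: the current node $v \in [M]$, the previous node $u \in [M]$, a partial assignment $\sigma$ of at most $3d$ variables of $F$, and a small progress counter $\text{phase}$ that tracks where she is inside the current iteration of the main loop (e.g.~the depth reached so far along the root-to-leaf path in $f_v$, together with a two-bit sub-phase marker distinguishing "still querying $f_v$", "forgetting $f_u$-variables", and "ready to advance"). I would then define the encoding to be the fixed-width concatenation
\[
\mathrm{enc}(\text{state}) \;=\; \mathrm{bin}(v) \,\|\, \mathrm{bin}(\text{phase}) \,\|\, \mathrm{bin}(u) \,\|\, \mathrm{enc}(\sigma),
\]
where $\mathrm{enc}(\sigma)$ lists $(\text{variable index}, \text{value})$ pairs in a canonical order (say, in the order the simulation of $f_u$ then $f_v$ would query them, padded to the maximum possible length with a distinguished "blank" symbol).

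First I would check the size bound and the two-way transformation. The fields take $O(\log M)$, $O(\log d)$, $O(\log M)$, and $O(d \log n)$ bits respectively; since $n$ is the number of variables of $F$ and in our applications $\log n \le O(\log M)$ (the $\SearchCNF$ input has length at least $n$ and $M \ge N$), the total is $\poly(d,\log M)$. Parsing and unparsing are just reading off fixed-width substrings, so both directions of the transformation are computed in time $\poly(d,\log M)$.

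Next I would verify the lex-increasing property by case analysis on the atomic actions of the Prover described in the proof of \autoref{lem: pls to prover}. Because $\mathrm{bin}(v)$ is the most significant field and is constant within a single iteration of the main loop, it suffices to check two regimes: (i) at step~3b the new current node $w$ satisfies $w>v$, so the high field jumps up and the encoding strictly increases regardless of what happens below it; (ii) within a fixed-$v$ iteration, each Prover action (a single variable query in the decision tree, a forget, or a transition) strictly advances $\text{phase}$. The padding in $\mathrm{enc}(\sigma)$ ensures that adding or removing entries from $\sigma$ does not shorten the encoding, so lex-order is determined purely by the two high fields.

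The main obstacle I expect is bullet~(4), the validity check. My plan is to simulate the decision trees $f_u$ and $f_v$ from their roots using oracle access to $(f,g)$, branching according to the bits recorded in $\sigma$, and accept iff: (a) $u<v$; (b) the variables queried by the simulation of $f_u$ agree exactly with the $f_u$-part of $\sigma$, and the values are consistent; (c) the simulation of $f_v$, run to depth $\text{phase}$, queries exactly the $f_v$-part of $\sigma$, again consistently; (d) the sub-phase marker matches the outcome of the (partial) simulation, e.g.~"ready to advance" only when $f_v$ has reached a leaf with label $w>v$, or the falsified-clause subcase of step~3a holds. The canonical ordering built into $\mathrm{enc}(\sigma)$ is what makes this check unambiguous: it forces the simulated path to be fully determined by $\sigma$, so we never have to guess future queries. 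Because each decision tree has depth at most $d$, the whole procedure is a uniform algorithm running in $\poly(d,\log M)$ time with $O(d)$ oracle queries to $(f,g)$, completing the lemma.
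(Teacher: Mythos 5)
Your proposal is correct and takes essentially the same route as the paper's proof: encode the internal state registers (current node as the most significant field, which strictly increases at Step 3b, together with the previous node and the position inside the decision tree being queried) followed by the partial assignment, and check validity by re-simulating the relevant decision-tree paths with oracle access to $(f,g)$ and comparing against the stored assignment. One small nit: the example phase counter you give (depth in $f_v$ plus a two-bit sub-phase marker) does not by itself strictly advance across consecutive \textbf{FORGET} actions, so the phase should additionally record forgetting progress (e.g.\ the position along $f_u$'s path of the variable currently being forgotten) — a trivial adjustment that realizes your stated requirement that every Prover action strictly advances the encoding.
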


    \begin{proof}
        The Prover's internal state registers should store the index of the current node, the previous node, and its current location in the decision tree it is querying. This part is lexicographically increasing since we always have $w > v$ in {\bf Step 3b}. Our encoding consists of these internal state registers followed by the partial assignment of variables.     

        It is trivial to construct such an encoding scheme satisfying conditions 1,2,3, and easy to check the correctness of its format. To check the validity of an encoding, we can query the (at most $3$) decision tree paths corresponding to the Prover's internal state registers, and check whether they are consistent with the partial assignment.
    \end{proof}

\paragraph{From Prover-Delayer games to resolution proofs}

    \begin{lemma}[\cite{Pudlak00ProofsAsGames}]\label{lem: prover to res proof}
        Given a Prover's strategy of memory cost $w$ for $F_n$, there exists a width-$w$ resolution proof refuting $F_n$.
    \end{lemma}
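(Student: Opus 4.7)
The plan is to associate to every reachable Prover memory configuration $M$ (a partial assignment of at most $w$ variables, plus possibly Prover's internal state registers, which we can ignore for the purpose of resolution) the clause
\[
C_M \;:=\; \bigvee_{(i,b)\in M} x_i^{\,1-b},
\]
i.e., the unique clause of width $|M|$ that is falsified by an assignment to $x$ if and only if that assignment extends $M$. Since $|M|\le w$ throughout the game, every $C_M$ has width $\le w$. The resolution refutation will be constructed by traversing the Prover's strategy tree from the leaves to the root and deriving each $C_M$ from the clauses attached to the children of $M$.

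First I would formalize the Prover's winning strategy as a (possibly large but finite) rooted tree $\calT$, whose nodes are labelled by memory configurations. The root is labelled by the empty configuration $\emptyset$; each internal node labelled $M$ is either a \emph{query node} on some variable $x_i\notin\Domain(M)$ with the two children $M\cup\{(i,0)\}$ and $M\cup\{(i,1)\}$ (corresponding to the two possible Delayer responses), or a \emph{forget node} with a single child $M'\subsetneq M$; each leaf is labelled by a memory $M$ from which the Prover outputs an axiom $A\in F_n$ that is falsified by $M$. By assumption, the strategy is winning on every branch, so the tree is finite.

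Next, I would process $\calT$ in postorder and derive the clause $C_M$ at every node, as follows:
\begin{itemize}
    \item \emph{Leaf.} The Prover announces an axiom $A$ falsified by $M$, so every literal of $A$ is set to $0$ by $M$, i.e.\ $A\subseteq C_M$ as sets of literals. Hence $C_M$ is a weakening of $A$.
    \item \emph{Forget node} with child $M'\subsetneq M$. Then $C_{M'}\subseteq C_M$, so $C_M$ is a weakening of the already-derived $C_{M'}$.
    \item \emph{Query node} on $x_i$ with children $M_0=M\cup\{(i,0)\}$ and $M_1=M\cup\{(i,1)\}$. Here $C_{M_0}=C_M\lor x_i$ and $C_{M_1}=C_M\lor \overline{x}_i$, and by induction both are already derived; resolving them on $x_i$ yields $C_M$.
\end{itemize}
The root is labelled $\emptyset$, so $C_\emptyset=\bot$, and we have obtained a resolution refutation of $F_n$ of width at most $w$.

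There is essentially no hard step here; the only things to double-check are bookkeeping subtleties: that the width bound $|C_M|\le w$ holds simultaneously at every node (which is exactly the Prover's memory bound), that the resolution proof system under consideration has the weakening rule (which is built into \autoref{def: Resolution refutation}), and that internal state registers tracked by the Prover do not inflate the width (they do not, since $C_M$ depends only on the stored partial assignment). Repeated occurrences of the same memory at different tree nodes pose no problem because we only need the \emph{existence} of a width-$w$ refutation, not a bound on its size; if desired, one could share derivations for identical $C_M$ to obtain a DAG-like proof, but this is unnecessary for the lemma.
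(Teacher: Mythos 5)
Your proof is correct and follows essentially the same construction as the paper: associate to each memory configuration $M$ the unique clause $C_M$ falsified exactly by extensions of $M$, and translate Prover queries into resolution steps, forget steps into weakenings, and output steps into weakenings of axioms, with the width bound coming directly from the memory bound. The only cosmetic difference is that you build the refutation bottom-up over the strategy tree in postorder while the paper generates it top-down from $\bot$ by simulating the strategy recursively, which yields the same proof.
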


    \begin{proof}
        Without loss of generality, we assume the Prover will not query a variable that is already in her memory.
    
        We simulate the Prover's strategy. Initially, there is no variable stored in her memory, and it corresponds to the empty clause $\bot$ at the end of the resolution proof. We then generate the resolution proof recursively by maintaining the Prover's memory and the current node of the resolution proof.

        In each step, let $\rho$ be the current partial assignment of variables stored in the Prover's memory. Define $C(\rho)$ to be the only clause that is falsified by $\rho$, using only the variables that are set in $\rho$. For example, if $\rho$ is $\{x_1 = 1, x_2 = 0\}$, the $C(\rho) = \neg x_1 \vee x_2$.
        The procedure will guarantee that $C(\rho)$ is the clause of the current node in the resolution proof.
        
        The Prover has three possible actions given $\rho$ and its internal state register:
        \begin{description}
            \item[FORGET] If the Prover decides to forget $x_i$, then we generate a new node $C'$ with clause $C(\rho_{-i})$, where $\rho_{-i}$ is the partial assignment by \emph{forgetting} the value of $x_i$ from $\rho$. We mark that the current node is derived by a \emph{weakening} step from node $C'$. We then update the Prover's memory and continue our process at node $C'$ recursively.
            \item[QUERY] If the Prover queries $x_i$, then we generate two new nodes $C^0, C^1$ with clauses $C(\rho) \vee x_i$ and $C(\rho) \vee \neg x_i$ respectively. We mark that the current node is derived by a \emph{resolution} step from node $C^0$ and $C^1$. We first recursively proceed to $C^0$ by updating Prover's memory with $x_i = 0$, and then proceed to $C^1$ with $x_i=1$.
            \item[OUTPUT] If the Prover outputs a falsified clause $D$, it must be the case that $D$ is a sub-clause of $C(\rho)$. 
            If $D$ is equal to  $C(\rho)$, then we simply stop; otherwise, we add a new node for the clause $D$ and add one or more intermediate \emph{weakening} steps towards the current node.
        \end{description}

        It is easy to verify that during the process, $C(\rho)$ is always the clause of the current node in the resolution proof. This process will stop since the Prover will stop, and its correctness is guaranteed by the Prover's correctness.
        Finally, note that the largest clause ever generated in the resolution proof is upper bounded by the memory size of the Prover. 
    \end{proof}

\subsection{Proof of \autoref{lem: res_ref to pls_ref}}\label{app:res_pls:fullproof}
    
\LemmaResReftoPLSRef*

\begin{proof}
     Let $\calF = \{F_n\}_{n\in \N}$. Suppose we are given a purported depth-$d$ reduction $(f,g)$ from $\SearchCNF(F_n)$ to $\Iter_M$, i.e., $(f,g)$ is a $\PLS$ formulation for $\SearchCNF(F_n)$. We now use $(f, g)$ to construct a purported resolution refutation $(C_0, \ldots, C_{L-1})$ for $F_n$ with width $w_0 = \poly(d, \log M)$, while satisfying the following two conditions.

    \begin{enumerate}
        \item Given any index $i \in [L]$, the $i$-th node $C_i$ can be calculated in $\polylog(n)$ queries to $(f,g)$ uniformly.
        \item If the $i$-th node is invalid, one can recover a pair $(\pAss, o^*)$ that refutes $(f,g)$ uniformly given $i$.
    \end{enumerate}

    As a high-level plan, we first apply the procedures described in \autoref{lem: pls to prover} which transforms the $\PLS$ formulation $(f,g)$ to a Prover's strategy of $O(d + \log M)$ memory for the Prover-Delayer game. We then use the procedure described in \autoref{lem: prover to res proof} to convert such a strategy into a width-$O(d + \log M)$ resolution proof for $F_n$.

    We now specify the details in these two steps to make sure the two conditions are met. The first condition can be achieved by letting an index $i \in [L]$ to be an encoding of the Prover's memory in a single step, as described in \autoref{lem: memory encoding}.\footnote{Note that the encoding described in \autoref{lem: memory encoding} is lexicographically increasing, but we can easily make it lexicographically decreasing by flipping all the bits.}
    
    By \autoref{obs: uniform prover}, given any valid index (encoding), one can calculate the next action of the Prover using $O(d)$ number of queries to $(f,g)$. Then, for all three actions $\{\textbf{FORGET}, \textbf{QUERY}, \textbf{OUTPUT}\}$, we can also calculate the indices of the one or two previous nodes in the resolution proof uniformly. 
    For any invalid index (encoding), we pad a trivially correct resolution node using axioms from the beginning.

    To see the second condition above, note that the procedure described in \autoref{lem: prover to res proof} generates an invalid node of the resolution proof only when the Prover is taking an \textbf{OUTPUT} step. That is, when the Prover outputs a falsified clause $D$, $D$ might not be a sub-clause of $C(\rho)$, so the \emph{weakening} steps from $D$ to $C(\rho)$ will be wrong. This happens because a solution $o$ of the $\Iter_M$ instance is found by the Prover, but querying $g_o$ does not lead to a clause $D$ in $F_n$ that is falsified by the current partial assignment $\pAss$ in her memory. 
    
    Note that the partial assignment $\pAss$ and $o^*$ used to refute $(f,g)$ can be recovered from the index (encoding) of the invalid node. We then complement the partial assignment $\pAss$ to $\pAss'$ by setting all unassigned variables that appear in $D$ to satisfy clause $D$. By definition, the pair $(\pAss', o)$ is a valid solution to refute the reduction $(f,g)$. 
    
    Finally, if $\calF$ is a uniform family of formulas, then the whole reduction is also uniform.
\end{proof}

\end{document}